\newcommand{\keepcomment}{1}% Keep comment
\tikzset{black node/.style={draw, circle, fill = black, minimum size = 4pt, inner sep = 0pt}}
\tikzset{white node/.style={draw, circle, fill = white, minimum size = 4pt, inner sep = 0pt}}
\tikzset{rect node/.style={draw, rectangle, fill = black, minimum size = 4pt, inner sep = 0pt}}
\author[1]{Benjamin Bergougnoux} %%% add authors
\author[2]{Vera Chekan}
\author[3]{Giannos Stamoulis}
\affil[1]{Aix Marseille Université, CNRS, LIS, Marseille, France}%%% add affiliations
\affil[ ]{\texttt{benjamin.bergougnoux@lis-lab.fr}} %%% add email addresses
\affil[2]{Humboldt-Universität zu Berlin, Germany} 
\affil[ ]{\texttt{vera.chekan@hu-berlin.de}} 
\affil[3]{Université Paris Cité, CNRS, IRIF, F-75013, Paris, France} 
\affil[ ]{\texttt{stamoulis@irif.fr}} 
\title{A Logic-based Algorithmic Meta-Theorem for Treedepth: Single Exponential FPT Time and Polynomial Space\thanks{G.S.\ was supported by the project BOBR that is funded from the European Research Council (ERC) under the European Union’s Horizon 2020 research and innovation programme with grant agreement No.\ 948057. 
In particular, a majority of work on this manuscript was done while B.B.\ and G.S.\ were affiliated with University of Warsaw.
This project was initiated during the research stay of V.C.\ at the University of Warsaw.
%We would like to thank the anonymous reviewers for their suggestions and remarks that helped to improve the presentation of the paper.
}}
\date{}
\newcommand\myparagraph[1]{\paragraph*{#1.}}
\begin{document}

\maketitle

\begin{abstract}
    For a graph $G$, the parameter treedepth measures the minimum depth among all forests~$F$, called elimination forests, such that $G$ is a subgraph of the ancestor-descendant closure of~$F$.
    We introduce a logic, called \emph{neighborhood operator logic with acyclicity, connectivity and clique constraints} ($\NEOtwo[\finrec]\ack$ for short), that captures all NP-hard problems---like \textsc{Independent Set} or \textsc{Hamiltonian Cycle}---that are known to be tractable in time $2^{\Oh(\td)}n^{\Oh(1)}$ and space~$n^{\mathcal{O}(1)}$ on $n$-vertex graphs provided with elimination forests of depth $\td$.
    We provide a model checking algorithm for $\NEOtwo[\finrec]\ack$ with such complexity that unifies and extends these results.
    For $\NEOtwo[\finrec]\plusk$, the fragment of the above logic that does not use acyclicity and connectivity constraints, we get a strengthening of this result, where the space complexity is reduced to $\Oh(\td\log(n))$.

    With a similar mechanism as
    the distance neighborhood logic introduced in [Bergougnoux, Dreier and Jaffke, \textit{SODA 2023}], the logic $\NEOtwo[\finrec]\ack$ is an extension of the fully-existential \MSOtwo with predicates for (1) querying generalizations of the neighborhoods of vertex sets, (2) verifying the connectivity and acyclicity of vertex and edge sets, and (3) verifying that a vertex set induces a clique.
    Interestingly, $\NEOtwo[\finrec]$, the fragment of $\NEOtwo[\finrec]\plusk$ that does not use clique constraints, is equivalent (up to minor features) to a variant of modal logic---introduced in [Pilipczuk, \textit{MFCS 2011}]---that captures many problems known to be tractable in single exponential time when parameterized by treewidth.
    Our results provide $2^{\Oh(\td)}n^{\Oh(1)}$ time and $n^{\mathcal{O}(1)}$ space algorithms for problems for which the existence of such algorithms was previously unknown.
    In particular, $\NEOtwo[\finrec]$ captures CNF-SAT via the incidence graphs associated to CNF formulas, and it also captures several modulo counting problems like \textsc{Odd Dominating Set}.

    To prove our results, we extend the applicability of  algebraic transforms such as the inclusion-exclusion principle and the discrete Fourier transform.
    To our knowledge, this is the first time, the discrete Fourier transform have been used to obtain space-efficient algorithms on graphs of bounded treedepth.
    To achieve the logspace complexity for $\NEOtwo[\finrec]\plusk$, we also use the technique from [Pilipczuk and Wrochna, \textit{ACM Trans.\ Comput.\ Theory 2018}] based on Chinese remainder theorem.
\end{abstract}

\begin{textblock}{20}(12.3,0.2)
\includegraphics[width=30px]{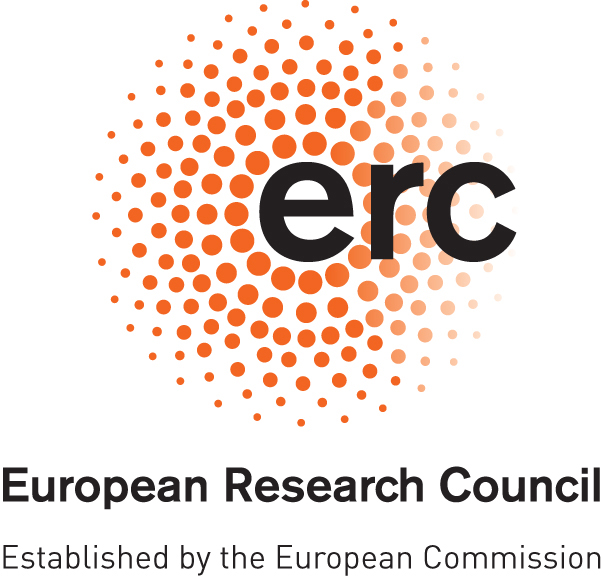}%
\end{textblock}
\begin{textblock}{4}(12.3,0.9)
\includegraphics[width=30px]{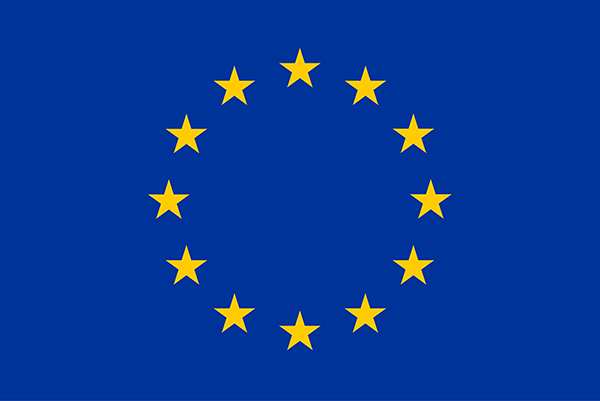}%
\end{textblock}

\thispagestyle{empty}
\newpage

\tableofcontents

\thispagestyle{empty}
\newpage

\setcounter{page}{1}

\begin{comment}
\textcolor{green}{Cat options}

\PHcat

\twemoji{cat face}

\twemoji{cat2}

\twemoji{black cat}

\faCat{} Tests: $\frac{1}{\pphi^{\abs{\text{\faCat}}}}$ $\psw \colon \text{\faCat} \to [\dphi]_0$ (V: I like this one. It is monochromatic and not too tiny. It does not scale too nicely in exponent, but it only happens at the forget-node so maybe ok. Need to be careful to use in text environment, otherwise disappears in the formulas) 

\twemoji{cat with wry smile}

\twemoji{pouting cat}

\twemoji{grinning cat} 
\end{comment}

\section{Introduction}
\label{sec:intro}

Many NP-hard problems---like \textsc{Independent Set} and \textsc{Hamiltonian Cycle}---are tractable in time $2^{\Oh(\tw)} n^{\Oh(1)}$ with $\tw$ and $n$ being the treewidth and the number of vertices of the input graph, respectively (see~\cite{BodlaenderCKN15,RooijBR09,Rooij20,FlumG06,CyganNPPRW22,Pilipczuk11} for examples).
%
% A huge fraction of these problems are expressible in a variant of modal logic called \ECML introduced by Pilipczuk~\cite{Pilipczuk11} together with a $2^{\Oh(\tw\cdot \abs{\phi})} n^{\Oh(1)}$ time model checking algorithm where $\abs{\phi}$ is the length of the formula.
However, the space usage of these algorithms is exponential in the width of the decomposition.
In practical applications, this is often a prohibitive factor, because a machine
is likely to simply run out of space much before the elapsed time exceeds tolerable limits.
Therefore, a natural question arises: Can we reduce the space complexity of
these algorithms while keeping their time complexity?
Lokshtanov, Mnich, and Saurabh~\cite{LokshtanovMS11} 
sketched that many problems admitting a $2^{\Oh(\tw)} n^{\Oh(1)}$ time algorithm can be solved in polynomial space and time $2^{\Oh(\tw^2)} n^{\Oh(1)}$.
Unfortunately, some complexity-theoretical evidences~\cite{AllenderCLPT14,PilipczukW18} suggest that this increase in the running time might be inevitable, and problems like \textsc{Independent Set} are probably not solvable by any algorithm using $2^{\Oh(\tw)}n^{\Oh(1)}$ time and $2^{o(\tw)}n^{\Oh(1)}$ space.

\paragraph{Polynomial space and treedepth.}
This reduction of space complexity is possible when we consider \textit{treedepth}: a shallow restriction of treewidth. Formally, the treedepth of a graph~$G$ is the minimum depth of a rooted forest $F$---called an \emph{elimination forest} of $G$---in which $G$ can be embedded by bijectively mapping the vertices of $G$ to the nodes of $F$ such that the endpoints of every edge of $G$ are in an ancestor-descendant relation in $F$.
This parameter was introduced in~\cite{NesetrilO06treedepth} by Nešetřil and Ossona de Mendez; see e.g.~\cite[Chapter 6]{NestrilOdM12sparsity} for a general discussion on the notion of treedepth.
The treedepth of a graph is never smaller than its treewidth, but it is also never larger than its
treewidth times $\log n$~\cite{BodlaenderGHK95} where $n$ denotes the number of vertices in $G$.
Several NP-hard problems such as \textsc{Independent Set} or \textsc{Hamiltonian Cycle} can be solved in time $2^{\Oh(\td)}n^{\Oh(1)}$ and polynomial space (i.e., $n^{\Oh(1)}$) where $\td$ is the treedepth of the input graph, assuming that we are given a decomposition certifying this value of $\td$, i.e., an elimination tree of $G$ of depth $\td$~\cite{FurerY17,HegerfeldK20,NederlofPSW23,PilipczukW18}.
In certain cases, the space complexity can even be as low as $\Oh(\td+\log n)$ or $\Oh(\td\log n)$~\cite{PilipczukW18}. 
Typically, the main idea is to reformulate the classic bottom-up dynamic programming approach so that it can be replaced by a simple top-down recursion. This reformulation is by no means easy---it often involves a highly non-trivial use of algebraic transforms or other tools of algebraic flavor, such as inclusion-exclusion~branching.
These algorithmic applications of treedepth lead naturally to the following question:	
\begin{quote}
    What families of problems can be solved in time $2^{\Oh(\td)}n^{\Oh(1)}$ and polynomial space, assuming that an elimination forest of depth $\td$ is provided with input?
\end{quote}
We provide a wide family of such problems 
by introducing a logic called \emph{neighborhood operator logic with acyclicity} (\logicfont{A}), \emph{connectivity} (\logicfont{C}), and \emph{clique} (\logicfont{K}) \emph{constraints} ($\NEOtwo[\finrec]\ack$ for short) and we show that the  model checking problem, for every fixed $\NEOtwo[\finrec]\ack$ formula, can be solved in time $2^{\Oh(\td)}n^{\Oh(1)}$ and polynomial space, provided an elimination tree of depth $\td$ is part of the input.
For $\NEOtwo[\finrec]\plusk$, i.e., the same logic without the acyclicity and connectivity constraints, we can further reduce the space complexity of our model checking algorithm to $\Oh(\td \log n)$.
% %

Before proceeding to the description of our logic and a more detailed discussion about our results, let us mention here that the assumption of having an elimination tree as part of the input is an essential part of this question. 
To our best knowledge, the current state-of-the-art in polynomial space algorithms for computing an elimination tree of depth $\td$ of a graph of treedepth $\td$ is the one of Nadara, Pilipczuk, and Smulewicz~\cite{NadaraPS22}, which runs in time $2^{\Oh(\td^2)}n^{\Oh(1)}$ (and polynomial space).
Recently, Bonnet, Neuen, and Sokolowski~\cite{abs-2507-13818} showed that the 1.0003-approximation of treedepth is NP-hard.
And it is a major open question whether one can obtain a constant-factor approximation of treedepth in time $2^{o(\td^2)}n^{\Oh(1)}$.

\paragraph{Neighborhood operator logics.} 
% Closely related to the \DN logic from \cite{BergougnouxDJ23}, 
The logic $\NEOtwo[\finrec]$ extends fully-existential \MSOtwo with the additional ability to directly reason about the generalized neighborhoods of sets.
The central building block in our neighborhood logic is the following neighborhood operator:
\begin{definition}\label{def:neighborhood:operator}
	Given a graph $G$, $\sigma\subseteq \N$, $U\subseteq V(G)$ and $F\subseteq E(G)$, we define the \emph{$(\sigma,F)$-neighborhood} of $U$ in $G$ as
	$$N^\sigma(U,F)\defeq \bigl\{v\in V(G) \mid \abs{\{ u  \in U \mid uv\in F \}} \in \sigma\bigr\}.$$
    Moreover, we define the $\sigma$-neighborhood of $U$, denoted by $N^\sigma(U)$, as the set $N^\sigma(U,E(G))$.
\end{definition}
Observe that $N^{\bN^+}(U)$ with $\bN^+=\{1,2,3,\dots\}$ corresponds to the usual open neighborhood of a set of vertices and $N^{\bN^+}(U,F)$ corresponds to the open neighborhood of $U$ in the graph $(V(F),F)$.
In this paper, we deal with neighborhood operators $N^\sigma(U,F)$ where $\sigma$ is \emph{finitely recognizable}:
\begin{definition}\label{def:prelim:finitely:recognizable}
    A subset $\sigma$ of $\bN$ is \emph{finitely recognizable} if there exist integers $d, p\in \bN$ with $p > 0$ such that for every $n\geq d$, we have 
    \begin{equation}\label{eq:def:finitely:recognizable}
        n \in \sigma \text{ if and only if } n+p\in \sigma.
    \end{equation} 
    We denote by $p(\sigma)$ and $d(\sigma)$ the smallest $p$ and $d$ satisfying \eqref{eq:def:finitely:recognizable}, respectively. Moreover, we call $p(\sigma)$ the \emph{period} of $\sigma$.
\end{definition}
We use this neighborhood operator to define so-called \emph{neighborhood terms}.
These are built from vertex set variables, vertex set constants (i.e., unary relations) or other neighborhood terms by applying neighborhood operators of form $N^\sigma(\cdot)$ as well as standard set operators such as intersection ($\cap$), union ($\cup)$, subtraction ($\setminus$) or complementation (denoted by a bar on top of the term). 
The second arguments of our neighborhood operators are \emph{edge set terms}, those are built from edge set variables and edge set constants (i.e., binary relations)
by applying the above-mentioned standard set operators.
Then $\NEOtwo[\finrec]$ is the extension of the fully-existential \MSOtwo by allowing the following:
\begin{itemize}
    \item \emph{Size measurement of terms}: We can write, for example, $|t| \le m$ 
		to express that a neighborhood term or an edge set term $t$ should have size at most $m$.
    \item \emph{Comparison between terms}: We can write, for example, $t_1 = t_2$ or $t_1 \subseteq t_2$ to express that $t_1$ should be equal or contained, respectively, in $t_2$ where $t_1$ and $t_2$ are two neighborhood terms or two edge set terms.
\end{itemize}
We define $\NEOtwo[\finrec]\plusk$ as the extension of $\NEOtwo[\finrec]$ with predicates $\clique(t)$ that verify that a neighborhood term $t$ induces a clique.
Furthermore, we define $\NEOtwo[\finrec]\ack$ as the extension of $\NEOtwo[\finrec]\plusk$ with predicates $\conn(t)$ and $\acy(t)$ that verify the connectivity and acyclicity, respectively, of a neighborhood term or an edge set term $t$.
We refer to \Cref{sec:logic} for the formal definition of this logic.
Let us give some example of $\NEOtwo[\finrec]$ formulas that capture some well-known problems.
The very simple formula
$\mathbf{independent}(X) \defeq N^{\bN^+}(X) \cap X = \emptyset$
expresses that no vertex from $X$ intersects the neighborhood of $X$, i.e., that~$X$ forms an independent set.
Thus, the $\NEOtwo[\finrec]$ formula $\exists X \ |X| \ge t \land \mathbf{independent}(X)$
expresses the existence of an independent set of size at least $t$.
Similarly, the formula
\[
    \exists Y \ |Y| \le t \land \mathbf{independent}(\constant V\setminus N^{\bN^+}(\constant V,Y))
\]
(where $\constant V$ stands for the whole vertex set) expresses the existence of an edge dominating set of size at most $t$, where $V\setminus N^{\bN^+}(\constant V,Y)$ is the set of vertices that are not incident to an edge of $Y$ (where $V$ denotes the set of all vertices).
With $\NEOtwo[\finrec]\ack$, we can trivially express the \textsc{Clique} problem. Moreover, we can express connectivity problems, for example, for \textsc{Hamiltonian Cycle} we can use the formula
\[
    \exists Y  \ \conn(Y) \land N^{\{2\}}(\constant V,Y) = \constant V,
\]
where $N^{\{2\}}(\constant V,Y)=\constant V$ expresses the property that every vertex of the graph is incident to exactly two edges of $Y$.
For \textsc{Max Cut} to express the existence of a cut of size at least $k$, we can use the formula:
\[
    \exists Y \exists A \exists B \ A \cup B = \mathtt{V} \land A \cap B = \emptyset \land N^{\bN^+}(A, Y) \subseteq B \land N^{\bN^+}(B, Y) \subseteq A \land |Y| \geq k.
\]
For any fixed integer $r$ we can also express the existence of an induced $r$-regular subgraph with at least $k$ vertices using the following formula:
\[
    \exists X \exists Y \ N^{\bN^+}(\mathtt{V}, Y) = X \land N^{\bN^+}(X, \mathtt{E} \setminus Y) = \emptyset \land N^{\{r\}}(X, Y) = X \land |X| \geq k.
\]
Here, $X$ and $Y$ are the vertex and the edge set of the sought subgraph, respectively.
We first ensure that the set of the endpoints of the edges in $Y$ is precisely $X$, then that all edges induced by $X$ belong to $Y$, and finally that the graph $(X, Y)$ is $r$-regular and contains at least $k$ vertices. 

Observe that finitely recognizable sets allow us to express problems based on modulo counting such as the ones defined in \cite{BelmonteS21,GreilhuberSW25} such as \textsc{Odd Dominating Set}, or more generally the existence of a $(\sigma,\rho)$-set \cite{TelleProskurowski1997} where $\sigma$ and $\rho$ are finitely recognizable: 
\[
    \exists X \  X \subseteq N^{\sigma}(X) \land \overline{X} \subseteq N^\rho(X).
\]
And even more generally, we can model a generalization of this problem called $(\sigma, \rho)$-\textsc{MinParDomSet} introduced by Greilhuber and Marx~\cite{abs-2506-01645} (they only consider it for (co)finite sets $\sigma$ and $\rho$) where one allows a set of at most $k$ vertices to violate the $(\sigma, \rho)$-constraints, i.e.,
\[
    \exists X \ \abs{(X \setminus N^\sigma(X)) \cup (\bar X \setminus N^\rho(X))} \leq k.
\]
In \Cref{ssec:logic:problems:captured}, we provide more examples of problems captures by our logics: \textit{locally checkable partitioning} problem, \textsc{CNF-SAT} and \textsc{Max-CNF-SAT} (via the incidence graph of a formula).

\paragraph{Our results.} In order to describe the running time of our model checking algorithms, we need to introduce the following notation. Given a $\NEOtwo[\finrec]\ack$ formula $\phi$, we define $d(\phi)$ as %and $p(\phi)$ 
the maximum of %respectively 
$d(\sigma)$ 
%and $p(\sigma)$ 
over all sets $\sigma \subseteq \bN$ that occur in $\phi$, and we define the value $\pphi$ as the least common multiple of $p(\sigma)$ over the sets $\sigma \subseteq \bN$ that occur in $\phi$.
First, we show that given a $\NEOtwo[\finrec]\plusk$ formula, the model checking problem can be solved in single-exponential time and logarithmic space. 
Formally, we prove the following theorem:
\begin{restatable}{theorem}{thmMaink}
\label{thm:plus-cliques}
    There is an algorithm that given a formula $\phi \in \NEOtwo[\finrec]\plusk$ with $\ell$ size measurements, a graph $G$ and an elimination tree $T$ of $G$ of depth at most $\td$, runs in time $((\dphi+1) \cdot \pphi)^{\cO(|\phi|\cdot\td)} \cdot n^{\cO(\ell)}$ and space $\cO(\td \cdot |\phi| \cdot (\log{\dphi} + \log{\pphi} + \log n + \log{|\phi|}))$ and decides whether $G \vDash \phi$ holds.
\end{restatable}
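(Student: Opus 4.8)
The plan is to set up a top-down recursive procedure over the elimination tree $T$ that, at each node, "guesses" the relevant combinatorial information about the vertices on the path from the root and then recurses into the subtrees. For bounded treedepth, the key structural fact is that for every vertex $v$, all its neighbors lie on the root-to-$v$ path in $T$; hence the "interface" at a node $x$ of depth $h \le \td$ is described by assignments restricted to the at most $h$ ancestors of $x$. Concretely, I would first bring the formula into a normal form: strip the outermost block of existential set quantifiers (both vertex-set and edge-set), so that the matrix is a Boolean combination of size measurements $|t| \le m$, equalities/containments $t_1 = t_2$, $t_1 \subseteq t_2$, and clique predicates $\clique(t)$, where the terms $t$ are built from the quantified sets, the constants, and the neighborhood operators $N^\sigma(\cdot,\cdot)$. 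Since the matrix is quantifier-free over set variables, satisfaction is determined once we fix, for every quantified set and every vertex/edge of $G$, its membership — i.e.\ a "coloring" of $V(G)$ and $E(G)$ by tuples over a constant-size alphabet whose size is governed by $|\phi|$.

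The heart of the argument is to evaluate each atomic subformula by a recursion that only ever stores, per level of $T$, a constant amount of information (the colors of the ancestors, plus a bounded number of running counters). For a neighborhood term $N^\sigma(U,F)$ evaluated at a vertex $v$ of depth $h$: whether $v \in N^\sigma(U,F)$ depends only on $|\{u \in U : uv \in F\}| \bmod \pphi$ once that count is at least $\dphi$, and $u$ ranges over the at most $h$ ancestors of $v$ together with (the $F$-edges to) descendants — but the edges of $G$ incident to $v$ all go to ancestors, so for $F \subseteq E(G)$ the count is over ancestors only, bounded by $h \le \td$. Thus each neighborhood term contributes a counter that takes one of at most $(\dphi+1)\cdot\pphi$ values, and this is exactly where the $((\dphi+1)\cdot\pphi)^{\Oh(|\phi| \cdot \td)}$ factor in the running time comes from: we branch over the coloring of each ancestor (constantly many colors, $\td$ ancestors, $|\phi|$-bounded number of sets), and we thread the $\bmod\,\pphi$ counters for the $\Oh(|\phi|)$ neighborhood terms down the recursion. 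The size-measurement atoms $|t| \le m$ are handled by carrying, for each of the $\ell$ such atoms, an integer counter in $\{0,\dots,n\}$; this is the sole source of the $n^{\Oh(\ell)}$ factor (we fix the target values and verify them), and it costs only $\Oh(\ell \log n)$ space. The $\clique(t)$ predicates are checked locally: $t$ induces a clique iff for every vertex $v$ of depth $h$ with $v \in t$, all ancestors of $v$ that lie in $t$ are pairwise adjacent and adjacent to $v$ — again a condition testable from the ancestor colors and the input edge relations as we walk down $T$. Equalities and containments of terms reduce to "for every vertex/edge, membership in $t_1$ implies (resp.\ equals) membership in $t_2$", which is a per-element local check folded into the same recursion. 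Combining the truth values of the atoms via the (fixed) Boolean matrix at the leaves and propagating the existential choices upward with a standard AND/OR over subtrees, then iterating over the $n^{\Oh(\ell)}$ guesses for the size targets, gives the claimed time bound. The space bound follows because at any moment the recursion stack has depth $\Oh(\td)$, each frame stores $\Oh(|\phi|)$ colors and $\Oh(|\phi|)$ counters of $\Oh(\log\dphi + \log\pphi)$ bits plus $\Oh(\ell)$ counters of $\Oh(\log n)$ bits, totalling $\Oh(\td \cdot |\phi| \cdot (\log\dphi + \log\pphi + \log n + \log|\phi|))$.

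The main obstacle — and the place where real work is needed beyond this sketch — is making the recursion genuinely \emph{compositional} across the "forget" step when we move from a subtree back to its parent. A neighborhood term $N^\sigma(U,F)$ at an ancestor $a$ of the current node may depend on elements of $U$ that lie deep in several sibling subtrees of $a$, so the partial counts computed in one subtree must be combined, \emph{modulo} $\pphi$, with those from the other subtrees and with the contribution of $a$'s own ancestors, all while only the coloring of the path is visible at any given frame. I would handle this exactly as in the treewidth dynamic-programming-to-recursion reformulations cited in the paper (Pilipczuk–Wrochna and the references on single-exponential treedepth algorithms): define, for a fixed coloring $c$ of the ancestors of a node $x$, a \emph{partial evaluation} $\mathrm{val}(x,c)$ that records, for each neighborhood term, the vector of partial $\bmod\,\pphi$ counts "owed upward" to each ancestor, and for each atom a partial truth status; show that $\mathrm{val}(x,c)$ is computable from the $\mathrm{val}$ of $x$'s children under compatible colorings by a simple additive combination; and argue that the number of distinct values of $\mathrm{val}(x,c)$ — hence the branching — is $((\dphi+1)\cdot\pphi)^{\Oh(|\phi|\cdot\td)}$. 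The clique predicate needs a small additional twist: "owing upward" the set of not-yet-checked membership pairs among ancestors, but since ancestors are fixed by $c$ this is again bounded by the coloring. The Chinese-remainder-theorem technique of Pilipczuk–Wrochna is what lets the $\bmod\,\pphi$ counters be maintained in $\Oh(\log\pphi)$ bits rather than $\Oh(\pphi)$, keeping the space logarithmic; I would invoke it here to reconcile the several $\sigma$-periods occurring in $\phi$ into the single modulus $\pphi = \operatorname{lcm}$, representing each counter by its residues modulo the individual periods.
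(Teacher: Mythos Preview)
Your proposal has a genuine gap at the join step, and this is exactly the technical heart of the theorem. Suppose a node $u$ has children $v_1,\dots,v_t$ (with $t$ possibly $\Theta(n)$), and for each ancestor $a$ of $u$ and each neighborhood term you are threading a ``count owed upward'' in $\{0,\dots,\dphi\}\times\FF_{\pphi}$. These counts add across the children, so to know whether a target vector is achievable at $u$ you must combine contributions from all $v_i$. A decision-style AND/OR does not work here: either you enumerate all $t$-tuples of child states summing to the target (time exponential in $t$), or you process children one by one and remember the \emph{set} of partial sums seen so far (space $((\dphi+1)\pphi)^{\Theta(|\phi|\cdot\td)}$, not logarithmic). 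The paper resolves this by \emph{counting} partial solutions (encoded as coefficients of a polynomial), and then applying the discrete Fourier transform over $\FF_{\pphi}^{|\tail|\times\Cat}$ so that the join becomes a \emph{pointwise product} $\cQ_u^p[I]\equiv_p\prod_i \cQ_{v_i}^p(I)$; a product of $t$ scalars can be evaluated top-down in $\cO(\log n)$ extra space by recursing into each child in turn. Nothing in your sketch provides this mechanism.

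A second, related gap is the threshold part. You correctly note that membership in $N^\sigma$ depends on the count mod $\pphi$ once the count is $\ge\dphi$, but you still need to know \emph{whether} it is $\ge\dphi$, and truncated addition $\min(\dphi,\cdot+\cdot)$ does not diagonalize under any Fourier transform. The paper's solution is nontrivial: it introduces an auxiliary digraph with $\dphi$ copies of each vertex, an ``expectation'' function $\hat\bE$ guessed in advance, and an inclusion--exclusion over which copies are \emph{allowed} to receive incoming arcs; correctness of the guess is then enforced by comparing two integer-valued invariants $a(P)$ and $b(P)$ carried as exponents of formal variables. Your ``vector of partial counts owed upward'' does not substitute for this.

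Two smaller points. Your clique test is incorrect: if $v,w\in t$ are incomparable in $T$ they are never adjacent in $G$, yet your ancestor-only check never compares them, so it would wrongly accept $t=\{v,w\}$ as a clique. The paper tracks, at each join, which clique variables already have vertices in some subtree and uses the zeta/M\"obius (fast subset cover) transform to combine this across children without extra branching. Finally, the Chinese-remainder use is not what you describe: a residue mod $\pphi$ already fits in $\cO(\log\pphi)$ bits. The paper uses CRT to reconstruct the (exponentially large) integer coefficients of the root polynomial from evaluations modulo many small primes $p$ for which $\FF_p^*$ has a $\pphi$-th root of unity---this is what makes the counting approach compatible with logarithmic space.
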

When the formula contains acyclicity and connectivity constraints, we prove the existence of the following model checking algorithm:
\begin{restatable}{theorem}{thmMainack}
\label{thm:plus-ac}
    There is a Monte-Carlo algorithm that given a formula $\phi \in \NEOtwo[\finrec]\ack$ with~$\ell$ size measurements and $q$ acyclicity constraints,
    a graph $G$ and an elimination tree $T$ of $G$ of depth at most $\td$, runs in time $( (\dphi+1) \cdot \pphi)^{\cO(|\phi|\cdot\td)} \cdot n^{\cO(q + \ell)}$ and space polynomial in $n + \abs{\phi} + \dphi + \pphi$ and decides whether $G \vDash \phi$ holds.
    The algorithm cannot give false positives and may give false negatives with probability at most $1/2$. 
\end{restatable}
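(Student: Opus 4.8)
The plan is to bootstrap from \Cref{thm:plus-cliques}: its model‑checking algorithm for $\NEOtwo[\finrec]\plusk$ already treats every construct of $\NEOtwo[\finrec]\ack$ except the connectivity predicate $\conn(\cdot)$ and the acyclicity predicate $\acy(\cdot)$, and---since its dynamic program over the elimination tree is assembled from algebraic transforms (inclusion--exclusion and the discrete Fourier transform)---it can be run in a \emph{counting} mode that returns, within the same time and space, the number of satisfying assignments modulo a small integer, and more generally such counts split according to the values of the size measurements. The three steps are: (1) reduce acyclicity to connectivity together with size measurements; (2) handle connectivity by a treedepth‑adapted Cut \& Count, which is the source of both the randomization and the now only polynomial space; (3) run the resulting counting instances with the machinery of \Cref{thm:plus-cliques}.

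\emph{Eliminating acyclicity.} If a term $t$ denotes an edge set $F$ with vertex set $V(F)=N^{\bN^+}(\constant V,t)$, then $(V(F),F)$ is a forest exactly when its cyclomatic number is zero, i.e.\ when the number $\mathrm{cc}(V(F),F)$ of connected components of $(V(F),F)$ equals $\abs{V(F)}-\abs{F}$; since $\mathrm{cc}(V(F),F)\ge\abs{V(F)}-\abs{F}$ always holds, this amounts to an \emph{upper} bound on the component count together with the size comparison relating $\abs{N^{\bN^+}(\constant V,t)}$, $\abs{t}$, and a guessed value. A bound on the number of components of a term's subgraph is itself a connectivity‑type query---realizable, e.g., by a marker set containing one representative per component---so after this step only $\conn(\cdot)$ remains to be handled. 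Iterating over the $\cO(n)$ possibilities created per acyclicity constraint accounts for the $n^{\cO(q)}$ factor.

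\emph{Connectivity via Cut \& Count, and executing the slices.} Replace each occurrence $\conn(t)$ by its relaxation: add a fresh vertex‑set variable $C_t\subseteq V(t)$ playing the role of one side of a cut, delete $\conn(t)$, and add $\NEOtwo[\finrec]\plusk$ constraints expressing that no edge of $t$ runs between $C_t$ and $V(t)\setminus C_t$ and that a designated pivot vertex of $t$ lies in $C_t$. For a fixed assignment the number of such cuts is $2^{\,\mathrm{cc}-1}$, where $\mathrm{cc}$ counts the components of the subgraph denoted by $t$; hence the number of satisfying assignments of the relaxed, now $\NEOtwo[\finrec]\plusk$, formula is congruent modulo $2$ to the number of satisfying assignments of the original formula---unless these cancel in pairs. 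To prevent cancellation we apply the isolation lemma: draw independent uniform weights in $\{1,\dots,2N\}$, with $N$ polynomial in $n$, for the atoms a solution is composed of, track the total weight as one further size‑measurement‑like coordinate, and answer ``yes'' as soon as some slice---indexed by the guessed component counts from the previous step and by a target weight---has odd count. If $G\not\vDash\phi$ every slice has count $0$, so there are no false positives; if $G\vDash\phi$ then with probability at least $1/2$ some weight level isolates a unique minimum‑weight satisfying assignment and the corresponding slice has odd count, bounding the false‑negative probability by $1/2$. Each slice asks for the parity of the number of satisfying assignments of a $\NEOtwo[\finrec]\plusk$ formula $\phi'$ with $\abs{\phi'}=\cO(\abs{\phi})$, $d(\phi')=d(\phi)$, period an $\cO(1)$‑factor multiple of $\pphi$ (work modulo $2\pphi$, or split via the Chinese remainder theorem), and with $\cO(q+\ell)$ size measurements in total; running the counting form of \Cref{thm:plus-cliques}, which stores at each node of the elimination tree a table of $\cO(\td\cdot\abs{\phi})$ numbers of $\cO(\log n)$ bits, gives running time $((\dphi+1)\cdot\pphi)^{\cO(\abs{\phi}\cdot\td)}\cdot n^{\cO(q+\ell)}$ and space polynomial in $n+\abs{\phi}+\dphi+\pphi$ --- here we need not invoke the Chinese‑remainder refinement used to reach logarithmic space in \Cref{thm:plus-cliques}, since polynomial space is already allowed.

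\emph{Main obstacle.} The crux is making Cut \& Count cohere with the neighborhood‑operator syntax and the top‑down treedepth recursion simultaneously. The vertex set $V(t)$ of a connectivity‑constrained term is produced by the recursion and varies with the assignment, so the pivot choice and the cut‑consistency condition must be expressed purely through neighborhood terms and size/containment comparisons and threaded through the recursion without breaking the single‑exponential bound; one must also verify that encoding cuts as extra variables and acyclicity bookkeeping as extra size measurements keeps the formula inside $\NEOtwo[\finrec]\plusk$ with only a constant‑factor blow‑up of $\abs{\phi}$, $\dphi$, and $\pphi$. A secondary subtlety is that the inclusion--exclusion and Fourier‑transform steps driving the neighborhood operators must be checked to commute with reduction modulo $2$ and with the isolation weights, so that the space‑efficient recursion still yields the correct parity on each slice.
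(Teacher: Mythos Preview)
Your high-level strategy---reduce to $\NEOtwo[\finrec]\plusk$, then handle connectivity via Cut~\&~Count with the Isolation Lemma, and reduce acyclicity to connectivity plus size bookkeeping---is the same as the paper's. One difference in the details: for acyclicity you use the marker/representative trick from Cygan et al.\ (bound the number of components by $|V|-|E|$ and place one marker per component), whereas the paper uses the universal-vertex reduction of Bodlaender et al.\ (add $v^*$ adjacent to everything, then a set $X$ is acyclic iff some $E_X$ with $E[X]\subseteq E_X\subseteq E[X]\cup E^*_X$ makes $(X\cup\{v^*\},E_X)$ a tree, i.e.\ connected with $|E_X|=|X|$). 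Both work and yield the same $n^{\cO(q)}$ overhead.

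There is, however, a genuine gap in your black-box use of \Cref{thm:plus-cliques}. You assume that algorithm ``can be run in a counting mode that returns the number of satisfying assignments modulo a small integer.'' It cannot, in general: the coefficients of the root polynomial count \emph{partial solutions} $(\hat f,\hat g,\hat{\mathbb E},\vec\cW)$, not models $(\hat f,\hat g)$, and even after imposing $a=b$ a single model may correspond to many partial solutions (the arc-assignment $\vec\cW$ is not uniquely determined by the model---for vertices with $\hat{\mathbb E}=\dphi$ there are many ways to distribute incoming arcs, and for vertices with $\hat{\mathbb E}<\dphi$ the arcs hit each copy exactly once but which neighbor lands on which copy can still vary). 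So isolating at the level of models and then checking parity of the partial-solution count does not work. The paper addresses exactly this: it draws random weights for \emph{all four} components of a partial solution (including $w^{\hat{\mathbb E}}_C$ on $V(G)\times[d_\phi]_0$ and $w^{\vec\cW}_C$ on $\vec A(\vec G_\phi)$), tracks the total weight as a new formal variable $\tau$, and applies isolation at the partial-solution level. Once that is done, the divisibility test (by $2^{r+1}$, where $r$ is the total number of connectivity constraints after the acyclicity rewriting) on the reconstructed integer coefficient is sound. Your ``secondary subtlety'' paragraph touches on this but does not resolve it; the fix is to isolate partial solutions, not models.
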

Let us remark that to handle connectivity and acyclicity constraints we rely on the Cut\&Count technique by Cygan et al.~\cite{CyganNPPRW22}.
This technique works under the assumption that a solution (if it exists) is unique. 
The standard way of ensuring this is the classic Isolation lemma~\cite{MulmuleyVV87}: one samples the weights in a certain probabilstic way so that, with high probability, the solution becomes unique.
The application of Isolation lemma is the source of both (1) randomization and (2) the increase in space complexity as storing the sampled weights requires polynomial space.
As soon as the weights are fixed, our algorithm is deterministic and runs in logarithmic space similarly to \cref{thm:plus-cliques}.

Another aspect we would like to remark is the dependency on the input formula $\phi$ in the exponent of $n$.
As stated in both theorems, the exponents of $n$ depend on $\ell$ or $\ell + q$ where $\ell$ denotes the number of size measurements and $q$ the number of acyclicity constraints in the formula~$\phi$.
In all natural problems captured by our logic, the numbers of size measurements and acyclicity constraints are bounded by very small constants, like 1 or 2, and therefore, the degree of $n$ is independent of the natural problem we solve.

\paragraph{Algorithmic meta-theorems and related logics.}
We would like to point out that our main result reads as an \emph{algorithmic meta-theorem}, i.e., a general statement of the form ``All problems with property $P$, restricted to a class of inputs $I$, can be solved efficiently''.
% They are powerful tools to show that a certain problem is tractable on a given class of structures. 
% Moreover, they provide valuable insights into how far we can use certain algorithmic techniques.
%
The archetypal, and possibly most celebrated algorithmic meta-theorem is commonly known as \emph{Courcelle's theorem}~\cite{Courcelle90them,Courcelle97,Courcelle92} (see also \cite{BoriePT92auto,ArnborgLS91easy}) which
states that all problems expressible in \MSOtwo logic are linear time solvable
on graphs of bounded treewidth.

% From another viewpoint, algorithmic meta-theorems can be also regarded as instances of the \textsc{Model Checking} problem for some logic $\mathcal{L}$ on structures from a class $\mathcal{C}$, i.e., deciding whether a sentence (that is, a formula without free variables) of $\mathcal{L}$
% evaluates to true on inputs from $\mathcal{C}$.
Designing and proving algorithmic meta-theorems is a very active area of research bringing close topics related to logic, combinatorics (especially graphs), and algorithms; see~\cite{Kreutzer11,Grohe08,GroheK09,Pilipczuk25graph,SiebertzV24advances} for related surveys. 
% In the striking majority of existing algorithmic meta-theorems, the focus is on the running time of the obtained algorithms. 
A remarkable feature of our meta-theorem is that it has a
quite moderate run time dependence on the length of the input formula (assuming a constant number of size measurements).
In stark contrast,
the run time of model checking algorithms for \FO and \MSO
often heavily depends on the formula length and the width parameter.
Even non-elementary tower functions $2^{2^{\cdot^{_{\cdot^{_\cdot}}}}}$, 
where the height depends on the length of the formula, 
are quite common and necessary~\cite{FrickG04}.
The efficiency of our model checking algorithm is shared with the model checking algorithm for the \textit{existential counting modal logic} (\ECML for short)---introduced by Pilipczuk~\cite{Pilipczuk11}---and the model checking algorithm for the \emph{distance neighborhood logic} (\DN for short)---introduced by Bergougnoux, Dreier and Jaffke~\cite{BergougnouxDJ23}---whose model checking problem admits a single-exponential algorithm for treewidth and mim-width, respectively (both model checking algorithms can handle connectivity and acyclicity constraints).

\medskip

In fact, $\NEOtwo[\finrec]$, \DN and \ECML are closely related, let us elaborate on that.
Our logic $\NEOtwo[\finrec]$ relies on the same neighborhood operators mechanism as \DN
introduced by Bergougnoux, Dreier and Jaffke \cite{BergougnouxDJ23}. 
Let $\NEOone[\fincofin]$ be the restriction of $\NEOtwo[\finrec]$ without edge set terms and where every set $\sigma\subseteq \bN$ is finite or co-finite (i.e., they forbid the ``periodicity''). Then, \DN is equivalent to the extension of $\NEOone[\fincofin]$ with neighborhood operators $N^\sigma_r(U)$ with~$r\in \bN$ that evaluate as the $\sigma$-neighborhood of $U$ in the $r$-th power of $G$.

In \cite{BergougnouxDJ23}, the authors provide a model checking algorithm for \DN running in time $n^{\Oh(d \cdot w \cdot \abs{\phi}^2)}$ where $\abs{\phi}$ is the size of the input formula, $d$ is the maximum $d(\sigma)$ over the sets $\sigma$ that occur in $\phi$, and $w$ is the mim-width of a given decomposition. 
The running time of their model checking algorithm is also upper bounded by efficient FPT functions
in terms of the treewidth, cliquewidth, and rankwidth of the given decomposition. \footnote{Assuming that the input formula has a constant number of size measurements (e.g., $t \leq m$) involving an non-constant integer (e.g., $m$).}

On the other hand, \ECML is variant of modal logic which captures a good number of problems known to be tractable in time $2^{\Oh(\tw)}n^{\Oh(1)}$ where $\tw$ is the treewidth of the input graph. 
The evaluation of formulas in modal logic is tied to an active vertex that changes over time and is not explicitly represented by a variable of the formula.
Roughly, an \ECML formula is of the form $\exists \bar X \ \xi \land \forall v \ \psi$ where:
\begin{itemize}
    \item $\exists \bar X$ is a sequence of existentially quantified vertex set and edge sets variables,
    \item $\xi$ an arbitrary quantifier-free arithmetic formula over the parameters and the cardinalities of fixed and quantified sets,
    \item $\forall v$ is simply a way of initializing the active vertex, and
    \item $\psi$ is a quantifier-free \MSOtwo formula where we can use the central operators of \ECML that are $\square^\sigma$ and $\diamondsuit^\sigma$ with $\sigma$ a finitely recognizable set. 
\end{itemize} 
When the active vertex is $v$,
a formula $\square^\sigma \phi$ shall be read as ``the number of neighbors of $w$ that satisfy $\psi$ as an active vertex belongs to $\sigma$'',
while $\diamondsuit^{\sigma} \phi$ is equivalent to $\neg\square^\sigma\neg\psi$.

Somewhat surprisingly, the authors of \cite{BergougnouxDJ23} prove that \ECML and \DN are semantically close in the sense that the operators $\square^\sigma$ and $\diamondsuit^\sigma$ are equivalent to our neighborhood operator $N^\sigma(\cdot)$. In fact, their arguments prove that \ECML is equivalent to $\NEOtwo[\finrec]$ (assuming that the arithmetic formulas are restricted to what we can do with our size measurements).
Nonetheless, we introduce $\NEOtwo[\finrec]$ because the neighborhood operators is a simpler mechanism to deal with in both the design of the model checking algorithm and expression of the problems.
%I'm a bit stuck here
% \paragraph{Problems captured.}
% We generalize all known results and we capture plenty of new problems notably problems based on modulo.
% ``Another aspect of algorithmic meta-theorems is that they materialize the applicability of certain algorithmic techniques in wide settings, which are captured by the logical condition of the meta-theorem.
% For example, Courcelle's theorem shows that the class of $\MSO$-definable problems is a (wide) class of problems that can be solved using dynamic programming on tree decomposable graphs; see~\cite{CourcelleM93,Makowsky04,LangerRRS14}. Also, the algorithmic meta-theorems of~\cite{} aim to logically encompass the problems that are amenable to the irrelevant vertex technique on minor-free graphs.
%As explained in the next section, our technique follows the approach of~\cite{PAPERSSSS} and, in a nutshell, consists of an application of the \emph{inclusion-exclusion principle} combined with the discrete Fourier transforms. Our algorithmic meta-theorem lifts the applicability of these techniques to all problems expressed in $\NEOtwo[\finrec]$.

\paragraph{Comparing our logic to $\mathsf{CMSO}_2$ and the case of universal quantifiers.}
Let us briefly mention here how the expressive power of $\NEOtwo[\finrec]$ compares to the expressive power of $\FO$, $\MSOtwo$, and $\mathsf{CMSO_2}$; the latter being the enhancement of $\MSOtwo$ with modular counting predicates (i.e., predicates expressing that for a given set variable $X$ and a fixed integer $p>1$, $|X|\equiv0 \mod p$).
Note that every formula $\varphi$ of $\NEOtwo[\finrec]$ can be rewritten to an equivalent formula~$\varphi'$ of $\mathsf{CMSO}_2$ but this transformation comes with a size blow-up that depends on the maximum of values $d(\sigma)$, over all $\sigma$ appearing in $\varphi$. 
Moreover, each size measurement $\abs{X}\leq m$ can also be expressed in $\mathsf{CMSO}_2$, but the size blow-up will depend on $m$ which can be prohibitively big, e.g., $O(n)$.
From another viewpoint, while in bounded treedepth graphs the expressive power of $\MSOtwo$ and $\FO$ coincide~\cite{ElberfeldGT16}, it is unclear to which variant of $\FO$ our logic $\NEOtwo[\finrec]$ maps to.
Also, we would like to stress that $\NEOtwo[\finrec]$ is an extension of fully-existential $\MSOtwo$ that allows limited use of universal quantifiers, ``hidden'' in the neighborhood operators, the size measurements, and the comparison between terms. The extension of the logic with acyclicity, connectivity and clique constraints, captures even more problems and all these extra features allow further use of universal quantifiers in restricted forms. 
This extension, while allowing to encode more problems and get further algorithmic results, also investigates the potential of our techniques to the regime of universal quantification.  

\paragraph{Organization of the paper}
In \cref{sec:technical} we present the main ideas behind our algorithm on a concrete formula.
Then in \cref{sec:prelim} we introduce notions and present some algebraic results used in the paper.
In \cref{sec:logic} we formally define our logic, provide an equivalent core fragment of this logic on which we focus in the remainder, and present some problems expressible in the logic.
In \cref{sec:partial} we define the partial solutions, indices and polynomials on which our algorithm relies on.
In \cref{sec:recursive} we present the recursive equalities and in \cref{sec:modelcheck} we put them together to obtain our algorithm for $\NEOtwo[\finrec]$.
After that in \cref{sec:extensions} we explain how to extend the algorithm to handle clique, acyclicity, and connectivity constraints.
Finally, in \cref{sec:conclusions} we conclude with some open questions.

\section{Technical Overview}
\label{sec:technical}
In this section, we briefly present our logic as well as the main techniques behind our model-checking algorithm.
We first explain how our algorithm works with a concrete formula, and then, we briefly discuss what changes in the general case.
%We allow the graphs to be vertex and edge colored.
So fix a graph $G$ on $n$ vertices and an elimination tree $T$ of $G$ of depth $\td$ and the following formula
\[
    \phi = \exists Y \ N^{\sigma}(\mathtt{V}, Y) = \mathtt{V} \land |Y| \leq k
\]
where $\sigma$ is the set 
\begin{equation*} %\label{eq:sigma-toy-example}
    \sigma = \{3\} \cup \{ i \mid i\geq 4 \text{ and } i \text{ is even}\}.
\end{equation*}
By \cref{def:neighborhood:operator}, the set $N^{\sigma}(\mathtt{V}, Y)$ with $\mathtt{V} = V(G)$ is the set of all vertices $v$ in $G$ such that the number of edges in $Y$ incident with $v$ is at least $4$ and even, or it is precisely $3$.
So the property $N^{\sigma}(\mathtt{V}, Y) = \mathtt{V}$ states that for every vertex $v$ in $G$, the number of incident edges from $Y$ is at least $4$ and even, or it is precisely $3$.
Finally, $|Y| \leq k$ bounds the size of $Y$ by $k$.

According to \cref{def:prelim:finitely:recognizable} we have $d(\sigma) = 4$ and $p(\sigma) = 2$. Observe that, for any integer $i \geq 4$, whether $i$ belongs to $\sigma$ depends solely on its parity, i.e., rest modulo $2$.
The definition of $\sigma$ implies that for every vertex $v$, two values are relevant in order to know whether $v\in N^\sigma(\mathtt{V},Y)$.
First, we are interested in the number of incident edges in $Y$ \emph{counting up to $d(\sigma)$ = 4}, i.e., whether $v$ has \emph{precisely} $i$ incident edges for some value $0 \leq i \leq d(\sigma) - 1 = 3$, or it has \emph{at least} $d(\sigma) = 4$ such edges.
We will use the so-called \emph{neighborhood functions} for this, such a function maps every vertex to a value between $0$ and $d(\sigma) = 4$.
Second, it is useful to know the \emph{parity} of the number of edges in~$Y$ incident to $v$.
We will use the so-called \emph{mod-degrees} for this, such a function maps every vertex to a parity from~$\FF_{p(\sigma)} = \{0, 1\}$.

%First of all, by trying all $n$ options, we may guess %first, which of the equalities are satisfied and second, 
%which cardinality the set in $Y$ has.
%Note that this already determines whether a ``candidate solution'' satisfying this guess satisfies the formula $\phi$.
%Thus, we remain with the task of checking whether such a candidate solution exists.
%For this, let $\False$ denote the set of equalities in $\phi$ we guessed to be falsified.

Our goal is to develop a space- and time-efficient top-down recursion along $T$ to check whether a set $Y$ with desired properties exists.
%The recursion tree naturally corresponds to the elimination forest $T$ of $G$.
Each recursive call of our algorithm is associated with a node $u$ of $T$, a set of edges $\sheaf(u)$ and two sets of vertices: a \textit{tail} and a \textit{subtree}.
The edge set of $G$ is partitioned into certain sets $\sheaf(w)$ among the leaves $w$ of $T$ in such a way that both end-points of every edge in $\sheaf(w)$ belong to the path from $w$ to the root (see \Cref{subsec:elimtrees} for a formal definition).
Then, $\sheaf(u)$ is defined as the union of $\sheaf(w)$ over the leaves $w$ of $T$ in $\subtree[u]$.
The subtree of a recursive call represents the set of already \emph{processed} vertices. %, our algorithm aims that for them, their number of incident edges in $Y$ belongs to $\sigma$.
On the other hand, the vertices of the tail represent \emph{active} vertices that may be incident to \emph{still unprocessed} edges outside $\sheaf(u)$. % and their number of incident edges in $Y$ may not be in $\sigma$ yet.
With a node $u$, two types of recursive calls are associated, namely when $u$ is treated as processed and when it is not processed.
%Then depending on whether $u$ is processed or not, 
In the former case, the subtree of a recursive call associated with $u$ is $\subtree[u]$, namely the set of descendants of $u$ in $T$ including $u$, and it is $\subtree(u)\defeq \subtree[u] \setminus \{u\}$ in the latter case. %, respectively.
Similarly, its tail is $\tail(u)\defeq \tail[u]\setminus \{u\}$, namely the set of proper ancestors of $u$ in $T$, or $\tail[u]$, respectively.

% For a node $u$ of $T$ the set $\tail[u]$ (resp.\ $\subtree[u]$) is the set of all ancestors (resp.\ descendants) of $u$ in $T$.
% And we define $\tail(u) = \tail[u] \setminus \{u\}$ and $\subtree(u) = \subtree[u] \setminus \{u\}$.
% Along the algorithm, the vertices in a tail of the ``current'' vertex should be thought as \emph{active} vertices which still may get incident edges.
% And the vertices in the subtree of the current vertex are \emph{processed} already and therefore, we aim to ensure that for them, the number of incident edges in $Y$ belongs to $\sigma$.
% The edge set of $G$ is partitioned into certain sets $\sheaf(w)$ among the leaves $w$ of $T$ in such a way that both end-points of every edge in $\sheaf(w)$ belong to the path from $w$ to the root (see \Cref{subsec:elimtrees} for a formal definition).
% Then we can define the set $\sheaf(u)$ as the union of $\sheaf(w)$ over the leaves $w$ of $T$ in $\subtree[u]$.
\begin{comment}
When developing recursive algorithms on elimination forests, often the most challenging task is to combine the children of some node $u$ in a time-efficient way. %combination subtrees rooted at. 
Since the number of the children can be in $\Theta(n)$, the naive idea of combining all partial solutions of all children often yields a running time exponential in $n$.
Avoiding this usually requires a non-trivial choice of the states as well as algebraic tools.
More precisely, the main goal is to replace this time-consuming combination of partial solutions by a certain point-wise product, this becomes clearer later.
%Informally speaking, we want to ensure that all combinations of partial solutions of the children are ``compatible'' with each other.
\end{comment}

Every recursive call takes as input an \emph{index} that stores the information about the tail, and each index is \emph{compatible} with a set of \emph{partial solutions}.
%For now, we fix an arbitrary non-leaf vertex $u$ of $G$, let $v_1, \dots, v_t$ denote the children of $u$, and $\tail\in\{\tail[u],\tail(u)\}$.
We will gradually present the components of an index and a compatible partial solution in the context of a recursive call associated with a vertex $u$ of $T$, a tail $\tail$ and a subtree $\subtree$.
The first component of %such 
a partial solution is a set of edges $\psg(Y) \subseteq \sheaf(u)$ corresponding to a partial interpretation of the edge set variable $Y$.

Let us explain why an index $I$ cannot simply use a function $h \colon \tail \to \{0,1,2,3,4\}$ 
to keep track of the number (up to $d(\sigma) = 4$) of edges in $\psg(Y)$ incident to each vertex of $\tail$.
%the value $h(v)$ reflects the number up to $d(\sigma) = 4$ of edges  in $Y$ incident to $u$---as we proceed in the elimination tree, the set of processed edges grows.
If $u$ is an internal node of $T$ with children $v_1,\dots,v_t$, then $\sheaf(v_1), \dots, \sheaf(v_t)$ is a partition of $\sheaf(u)$, and for each child $v_i$ of $u$ and each $v\in\tail$, there are $d(\sigma)+1$ options for the number (up to $d(\sigma) = 4$) of edges in $\psg(Y)\cap \sheaf(v_i)$ incident to $v$.
%For every child $v_i$ of $u$, there are $d(\sigma)$ options for the number of edges in $\sheaf(v_i) \cap Y$ incident to $u$.
In this case, the recursive call associated with $I$ would require to consider $(d(\sigma)+1)^{|\tail| \cdot t}$ combinations of indexes of the children of $u$.
Since  $t$ can be in $\Theta(n)$, this is a prohibitively large for the running time we aim at.
%Therefore, the number of such options over all children of $u$ is of magnitude $d(\sigma)^t$ where $t$ can be in $\Theta(n)$ and this is a prohibitively large for the running time we aim at.
One might, instead, process the children of $u$ one after another but then keeping track of all indexes of the children processed so far requires $\cO((d(\sigma)+1)^\td)$ space. % while our goal is to avoid this.
%Thus the number of possibilities of how many incident to $u$ edges in $Y$ come from each child of $u$ is of magnitude $d(\sigma)^t$ where $t$ can be in $\Theta(n)$. 

So we have to proceed differently in order to control the number of edges in $Y$ incident to each vertex. %overcome those two issues. %the dependence on $n$ in the exponent of the running time.
For this, the second component of a partial solution is a neighborhood function $\pse \colon \tail\cup \subtree \to \{0,\dots,4\}$, and each value~$\pse(v)$ is an \emph{expectation} on the number of edges of $Y$ incident to $v$ when the whole graph is processed, and not among the edges processed so far.
For this reason, the value~$\pse(v)$ remains fixed during subsequent recursive calls.
And when the whole graph is processed, our algorithm ensures that we \emph{filtered out} (via counting tricks) all partial solutions where there exists at least one vertex  $v\in V(G)$ with $\pse(v)<d(\sigma)$ and the expectation $\pse(v)$ is not equal the number of edges in $\psg(Y)$ incident to $v$. 
Due to the special meaning of the value $d(\sigma) = 4$, for vertices $v$ with $\pse(v) = d(\sigma)$ the filtering only ensures that after the whole graph is processed, there are at least edges in $\psg(Y)$ incident to $v$.
Each index has a neighborhood function $\inde \colon \tail \to \{0,\dots,4\}$ to keep track of the values $\pse(v)$ on the vertices $v\in\tail$.
Note that, in this concrete example, the domains of $\pse$ and $\inde$ could be restricted to $\{3,4\}$ because every vertex needs to be incident to exactly 3 or at least 4 edges in $Y$ but we stick to the set $\{0,\dots, 4\}$ to remain consistent with our algorithm. 

We first focus on ensuring that each $v\in\tail$ has at least $\pse(v)$ incident edges when $v$ is processed. %: we will ensure that equality holds later.
For this, we use an auxiliary graph in which every vertex $w$ of $G$ is replaced by $d(\sigma) = 4$ copies $w^1, \dots, w^4$ of itself.
Then if we choose, in a certain way, for every edge from $Y$ exactly one copy of this edge in the auxiliary graph (such choices will be parts of our \emph{partial solutions}), then the fact that $v$ has at least $\pse(v)$ incident edges is equivalent to the fact that the first $\pse(v)$ copies of $v$ are incident to at least one edge in this auxiliary graph.
The following idea is based on \emph{inclusion-exclusion} and generalizes a technique used in \cite{NederlofPSW23} for \textsc{Hamiltonian Cycle}.
To obtain the number of \emph{partial solutions} in which the first $\pse(v)$ copies of $v$ are incident to at least one edge:
\begin{itemize} 
\item We first take the number of partial solutions where all of the first $\pse(v)$ copies of $v$ are \emph{allowed} to have incident edges.
\item Then we subtract, for each $j \in \{1,\dots, \pse(v)\}$, the number of partial solutions where the $j$-th copy of $v$ is \emph{forbidden} to have incident edges. 
Observe that we subtracted the partial solutions that \emph{forbid} edges incident with, say, the first and the second copy of $v$ at least twice.
\item So now for each pair $j_1 \neq j_2 \in \{1,\dots, \pse(v)\}$, we add back the number of partial solutions where both the $j_1$-th and the $j_2$-th copy are \emph{forbidden} to have incident edges.
\item The process is repeated for every possible choice of \emph{not allowed} copies.
\end{itemize}
This motivates the second part of the index, namely, a neighborhood function $\allow \colon \tail \to \{0, \dots, 4\}$ which determines which copies of the vertices in $\tail$ are \emph{allowed} to have incident edges.
Finally, as already motivated above, the third component of the index is a mod-degree $\pdeg \colon \tail \to \FF_{p(\sigma)} = \{0, 1\}$.
Unlike the \emph{expectation} $\inde$, the value $\pdeg(u)$ is the \emph{real} parity of the number of edges in $Y$ processed so far and incident with $u$, we will see below that an algebraic discrete Fourier transform allows us to handle mod-degrees efficiently.
Altogether, an \emph{index} for $\tail$ is a tuple $(\inde,\allow,\pdeg)$ where $\inde$ and $\allow$ are neighborhood functions and $\pdeg$ is a mod-degree on $\tail$.
%So more generally an index for a vertex set $S$ is a triple $(\inde, \allow, \pdeg)$ where $\inde$ and $\allow$ are neighborhood functions on $S$ and $\pdeg$ is a mod-degree on $S$.
%We define an index on $\tail(u)$ similarly so that the domain of every function becomes $\tail(u)$.

Now we are ready to formally define partial solutions as well as the notion of \emph{compatibility} between partial solutions and indexes.
A \emph{partial solution} for $\tail(u)$ %a node $u$ 
is a triple $(\psg(Y), \pse, \psw)$ where 
\begin{itemize}
    \item $\psg(Y)$ is a subset of edges from $\sheaf(u)$. %where the union is over all leaves $w$ in $\subtree[u]$.
    \item $\pse$ is a neighborhood function on $\tail(u) \cup \subtree[u]$.
    \item $\psw$ is a set of edges of the auxiliary graph %of form $v^i w^j$ with $i,j \leq d(\sigma) = 4$ 
    such that for every pair $\{v, w\}$ of distinct vertices of $G$, there exists a pair $v^i w^j \in \psw$ if and only if we have $vw \in \psg(Y)$. 
    Furthermore, we require that for every pair $\{v, w\}$, there exists at most one pair $v^i w^j \in \psw$. 
    Informally speaking, $\psw$ \emph{assigns} every edge from $\psg(Y)$ to precisely one edge between the copies of its endpoints.
    \item Additionally, we require the following for every \emph{processed} vertex $v\in \subtree[u]$. 
    First, $v$ has at least~$\pse(v)$ incident edges in $\psg(Y)$---this ensures that the \emph{expectation} $\pse(v)$ is a lower bound on the number of incident edges in $\psg(Y)$. Secondly, either $\pse(v) = 3$, or $\pse(v) = 4$ and $v$ has an even number of incident edges in~$\psg(Y)$---i.e., assuming that the expectation is correct, the vertex~$v$ has a number of incident edges in $\psg(Y)$ from $\sigma$ as desired.
\end{itemize}
%Finally, we require that for every vertex $v$ in $\subtree[u]$ (resp.\ $\subtree[u]$), each of the copies $v^1, \dots, v^{\pse(u)}$ has an incident arc in $\psw$, i.e., $v$ has at least $\pse(u)$ incident edges in $X$ as desired. %, in this case we say that the partial solution belong to $\partial_{\tss(u)}$
Let now $I \defeq (\inde, \allow, \pdeg)$ be an index of $\tail(u)$. %for $\tail \in \{\tail(u), \tail(u)\}$.
We say that the partial solution $P \defeq (\psg(Y), \pse, \psw)$ of $u$ %$\tail \cup \subtree[u]$ 
is compatible with $I$ if the following properties are satisfied:
\begin{itemize}
    \item $\pse$ and $\inde$ are equal on $\tail(u)$.
        \item For every $v\in \tail(u)$ and every edge $v^i w^j \in \psw$, we have $i \leq \allow(v)$, i.e., there are only edges incident to \emph{allowed} copies.
    \item For every vertex $v \in \tail(u)$, the number $\pdeg(v)$ is equal, modulo $p(\sigma) = 2$, to the total number of edges in $\psw$ incident to the copies of $v$. Observe that by definition of a partial solution, this is simply the parity of the number of edges in $\psg(Y)$ incident to $v$.
\end{itemize}
To keep track of the size of $\psw$, we define the value $c_u(P) = |\psw|$.
Furthermore, we define two following values: %whose meaning will become clear a bit later:
\[
    a_u(P) = \sum_{\substack{v \in \subtree[u] \colon \\ \pse(v) < d(\sigma) = 4}} \pse(v) \text{ and } b_u(P) = \sum_{\substack{v \in \tail(u) \cup \subtree[u] \colon \\ \pse(v) < d(\sigma) = 4}} \abs{\{\{v^i, w^j\} \in \psw\}}.
\]
The following observation is crucial.
The definition of a partial solution ensures that every vertex in $\subtree[u]$ has at least $\pse(u)$ incident edges in $\psg(Y)$.
Also recall that the edges in $\psg(Y)$ are in bijection with the edges in $\psw$, and the bijection maps the endpoints of every edge to their copies.
Therefore, we always have $a_u(P) \leq b_u(P)$.
Furthermore, the equality is achieved precisely if
%Thus $P$ satisfies the property $a_u(P) = b_u(P)$ if and only if 
every vertex $v \in \subtree[u]$ with $\pse(v) < d(\sigma) = 4$ has \emph{exactly}~$\pse(v)$ incident edges in $\psg(Y)$.
Therefore, keeping track of the values~$a_u(P)$ and $b_u(P)$ permits us to keep the partial solutions for which $\pse$ correctly guesses the number of incident edges (except for the special value $d(\sigma) = 4$ for which we only ensure the existence of at least $d(\sigma) = 4$ incident edges).
We now define the polynomial
\[
    \cP_u(I) = \sum_P \alpha^{a_u(P)} \beta^{b_u(P)} \gamma^{c_u(P)}
\]
where the sum is over all partial solutions of $\tail(u)$ compatible with $I$.
Crucially, computing the polynomial $\cP_r(\emptyset)$ for the root $r$ of $T$ suffices to solve the problem.
Indeed, we have $G\models \phi$ if and only if there is a monomial $q \cdot \alpha^i \beta^i \gamma^j$ in $\cP_r(\emptyset)$ with $j \leq k$ and such that for the coefficient $q$ we have $q\neq 0$. 
Here choosing the same exponent of $\alpha$ and $\beta$ ensures, as argued above, that the \emph{expectations} made by $\pse$ reflect the correct number of incident edges in $\psg(Y)$ when it is strictly smaller than $d(\sigma)$.
And the exponent of $\gamma$ reflects the cardinality of $\psg(Y)$ so we only allow the sets of size $j \leq k$.

We can similarly define partial solutions and indexes for $\tail(u)$ by replacing $\tail(u)$ by $\tail[u]$ and $\subtree[u]$ by $\subtree(u)$, respectively, in the definitions above, we then denote the corresponding polynomials by $\cP_u[\cdot]$.

Let us now explain the recursive computation of this root polynomial.
The following equality holds for every (non-leaf) node $u$
\[
    \cP_u[I] = 
    \sum_{\substack{\pdeg_1, \dots, \pdeg_t \colon \tail[u] \to \{0,1\} \colon \\ \pdeg_1 + \dots + \pdeg_t \equiv_{2} \pdeg}} \prod_{i=1}^t \cP_{v_i}
    (\inde, \allow, \pdeg_i)
\]
if $v_1, \dots, v_t$ denote the children of $u$ in the elimination tree $T$.
Vaguely speaking, this holds mainly for the following reasons:
(1)~the sets $\subtree[v_1], \dots, \subtree[v_t]$ partition the set $\subtree(u)$,
(2)~the sets $\sheaf(v_1), \dots, \sheaf(v_t)$ partition the set $\sheaf(u)$, %the sets of edges assigned to the leaves of in $\subtree[v_1], \dots, \subtree[v_t]$ partition the edges assigned to the leaves in $\subtree(u)$,
(3)~to obtain degree (modulo $2$) of a certain vertex in $\tail[u]$ in the edge set $\psg(Y)$, we can simply sum up its degrees restricted to the edges sets $\psg(Y)\cap\sheaf(v_1), \dots, \psg(Y)\cap\sheaf(v_t)$, and
(4)~when multiplying the polynomials, their degrees are added so that the values $a_u[\cdot]$, $b_u[\cdot]$, and $c_u[\cdot]$ are computed correctly.
The above formula has the problem that it requires to branch over $2^{|\tail[u]| \cdot t}$ options with $t$ possibly in $\Theta(n)$.
But we can show that for the discrete Fourier transforms $\cQ^p_u[\cdot]$ and $\cQ^p_{v_i}(\cdot)$ of $\cP_u[\cdot]$ and $\cP_{v_i}(\cdot)$, respectively, (see \Cref{subsec:dft-definitions} and \Cref{def:modelcheck:transform} for a formal definition) we have 
\[
    \cQ^p_u[I] \equiv_p \prod_{i=1}^t \cQ^p_{v_i}(I)
\]
for a prime number $p$ satisfying certain technical preconditions.
For this reason, our algorithm works with discrete Fourier transforms and thus quickly combines the children of $u$ without any branching.

To compute the polynomial $\cP_u(I)$ from some polynomials of form $\cP_u[\cdot]$, we use the following equality:
%, i.e., how to ``move'' the vertex $u$ from the \emph{active} $\tail$ to the \emph{processed} $\subtree$.
%Then we have:
\begin{align*}
    \cP_u(\inde, \allow, \pdeg) = &\sum_{0 \leq i \leq 3} \incexc(3,i) \cdot \cP_u[\inde[u \mapsto 3], \allow[u \mapsto i], \pdeg[u \mapsto 1]] \cdot \alpha^{3} + \\
    &\sum_{0 \leq i \leq 4} \incexc(4,i) \cdot \cP_u[\inde[u \mapsto 4], \allow[u \mapsto i], \pdeg[u \mapsto 0]].
\end{align*}
Let us sketch what all of those values are and why the formula is correct.
For an integer $i$ and a function $f \in \{\inde, \allow, \pdeg\}$, the function~$f[u \mapsto i]$ has the domain $\tail[u]$, agrees with $f$ on $\tail(u)$, and maps $u$ to $i$. %and extends the function $f$ by adding $u$ to its domain and mapping it to $i$.
Since the vertex $u$ moves to the \emph{processed} part, by definition of a partial solution we want to ensure that either (1)~$u$ has precisely $j = 3$ incident edges in~$\psg(Y)$, in particular this number is odd, (this corresponds to the first summand); or
(2)~$u$ has at least $j = 4$ incident edges in $\psg(Y)$ and this number is even (this corresponds to the second summand).
And in both cases we need to ensure that $u$ has at least $j$ incident edges in $\psg(Y)$, i.e., each of the $j$ copies of~$u$ is incident to a least one edge in $\psw$. 
The function $\incexc(j, i)$ is the coefficient used to apply the inclusion-exclusion sketched earlier: to ensure that all $j$ copies of $u$ are incident to an edge in $\psw$, we \emph{allow} edges to be incident to various subsets of the copies of $u$, and properly combine those options via the coefficients $\incexc(k, i)$.
Finally, the first summand is multiplied by~$\alpha^3$ to correctly update the value $a_u(\cdot)$, i.e., reflect that $u$ now belongs to the subtree and $\pse(u) = 3 < 4 = d(\sigma)$.
In terms of running time, the branching factor of this computation is only constant.

The last missing piece for the recursive algorithm is the non-recursive computation of the polynomial $\cP_u[I]$ for some leaf $u$.
We do not provide the details here due to their technicality but we can show that for the discrete Fourier transformed polynomial $\cQ^p_u[I]$, the computation can be carried out in polynomial time by considering every edge independently and correctly initializing the exponents of $\alpha$, $\beta$, and $\gamma$ depending on whether the edge is taken into $\psg(Y)$.
Altogether, the ideas presented so far suffice to obtain a recursive algorithm with only a constant branching factor and the recursion depth bounded by the depth $\td$ of the elimination tree $T$, i.e., with the running time of $2^{\cO(\td)} n^{\cO(1)}$.
At every level of recursion, we keep track of at most $n^{\cO(1)}$ coefficients of polynomials (this follows from the upper-bounds on the degrees of the formal variables $\alpha$, $\beta$, and $\gamma$) each of length upper-bounded by $n^{\cO(1)}$ so the space complexity is polynomial.

\paragraph*{General case}
We now briefly summarize which of the techniques from the paper were not covered by this example.
\begin{enumerate}
    \item As done in \cite{BergougnouxDJ23}, we define a simpler fragment $\core\NEOtwo[\finrec]\ack$ of $\NEOtwo[\finrec]\ack$ and show that it has the same expressive power. Moreover, we observe that the conversion of any $\NEOtwo[\finrec]\ack$ formula to a $\core\NEOtwo[\finrec]\ack$ formula 
    can be done with only a linear increase of the formula length, increasing the number of size measurements by at most two, and without altering the values $\dphi$ and $\pphi$.  %set of integers occurring in the formula. 
    In particular, $\core\NEOtwo[\finrec]\ack$ has no nested neighborhood operators, in fact, all of them are of the form $N^\sigma(X,Y)$ where $X$ and~$Y$ are a vertex set variable and an edge set variable, respectively.
    \item The neighborhood operators with various sets $\sigma$ can occur in the formula. We show that we can deal with all of them by counting the number of neighbors (1) up to $\dphi$ defined as the maximum of all $d(\sigma)$ %over all sets occurring in $\sigma$ 
    and (2) modulo $\pphi$ where $\pphi$ is defined as the least common multiple of all $p(\sigma)$.
   % We take into account every $\sigma$ occurring in the formula.
    Further, we use ``private'' functions $\inde$, $\pdeg$, $\allow$ for every pair $(X, Y)$ such that $N^{\sigma}(X, Y)$ occurs in the formula for some $\sigma \subseteq \bN$.
    \item To keep track of vertex set variables we extend the index by a so-called \emph{vertex interpretation}, usually denoted by $\indf$, which stores, for every vertex on the tail, to which vertex set variables it belongs. 
    \item Dealing with neighborhood operators where the first operand is a variable creates asymmetries: whether a vertex $u$ belongs to $N^{\sigma}(X, Y)$ depends only on the edges $uv \in Y$ where $v\in X$.
    To deal with these asymmetries, the auxiliary graph becomes an auxiliary digraph and $\psw$ becomes a function mapping each pair $(X, Y)$ such that $N^{\sigma}(X, Y)$ occurs in the formula to a subset of arcs of this digraph.
    
    \item The formula, in general, might be not in conjunctive normal form (unlike our example), and some of the vertex or edge set equalities might, in general, not hold in a solution. So we extend the index by a subset $\False$ of set equalities occurring in the formula, it then determines which equalities are \emph{allowed to be falsified}, and this value remains fixed during the recursive calls.
    We then use another inclusion-exclusion to obtain partial solutions where \emph{precisely} the equalities from $\False$ are falsified.
    \item The formula might contain multiple size measurements, we introduce one formal variable $\gamma_i$ per measured variable $X_i$.
    \item To decrease the space complexity to logarithmic in %the number 
    $n$, %of vertices, 
    we make use of the technique of \cite{PilipczukW18} based on the Chinese remainder theorem. This technique reconstructs a polynomial by evaluating it in many points modulo certain primes of length logarithmic in $n$. %See \cref{sec:modelcheck} for more details.
    \item To handle connectivity and acyclicity constraints, we employ Cut\&Count by Cygan et al.~\cite{CyganNPPRW22}. %See \cref{sec:extensions} for more details.
\end{enumerate}

\section{Preliminaries}
\label{sec:prelim}

We use the convention that $0$ belongs to $\bN$ and by $\bN^+$ we denote the set $\bN \setminus \{0\}$.
For a non-negative integer $i \in \bN$, by $[i]$ we denote the set $\{j \in \bN^+ \mid j \leq i\}$, in particular, it holds that $[0] = \emptyset$.
By $[i]_0$ we then denote the set $[i] \cup \{0\}$.
%For a function $f \colon A \to B$ and elements $a, b$ (not necessarily from $A \cup B$), the function $f[a \mapsto b] \colon A \cup \{a\} \to B \cup \{b\}$ is given by $f[a \mapsto b](x) = f(x)$ for $x \neq a$ and $f[a \mapsto b](a) = b$.
We will also use the Iverson bracket notation: for Boolean predicate $p$, the value $[p]$ is equal to $1$ if $p$ is true and equal to zero otherwise.

\subsection{Graphs}
We consider finite, simple, vertex and edge colored graphs $G$ where $V(G)$ and $E(G)$ denote the vertex and edge set, respectively. We assume that a color is used either on vertices or edges.
For a graph $G$ and color class $\constant{P}$, we denote the set of elements in $G$ with color $\constant{P}$ by $\constant{P}(G)$.
Note that color classes may be empty and that a vertex or an edge may have zero, one, or multiple colors.
We denote by $n$ the order of $V(G)$, where the corresponding graph is always clear from the context.
For every vertex set $A\subseteq V(G)$ we denote by $\comp{A}$ the set $V(G)\setminus A$.

The subgraph of $G$ \emph{induced} by a subset $A \subseteq V(G)$, 
whose vertex set is $A$ and whose edge set is $\{uv \in E(G) \mid u, v \in A\}$,
is denoted by $G[A]$.

\subsection{Elimination trees and treedepth}\label{subsec:elimtrees}

Let $G = (V, E)$ be a graph. A forest $T$ is an \emph{elimination forest} of $G$ if the vertex set of $T$ coincides with $V$ and for every edge $uv \in E$, either $u$ is an ancestor of $v$, or $v$ is an ancestor of $u$ in $T$.
We refer to the vertices of $T$ as \emph{nodes}.
The \emph{depth} of $T$ is the maximum number of vertices on any root-to-leaf path.
And the \emph{treedepth} of $G$ is the minimum depth over all elimination forests of $G$.
Every node in a tree of depth at most $d$ has at most $d$ ancestors so the following well-known property holds.
\begin{observation}\label{obs:prelim:treedepth:nb:edges}
    If a graph on $n$ vertices admits an elimination forest of depth $d$, this graph has at most $d \cdot n$ edges.
\end{observation}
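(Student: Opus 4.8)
The statement to prove is Observation~\ref{obs:prelim:treedepth:nb:edges}: if a graph on $n$ vertices admits an elimination forest of depth $d$, then it has at most $d \cdot n$ edges.

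\medskip

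The plan is to exploit the defining property of an elimination forest directly. Recall that in an elimination forest $T$ of $G$, every edge $uv \in E(G)$ has its endpoints in an ancestor--descendant relation in $T$. So I would charge each edge to its lower (descendant) endpoint: for $uv \in E(G)$ with $v$ a descendant of $u$ in $T$, assign the edge to $v$. This gives a well-defined map from $E(G)$ to $V(G)$ (it is well-defined because exactly one of $u,v$ is the descendant of the other). The key point is then to bound, for each node $v$, the number of edges assigned to it.

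\medskip

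First I would observe that every edge assigned to $v$ has its other endpoint among the proper ancestors of $v$ in $T$. Since $T$ has depth at most $d$, the number of vertices on the root-to-$v$ path is at most $d$, hence $v$ has at most $d - 1 < d$ proper ancestors. Because $G$ is simple, there is at most one edge between $v$ and each such ancestor, so at most $d$ (in fact at most $d-1$) edges are assigned to $v$. Summing over all $n$ nodes of $T$ — whose node set coincides with $V(G)$ — gives $|E(G)| \le d \cdot n$, which is the claim.

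\medskip

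I do not expect any real obstacle here; the only thing to be a little careful about is the convention that "depth" counts vertices on a root-to-leaf path (as fixed in Subsection~\ref{subsec:elimtrees}), so that "depth at most $d$" translates to "at most $d-1$ proper ancestors," which is what makes the bound $d\cdot n$ (rather than something off by one) come out cleanly. Everything else is a routine counting argument using only the definition of an elimination forest and simplicity of $G$.
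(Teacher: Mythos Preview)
Your proof is correct and follows essentially the same approach as the paper, which simply remarks that every node in a tree of depth at most $d$ has at most $d$ ancestors and deduces the bound immediately. You are in fact slightly more careful than the paper about the proper-ancestor count and the off-by-one issue, but the underlying charging argument is identical.
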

Along this work, we assume that the elimination forest consists of a single tree (and we then refer to it as \emph{elimination tree}) as we can choose one dedicated root in an elimination forest and attach the roots of the remaining trees as its new children: the arising rooted tree is then elimination tree and its depth is at most one larger than the depth of the forest we started with.
The elimination tree of a graph will be fixed and we will use $\td$ to denote its depth.

Let $T$ be a elimination tree of a graph $G$. Following the standard terminology (see \cite{HegerfeldK20,NederlofPSW23,PilipczukW18}) for algorithms on elimination trees, we define the following sets for every node $u$ of $T$:
\begin{align*}
    \subtree[u] & \defeq \{v\colon v \text{ is descendant of } u\}, & \qquad \subtree(u) & \defeq \subtree[u]\setminus \{u\}, \\
    \tail[u] & \defeq \{v\colon v\text{ is ancestor of } u\}, & \qquad \tail(u) & \defeq \tail[u]\setminus \{u\}, \\
 \broom[u] & \defeq \tail[u]\cup \subtree[u]. & &
\end{align*}
%\[
%    \broom[u] = \broom(u) = \tail[u] \cup \subtree[u].
%\]
We remark that we assume the convention that $u$ is both an ancestor and a descendant of itself, i.e., both $u \in \tail[u]$ and $u \in \subtree[u]$ hold.
We fix an arbitrary left-to-right ordering of the children of every node. 
Similarly to \cite{NederlofPSW23} we define the following.
For a node $u$ by $\operatorname{left}(u)$ we denote the leftmost leaf descendant of $u$, that is, the leaf obtained by starting at $u$ and iteratively moving to the leftmost child of the current vertex until a leaf is reached.
For a leaf $w$ of $T$, we define 
\[
    \sheaf(w) \defeq \{ ab \mid ab \in E(G), a \text{ is an ancestor of } b, \operatorname{left}(b) = w\}
\]
and finally, for a node $u$ we define 
\[
    \sheaf(u) \defeq \bigcup_{w \in \subtree[u], w \text{ is a leaf of }T} \sheaf(w).
\]
See~\Cref{fig:sheaf} for an illustration.
\begin{figure}[ht]
    \centering
    \begin{tikzpicture}
        \node[black node,blue] (b0) at (0,0) {};
        \node[black node,blue] (b1) at (230:0.7) {};
        \node[black node,blue] (b2) at (230:1.4) {};
        \node[rect node,blue,label={below:$w$}] (w) at (230:2.1) {};

        \node[white node] (x) at (120:0.7) {};
        \node[white node] (z) at (110:1.7) {};
        \node[white node] (x') at ($(b0)+(-1.5,0)$) {};
        
        \node[white node] (y1) at ($(b1)+(0.8,0)$) {};
        \node[white node] (y1') at ($(b1)+(1.3,0)$) {};
        \node[rect node,magenta] (y2) at ($(b2)+(0.8,0)$) {};
        \node[label={below:$w'$}] at ($(y2)+(0.1,0.2)$) {};
        \node[white node] (y3) at ($(w)+(0.8,0)$) {};

        \draw[gray!20!white, ultra thick]
            (b0) -- (b1) -- (b2) -- (w)
            (b0) -- (x)
            (b0) -- (y1) (b0) -- (y1')
            (b1) -- (y2)
            (b2) -- (y3)
            (x) -- (x')
            (x) -- (z) ;

        \draw[black]
            (x) to [bend left=30] (y1')
            (b0) to [bend left=30] (y1)
            (b0) to [bend left=30] (y1')
            (b2) to [bend right=10] (y3)
            (y3) to [bend left=10] (b1)
            ;

        \draw[magenta, very thick]
            (x) to [bend right=10] (y2)
            (b0) to [bend right=10] (y2)
            (b1) to [bend right=10] (y2)
            ;
        
        \draw[blue, very thick]
            (z) to [bend left=30] (b0)
            (w) to [bend left=30] (b2)
            (b2) to [bend left=30] (b1)
            (b1) to [bend left=30] (b0)
            (b0) to [bend right=30] (x)
            (b1) to [bend left=30] (z)
            (b2) to [bend left=30] (x)
            (w) to [bend left=45] (b1)
            (b2) to [bend left=45] (b0)
            ;    
        
    \end{tikzpicture}
    \caption{Illustration of an elimination tree of a graph with two distinguished leaves $w$ and $w'$; the edges of the tree are depicted in light grey. The vertices $b$ with $\operatorname{left}(b)=w$ are depicted in blue. The (thick) blue edges are the edges of $\sheaf(w)$ and the (thick) magenta edges are the edges of $\sheaf(w_2)$.}
    \label{fig:sheaf}
\end{figure}
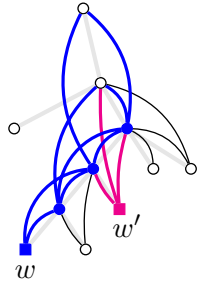

First, observe that $(\sheaf(w))_{w \text{ is a leaf}}$ is a partition of $E(G)$.
Second, for every non-leaf node $u$ with children $v_1, \dots, v_t$, the sets $\sheaf(v_1), \dots, \sheaf(v_t)$ partition $\sheaf(u)$ and the sets $\subtree[v_1]$, $\dots, \subtree[v_t]$ partition $\subtree(u)$.
And finally, the following holds:

\begin{observation}\label{obs:sheaf-end-vertices}
    For every node $u$ of $T$, we have $V(\sheaf(u)) \subseteq \tail[u] \cup \subtree[u]$.
\end{observation}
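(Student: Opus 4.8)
The plan is to unfold the definitions and reduce everything to the standard fact that the ancestors of a fixed node in a rooted tree form a chain. Concretely, fix an arbitrary edge $ab \in \sheaf(u)$. By the definition $\sheaf(u) = \bigcup_{w} \sheaf(w)$, where $w$ ranges over the leaf descendants of $u$, there is a leaf $w \in \subtree[u]$ of $T$ with $ab \in \sheaf(w)$. Unfolding the definition of $\sheaf(w)$, and possibly swapping the names of the two endpoints, we may assume that $a$ is an ancestor of $b$ in $T$ and that $\operatorname{left}(b) = w$.

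The key step is then to observe that $\operatorname{left}(b) = w$ forces $w$ to be a descendant of $b$: by the recursive description of $\operatorname{left}(\cdot)$ (start at $b$ and repeatedly move to the leftmost child until reaching a leaf), the leaf $w$ lies in $\subtree[b]$, so $b$ is an ancestor of $w$. Since $a$ is an ancestor of $b$, transitivity of the ancestor relation gives that $a$ is an ancestor of $w$ as well. On the other hand, $u$ is an ancestor of $w$ precisely because $w \in \subtree[u]$. Hence $a$, $b$, and $u$ all lie on the unique root-to-$w$ path in $T$, and are therefore pairwise comparable in the ancestor--descendant order. In particular $a$ is comparable with $u$ and $b$ is comparable with $u$, i.e., each of $a$, $b$ is either an ancestor of $u$ or a descendant of $u$, and thus belongs to $\tail[u] \cup \subtree[u] = \broom[u]$ (recall $\tail[u]$ is the set of ancestors of $u$ and $\subtree[u]$ the set of its descendants, both including $u$). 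Taking the union over all edges of $\sheaf(u)$ then yields $V(\sheaf(u)) \subseteq \tail[u] \cup \subtree[u]$.

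I do not expect a genuine obstacle: the statement is essentially a consistency check on the definitions. The only point that warrants a sentence of care is the implication $\operatorname{left}(b) = w \Rightarrow w \in \subtree[b]$, which is immediate from the iterative ``descend to the leftmost child'' definition of $\operatorname{left}$; everything else is the elementary observation that the set of ancestors of a node in a rooted tree is totally ordered.
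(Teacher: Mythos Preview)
Your proof is correct; it is precisely the natural unfolding of the definitions of $\sheaf(\cdot)$ and $\operatorname{left}(\cdot)$, combined with the fact that ancestors of a node form a chain. The paper states this observation without proof, so there is nothing further to compare.
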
   
The notion of partial solutions we use in our proofs is defined with respect to a triple $\tss = (\tail, \subtree, \sheaf)$---called a \emph{triad} (in $G$)---where $\tail$ and $\subtree$ are disjoint subsets of $V(G)$ and $\sheaf$ is a subset of $E(G[\tail\cup \subtree])$.
The elements of this triad are then referred to as respectively the \emph{tail}, the \emph{subtree}, and the \emph{sheaf} of $\tss$.
% For shortness, we will often use $\tss$ as an acronym for $(\tail, \subtree, \sheaf)$.
We define the \emph{broom} of the triad $\tss$ as the set $\tail \cup \subtree$. 
In our model checking algorithm, we deal with the triads that are naturally induced by the elimination tree. For every node $u$ of $T$, we define the triads $\tss[u] = (\tail[u], \subtree(u), \sheaf(u))$ and $\tss(u) = (\tail(u), \subtree[u], \sheaf(u))$.
Observe that $\tss[u]$ and $\tss(u)$ have the same sheaf and the same broom which we refer to as $\broom[u]$.
Notice that, for the root $r$ of $T$, we have $\tss(r) = (\emptyset, V(G), E(G))$.

\subsection{Discrete Fourier transforms}\label{subsec:dft-definitions}
For integers $a, b, c$ with $c > 0$ we write $a \equiv_c b$ to denote that two integers $a$ and $b$ leave the same rest when divided by $c$, i.e., are equal modulo $c$.
Also if $a \in [c-1]_0$, we sometimes write $a = b \pmod{c}$ to denote that $a$ is the rest of $b$ modulo $c$.
We naturally extend this notation to polynomials and functions.
For a positive integer $r$, by $\bZ_r$ we denote the ring of integers modulo~$r$.
For a prime number $p$, by $\FF_p$ we denote the field of integers modulo $p$, and by $\FF_p^*$ we denote the multiplicative group of this field.
If $\FF_p^*$ contains a number $\omega$ such that $\omega^r \equiv_p 1$ and $\omega^j \not\equiv_p 1$ for all $1 \leq j < r$, then $\omega$ is called the \emph{$r$-th root of unity}.
The $r$-th root of unity of $\FF_p^*$ does not necessarily exist and it is, in general, not unique.

For the remainder of this subsection we fix a finite set $D$, a positive integer $r$, and a prime number $p$ such that in $\FF^*_{p}$ the $r$-th root of unity, denoted by $\omega = \omega_{r, p}$, exists. 
If multiple such roots of unity exist, we fix one arbitrary (it will be the smallest one in our applications). %In the following, $\omega$ is o
For a mapping $h \colon (D \to \bZ_r) \to \FF_{p}$ the \emph{discrete Fourier transform} (also called \emph{DFT}) of $h$ $\DFT_{D, r, p}(h) \colon (D \to \bZ_r) \to \FF_{p}$ and the \emph{inverse discrete Fourier transform} (also called \emph{inverse DFT}) of $h$ $\DFT_{D, r, p}^{-1}(h) \colon (D \to \bZ_r) \to \FF_{p}$ are defined as
\[
    \DFT_{D, r, p}(h)(y) \equiv_p \sum_{q \colon D \to \bZ_r} \omega^{\sum_{z \in D} q(z) \cdot y(z)} \cdot h(q).
\]
and
\[
    \DFT^{-1}_{D, r, p}(h)(y) \equiv_p \frac{1}{r^{|D|}} \sum_{q \colon D \to \bZ_r}\omega^{\sum_{z \in D} -q(z) \cdot y(z)} \cdot h(q).
\]
for every $y \colon D \to \bZ_r$, here $\frac{1}{r^{|D|}}$ denotes the inverse of $r^{|D|}$ in $\FF^*_p$.
For $y, q \colon D \to \bZ_r$ we use the notation $q \cdot y$ for $\sum_{z \in D} q(z) \cdot y(z)$.
With these shortcuts we have 
\[
    \DFT_{D, r, p}(h)(y) \equiv_p \sum_{q \colon D \to \bZ_r} \omega^{q \cdot y} \cdot h(q) \text{ and } \DFT^{-1}_{D, r, p}(h)(y) \equiv_p \frac{1}{r^{|D|}} \sum_{q \colon D \to \bZ_r} \omega^{-q \cdot y} \cdot h(q).
\]
The following two classical results will be useful for us.
We refer, for example, to the survey by van Rooij~\cite{Rooij20} for the proofs and applications in the area of fast algorithms on tree decompositions.
First, the inverse DFT is indeed the inverse of the DFT:
\begin{theorem}\label{thm:inverse-dft}
    It holds that $\DFT^{-1}_{D, r, p} (\DFT_{D, r, p}(h)) \equiv_p h$ and $\DFT_{D, r, p} (\DFT^{-1}_{D, r, p}(h)) \equiv_p h$. 
\end{theorem}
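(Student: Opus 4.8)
\textbf{Proof plan for \Cref{thm:inverse-dft}.}
The plan is to prove directly that composing the DFT with the inverse DFT yields the identity on every mapping $h \colon (D \to \bZ_r) \to \FF_p$; the other direction follows by an entirely symmetric computation (swapping the roles of $\omega$ and $\omega^{-1}$, which is legitimate since $\omega^{-1}$ is also a primitive $r$-th root of unity in $\FF_p^*$). So fix such an $h$ and a target point $y \colon D \to \bZ_r$. Unfolding the two definitions, I would write
\[
    \DFT^{-1}_{D,r,p}(\DFT_{D,r,p}(h))(y) \equiv_p \frac{1}{r^{|D|}} \sum_{q \colon D \to \bZ_r} \omega^{-q \cdot y} \sum_{x \colon D \to \bZ_r} \omega^{x \cdot q} \cdot h(x),
\]
then exchange the two finite sums (justified since everything lives in the field $\FF_p$) to obtain
\[
    \frac{1}{r^{|D|}} \sum_{x \colon D \to \bZ_r} h(x) \sum_{q \colon D \to \bZ_r} \omega^{q \cdot (x - y)},
\]
where $x-y$ denotes the pointwise difference in $\bZ_r$ and I use bilinearity of the pairing $q \cdot y = \sum_{z \in D} q(z) y(z)$ to combine the two exponents into $q \cdot (x-y)$.

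The key step is then to evaluate the inner sum $S(x,y) \defeq \sum_{q \colon D \to \bZ_r} \omega^{q \cdot (x-y)}$ and show it equals $r^{|D|}$ when $x = y$ and $0$ otherwise. Since the exponent $q \cdot (x-y) = \sum_{z \in D} q(z)(x(z)-y(z))$ splits as a sum over coordinates and $q$ ranges independently over all functions $D \to \bZ_r$, the sum factorizes as a product over $z \in D$:
\[
    S(x,y) \equiv_p \prod_{z \in D} \left( \sum_{j \in \bZ_r} \omega^{j \cdot (x(z) - y(z))} \right).
\]
For each coordinate $z$, if $x(z) = y(z)$ the inner factor is $\sum_{j=0}^{r-1} 1 = r$; if $x(z) \neq y(z)$, writing $c = x(z) - y(z) \not\equiv_r 0$, the factor is the geometric sum $\sum_{j=0}^{r-1} (\omega^c)^j$. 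Here I would invoke the defining property of the $r$-th root of unity: $\omega^r \equiv_p 1$ but $\omega^j \not\equiv_p 1$ for $1 \le j < r$, which forces $\omega^c \not\equiv_p 1$ (as $c$ can be taken in $\{1,\dots,r-1\}$), so $(\omega^c - 1)$ is invertible in $\FF_p$ and the geometric sum equals $\frac{(\omega^c)^r - 1}{\omega^c - 1} = \frac{(\omega^r)^c - 1}{\omega^c-1} = 0$. Hence $S(x,y) = 0$ whenever $x \neq y$ in at least one coordinate, and $S(y,y) = r^{|D|}$. Substituting back, only the term $x = y$ survives and we get $\frac{1}{r^{|D|}} \cdot h(y) \cdot r^{|D|} = h(y)$, as required. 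The other composition $\DFT_{D,r,p}(\DFT^{-1}_{D,r,p}(h)) \equiv_p h$ is proved the same way, with the geometric-sum argument applied to $\omega^{-1}$.

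I do not expect a genuine obstacle here — this is the standard orthogonality-of-characters argument transplanted to $(\bZ_r)^D$ over a field $\FF_p$ containing the needed root of unity. The only points requiring a little care are: (i) checking that the factorization of the sum over $q \colon D \to \bZ_r$ into a product over coordinates is valid (it is, because choosing $q$ amounts to choosing each $q(z) \in \bZ_r$ independently); (ii) ensuring $r$ is invertible in $\FF_p$ so that $\frac{1}{r^{|D|}}$ makes sense — this follows because $r \mid p-1$ (an $r$-th root of unity exists in the cyclic group $\FF_p^*$ of order $p-1$ only if $r$ divides $p-1$), so $\gcd(r,p)=1$; and (iii) keeping the bookkeeping of additive arithmetic in $\bZ_r$ (the exponents) versus multiplicative arithmetic in $\FF_p$ straight, since $\omega^j$ only depends on $j$ modulo $r$. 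None of these is hard, so the proof is short.
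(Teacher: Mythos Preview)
Your proof is correct and is precisely the standard orthogonality-of-characters argument. The paper does not give its own proof of this theorem but instead refers to the survey by van Rooij~\cite{Rooij20}; your argument is exactly what one finds in such references, so there is nothing to compare.
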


For $y, q \colon D \to \bZ_r$ we define  $y+q \colon D \to \bZ_r$ as $(y+q)(d) \equiv_r y(d) + q(d)$ for every $d \in D$.
Further, for $h, g \colon (D \to \bZ_r) \to \FF_{p}$ we define $h \cdot g \colon (D \to \bZ_r) \to \FF_{p}$ via $(f \cdot g)(y) \equiv_p h(y) \cdot g(y)$ for every $y \in D \to \bZ_r$.
Then the following holds
\begin{theorem}\label{thm:fourier-convolution-result}
    Let $t \in \bN$, let $h^1, \dots, h^t \colon (D \to \bZ_r) \to \FF_{p}$, and let $y \colon D \to \bZ_r$. Then it holds that 
    % \[
    %     \sum_{\substack{y^1, \dots, y^t \colon D \to \FF_p \\ y^1 + \dots + y^t = y}} h^1(y^1) \cdot \ldots \cdot h^t(y^t) = (\DFT^{-1}_{D, p, p'} (\DFT_{D, p, p'}(h^1) \cdot \ldots \cdot \DFT_{D, p, p'}(h^t)))(y), 
    % \]
    % i.e., 
    \[
        \sum_{\substack{y^1, \dots, y^t \colon D \to \bZ_r \\ y^1 + \dots + y^t \equiv_r y}} \prod_{i = 1}^t h^i(y^i) \equiv_p (\DFT^{-1}_{D, r, p} \Bigl(\prod_{i=1}^t \DFT_{D, r, p}(h^i)\Bigr))(y), 
    \]
\end{theorem}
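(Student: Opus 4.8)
The plan is to prove the convolution identity by reducing it to the two already-available facts: \Cref{thm:inverse-dft} (inversion) and the multiplicativity of the DFT over the pointwise product. Since \Cref{thm:fourier-convolution-result} has the inverse DFT applied to $\prod_{i=1}^t \DFT_{D,r,p}(h^i)$, it suffices to show that the inner product $\prod_{i=1}^t \DFT_{D,r,p}(h^i)$ is exactly the (forward) DFT of the left-hand side, and then apply inversion to both sides. So first I would define $g \colon (D \to \bZ_r) \to \FF_p$ by $g(y) \equiv_p \sum_{y^1 + \dots + y^t \equiv_r y} \prod_{i=1}^t h^i(y^i)$, i.e. the iterated convolution, and prove the key claim $\DFT_{D,r,p}(g) \equiv_p \prod_{i=1}^t \DFT_{D,r,p}(h^i)$.

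To prove that claim I would expand the right-hand side directly from the definition: $\prod_{i=1}^t \DFT_{D,r,p}(h^i)(w) \equiv_p \prod_{i=1}^t \sum_{q^i \colon D \to \bZ_r} \omega^{q^i \cdot w} h^i(q^i)$. Distributing the product of sums turns this into a single sum over tuples $(q^1,\dots,q^t)$ of $\omega^{(q^1 + \dots + q^t)\cdot w}\prod_{i=1}^t h^i(q^i)$, where I use that the exponent is additive: $\sum_i q^i \cdot w = \bigl(\sum_i q^i\bigr)\cdot w$ (as a bilinear pairing into $\bZ_r$, and $\omega^r \equiv_p 1$ so the exponent only matters modulo $r$). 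Now I group the tuples by the value $q \equiv_r q^1 + \dots + q^t$: the sum becomes $\sum_{q \colon D \to \bZ_r} \omega^{q \cdot w} \sum_{q^1 + \dots + q^t \equiv_r q} \prod_i h^i(q^i) = \sum_{q} \omega^{q\cdot w} g(q) \equiv_p \DFT_{D,r,p}(g)(w)$, which is exactly the claim. The only subtlety to be careful about is that $\omega$ is an $r$-th root of unity so all exponents live in $\bZ_r$, which is consistent with $q^i$ and their sums being $\bZ_r$-valued; this is a routine check.

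With the claim in hand, the theorem follows in one line: applying $\DFT^{-1}_{D,r,p}$ to both sides of $\DFT_{D,r,p}(g) \equiv_p \prod_{i=1}^t \DFT_{D,r,p}(h^i)$ and invoking \Cref{thm:inverse-dft} gives $g \equiv_p \DFT^{-1}_{D,r,p}\bigl(\prod_{i=1}^t \DFT_{D,r,p}(h^i)\bigr)$, and evaluating at $y$ yields precisely the stated equation. A minor edge case worth a remark is $t = 0$ (or $t = 1$): for $t = 1$ the identity degenerates to $h^1 \equiv_p \DFT^{-1}_{D,r,p}(\DFT_{D,r,p}(h^1))$, which is just inversion again, and for $t = 0$ the empty product and empty convolution both conventionally return the appropriate constant function, so the statement holds trivially; I would mention this only in passing.

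The main obstacle is essentially bookkeeping rather than a genuine difficulty: one must handle the interchange of the finite product $\prod_i \sum_{q^i}$ with the sum over tuples, and keep the modular arithmetic straight — the summation index $q^i$ and the grouped index $q$ are functions $D \to \bZ_r$, the pairing $q \cdot y = \sum_{z\in D} q(z)y(z)$ should be read modulo $r$ when it appears in the exponent of $\omega$, and all equalities between field elements are modulo $p$. Since everything is over the finite ring $\FF_p$ and the index sets are finite, there are no convergence issues, so once the indexing is set up carefully the computation is forced. I would therefore present it as a short direct calculation rather than invoking any heavier machinery.
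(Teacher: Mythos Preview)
Your proof is correct and is the standard argument for the convolution theorem for the discrete Fourier transform. The paper itself does not give a proof of this statement; it treats it as a classical result and refers the reader to van Rooij's survey~\cite{Rooij20}, so there is no ``paper's own proof'' to compare against beyond noting that your argument is exactly the one that reference would supply.
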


%\subsubsection{Discrete Fourier Transforms for Polynomials}
%\label{sssec:dft:polynomials}

Now we extend the notions and results to the case where the codomain of $h$ is not $\FF_{p}$ but rather the set of polynomials with coefficients from $\FF_{p}$.
% For this, let again $D$ be a finite set, and $p, p'$ two prime numbers such that $\FF_{p'}$ contains the $p$-th root of unity $\omega$.
Let $\ell \in \bN$ and let $\gamma_1, \dots, \gamma_\ell$ be pairwise distinct formal variables.
The set $\FF_{p}[\gamma_1, \dots, \gamma_\ell]$ consists of all polynomials in formal variables $\gamma_1, \dots, \gamma_\ell$ with coefficients from $\FF_{p}$.
For a mapping $h \colon (D \to \bZ_r) \to \FF_{p}[\gamma_1, \dots, \gamma_\ell]$, we define the \emph{DFT} of $h$ $\DFT_{D,r,p}(h) \colon (D \to \bZ_r) \to \FF_{p}[\gamma_1, \dots, \gamma_\ell]$ and the \emph{inverse DFT} of $h$ $\DFT_{D,r,p}^{-1}(h) \colon (D \to \bZ_r) \to \FF_{p}[\gamma_1, \dots, \gamma_\ell]$ as
\[
    \DFT_{D, r, p}(h)(y) \equiv_p \sum_{q \colon D \to \bZ_r} \omega^{q \cdot y} \cdot h(q) \text{ and } \DFT^{-1}_{D, r, p}(h)(y) \equiv_p \frac{1}{r^{|D|}} \sum_{q \colon D \to \FF_{p}} \omega^{-q \cdot y} \cdot h(q)
\]
for all $y \colon D \to \bZ_r$.
%For the sake of readability, since $D$, $p$, and $p'$ remain the same in the remainder of the section, we simply write $\DFT$ for $\DFT_{D, p, p'}$ and $\DFT^{-1}$ for $\DFT^{-1}_{D, p, p'}$.
Observe that in the case $\ell = 0$ the codomain of $h$ is simply $\FF_{p}$ so we indeed extend the above case.
Thanks to the linearity of DFT and its inverse, \Cref{thm:inverse-dft,thm:fourier-convolution-result} hold for polynomials as well. A formal proof of this claim is available in \Cref{sec:appendix:dft}.

\subsection{Inclusion-exclusion}
We will also make use of the folklore inclusion-exclusion principle (we refer to \cite{cygan2015parameterized} for a proof and applications). In particular, we will rely on its intersection version:
\begin{theorem}[\!\!\cite{cygan2015parameterized}, Theorem 10.2]\label{thm:inclusion-exclusion-intersection}
    Let $U$ and $Y$ be finite sets and let $A_y \subseteq U$ for every $y \in Y$. 
    % Denote $\bigcap_{x \in \emptyset} (U \setminus A_x) = U$. 
    Then it holds that:
    \[
        \abs{\bigcap_{y \in Y} A_y} = \sum_{X \subseteq Y} (-1)^{|X|} \cdot \abs{ U \setminus \left(\cup_{x \in X}  A_x\right)}.
    \]
\end{theorem}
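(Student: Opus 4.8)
The plan is to prove the identity by the standard double-counting trick: I would compute, for every element $u\in U$, the total multiplicity with which $u$ is counted on the right-hand side, and show it equals $[u\in\bigcap_{y\in Y}A_y]$. So first I fix $u\in U$ and set $Y_u\defeq\{y\in Y\mid u\notin A_y\}$, the set of indices whose set misses $u$. For a subset $X\subseteq Y$, the element $u$ is counted in the term $\abs{U\setminus\bigcup_{x\in X}A_x}$ exactly when $u\notin\bigcup_{x\in X}A_x$, i.e.\ when $u\notin A_x$ for all $x\in X$, i.e.\ when $X\subseteq Y_u$. Hence $u$ contributes $\sum_{X\subseteq Y_u}(-1)^{|X|}$ to the right-hand side in total.

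Next I would evaluate this alternating sum. Grouping the subsets of $Y_u$ by their size and applying the binomial theorem gives $\sum_{X\subseteq Y_u}(-1)^{|X|}=\sum_{k=0}^{|Y_u|}\binom{|Y_u|}{k}(-1)^k=(1-1)^{|Y_u|}$, which equals $1$ when $Y_u=\emptyset$ and $0$ otherwise. Since $Y_u=\emptyset$ says precisely that $u\in A_y$ for every $y\in Y$, i.e.\ $u\in\bigcap_{y\in Y}A_y$, summing over all $u\in U$ yields $\sum_{u\in U}[u\in\bigcap_{y\in Y}A_y]=\abs{\bigcap_{y\in Y}A_y}$, which is the left-hand side. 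Here I use the usual conventions that the empty intersection over $Y$ is $U$ and the $X=\emptyset$ term has $\bigcup_{x\in\emptyset}A_x=\emptyset$, so the $u$-contribution $\sum_{X\subseteq Y_u}(-1)^{|X|}$ is still correct when $Y=\emptyset$ or $Y_u=\emptyset$.

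An alternative route would be to pass to complements $B_y\defeq U\setminus A_y$: by De Morgan, $\bigcap_{y\in Y}A_y=U\setminus\bigcup_{y\in Y}B_y$ and $\bigcap_{x\in X}B_x=U\setminus\bigcup_{x\in X}A_x$, so expanding $\abs{\bigcup_{y\in Y}B_y}$ by the union form of inclusion–exclusion and folding the $X=\emptyset$ term (contributing $\abs{U}$) back into the sum recovers the stated identity; I would prefer the first, self-contained argument to avoid invoking the union form as a black box. There is no genuine obstacle here, as this is a folklore statement; the only points that need care are the sign bookkeeping and the boundary conventions for empty intersections and unions, and the whole argument is driven by the collapse $(1-1)^{|Y_u|}$.
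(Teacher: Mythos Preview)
Your proof is correct and is the standard element-wise double-counting argument for this folklore identity; the sign bookkeeping and the boundary cases are handled properly. The paper does not actually give its own proof of this statement---it is quoted as Theorem~10.2 from~\cite{cygan2015parameterized} with a pointer to that reference---so there is nothing to compare against beyond noting that your argument is a self-contained version of the classical proof.
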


\section{Neighborhood Operator Logics}
\label{sec:logic}

In the following, we introduce the syntax and semantics of \NEO logics. As explained in the introduction, these logics are based on the \DN logic introduced in~\cite{BergougnouxDJ23}.
These logics are obtained by extending fully-existential monadic second-order (\MSOtwo) logic.
Remember that \MSOtwo allows quantification over vertices, edges, sets of vertices and sets of edges;
together with an adjacency relation $E(\cdot,\cdot)$ and equality relation $=$ between vertices, unary vertex relations (i.e., colors)
as well as the containment relation $\in$ of vertices/edges in sets.
Fully-existential \MSOtwo is the restriction of \MSOtwo to existential (first-order and second-order) quantifiers, while furthermore requiring that only quantifier-free formulas may be negated.
Set variables are denoted by upper-case letters ($X,Y,Z,\dots$) while unary and binary relations (or colors) are denoted by typewriter letters ($\constant{P},\constant{Q},\constant{R},\dots$).
The letters $\constant V$ and $\constant{E}$ are special unary and binary relations reserved for the set of all vertices and all edges, respectively.

\myparagraph{Syntax}
We first define \emph{edge set terms} using the following rules:
\begin{enumerate}
    \item Every edge set variable $Y$ is an edge set term.
    \item Every binary relational symbol $\constant{P}$ is an edge set term.
    \item $\emptyset$ and $\constant{E}$ are edge set terms.
    \item If $t_1$ and $t_2$ are edge set terms then $\comp{t_1}$, $t_1 \cap t_2$, $t_1 \cup t_2$ and $t_1 \setminus t_2$ are also edge set terms.
\end{enumerate}

\noindent Secondly, we define \emph{neighborhood terms} using the following rules:
\begin{enumerate}
	\item Every vertex set variable $X$ is a neighborhood term.
    \item Every unary relational symbol $\constant{U}$ is a neighborhood term.
	\item $\emptyset$ and $\mathtt V$ are neighborhood terms.
	\item $N^\sigma(t)$, $N^\sigma(t,Y)$ and $N^\sigma(t,\constant{F})$ are neighborhood terms for every $\sigma\subseteq \bN$, neighborhood term $t$, edge set variable $Y$ and binary relational symbol $\constant{F}$.
    \item If $t_1$ and $t_2$ are neighborhood terms then $\comp{t_1}$, $t_1 \cap t_2$, $t_1 \cup t_2$ and $t_1 \setminus t_2$ are also neighborhood terms.
\end{enumerate}

\noindent Then $\NEOtwo$ is the extension of fully-existential \MSOtwo by the following two rules:
\begin{enumerate}
	\setcounter{enumi}{5}
	\item If $t$ is a neighborhood term or an edge set term and $m \in \N$, then $|t| = m$, $|t| \le m$ and $|t| \ge m$ are formulas called \emph{size measurements}.
	\item If $t_1$ and $t_2$ are both neighborhood terms or edge set terms, then $t_1 = t_2$, $t_1 \subseteq t_2$ and $t_1 \supseteq t_2$ are formulas.
\end{enumerate}

We define $\NEOone$ as the restriction of $\NEOtwo$ where we forbid variables for edges and edge sets.
Given $i\in\{1,2\}$, and a family $\mathsf{F}$ of subsets of $\bN$, we denote by $\NEO_i[\mathsf{F}]$ the restriction of $\NEO_i$ such that, for all neighborhood operators $N^\sigma(\cdot)$ we have $\sigma\in \mathsf{F}$.
In this paper, we consider the following families of subsets of $\bN$:
\begin{itemize}
    \item $\fincofin$ consists of all the non-empty subsets of $\bN$ that are finite or co-finite.
    \item  $\finrec$ consists of all the subsets $\sigma$ of $\bN$ that are \emph{finitely recognizable} (see \Cref{def:prelim:finitely:recognizable}).
\end{itemize}
 \noindent On top of this, given a fragment of $\logicfont{L}$ of $\NEOtwo$, we define $\logicfont{L\ac}$ as the extension of $\logicfont{L}$ with the following additional rule:
\begin{enumerate}
	\setcounter{enumi}{7}
	\item If $t$ is a neighborhood term or an edge term, then $\conn(t)$ and $\acy(t)$ are formulas.
\end{enumerate}
Similarly, we define
$\logicfont{L\plusk}$ as the extension of $\logicfont{L}$ with the following additional rule:
\begin{enumerate}
	\setcounter{enumi}{8}
	\item If $t$ is a neighborhood term, then $\clique(t)$ is a formula.
\end{enumerate}
And we define $\logicfont{L\ack}$ as the extension of $\logicfont{L}$ with the rules 8. and 9.

Let $\xi$ be a formula or a term of $\NEOtwo \ack$.
We denote the set of variables that occur in $\xi$ by the ordered tuple $\var(\xi)$ and we denote by $\free(\xi)$ the set of free variables of $\xi$. 
We denote by $\var^V(\xi)$ the set of variables for vertices and vertex sets, and we denote by $\var^E(\xi)$ the set of variables for edges and edge sets.
Moreover, we denote by $\const^V(\xi)$ and $\const^E(\xi)$ the sets of unary and respectively binary relational symbols that occur in $\xi$.
We define the \emph{length} $|\phi|$ of a $\NEOtwo \ack$ formula $\phi$ to be the number of symbols of $\phi$.
Note that every number or set (as occurring for example in a size measurement $t \le m$ or in a superscript of a neighborhood term $N^\sigma(\cdot)$) is \emph{one} symbol.

\myparagraph{Semantics}
Next, we define the semantics of $\NEO$ logic. 
We consider \emph{vertex and edge colored} graphs.
This means, each vertex (respectively edge) of a graph may be in zero, one or more unary (resp.\ binary) relations ($\constant{P},\constant{Q},\constant{R},\dots$). Our notion of interpretation of a logical formula uses two functions, one for interpreting the variables for vertices and vertex sets, and another for the variables for edges and edge sets. We do so to improve the readability of the model checking section.
\begin{definition}\label{def:logic:interpretation}
    Given a graph $G$ and $A\subseteq V(G)$. A vertex interpretation of a formula $\phi$ on $A$ is a function mapping each vertex variable of $\phi$ to a vertex in $A$ and each vertex set variable of $\phi$ to a subset of $A$.
    Given $F\subseteq E(G)$, an edge interpretation of $\phi$ on $F$ is a function mapping each edge variable of $\phi$ to an edge in $F$ and each edge set variable of $\phi$ to a subset of $F$.
\end{definition}
An \emph{interpretation} of a formula $\phi$ is a tuple $(G,\interp{f},\interp{g})$ consisting of a graph $G$, a vertex interpretation $\interp f$ of $\phi$ on $V(G)$ (mapping every vertex set variable to a subset of $V(G)$, and every vertex variable to a vertex of $G$)
and an edge interpretation $\interp g$ of $\phi$ on $E(G)$ (mapping every edge set variable to a subset of $E(G)$, and every edge variable to an edge of $G$).
Given an interpretation $(G,\interp{f},\interp{g})$, we define the semantics of neighborhood and edge set terms.

\begin{enumerate}
	\item $\ip{X}^{(G,\interp{f},\interp{g})} = \interp{f}(X)$ if $X\in \var^V(\phi)$ or $\ip{X}^{(G,\interp{f},\interp{g})} = \interp{g}(X)$ if $X\in \var^E(\phi)$,
     \item $\ip{\constant{P}}^{(G,\interp{f},\interp{g})} = \constant{P}(G)$,
	\item $\ip{\emptyset}^{(G,\interp{f},\interp{g})} = \emptyset$, $\ip{\constant V}^{(G,\interp{f},\interp{g})} = V(G)$, and $\ip{\constant E}^{(G,\interp{f},\interp{g})} = E(G)$
    \item 
    $\ip{N^\sigma(t,\star)}^{(G,\interp{f},\interp{g})} = N^\sigma(\ip{t}^{(G,\interp{f},\interp{g})}, \ip{\star}^{(G,\interp{f},\interp{g})})$, and\\
   $\ip{N^\sigma(t)}^{(G,\interp{f},\interp{g})} = N^\sigma(\ip{t}^{(G,\interp{f},\interp{g})})$, where the neighborhood operators on the right are evaluated in $G$ (see \Cref{def:neighborhood:operator}),
	\item $\ip{\ \comp{t}\ }^{(G,\interp{f},\interp{g})} = V(G) \setminus \ip{t}^{(G,\interp{f},\interp{g})}$,
	$\ip{t_1 \star t_2}^{(G,\interp{f},\interp{g})} = \ip{t_1}^{(G,\interp{f},\interp{g})} \star \ip{t_2}^{(G,\interp{f},\interp{g})}$ for $\star \in \{\cap,\cup, \setminus\}$,
\end{enumerate}
$\NEOtwo$ inherits the semantics from \MSOone, with the following semantics of the additional rules.
\begin{enumerate}
	\setcounter{enumi}{5}
	\item $\ip{|t| \prec m}^{(G,\interp{f},\interp{g})} = 1$ if $|\ip{t}^{(G,\interp{f},\interp{g})}| \prec m$ and $\ip{|t| \prec m}^{(G,\interp{f},\interp{g})} = 0$ otherwise, for $\prec \in \{=,\le,\ge\}$,
\item $\ip{t_1 \prec t_2}^{(G,\interp{f},\interp{g})} = 1$ if $\ip{t_1}^{(G,\interp{f},\interp{g})} \prec \ip{t_1}^{(G,\interp{f},\interp{g})}$ and $\ip{t_1 \prec t_2}^{(G,\interp{f},\interp{g})} = 0$ otherwise, for $\prec \in {\{=,\subseteq,\supseteq\}}$.
\end{enumerate}
For $\NEO\ack$, the semantics of the additional rule are as follows. 
\begin{enumerate}
	\setcounter{enumi}{7}
	\item $\ip{ \conn(t)}^{(G,\interp{f},\interp{g})}=[$the subgraph of $G$ induced by $\ip{t}^{(G,\interp{f},\interp{g})}$ is connected$]$,\\ 
    $\ip{ \acy(t)}^{(G,\interp{f},\interp{g})}=[$the subgraph of $G$ induced by $\ip{t}^{(G,\interp{f},\interp{g})}$ is acyclic$]$,\\
    $\ip{ \clique(t)}^{(G,\interp{f},\interp{g})}=[$the subgraph of $G$ induced by $\ip{t}^{(G,\interp{f},\interp{g})}$ forms a clique$]$.
\end{enumerate}

% For a formula or term $\xi$, we write $\xi(X_1,\dots,X_k)$ to indicate that its free variables are exactly the set variables $X_1,\dots,X_k$.
% For a graph $G$, formula or term $\xi$ and tuple $\tB \in \Pset(V(G))^{k}$ we write $\ip{\xi(\tB)}^{G}$ as a synonym for 
% $\ip{\xi}^{G,\interp{f}}$, where $\interp{f}$ assigns $X_i$ to $B_i$ for all $1 \le i \le k$.
% We write $G \models \phi(\tB)$ as a shorthand for $\ip{\phi(\tB)}^G = 1$.

\subsection{Problems captured by our logics}
\label{ssec:logic:problems:captured}
We provide here some examples of problems that are captured by $\NEOtwo[\finrec]\ack$ and not yet mentioned in the introduction including \textsc{SAT} and \textsc{Max-SAT} on CNF formulas.

As demonstrated in the introduction, we can express any $(\sigma,\rho)$-set property---as introduced by Telle and Proskurowski \cite{TelleProskurowski1997}---where $\sigma$ and $\rho$ are finitely recognizable sets.
With the same arguments as used in \cite{BergougnouxDJ23} for the \DN logic, we can generalize the latter formula to express the more general \textit{locally checkable vertex partitioning} problems from \cite{TelleProskurowski1997}, that ask, given a $q\times q$ matrix $D$ whose entries are finitely recognizable sets, for a partition $(X_1,\dots,X_q)$ of $V(G)$ such that for every~$i,j\in [q]$, we have $X_i\subseteq N^{D[i,j]}(X_j)$.

Now, we show that we can express \textsc{SAT} via the \emph{signed incidence graph} of a CNF formula.
Given a CNF formula $\xi$, we define its signed incidence graph $G_\xi$ as follows.
The vertices of $G_\xi$ are the variables and the clauses of $\xi$, and the latter are colored with $\constant{C}$.
The edges of $G_\xi$ are defined as follows.
For every variable $x$ and every clause $C$ of $\xi$, we have
\begin{itemize}
    \item If $x$ appears positively in $C$, then $xC$ is an edge of $G_\xi$ colored with $\constant P$.
    \item If $x$ appears negatively in $C$, then $xC$ is an edge of $G_\xi$ colored with $\constant N$.
\end{itemize}
Then, we can express \textsc{SAT}, i.e., the existence of a model for $\xi$ with the following formula
\[
    \exists X \subseteq \overline{\constant{C}} \land \constant{C} = N^{\bN^+}(X,\constant P) \cup N^{\bN^+}(\overline{X},\constant N).
\]
Basically, in this formula $X$ is the set of variables that a model of $\xi$ sets to true, and every clause needs to be adjacent to either a variable in $X$ via an edge labeled $\constant{P}$ or a variable not in $X$ via an edge labeled $\constant{N}$.
Moreover, we can express \textsc{Max-SAT}, i.e., the existence of an interpretation of $\xi$ that satisfies at least $k$ clauses with the following formula
\[
    \exists X \subseteq \overline{\constant{C}} \land \abs{ N^{\bN^+}(X,\constant P) \cup N^{\bN^+}(\overline{X},\constant N) } \geq k.
\]

\subsection{Core logic}
Some of the operations in our logic can be understood as ``syntactic sugar'', that does not increase the expressiveness, but merely reduces some friction when expressing problems.
To facilitate our proofs, we consider a smaller fragment $\core\NEOtwo$ of \NEOtwo, which we call \emph{core neighborhood operator logic}, that has the same expressive power as \NEOtwo,
and a similar equivalent fragment $\core\NEOtwo\ac$ for {\NEOtwo\ac.}
Similarly, we also consider the fragments $\core\NEOtwo\plusk$ and $\core\NEOtwo\ack$. 

\myparagraph{Definition of core logics}
We first define \emph{primitive formulas}.
\begin{enumerate}
    \item \label{item:core-vertex-color-equalities} If $X$ is a vertex set variable and $\constant{U}$ is a unary relational symbol, $\emptyset$, or $\mathtt V$, then $\constant{U}=X$ is a \emph{primitive formula}.
    \item \label{item:core-edge-color-equalities} If $Y$ is an edge set variable and $\constant{B}$ is a binary relational symbol, $\emptyset$, or $\mathtt E$ then $\constant{B}=Y$ is a \emph{primitive formula}.
    \item \label{item:core-vertex-equalities} If $X$, $Y$ and $Z$ are vertex set variables, then $X = Y$, $\comp X = Y$, $X \cup Y = Z$, and $X \cap Y = Z$ are \emph{primitive formulas}.
    \item \label{item:core-neighborhood-equalities} If $X$ and $W$ are vertex set variables, $Y$ an edge set variable and, $\sigma\subseteq \N$, %, and $\mathfrak{F}$ a binary relation symbol, 
    then %$N^\sigma(X) = W$ and 
    $N^\sigma(X,Y) = W$ %and $N^\sigma(X,\mathfrak{F})=W$ 
    is a \emph{primitive formula}. 
    \item\label{item:core-AC} If $X$ is a vertex or edge set variable and $m \in \N$ then $|X| \le m$,  $|X| \ge m$, and $|X| = m$ are  \emph{primitive formulas}.
    \item If $X$ is a vertex or edge set variable then $\acy(X)$ and $\conn(X)$ are \emph{primitive formulas}.
    \item If $X$ is a vertex set variable then $\clique(X)$ is a \emph{primitive formula}.
\end{enumerate}

Let \core\NEOtwo be the fragment of \NEOtwo containing all formulas of the form
$\exists X_1 \, \dots \exists X_k \, \psi$, where $\psi$ is a Boolean combination
of primitive formulas as described by the items 1.\ to 5.\ and such that $\var(\psi)=\{X_1,\dots,X_k\}$.
We define $\core\NEOtwo\ac$ (resp.\ $\core\NEOtwo\plusk$) to be the fragment of $\NEOtwo\ac$ containing all formulas of the form
$\exists X_1 \, \dots \exists X_k \, \psi$, where $\psi$ is a Boolean combination
of primitive formulas as described by the items 1.\ to 6.\ (resp.\ 1. to 5. as well as 7.) and such that $\var(\psi)=\{X_1,\dots,X_k\}$.
Finally, we define $\core\NEOtwo\ack$ to be the fragment of $\NEOtwo\ack$ containing all formulas of the form
$\exists X_1 \, \dots \exists X_k \, \psi$, where $\psi$ is a Boolean combination
of primitive formulas as described by the items 1.\ to 7.\ and such that $\var(\psi)=\{X_1,\dots,X_k\}$.
The following observation proves that $\core\NEOtwo\ack$ is equivalent to $\NEOtwo\ack$, a similar statement for the \DN logic can be found in \cite{BergougnouxDJ23}.
We only provide a sketch of proof as most of the arguments can be found in \cite[Subsection 3.1]{BergougnouxDJ23}.

\begin{lemma}\label{obs:core}
    For every formula $\phi \in \NEOtwo\ack$ one can compute in time $O(|\phi|^2)$ an equivalent formula $\phi' \in \core\NEOtwo\ack$ such that
    \begin{itemize}
        \item $|\phi'| \le 10|\phi|$,
        \item the sets $\sigma\subseteq \bN$ occurring in $\phi'$ are exactly the ones occurring in $\phi$, and
        \item $\phi'$ has at most two size measurements more than $\phi$.
    \end{itemize}
    Moreover, if $\phi \in \NEOtwo$ then $\phi' \in \core\NEOtwo$; and if $\phi \in \NEOtwo\plusk$ then $\phi' \in \core\NEOtwo\plusk$.
\end{lemma}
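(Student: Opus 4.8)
The plan is to convert $\phi$ into core form in stages, peeling off one source of syntactic complexity at a time, and to bound the blow-up at each stage. I would proceed exactly as in~\cite[Subsection 3.1]{BergougnouxDJ23} with minor bookkeeping for the new constraints. First I would push all quantifiers to the front: since $\phi$ is fully-existential, every quantifier is $\exists$ and only quantifier-free subformulas may be negated, so standard prenexing moves all quantifiers outside without introducing universals, at the cost of renaming bound variables; this does not change the set of $\sigma$'s and only linearly changes the length. Second, I would eliminate nested and complex terms by \emph{flattening}: for every neighborhood term or edge set term $t$ appearing in the matrix, introduce a fresh set variable $Z_t$ together with a defining primitive formula. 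Concretely, a term $t_1 \star t_2$ (with $\star \in \{\cap,\cup,\setminus\}$) or $\comp{t_1}$ gets a fresh variable and a primitive formula of type~\ref{item:core-vertex-equalities} (or the analogous edge-set version, which is obtained from set operations on edge variables — note $\setminus$ and $\cup$ are expressible from $\cap$ and complement, or one keeps them as derived macros); a term $N^\sigma(t,\star)$ gets a fresh variable and, after first flattening $t$ to a variable $X$ and $\star$ to an edge variable $Y$ (using type~\ref{item:core-edge-color-equalities} to name a relational symbol or $\mathtt E$), a primitive formula $N^\sigma(X,Y)=W$ of type~\ref{item:core-neighborhood-equalities}. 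The unqualified operator $N^\sigma(t)$ is handled by first naming $\mathtt E$ as an edge variable. Relational symbols appearing directly in comparisons or size measurements are named via types~\ref{item:core-vertex-color-equalities} and~\ref{item:core-edge-color-equalities}. Every introduced variable is existentially quantified at the front.

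After flattening, every atom of the matrix is either a primitive formula or a comparison/size-measurement between two \emph{variables}. The remaining work is to turn the comparisons $t_1 = t_2$, $t_1 \subseteq t_2$, $t_1 \supseteq t_2$ (now between variables $X$ and $Y$) into primitive formulas: $X = Y$ is already primitive; $X \subseteq Y$ is equivalent to $X \cap Y = X$, which after naming $X \cap Y$ by a fresh variable is a conjunction of two type-\ref{item:core-vertex-equalities} primitive formulas; $\supseteq$ is symmetric. Size measurements $|X| \prec m$ are already primitive (type~\ref{item:core-AC}). The connectivity, acyclicity and clique predicates, once their argument term is flattened to a variable, are exactly the primitive formulas of items 6.\ and 7. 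Thus the resulting matrix is a Boolean combination of primitive formulas of types~1–7, and the prefix is a block of existential quantifiers binding precisely the variables that occur, giving a $\core\NEOtwo\ack$ formula (and, by inspection, $\core\NEOtwo$ if $\phi \in \NEOtwo$, since no predicate of items 6–7 is introduced, and $\core\NEOtwo\plusk$ if $\phi \in \NEOtwo\plusk$).

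For the quantitative claims: each symbol of $\phi$ triggers only a constant amount of new material (one fresh variable, one quantifier, a primitive formula of constant size per operator symbol), so $|\phi'| \le 10|\phi|$ with room to spare; the procedure is a single pass with a lookup for already-named subterms, so it runs in time $O(|\phi|^2)$ (quadratic only because matching repeated subterms naively costs linear time each). The sets $\sigma$ occurring in $\phi'$ are exactly those occurring in $\phi$, since flattening copies the superscript of each $N^\sigma$ verbatim and introduces no new neighborhood operators. Finally, the only step that can create \emph{new} size measurements is handling $\subseteq$ and $\supseteq$ — but those rewrite to equalities, not to size measurements — so in fact the subtle point is the global size measurements promised by the statement: at most two extra. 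These arise precisely when the top-level formula is of the form where, to put the matrix in the required shape, one must separately assert that some named variable equals $\mathtt V$ or $\mathtt E$ or that the whole interpretation is consistent; concretely, naming $\constant V$ and $\constant E$ themselves as variables can force at most two auxiliary size-type bookkeeping atoms. I expect this last accounting — verifying that \emph{at most two} extra size measurements suffice, rather than a number growing with $|\phi|$ — to be the only genuinely delicate part, and it is settled by observing that size measurements are never produced during flattening of comparisons (those produce equalities) and that all relational-symbol namings use equalities of types~\ref{item:core-vertex-color-equalities}–\ref{item:core-edge-color-equalities}; the two-measurement slack is a fixed overhead, not a per-symbol cost. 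The remaining verifications — semantic equivalence of each rewriting step under every interpretation, and the inheritance of fragment membership — are routine and follow the corresponding lemma in~\cite{BergougnouxDJ23}.
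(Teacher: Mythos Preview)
Your flattening and prenexing steps are essentially the same as the paper's, but you have a genuine gap: you never eliminate first-order variables. The logic $\NEOtwo\ack$ extends fully-existential $\MSOtwo$, so it has quantification over individual vertices and edges (and atoms $E(x,y)$, $x\in X$, $x=y$), whereas $\core\NEOtwo\ack$ admits only \emph{set} variables in its primitive formulas. Your procedure produces a Boolean combination of primitive formulas only if the matrix already speaks solely about set variables; as written, atoms like $E(x,y)$ or $x\in X$ survive untouched and the result is not in the core fragment.

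This omission is also exactly why your accounting for the ``at most two extra size measurements'' is off. Naming $\mathtt V$ and $\mathtt E$ uses primitive equalities of types~\ref{item:core-vertex-color-equalities}--\ref{item:core-edge-color-equalities}, not size measurements, so that step costs zero measurements, not two. In the paper's proof the two extra measurements arise precisely from eliminating first-order variables: each vertex variable $x_i$ is replaced by a set variable $X_i$, the atoms are rewritten ($x_i\in Y\rightsquigarrow X_i\subseteq Y$, $x_i=x_j\rightsquigarrow X_i=X_j$, $E(x_i,x_j)\rightsquigarrow X_i\subseteq N^{\{1\}}(X_j)$), and singleton-ness of all $X_i$ simultaneously is enforced by the conjunction of the inequalities $X_i\neq\emptyset$ together with the \emph{single} size measurement $|X_1\cup\dots\cup X_t|\le t$. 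The same trick for edge variables costs one more measurement, giving the bound of two.
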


\begin{proof}
    We start by converting $\phi$ into prenex-normal form, i.e., of the form $\exists \bar X \, \xi$, where $\exists \bar X$ is a sequence of existentially quantified variables and  $\xi$ is quantifier-free. 

    Next, we get rid of every vertex variable.
    Let $x_1,\dots,x_t$ be the vertex variables of $\phi$, we transform $\exists \bar X \, \xi$ by replacing each $x_i$ by a new vertex set variable $X_i$, and we make the following translations:
    $x_i \in Y$ translates to $X_i \subseteq Y$,
    $x_i=x_j$ translates to $X_i=X_j$, and
    $E(x_i,x_j)$ translates to $X_i \subseteq N^{\{1\}}(X_j)$.
    Then we replace $\xi$ by
    \[
        \exists \bar X \, \xi \land X_1 \neq \emptyset \land \dots \land X_t \neq \emptyset \land \abs{X_1\cup \dots \cup X_t} \leq t.
    \]
    This will make sure that every model of the new formula interprets the new variables $X_1,\dots,X_t$ as singletons.
    Similarly, we can get rid of edge variables.

    Now, we replace every $t_1 \supseteq t_2$ by $t_2 \subseteq t_1$.
    After that, replace every $t_1 \subseteq t_2$ with $t_1 \cap t_2 = t_1$.
    Then, we replace every neighborhood term $N^\sigma(t)$ by $N^\sigma(t,\constant{E})$.
    The above operations conserve a formula in prenex-normal form that we denote by $\exists \bar X \, \psi$, where $\psi$ is quantifier-free.
    Then we can exhaustively apply the following simplification:
    If $\psi$ contains a term $N^{\sigma}(t, t')$, $N^{\sigma}(t', t)$, $t \cup t'$, $t' \cup t$
    $t \cap t'$, $t' \cap t$, $\comp t$, $\conn(t)$, $\acy(t)$, $t' = t$, or $|t| \prec m$ for $\prec \in \{\leq, =, \geq\}$, where $t$ is not a variable
    then we can replace $\psi$ by 
    $\exists Y \, t=Y \land \psi'$, where $Y$ is some unused variable and $\psi'$ is obtained from $\psi$ by replacing $t$ with $Y$.
    For example, the formula $N^{\sigma_1}(N^\sigma(Y, \constant{B}), W) = \comp{Z}$ is replaced with 
    \[
        \exists X_1 \exists X_2 \exists X_3 \ N^{\sigma_1}(X_1, W) = X_2 \land N^\sigma(Y, X_3) = X_1 \land \comp{Z} = X_2 \land \constant{B} = X_3.
    \]
    Let $\phi'$ be resulting formula.
    The final transformation iteratively remove nested neighborhood operators, and equalities that are not primitive formulas. We deduce that $\phi'$ is a $\core\NEOtwo\ack$ formula.
    It can be checked that this procedure can be carried out in time quadratic in $|\phi|$ and $\abs{\phi'}$ is at most a factor of $10$ longer than the original one.
    Moreover, we did not change the set of $\sigma$-s occurring in the formula, and we introduce at most two new size measurements (for the vertex variables and edge variables). 
    Finally, since we did not introduce new $\conn,\acy,\clique$ constraints, the last statement of the lemma is trivially satisfied.
\end{proof}

\section{Partial Solutions and Related Notions}
\label{sec:partial}

Let $\phi$ be a $\core\NEOtwo[\finrec]$
formula, we fix it for the rest of this paper.
We also fix  a vertex and edge colored graph $G$ and we will show how to check whether $G\models \phi$ given an elimination tree of depth $\td$ of $G$ in time $2^{\Oh(\td)}n^{\Oh(1)}$ and space $\cO(\td\log(n))$. 
Here $\cO$ hides some factors depending on the formula $\phi$.
In \cref{thm:neo-theorem} we will explicitly provide the dependence of time and space complexity on $\phi$.

%We assume that all edges of $G$ are colored with $\constant G$---i.e., $\constant G(G)=E(G)$---and to simplify our proofs, we replace every occurrence of $N^\sigma(X)$ in $\phi$ by $N^\sigma(X, \constant{G})$. This clearly does not change whether $G\models \phi$ holds.

\begin{definition}
    % A \emph{neighborhood arguments} of $\varphi$ is a pair $(X,Y)$ where $X\in\var^V(\phi)$ and
    % $Y \in \var^E(\phi)$
    % such that $N^\sigma(X,Y)$ occurs in $\phi$.
    We define $\Cat$ 
    as the set of all pairs $(X,Y)$ with $X\in\var^V(\phi)$ and
    $Y \in \var^E(\phi)$
    such that $N^\sigma(X,Y)$ occurs in $\phi$. \footnote{This notation is explained by the lack of imagination of the authors during the last day of a research visit. It stuck. We adore cats!} %submitted!
    
    We denote by $\EQ^V(\phi)$ the set of equalities in $\phi$ between  neighborhood terms and by $\EQ^E(\phi)$ the set of those between edge set terms.
    And we use $\EQ(\phi)$ to denote the set $\EQ^V(\phi) \cup \EQ^E(\phi)$.
    Given an interpretation $(G,\interp f, \interp g)$ of $\phi$, we denote by $\FalseFunc(G,\interp f,\interp g)$ the set of equalities $t_1 =t_2$ in $\EQ(\phi)$ such that $\ip{t_1 = t_2}^{(G,\interp f, \interp g)} =0$.
    And we let
    $X_1, \dots, X_\ell$ be all (vertex set and edge set) variables of $\phi$ that occur in at least one size measurement of $\phi$.
\end{definition}
% Given $C\in \Cat$, we define $V(C)$ as the vertex set variable of $C$ and we denote by $E(C)$ the edge set variable or binary relation symbol of $C$. 

\subsection{Formula parameters and auxiliary digraph}

In this subsection, based on $G$ and $\phi$ we define two crucial integers $\dphi$ and $\pphi$ as well as an auxiliary digraph $\vec G_\phi$ relying on which we will later define our notion of partial solutions.

We define $\dphi$ as the maximum over all $d(\sigma)$ for the sets $\sigma$ occurring in $\phi$; moreover, we define $\pphi$ as the least common multiple of all $p(\sigma)$ for all sets $\sigma$ occurring in $\phi$ (see \Cref{def:prelim:finitely:recognizable} for the definitions of $d(\sigma)$ and $p(\sigma)$).
To simplify our arguments, we assume that $\dphi\geq 1$. This is safe as we can always increase $\dphi$ artificially without changing the correctness of any of the following statements including the following key observation.

\begin{observation}\label{obs:modelcheck:dphi:pphi}
    Let $\sigma \subseteq \bN$ be a set that
    occurs in $\phi$. And let $U\subseteq V(G)$, $F\subseteq E(G)$, $v\in V(G)$, and $i\in [0,\pphi-1]$ be such that $\abs{N(v,F)\cap U} \geq \dphi$ and $i \equiv_{\pphi} \abs{N(v,F)\cap U}$.
    Then we have $\abs{N(v,F)\cap U} \equiv_{p(\sigma)} i + (\dphi\cdot \pphi)$.
    And therefore, $v\in N^\sigma(U,F)$ if and only if $i + (\dphi\cdot \pphi) \in \sigma$.
\end{observation}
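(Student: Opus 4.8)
The plan is to reduce the whole statement to the defining periodicity of finitely recognizable sets (Equation~\eqref{eq:def:finitely:recognizable}), lifted from the "private" threshold $d(\sigma)$ and period $p(\sigma)$ to the global values $\dphi$ and $\pphi$. First I would abbreviate $m \defeq \abs{N(v,F)\cap U}$; by \Cref{def:neighborhood:operator}, $v\in N^\sigma(U,F)$ is equivalent to $m\in\sigma$, so every assertion in the observation is really a statement about the single integer $m$. Since $\sigma$ occurs in $\phi$, the definitions of $\dphi$ and $\pphi$ give $d(\sigma)\le\dphi$ and $p(\sigma)\mid\pphi$, and $\pphi\ge 1$ because every period is positive. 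The one auxiliary fact I would record is that iterating \eqref{eq:def:finitely:recognizable} yields: for every $n\ge\dphi$ and every $k\ge 0$ we have $n\in\sigma$ iff $n+k\cdot p(\sigma)\in\sigma$ (induction on $k$; every intermediate value is $\ge\dphi\ge d(\sigma)$, so each single step is an instance of \eqref{eq:def:finitely:recognizable}). In particular, choosing $k$ with $k\cdot p(\sigma)=\pphi$ gives $n\in\sigma$ iff $n+\pphi\in\sigma$ for all $n\ge\dphi$.

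For the first displayed congruence, I would use that $i\in[0,\pphi-1]$ together with $i\equiv_\pphi m$ forces $i$ to be exactly the residue of $m$ modulo $\pphi$; hence $m=i+t\pphi$ for the nonnegative integer $t\defeq(m-i)/\pphi$. Then $m-(i+\dphi\cdot\pphi)=(t-\dphi)\pphi$ is a multiple of $\pphi$, and therefore of $p(\sigma)$, which is precisely $m\equiv_{p(\sigma)} i+(\dphi\cdot\pphi)$.

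For the "therefore" part, I would note that both $m$ and $i+\dphi\cdot\pphi$ are at least $\dphi\ge d(\sigma)$: the former by hypothesis, the latter because $i+\dphi\cdot\pphi\ge\dphi\cdot\pphi\ge\dphi$ using $\pphi\ge 1$. Their difference is a multiple of $p(\sigma)$ by the previous paragraph, so applying the auxiliary fact with period $p(\sigma)$ along this difference (in whichever direction the larger of the two lies) gives $m\in\sigma$ iff $i+(\dphi\cdot\pphi)\in\sigma$. Since $v\in N^\sigma(U,F)$ is equivalent to $m\in\sigma$, this is the claim.

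I do not expect a genuine obstacle here; the only point requiring a little care is that \eqref{eq:def:finitely:recognizable} is stated one-directionally and only above the threshold, so one must check that when stepping between two values that are congruent modulo the period one never dips below $\dphi$ — which is automatic, since both endpoints are $\ge\dphi$ and the chain between them is monotone.
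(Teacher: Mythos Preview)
Your proof is correct. The paper states this as an observation without any proof; your argument spells out precisely the routine verification the authors left implicit, namely using $d(\sigma)\le\dphi$, $p(\sigma)\mid\pphi$, and iterating~\eqref{eq:def:finitely:recognizable} above the threshold.
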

%\begin{definition}
%    For a set $\dom$ of vertices in the graph we define the set $\pdegdom(\dom)$ %$ = (\Fp)^{\dom \times \Cat}$.
    %as the set of all functions  $\pdeg \colon \dom\times \Cat \to \Fp$.
%\end{definition}

Our algorithm makes use of the following auxiliary digraph associated with $G$ and $\phi$.

\begin{definition}\label{def:modelcheck:digraph}
We define $\vec G_\phi$ as the directed graph with vertex set $V(\vec G_\phi)=\{v^1,\dots,v^{\dphi}  \mid v\in V(G) \}$ and arc set $\vec A(\vec G_\phi)=\{ (u^1, v^i) \mid uv\in E(G) \land i\in[\dphi]\}$; see~\Cref{fig:graph-phi} for an illustration.
For a set $W\subseteq E(G)$ of edges, we denote by $\vec W$ the set of arcs $\{(u^1,v^i) \mid uv\in W \land i\in [\dphi]\}$.
% Given a set $\vec W\subseteq \vec A(\vec G_\phi)$, we denote by $W$ the set of edges $\{ uv \mid (u^1,v^i)\in \vec W\}$.

We say that a set $\vec S \subseteq \vec A(\vec{G_{\phi}})$ of arcs is \emph{simple} if for every edge $uv\in E(G)$, there is at most one $i\in [\dphi]$ such that $(u^1,v^i)\in \vec S$ and at most one $j\in [\dphi]$ such that $(v^1,u^j)\in \vec S$.
\end{definition}
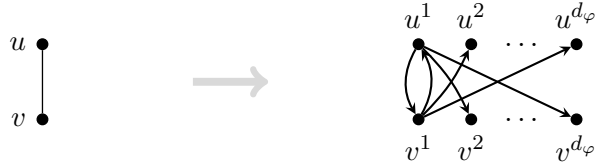
\begin{figure}[ht]
    \centering
    \begin{tikzpicture}
        \node[black node, label={left:$u$}] (u) at (0,1) {};
        \node[black node, label={left:$v$}] (v) at (0,0) {};
        \draw[] (u) -- (v);

        \draw[gray!30!white,line width=3pt,->] (2,0.5) -- (3,0.5);

        \begin{scope}[xshift=5cm]
            \node[black node, label={above:$u^1$}] (u1) at (0,1) {};
            \node[black node, label={above:$u^2$}] (u2) at (0.7,1) {};
            \node at (1.4,1) {$\dots$};
            \node[black node, label={above:$u^{d_\phi}$}] (uf) at (2.1,1) {};

            \node[black node, label={below:$v^1$}] (v1) at (0,0) {};
            \node[black node, label={below:$v^2$}] (v2) at (0.7,0) {};
            \node at (1.4,0) {$\dots$};
            \node[black node, label={below:$v^{d_\phi}$}] (vf) at (2.1,0) {};

            \draw[-stealth,thick] (v1) to [bend right=30] (u1);
            \draw[-stealth,thick] (v1) to [bend right=10] (u2);
            \draw[-stealth,thick] (v1) to [bend right=0] (uf);
            \draw[-stealth,thick] (u1) to [bend right=30] (v1);
            \draw[-stealth,thick] (u1) to [bend left=10] (v2);
            \draw[-stealth,thick] (u1) to [bend right=0] (vf);
            
        \end{scope}
    \end{tikzpicture}
    \caption{Illustration of the transformation from a graph $G$ to the graph $\vec G_\phi$. The edge $uv\in E(G)$ on the left corresponds to the arcs $(u^1,v^1),\ldots,(u^1,v^{d_\varphi})$ and $(v^1,u^1),\ldots,(v^1,u^{d_\varphi})$ on the right.}
    \label{fig:graph-phi}
\end{figure}
\subsection{Neighborhood functions, mod-degrees and partial interpretations}

The following definition introduces two important notions---neighborhood functions  and mod-degrees---that play a key role in our algorithm. Indeed, they are important parts of our partial solutions, but also of the indexes of our recursive algorithm.

\begin{definition}\label{def:modelcheck:special:functions}
    Let $\dom \subseteq V(G)$ and\footnote{Here, we use $\sheaf$ to denote an edge subset of $G$ to prepare the reader that this definition will be applied for the sheaf of some node of the given elimination tree of $G$.} $\sheaf\subseteq E(G)$.
    A mod-degree on $\dom$ is a function $\pdeg \colon \dom\times \Cat \to \Fp$. We denote by $\pdegdom(\dom)$ the set of all mod-degrees on $\dom$. 
    A \emph{neighborhood function} on $\dom$ is a function $\bE \colon \dom \times \Cat \to [0,\dphi]$.
\end{definition}

Let us give some intuitions about how we use these two kinds of functions. Roughly, mod-degrees are used to handle the periodic part of the sets $\sigma$ occurring in $\phi$ while neighborhood functions handle the part between 0 and $\dphi$.
We use mod-degree to control, for each $v\in \dom$ and $(X,Y)\in\Cat$, the number of edges in $Y$ between $v$  and $X$ modulo $\pphi$. We will apply the discrete Fourier transform to mod-degrees to accelerate the so-called join-computation combining the subtrees rooted at the children of some node of an elimination forest (see \cref{subsec:join}). 
We use neighborhood functions for two purposes:
\begin{itemize}
    \item The \emph{guessing} neighborhood functions denoted by $\bE$ or $\hat\bE$ are used to guess, for each $v\in\dom$ and  $(X,Y)\in \Cat$, the number of expected edges in $Y$ between $v$ and $X$.  
    In fact, when $\bE(v,(X,Y)) < \dphi$, we expect that there will be \textbf{exactly} $\bE(v,(X,Y))$ edges in $Y$ between $v$ and $X$.
    However, when $\bE(v,(X,Y)) = \dphi$, we expect \textbf{at least} $\dphi$ such edges---we emphasize this special meaning of the value $\dphi$ as it will be used several times in definitions and intuitions.
    In both cases, this number is an expectation and we will first consider some partial solutions that do not respect this at all, but then we will filter these renegade partial solutions with inclusion-exclusion principle and counting.
    \item The \emph{restricting} neighborhood functions denoted by $\allow$ are used to forbid a partial solution to have some arcs in $A(\vec{G_{\phi}})$. 
    We use this kind of neighborhood functions to apply the inclusion-exclusion principle in order to kill some of the renegade partial solutions mentioned above.
\end{itemize} 
Given a guessing neighborhood function $\bE$ and a mod-degree $\pdeg$, we know the set of vertices from $\dom$ that is expected to belong in a neighborhood term $N^\sigma(X, Y)$ occurring in $\phi$.
In addition to a vertex interpretation and an edge interpretation of $\phi$, we can interpret all the terms and equalities of $\phi$ as follows.
\begin{definition}\label{def:modelcheck:partial:interpretation}
    Given a vertex interpretation $\interp{ f }$ of $\phi$ on $\dom$, an edge interpretation $\interp g$ of $\phi$ on $\sheaf$, a neighborhood function $\bE$ on $\dom$, and a mod-degree $\pdeg$ on $\dom$, we define 
    \begin{itemize}
        \item $\ip{t}^{(\interp f, \interp{g}, \bE, \pdeg)} \defeq \ip{t}^{(G, \interp{f}, \interp{g})} \cap \sheaf$ for every edge set term $t$,
        \item $\ip{t}^{(\interp f, \interp{g}, \bE, \pdeg)} \defeq \ip{t}^{(G, \interp{f}, \interp{g})} \cap \dom$ for every neighborhood term $t$ that is not a neighborhood operator $N^\sigma(C)$ with $C\in \Cat$.
        \item For every $C\in \Cat$ and every $\sigma \subseteq \bN$ that occurs in $\phi$,\begin{align*}
            \ip{N^\sigma(C)}^{(\interp f, \interp{g}, \bE, \pdeg)} \defeq & \Bigl\{v\in \dom \mid \bE(v, C) < \dphi \text{ and } \bE(v,C)\in \sigma\Bigr\} \cup\\
            &\Bigl\{v\in \dom \mid \bE(v, C) = \dphi \text{ and } \pdeg(v, C) + (\dphi\cdot \pphi) \in \sigma \Bigr\}.
        \end{align*}%
        %$\ip{N^\sigma(C)}^{(\interp f, \interp{g}, \bE, \pdeg)} \defeq  \Bigl\{v\in \dom \mid \bigl(\bE(v, C) < \dphi \text{ and } \bE(v,C)\in \sigma\bigr) \text{ or } 
        %\bigl(\bE(v, C) = \dphi \text{ and } \pdeg(v, C) + (\dphi\cdot \pphi) \in \sigma\bigr) 
        %\Bigr\}$, for every $C\in \Cat$ and every $\sigma \subseteq \bN$ that occurs in $\phi$,
     \end{itemize}
     Also, we set
     \begin{itemize} 
        \item $\FalseFunc(\interp f, \interp{g}, \bE, \pdeg)\defeq\{ t_1 = t_2 \in \EQ(\phi) \mid \ip{t_1}^{(\interp f, \interp{g}, \bE, \pdeg)} \neq \ip{t_2}^{(\interp f, \interp{g}, \bE, \pdeg)}\}$.
        \item $\FalseFunc^E(\interp f, \interp{g}, \bE, \pdeg) = \FalseFunc(\interp f, \interp{g}, \bE, \pdeg) \cap \EQ^E(\phi)$ and $\FalseFunc^V(\interp f, \interp{g}, \bE, \pdeg) = \FalseFunc(\interp f, \interp{g}, \bE, \pdeg) \cap \EQ^V(\phi)$.
    \end{itemize}
\end{definition}
Observe that since $\phi$ is a \core\NEOtwo formula, \Cref{def:modelcheck:partial:interpretation} provide an exhaustive way of interpreting the edge terms, neighborhood terms, and the equalities of $\phi$.

\begin{remark}
    Observe that $\ip{t}^{(\interp f, \interp{g}, \bE, \pdeg)}$ does not depend on $\interp g$ when $t$ is a neighborhood term, and it is also the case for $\FalseFunc^V(\interp f, \interp{g}, \bE, \pdeg)$. Thus, we sometimes simply write $\ip{t}^{(\interp f, \bE, \pdeg)} $ and $\FalseFunc^V(\interp f, \bE, \pdeg)$. For the same reasons, we sometimes write $\ip{X}^{(\interp f)} $ and $\ip{t}^{(\interp g)} $ when $X$ is a vertex set variable and $t$ is an edge term---sometimes, we add the superscript $G$ to emphasize what the underlying graph is. 
    Moreover, we sometimes use $\FalseFunc^E(\interp{g})$ as a shortcut for $\FalseFunc^E(\interp f, \interp{g}, \bE, \pdeg)$.
    %Similarly, $\ip{t}^{(G, \interp f, \interp{g})}$ does not depend on $\interp{f}$ if $t$ is an edge set term so we sometimes write $\ip{t}^{(G, \interp{g})}$ instead.
    % Observe that %since $\phi$ is a core formula, 
    % the definition of  $\FalseFunc^V(\interp f, \interp{g}, \bE, \pdeg)$ does not depend on $\interp{g}$ and thus we sometimes simply write $\FalseFunc^V(\interp f, \bE, \pdeg) $ instead.
    % Furthermore, for vertex set variables in $\var^V(\phi)$, the interpretation only depends on $\interp{f}$ so with a slight abuse of notation we will sometimes drop $\interp g, \bE, \pdeg$ when writing about interpretations of vertex set variables.
    % Similarly, we will sometimes drop the parameters $\interp f, \bE, \pdeg$ when speaking about the evaluation of edge set variables as well as the edge set equalities in $\FalseFunc^E$.
\end{remark}

The rest of this subsection is dedicated to introducing some notations and lemmas that will be used later on.
Since some of the equalities in $\EQ(\phi)$ might be negated or joined by disjunction in $\phi$, a model of $\phi$ might, in general, falsify some of those equalities.
For this reason, we will be interested in the set of equalities from $\EQ(\phi)$ falsified by a certain partial solution (which will be formally defined in the next subsection).
The following definition and lemma below will later be useful to argue about the equalities falsified by a restriction of a partial solution:

\begin{definition}\label{def:logic:restriction:func}
    Let $\dom' \subseteq \dom \subseteq V(G)$. %or $U \subseteq E(G)$.
    %And let $U'$ be such that $U'\subseteq U$. 
    Given a vertex interpretation %or edge interpretation 
    $\interp f$ on $\dom$, we define the \emph{restriction} $\interp{f}_{|_{\dom'}}$ of $\interp f$ on $\dom'$ as the vertex interpretation on $\dom'$ such that $\interp{f}_{|_{\dom'}}(X) = \interp{f}(X)\cap \dom'$ for every $X \in \var^V(\phi)$. 
    Moreover, for every neighborhood function or mod-degree $\interp h$ on $\dom$, we define the \emph{restriction} $\interp{h}_{|_{\dom'}}$ of $\interp{h}$ on $\dom'$ as the neighborhood function or mod-degree, respectively, such that $\interp{h}_{|_{\dom'}}(u,C) = \interp{h}(u,C)$ for every $u\in \dom'$ and every $C\in \Cat$.

    Similarly, let $\sheaf' \subseteq \sheaf \subseteq E(G)$.
    Given an edge interpretation $\interp g$ on $\sheaf$, we define the \emph{restriction} $\interp{g}_{|_{\sheaf'}}$ of $\interp g$ on $\sheaf'$ as the edge interpretation on $\sheaf'$ such that $\interp{g}_{|_{\sheaf'}}(X) = \interp{g}(X)\cap \sheaf'$ for every $X \in \var^E(\phi)$. 
    And for a mapping $\psw \colon \Cat \to 2^{\vec \sheaf}$, we define the \emph{restriction} $\psw_{|_{\sheaf'}} \colon \Cat \to 2^{\vec{\sheaf'}}$ on $\sheaf'$ such that $\psw_{|_{\sheaf'}}(C) = \psw(C) \cap \vec{\sheaf'}$ for all $C \in \Cat$.
\end{definition}

\begin{lemma}\label{claim:restriction-vs-false-set}
    Let $\dom' \subseteq \dom \subseteq V(G)$ and let $\sheaf' \subseteq \sheaf \subseteq E(G)$.
    Further %for $i \in \{1, 2\}$ 
    let $\interp{f}$ be a vertex interpretation of $\phi$ on $\dom$, let $\interp g$ be an edge interpretation of $\phi$ on $\sheaf$, let $\bE$ be a neighborhood function on $\dom$, and let $\pdeg$ be a mod-degree on $\dom$.
    Then it holds that 
    \[
        \FalseFunc(\interp{f}_{|_{\dom'}}, \interp{g}_{|_{\sheaf'}}, \bE_{|_{\dom'}}, \pdeg_{|_{\dom'}}) \subseteq \FalseFunc(\interp{f}, \interp{g}, \bE, \pdeg).
    \]
\end{lemma}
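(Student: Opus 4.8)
The plan is to show that every equality falsified by the restricted interpretation is already falsified by the original one, by contraposition: if an equality $t_1 = t_2$ in $\EQ(\phi)$ holds under $(\interp{f}, \interp{g}, \bE, \pdeg)$, then it also holds under the restricted data $(\interp{f}_{|_{\dom'}}, \interp{g}_{|_{\sheaf'}}, \bE_{|_{\dom'}}, \pdeg_{|_{\dom'}})$. Since $\phi$ is a $\core\NEOtwo[\finrec]$ formula, the equalities in $\EQ(\phi)$ split into those between edge set terms ($\EQ^E(\phi)$) and those between neighborhood terms ($\EQ^V(\phi)$), so it suffices to treat these two cases separately.

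\textbf{Edge set terms.} First I would handle $\EQ^E(\phi)$. Here one shows that for every edge set term $t$,
\[
    \ip{t}^{(\interp{f}_{|_{\dom'}}, \interp{g}_{|_{\sheaf'}}, \bE_{|_{\dom'}}, \pdeg_{|_{\dom'}})} = \ip{t}^{(\interp{f}, \interp{g}, \bE, \pdeg)} \cap \sheaf'.
\]
This follows by a straightforward structural induction on $t$: for variables and relational symbols it is the definition of the restricted interpretation together with \Cref{def:modelcheck:partial:interpretation} (which intersects everything with $\sheaf$), and the inductive step for $\cap$, $\cup$, $\setminus$, and complementation commutes with intersecting by $\sheaf'$ (using that $\ip{\comp{t}}$ is taken relative to $\sheaf$, so relative to $\sheaf'$ after intersection, by the same definition). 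Consequently, if $\ip{t_1}^{(\interp{f}, \interp{g}, \bE, \pdeg)} = \ip{t_2}^{(\interp{f}, \interp{g}, \bE, \pdeg)}$, intersecting both sides with $\sheaf'$ gives the equality of the restricted interpretations, so $t_1 = t_2 \notin \FalseFunc(\interp{f}_{|_{\dom'}}, \interp{g}_{|_{\sheaf'}}, \bE_{|_{\dom'}}, \pdeg_{|_{\dom'}})$.

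\textbf{Neighborhood terms.} The analogous claim for neighborhood terms is
\[
    \ip{t}^{(\interp{f}_{|_{\dom'}}, \bE_{|_{\dom'}}, \pdeg_{|_{\dom'}})} = \ip{t}^{(\interp{f}, \bE, \pdeg)} \cap \dom',
\]
again by structural induction. For a vertex set variable or a unary relation this is the definition of the restricted interpretation together with the intersection by $\dom$ (resp.\ $\dom'$) built into \Cref{def:modelcheck:partial:interpretation}; the Boolean connectives are handled as before. The one genuinely new case is a term of the form $N^\sigma(C)$ with $C = (X,Y) \in \Cat$: by \Cref{def:modelcheck:partial:interpretation} this set is $\{v \in \dom' \mid \bE_{|_{\dom'}}(v, C) < \dphi \text{ and } \bE_{|_{\dom'}}(v, C) \in \sigma\} \cup \{v \in \dom' \mid \bE_{|_{\dom'}}(v, C) = \dphi \text{ and } \pdeg_{|_{\dom'}}(v, C) + (\dphi \cdot \pphi) \in \sigma\}$, and since $\bE_{|_{\dom'}}$ and $\pdeg_{|_{\dom'}}$ agree with $\bE$ and $\pdeg$ on all of $\dom' \times \Cat$ (Definition~\ref{def:logic:restriction:func}), the membership condition for $v \in \dom'$ is literally the same as in $\ip{N^\sigma(C)}^{(\interp{f}, \bE, \pdeg)}$; hence this equals $\ip{N^\sigma(C)}^{(\interp{f}, \bE, \pdeg)} \cap \dom'$. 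Since $\phi$ is core, a neighborhood operator $N^\sigma(C)$ never occurs nested inside another neighborhood term, so no further cases arise. As in the edge case, an equality of neighborhood terms under the unrestricted data, intersected with $\dom'$, yields the equality under the restricted data, giving $\FalseFunc^V(\interp{f}_{|_{\dom'}}, \bE_{|_{\dom'}}, \pdeg_{|_{\dom'}}) \subseteq \FalseFunc^V(\interp{f}, \bE, \pdeg)$. Combining the two cases via $\FalseFunc = \FalseFunc^V \cup \FalseFunc^E$ completes the proof.

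\textbf{Main obstacle.} None of the steps is deep; the only point requiring care is making the two commutation claims ($\ip{t}$ of the restriction equals the restriction of $\ip{t}$, intersected with $\sheaf'$ or $\dom'$) precise, in particular checking that complementation behaves correctly — here it is essential that in \Cref{def:modelcheck:partial:interpretation} all term evaluations are already taken relative to $\sheaf$ or $\dom$, so that "complement" always means complement within the current ground set, which is exactly what makes it commute with passing to the smaller ground set. Once this bookkeeping is set up, both inductions are routine.
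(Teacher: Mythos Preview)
Your proposal is correct and follows essentially the same approach as the paper: both argue that for every term $t$ one has $\ip{t}^{(\interp{f}_{|_{\dom'}}, \interp{g}_{|_{\sheaf'}}, \bE_{|_{\dom'}}, \pdeg_{|_{\dom'}})} = \ip{t}^{(\interp f, \interp{g}, \bE, \pdeg)} \cap \dom'$ (respectively $\cap\,\sheaf'$), from which the inclusion of the falsified-equality sets follows immediately. The paper states this commutation claim in one line without unpacking it, whereas you spell out the structural induction and the care needed for complementation; your extra detail is sound and the two arguments are the same in substance.
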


\begin{proof}
It suffices to observe that, by \Cref{def:logic:interpretation}, for every neighborhood term $t$, we have
\[
        \ip{t}^{(\interp{f}_{|_{\dom'}}, \interp{g}_{|_{\sheaf'}}, \bE_{|_{\dom'}}, \pdeg_{|_{\dom'}})} = \ip{t}^{(\interp f, \interp{g}, \bE, \pdeg)} \cap \dom' 
\]
and that for every edge set term $t$, we have 
\[
     \ip{t}^{(\interp{f}_{|_{\dom'}}, \interp{g}_{|_{\sheaf'}}, \bE_{|_{\dom'}}, \pdeg_{|_{\dom'}})} = \ip{t}^{(\interp f, \interp{g}, \bE, \pdeg)} \cap \sheaf'.
\]
\end{proof}

To argue about the equalities falsified by a union of partial solutions, we will also use the following definition and the lemma after it.

\begin{definition}
    Let $t \in \bN$, let $\dom \subseteq V(G)$, and let $\sheaf \subseteq E(G)$.
    Further, let $\dom_1, \dots, \dom_t$ be a partition of $\dom$, let $\sheaf_1, \dots, \sheaf_t$ be a partition of $\sheaf$, and for every $i \in [t]$ let $(\indf_i, \indg_i, \inde_i, \pdeg_i)$ be such that $\interp{f}_i$ is a vertex interpretation of $\phi$ on $\dom_i$, $\interp g_i$ is an edge interpretation of $\phi$ on $\sheaf_i$, $\bE_i$ be a neighborhood function on $\dom_i$, and $\pdeg_i$ be a mod-degree on $\dom_i$.
    Then we define 
    \begin{itemize}
        \item the vertex interpretation $\indf_{[t]}$ on $\dom$ via $\indf_{[t]}(X) = \bigcup_{i \in [t]} \indf_i(X)$ for all $X \in \var^V(\phi)$,
        \item the vertex interpretation $\indg_{[t]}$ on $\sheaf$ via $\indg_{[t]}(Y) = \bigcup_{i \in [t]} \indg_i(Y)$ for all $Y \in \var^E(\phi)$,
        \item the neighborhood function $\inde_{[t]}$ on $\dom$ via $\inde_{[t]}(v, C) = \inde_i(v, C)$ (for every $v \in \dom$ and $C \in \Cat$) where $i \in [t]$ is the unique value such that $v \in \dom_i$,
        \item the mod-degree $\pdeg_{[t]}$ on $\dom$ via $\pdeg_{[t]}(v, C) = \pdeg_i(v, C)$ (for every $v \in \dom$ and $C \in \Cat$) where $i \in [t]$ is the unique value such that $v \in \dom_i$.
    \end{itemize}
\end{definition}

\begin{lemma}\label{claim:union-false-set}
    For objects from the above definition we have
    \[
        \FalseFunc(\indf_{[t]}, \indg_{[t]}, \inde_{[t]}, \pdeg_{[t]}) = \bigcup_{i \in [t]} \FalseFunc(\indf_i, \indg_i, \inde_i, \pdeg_i).
    \]
\end{lemma}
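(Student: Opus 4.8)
The plan is to first establish a ``distributivity'' claim describing how the interpretation of a single term of $\phi$ behaves under the componentwise union of the four families, and then to read off the lemma by comparing two such unions block by block.

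\emph{Step 1 (distributivity of term interpretations).} I would prove the following auxiliary statement: for every neighborhood term $\tau$ occurring in an equality of $\EQ^V(\phi)$,
\[
    \ip{\tau}^{(\indf_{[t]}, \indg_{[t]}, \inde_{[t]}, \pdeg_{[t]})} = \bigcup_{i \in [t]} \ip{\tau}^{(\indf_i, \indg_i, \inde_i, \pdeg_i)}, \qquad \text{with } \ip{\tau}^{(\indf_i, \indg_i, \inde_i, \pdeg_i)} \subseteq \dom_i \text{ for every } i \in [t],
\]
and symmetrically, for every edge set term $\tau$ occurring in an equality of $\EQ^E(\phi)$, the same identity holds with $\ip{\tau}^{(\indf_i, \indg_i, \inde_i, \pdeg_i)} \subseteq \sheaf_i$. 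Since $\dom_1, \dots, \dom_t$ (resp. $\sheaf_1, \dots, \sheaf_t$) is a partition, each of these unions is in fact disjoint and its $i$-th member equals its intersection with $\dom_i$ (resp. $\sheaf_i$). Because $\phi \in \core\NEOtwo[\finrec]$, the only terms appearing in equalities of $\phi$ are vertex and edge set variables; the constants $\emptyset$, $\mathtt V$, $\mathtt E$; unary and binary relational symbols; the terms $\comp X$, $X \cup Y$, $X \cap Y$ for vertex set variables $X, Y$; and the operators $N^\sigma(X, Y)$ with $(X, Y) \in \Cat$. So the auxiliary statement reduces to a short case check, using \Cref{def:modelcheck:partial:interpretation} together with the fact that, by \Cref{def:logic:interpretation}, each $\indf_i$ maps every vertex set variable to a subset of $\dom_i$ and each $\indg_i$ maps every edge set variable to a subset of $\sheaf_i$ (so that the sets $\indf_i(X)$ lie in pairwise disjoint blocks, and likewise the sets $\indg_i(Y)$). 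For a variable, a relational symbol, or a constant the identity is immediate from the definitions of $\indf_{[t]}, \indg_{[t]}$ and from distributivity of $(\cdot) \cap \dom$ and $(\cdot) \cap \sheaf$ over unions; for $\comp X$, $X \cup Y$, $X \cap Y$ one additionally uses that the blocks $\dom_i$ are disjoint; and for $N^\sigma(X,Y)$ one uses that the ad hoc semantics in the third bullet of \Cref{def:modelcheck:partial:interpretation} decides membership of a vertex $v \in \dom$ by inspecting only $\inde_{[t]}(v, (X,Y))$ and $\pdeg_{[t]}(v, (X,Y))$, which by construction equal $\inde_j(v, (X,Y))$ and $\pdeg_j(v, (X,Y))$ for the unique $j$ with $v \in \dom_j$.

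\emph{Step 2 (deducing the lemma).} Fix an equality $t_1 = t_2 \in \EQ(\phi)$; both sides lie in the same universe, namely $\dom$ if $t_1 = t_2 \in \EQ^V(\phi)$ and $\sheaf$ if $t_1 = t_2 \in \EQ^E(\phi)$. By Step 1 we may write $\ip{t_k}^{(\indf_{[t]}, \indg_{[t]}, \inde_{[t]}, \pdeg_{[t]})} = \bigcup_{i \in [t]} \ip{t_k}^{(\indf_i, \indg_i, \inde_i, \pdeg_i)}$ for $k \in \{1, 2\}$, where both unions are disjoint and respect the same partition ($\dom_1, \dots, \dom_t$ or $\sheaf_1, \dots, \sheaf_t$). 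Two sets that are disjoint unions along a common partition coincide if and only if they coincide on every block; hence $\ip{t_1}^{(\indf_{[t]}, \indg_{[t]}, \inde_{[t]}, \pdeg_{[t]})} \neq \ip{t_2}^{(\indf_{[t]}, \indg_{[t]}, \inde_{[t]}, \pdeg_{[t]})}$ if and only if $\ip{t_1}^{(\indf_i, \indg_i, \inde_i, \pdeg_i)} \neq \ip{t_2}^{(\indf_i, \indg_i, \inde_i, \pdeg_i)}$ for some $i \in [t]$. By \Cref{def:modelcheck:partial:interpretation}, the former is exactly the statement $t_1 = t_2 \in \FalseFunc(\indf_{[t]}, \indg_{[t]}, \inde_{[t]}, \pdeg_{[t]})$ and the latter is exactly $t_1 = t_2 \in \bigcup_{i \in [t]} \FalseFunc(\indf_i, \indg_i, \inde_i, \pdeg_i)$, which gives the claimed set equality.

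\emph{Main obstacle.} Everything here is essentially bookkeeping once Step 1 is in place; the only point requiring genuine care is the neighborhood-operator case of Step 1, where one must check that the non-standard semantics of $N^\sigma(X,Y)$ from \Cref{def:modelcheck:partial:interpretation} is truly ``local'', i.e., that a vertex's membership is determined entirely by the restriction of $\inde$ and $\pdeg$ to the block containing it. This is dual to the observation driving \Cref{claim:restriction-vs-false-set} (there one intersects a single interpretation with a subdomain; here one glues blockwise interpretations), and it is precisely the reason that using the distinguished value $\dphi$ and the mod-degrees only in a pointwise fashion is convenient.
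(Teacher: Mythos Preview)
Your proposal is correct and follows essentially the same approach as the paper: the paper's proof consists of the single observation that $\ip{\tau}^{(\indf_{[t]}, \indg_{[t]}, \inde_{[t]}, \pdeg_{[t]})} = \bigcup_{i\in[t]} \ip{\tau}^{(\indf_i, \indg_i, \inde_i, \pdeg_i)}$ for every term $\tau$, which is exactly your Step~1, and leaves Step~2 implicit. Your write-up simply makes explicit the case analysis for Step~1 and the partition argument for Step~2 that the paper omits.
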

\begin{proof}
    It suffices to observe that, by \Cref{def:modelcheck:partial:interpretation}, for every neighborhood term or edge term $t$, we have
    \[
            \ip{t}^{(\indf_{[t]}, \indg_{[t]}, \inde_{[t]}, \pdeg_{[t]})} = \bigcup_{i\in[t]} \ip{t}^{(\interp f_i, \interp{g}_i, \bE_i, \pdeg_i)}.
    \]
\end{proof}

\subsection{Partial solutions, indexes, and compatibility}

To simplify the definitions and statements in this subsection, we fix a triad $\tss = (\tail, \subtree, \sheaf)$ of $G$. 
To get some intuition, one should think of $\subtree$ as the set of ``finished'' vertices, while the vertices in $\tail$ are still ``active'' and may, for example, get further neighbors.
Moreover, we denote by $\broom$ the broom of $\tss$, i.e., $\broom \defeq \tail\cup \subtree$.
Now we are ready to present the notion of a partial solution used by our algorithm.

\begin{definition}\label{def:modelcheck:partial}
    A \emph{partial solution} for the triad $\tss$ is a quadruple $(\interp{\hat f }, \interp{\hat g}, \hat\bE, \vec\cW%, \widehat{\pdeg}
    )$ such that the following conditions are satisfied:
        \begin{description}
            \item[(Domain-respecting)\label{item:partial:respect-domain}] We have:
            \begin{itemize}
                \item $\interp{\hat f }$ is a vertex interpretation of $\phi$ on $\broom$,
                \item $\interp{\hat g}$ is an edge interpretation of $\phi$ on $\sheaf$,
                \item $\hat\bE$ is a neighborhood function on $\broom$,
                \item $\vec\cW$ maps each element $(X,Y)\in \Cat$ to a \textbf{simple} subset of arcs from $\vec\sheaf$.   
            \end{itemize}
            \item[(Edge-consistent)\label{item:partial:edge:consistent}] For every $(X, Y) \in \Cat$ and every edge $uv \in \sheaf$ the following equivalence holds: there exists $j \in [\dphi]$ with $(u^1, v^j) \in \vec\cW(X, Y)$ if and only if $u \in \interp{\hat f}(X)$ and $uv \in \psg(Y)$.
            %\item[(Color)\label{item:partial:color}] If $Y = \constant F \in \const^E(\phi)$, then for every edge $uv \in \sheaf$ the following equivalence holds: there exists $j \in [\dphi]$ with $(u_1, v_j) \in \vec\cW(X, Y)$ if and only if $uv \in \constant F(G)$ and $u \in \interp{\hat f}(X)$.
            %\item[(Variable)\label{item:partial:variable}] If $Y = Y \in \var^E(\phi)$, then for every $uv\in \sheaf$ the following equivalence holds: there exists $j\in [\dphi]$ such that $(u_1,v_j) \in \vec\cW(X,Y)$ if and only if $uv\in \interp{\hat g}(Y)$ and $u\in \interp{\hat f}(X)$.  
            \item[(Respect-subtree-$\hat{\bE}$)\label{item:partial:respect:subtree:hatbE}] For every $(X, Y) \in \Cat$, every $v\in \subtree$ and every $j \in [\hat\bE(v,(X,Y))]$, the set $\vec \cW(X,Y)$ contains at least one arc incoming into $v^j$.
            \item[(Respect-broom-$\hat{\bE}$)\label{item:partial:respect:broom:hatbE}]
            For every $(X, Y) \in \Cat$, every $v \in \broom$ and every $j \in [\dphi]$ with $j > \hat\bE(v,(X,Y))$, the set $\vec \cW(X,Y)$ contains no arcs incoming into $v^j$.
        \end{description}
    We denote by $\Partial(\tss)$ the set of all partial solutions for $\tss$.
\end{definition}
It is not difficult to see (we will formally prove it later though) that the properties \ref{item:partial:edge:consistent} and \ref{item:partial:respect:subtree:hatbE} imply that for every vertex $v \in \subtree$ and every pair~$(X, Y) \in \Cat$, the vertex $v$ has at least $\pse(v, (X, Y))$ neighbors in $w \in \psf(X)$ such that $vw \in \psg(Y)$.
Later in the proofs of \cref{lemma:modelcheck:realizable:forward} and \cref{lemma:modelcheck:realizable:backward} we will see how how to ensure that (unless the special case $\pse(v, (X, Y)) = \dphi$ occurs), that $\pse(v, (X, Y))$ is also an upper bound on the number of such neighbors, i.e., this is precisely this number.

\begin{definition}
     Given $(\interp{\hat f}, \interp{\hat g}, \hat\bE, \vec\cW)$ satisfying \ref{item:partial:respect-domain} for $\tss$, we define the mod-degree $\pdeg_{\vec \cW}$ on $\broom$ such that for every $w \in \broom$ and $C \in \Cat$, we have:
    \[
        \pdeg_{\vec \cW}(w, C) \defeq |\{(z^1, w^i) \in \vec\cW(C) \colon z \in \broom\}| \pmod{\pphi}.
    \]
\end{definition}
For a partial solution $(\interp{\hat f }, \interp{\hat g}, \hat\bE, \vec\cW)$, in particular satisfying \ref{item:partial:edge:consistent}, we will be able to use, for a vertex $v \in \subtree$ and $(X, Y) \in \Cat$, the value $\pdeg_{\psw}(v, (X, Y))$ as the number of neighbors $w \in \psf(X)$ of $v$ such that $vw \in \psg(Y)$.
Altogether, assuming that $\pse(v, (X, Y))$ is indeed also an upper bound on the number of such neighbors (which will be ensured later), we will be able to decide whether $v \in \subtree$ belongs to $N^{\sigma}(\psf(X), \psg(Y))$ based on the values $\pse(v, (X, Y))$ and $\pdeg_{\psw}(v, (X, Y))$.

This motivates the following definition of the indexes used by our recursive algorithm. %where we also provide a notion of compatibility between a partial solution and an index.
Given two neighborhood functions $\allow$ and $\bE$ on the same domain $\dom$, we write $\allow \leq \bE$ if for every $v\in \dom$ and $C\in\Cat$, we have $\allow(v,C)\leq \bE(v,C)$.

\begin{definition}\label{def:modelcheck:index:compatibility}
    An \emph{index} for $\tss$ is a tuple $(\interp f, \bE, \allow, \pdeg, \False)$ where:
    \begin{itemize}
        \item $\interp f$ a vertex interpretation of $\phi$ on $\tail$.
        \item $\bE$ and $\allow$ are neighborhood functions on $\tail$ such that $\allow\leq \bE$.
        \item $\pdeg$ is a mod-degree on $\tail$,
        \item $\False \subseteq \EQ(\phi)$.
    \end{itemize}
    We denote by $\Index(\tss)$ the set of all indexes for $\tss$.
    We say that a partial solution $(\interp{\hat f }, \interp{\hat g }, \hat\bE, \vec \cW)\in \Partial(\tss)$ is \emph{compatible} with an index $(\interp f, \inde, \allow, \pdeg, \False)\in\Index(\tss)$ when the following conditions are satisfied:
    \begin{description}
        \item[($\interp f$-consistent)\label{item:partial:fconsistent}] For every $X\in \var^V(\phi)$, we have $\interp f(X) = \interp{\hat f}(X) \cap \tail$.
        \item[($\bE$-consistent)\label{item:partial:econsistent}] For every $v \in \tail$ and every $C \in \Cat$, we have $\bE(v, C) = \hat\bE(v, C)$.
        \item[($\pdeg$-consistent)\label{item:partial:pdegconsistent}]For every $v\in \tail$ and every $C\in \Cat$, we have $\pdeg(v,C)=\pdeg_{\vec \cW}(v,C)$.
        \item[(Respect-$\allow$)\label{item:partial:respect:allow}] For every $v\in \tail$, every $C\in \Cat$, and every arc $(u^1,v^j) \in \vec\cW(C)$, we have $j\leq \allow(v,C)$.
        \item[(Respect-$\False$)\label{item:partial:respect:False}]
        We have $\FalseFunc(\interp{ \hat f}_{|_{\subtree}}, \interp{\hat g }, \hat \bE_{|_{\subtree}}, (\pdeg_{\vec \cW})_{|_{\subtree}}) \subseteq \False$.
    \end{description}
    For every index $I\in \Index(\tss)$, we denote the set of all partial solutions in $\Partial(\tss)$ compatible with $I$ by $\Partial(\tss,I)$.
    Moreover, we define $\Partial^=(\tss, I)$ as the subset of $\Partial(\tss,I)$ that contains all the partial solutions $(\interp{\hat f }, \interp{\hat g }, \hat\bE, \vec \cW)$ such that 
    $\FalseFunc(\interp{ \hat f}_{|_{\subtree}}, \interp{\hat g }, \hat \bE_{|_{\subtree}}, (\pdeg_{\vec \cW})_{|_{\subtree}}) = \False$. 
\end{definition}
The neighborhood function $\allow$ will be used in the so-called forget-computation (see \cref{subsec:forget}) to ensure that when a vertex is moved to the ``finished'' set $\subtree$, it satisfies \ref{item:partial:respect:subtree:hatbE}, i.e., certain copies of this vertex in $\vec G_{\phi}$ have incoming arcs.
This, in turn, ensures that the function $\pse$ indeed provides lower bounds on the number of certain sets of relevant neighbors of this vertex. 
The inclusion-exclusion (see \cref{thm:inclusion-exclusion-intersection}) allows to replace the task of ensuring the existence of incoming arcs by the task of forbidding such arcs. The neighborhood function $\allow$ is then used to ensure that some copies of the vertex are forbidden to have incoming arcs.
We will provide all technical details in \cref{subsec:forget}.

Since $\allow$ and $\False$ are involved only in \ref{item:partial:respect:allow} and \ref{item:partial:respect:False}, respectively, we deduce the following inclusions.

\begin{observation}\label{obs:modelcheck:inclusions}
    Let $(\interp f, \bE, \allow, \pdeg, \False)\in\Index(\tss)$.
    For every neighborhood function $\allow'$ on $\tail$ such that $\allow' \leq \allow$, we have:
    \[ \Partial(\tss, (\interp f, \bE, \allow', \pdeg, \False) ) \subseteq \Partial(\tss, (\interp f, \bE, \allow, \pdeg, \False)).\]
    Moreover, for every $\False'\subseteq \False$, we have:
    \[ \Partial(\tss, (\interp f, \bE, \allow, \pdeg, \False') ) \subseteq \Partial(\tss, (\interp f, \bE, \allow, \pdeg, \False)).\]
\end{observation}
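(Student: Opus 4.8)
The plan is to exploit the fact that, among the five compatibility conditions in \Cref{def:modelcheck:index:compatibility}, the restricting neighborhood function of an index appears only in \ref{item:partial:respect:allow} and the set $\False$ appears only in \ref{item:partial:respect:False}. Hence weakening either of these two index components can only enlarge the set of compatible partial solutions, and both inclusions follow by a direct ``check every condition'' argument.

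For the first inclusion I would first observe that $(\interp f, \bE, \allow', \pdeg, \False)$ is indeed an element of $\Index(\tss)$: the only requirement beyond the ones already known for $(\interp f, \bE, \allow, \pdeg, \False)$ is $\allow' \leq \bE$, which holds by transitivity from $\allow' \leq \allow \leq \bE$. Then I would take an arbitrary partial solution $P = (\interp{\hat f}, \interp{\hat g}, \hat\bE, \vec\cW) \in \Partial(\tss, (\interp f, \bE, \allow', \pdeg, \False))$ and verify the five compatibility conditions with respect to $(\interp f, \bE, \allow, \pdeg, \False)$. Conditions \ref{item:partial:fconsistent}, \ref{item:partial:econsistent}, \ref{item:partial:pdegconsistent} and \ref{item:partial:respect:False} refer only to $\interp f$, $\bE$, $\pdeg$ and $\False$, which are unchanged, so they hold for $P$ verbatim. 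For \ref{item:partial:respect:allow}: given $v \in \tail$, $C \in \Cat$ and an arc $(u^1, v^j) \in \vec\cW(C)$, compatibility of $P$ with the smaller index yields $j \leq \allow'(v, C)$, and $\allow'(v, C) \leq \allow(v, C)$ since $\allow' \leq \allow$; hence $j \leq \allow(v, C)$, as required.

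For the second inclusion I would take an arbitrary $P \in \Partial(\tss, (\interp f, \bE, \allow, \pdeg, \False'))$ and again check the five conditions against $(\interp f, \bE, \allow, \pdeg, \False)$. Conditions \ref{item:partial:fconsistent}--\ref{item:partial:respect:allow} involve only $\interp f$, $\bE$, $\allow$ and $\pdeg$, which are identical in both indexes, so they transfer immediately. For \ref{item:partial:respect:False}: compatibility of $P$ with the index using $\False'$ gives $\FalseFunc(\interp{\hat f}_{|_{\subtree}}, \interp{\hat g}, \hat\bE_{|_{\subtree}}, (\pdeg_{\vec\cW})_{|_{\subtree}}) \subseteq \False'$, and since $\False' \subseteq \False$ this yields the desired containment in $\False$.

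There is essentially no genuine obstacle here; the statement is a bookkeeping observation about monotonicity of compatibility in the components $\allow$ and $\False$. The only point deserving a moment of care is confirming that the weakened tuples are still valid indexes --- in particular the inequality $\allow' \leq \bE$ in the first part --- which is immediate from transitivity; everything else is just unwinding \Cref{def:modelcheck:index:compatibility}.
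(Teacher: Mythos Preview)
Your proposal is correct and follows exactly the approach the paper takes: the paper simply remarks, immediately before the observation, that ``$\allow$ and $\False$ are involved only in \ref{item:partial:respect:allow} and \ref{item:partial:respect:False}, respectively,'' and deduces the inclusions without further argument. Your write-up is a faithful (and somewhat more detailed) unpacking of that one-line justification, including the small check that $\allow' \leq \bE$ so that the weakened tuple is still an index.
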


We stratify the sets $\Partial(\tss,I)$ as follows.
The values $c_1, \dots, c_\ell$ simply keep track of the cardinalities of sets involved in size measurements in $\phi$, while $a$ and $b$ will later be used to ensure that~$\pse$ provides correct estimates on the cardinalities of certain neighborhoods.

\begin{definition}\label{def:modelcheck:constants}
    Given a partial solution $P=(\interp{\hat f },\interp{\hat g }, \hat\bE, \vec \cW)\in \Partial(\tss)$, we define the following constants:
    %\begin{itemize}
        %\item 
        $$a_\tss(P)\defeq \sum_{C\in \Cat}\sum_{\substack{ v\in \subtree \colon \\ \hat \bE(v,C) < \dphi}} \hat\bE(v,C),$$

        %\item 
        $$b_\tss(P)\defeq 
        \sum_{C\in \Cat}  
                \sum_{\substack{ v\in \broom \colon \\ \hat\bE(v,C) < \dphi }} 
                    \abs{\{(u^1, v^i) \in \vec\cW(C)\}},$$
        % \item $b_\tss(P)\defeq b_\tss^\var(P) + b_\tss^{\const}(P)$ where :
        % \begin{align*}
        %     b_\tss^\var(P)\defeq & \sum_{(X,Y)\in\Cat^\var} \sum_{\substack{ v\in \broom \\ \hat \bE(v,(X, Y)) < \dphi}}  \abs{ \{(u_1,v_i)\in \vec \cW(X,Y) \}}, \text{ and}\\
        %     %
        %     b_\tss^{\const}(P)\defeq & 
        %         \sum_{(X,\constant F)\in \Cat^{\const}}  
        %         \sum_{\substack{ v\in \broom \\ \hat\bE(v,(X,\constant F)) < \dphi }} 
        %             \abs{ \{(u_1,v_i)\in \vec \cW(X,\constant F) \}}
        %             %\abs{ \{  uv \in\constant F(G)\cap \sheaf \mid  u\in \interp{\hat f}(X) \}}.
        % \end{align*}
        %\item 
    and for every $i\in [\ell]$,
    $$c^i_\tss(P) \defeq \abs{\ip{X_i}^{(G,\interp{\hat f}_{|\subtree},\interp{\hat g})}}.$$
    Moreover, we use $\bar{c}_\tss(P)$ to denote the tuple $(c^1_\tss(P), \dots, c^\ell_\tss(P))$.
    Given an index $I\in\Index(\tss)$, $a,b\in \bN$ and $\barc\in \bN^\ell$ we define $\Partial(\tss,I,a,b,\barc)$ as the set of all partial solutions $P\in \Partial(\tss,I)$ such that $a_\tss(P)=a$, $b_\tss(P)=b$ and $\barc_\tss(P)=\barc$.
    We define $\Partial^=(\tss,I,a,b,\barc)$ from $\Partial^=(\tss,I)$ similarly.
\end{definition}

The following lemma will be useful to bound the range of values we are interested in.
One should think of $\td$ as the depth of an elimination tree of $G$ which we will fix soon:
\begin{lemma}\label{lem:modelcheck:upperbounds:abc}
    Let $G$ be a graph with at most $n \cdot \td$ edges for some value $\td$.
    %of treedepth at most $\td$.
    Then for every $P\in \Partial(\tss)$, the value $a_\tss(P)$ is at most $|\phi| \cdot \dphi \cdot n$ and the value $b_\tss(P)$ is at most $2 \cdot |\phi| \cdot n \cdot \td$.
    Further, for every $i\in [\ell]$, we have $c^i_\tss(P) \leq n \cdot \td$.
\end{lemma}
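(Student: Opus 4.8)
The plan is to read off each of the three bounds directly from \Cref{def:modelcheck:constants}, using only two trivial structural observations. First, $|\Cat| \le |\phi|$: every element of $\Cat$ is a pair $(X,Y)$ witnessing an occurrence of a subterm $N^\sigma(X,Y)$ in $\phi$, and a formula of length $|\phi|$ contains at most $|\phi|$ such occurrences. Second, $\tail$, $\subtree$ and $\broom = \tail \cup \subtree$ are all subsets of $V(G)$, hence of size at most $n$, while $\sheaf \subseteq E(G)$ has size at most $|E(G)| \le n \cdot \td$ by the hypothesis of the lemma. I may assume $\td \ge 1$ throughout (for $\td = 0$ the graph $G$ is edgeless and every bound is immediate).

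For $a_\tss(P)$ I would note that, by definition, it is a sum of at most $|\Cat| \cdot |\subtree| \le |\phi| \cdot n$ nonnegative terms, each of the form $\hat\bE(v,C)$ with the summation restricted to $v$ satisfying $\hat\bE(v,C) < \dphi$; hence each term is bounded by $\dphi$ and $a_\tss(P) \le |\phi| \cdot \dphi \cdot n$. For $b_\tss(P)$, the one point requiring care is that $|\vec\cW(C)|$ must be bounded without a factor of $\dphi$: since $\vec\cW(C)$ is a \emph{simple} subset of $\vec\sheaf$ by \ref{item:partial:respect-domain}, for each edge $uv \in \sheaf$ it contains at most one arc of the form $(u^1,v^i)$ and at most one arc of the form $(v^1,u^j)$, so $|\vec\cW(C)| \le 2|\sheaf| \le 2 n \td$. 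Now for each fixed $C$ the inner sum $\sum_{v \in \broom,\; \hat\bE(v,C) < \dphi} |\{(u^1,v^i) \in \vec\cW(C)\}|$ counts every arc of $\vec\cW(C)$ at most once, via the underlying vertex of its head, so it is at most $|\vec\cW(C)| \le 2 n \td$; summing over the at most $|\phi|$ choices of $C$ gives $b_\tss(P) \le 2 |\phi| n \td$. Finally, $c^i_\tss(P)$ equals $\abs{\ip{X_i}^{(G,\interp{\hat f}_{|\subtree},\interp{\hat g})}}$, which is the size of a subset of $\subtree \subseteq V(G)$ when $X_i$ is a vertex set variable and of a subset of $\sheaf \subseteq E(G)$ when $X_i$ is an edge set variable; in both cases this is at most $\max(n,|E(G)|) \le n \td$.

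I do not expect any real obstacle: the statement is a sequence of one-line estimates. The only things to keep in mind are that the bound on $b_\tss(P)$ genuinely relies on the simplicity condition on $\vec\cW(C)$ together with the remark that the double sum does not double-count arcs (so that the estimate stays free of $\dphi$), and that the $c^i$-bound in the vertex-set-variable case silently uses $\td \ge 1$, which is harmless.
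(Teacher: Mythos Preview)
Your proof is correct and follows essentially the same approach as the paper: both bound $|\Cat|\le|\phi|$, use simplicity of $\vec\cW(C)$ to get at most two arcs per edge for the $b_\tss(P)$ bound, and observe that $c^i_\tss(P)$ is the size of a vertex or edge subset. Your write-up is slightly more explicit (e.g., noting that the inner sum in $b_\tss(P)$ counts each arc at most once via its head, and flagging the harmless $\td\ge 1$ assumption for the vertex case of $c^i$), but the argument is the same.
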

\begin{proof}
    %By \Cref{obs:prelim:treedepth:nb:edges}, the graph $G$ has at most $n \cdot \td$ edges.
    The size of $\Cat$ is upper-bounded by $|\phi|$, the sizes of $\subtree$ and $\broom$ are upper-bounded by $n$, and the size of $\sheaf$ is upper-bounded by $n \cdot \td$.
    From this the bound on $a_\tss(P)$ follows immediately.
    For every $C \in \Cat$, every edge $uv$ of $G$ contributes at most two arcs, namely one of form $(u^1, v^i)$ and the other of form $(v^1, u^j)$ (for some integers $i, j$), to the simple set $\psw(C)$ so the bound on $b_\tss(P)$ follows as well.
    %For $a_\tss(P)$ and $b_\tss(P)$, we simply use the fact that $\abs{\phi}$ is an upper bound on $\abs{\Cat}$ and $n$ is an upper bound on $\abs{\subtree}$, and the terms $\bE(v,C)$ and $\abs{\{(u^1, v^i) \in \vec\cW(C) \}}$ in the definition of $a_\tss(P)$ and $b_\tss(P)$.
    And for $i \in [\ell]$, $c^i_\tss(P)$ is the size of a set of vertices or edges, and thus it is upper-bounded by $n \cdot \td$.
\end{proof}

\subsection{Relations between partial solutions and the formula satisfiability}

\def\root{\mathsf{root}}

Let $\root$ denote the triad defined as $\root\defeq (\emptyset,V(G),E(G))$. Informally speaking, this corresponds to the case where all vertices and edges are ``processed'' already.
In this subsection, we will see how we can determine whether $G\models \phi$ holds from the cardinality $|\Partial^=(\root,I,a,b,\barc)|$ for some values of $I\in \Index(\root)$, $a,b\in \bN$ and $\barc\in \bN^\ell$.

% Let us start with some intuitions about the relation between $\cP^=_\root(I)$ with $I\in \Index(\root)$ and the interpretations of $\phi$ on $G$.
% First, observe that the value of $\ip{\phi}^{(G, \interp{\hat f}, \interp{\hat g})}$ entirely depends on the values of each size measurement $\ip{\abs{X_i}\leq m_i}^{(G, \interp{\hat f}, \interp{\hat g})}$ and each equality $\ip{t_1 = t_2}^{(G, \interp{\hat f}, \interp{\hat g})}$. 
 Given $\False\subseteq \EQ(\phi)$ and $\barc=(c^1,\dots,c^\ell)\in\bN^{\ell}$, we denote by $\ip{\phi}^{(\False,\barc)}$ the logical value of the formula obtained from $\phi$ by (1) replacing every subformula $t_1=t_2\in \EQ(\phi)$ by 0 if it belongs to $\False$, and by~1 otherwise; and (2) replacing every subformula $\abs{X_i} \prec m$ with $\prec\in\{\leq,\geq,=\}$ by 1 if $c^i \prec m$, and by 0 otherwise.

This subsection is dedicated to prove the following characterization for $G\models \phi$ based on our notion of partial solutions. 
For simplicity, if the domain of some function is empty, we denote this function by $\emptyset$.
Similarly, we denote the unique vertex interpretation on the empty set by $\emptyset$. 

\begin{lemma}\label{lem:modelcheck:equivalence:G:models:phi}
    We have $G\models \phi$ if and only if there exist $\False\subseteq \EQ(\phi)$, $i \in \bN$, and $\barc\in\bN^{\ell}$ such that $\ip{\phi}^{(\False,\barc)}=1$ and $\Partial^=(\root, (\emptyset,\emptyset,\emptyset,\emptyset,\False), i,i, \barc)\neq 0$.
\end{lemma}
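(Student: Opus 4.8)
The plan is to prove both directions by connecting the semantics of $\phi$ on $G$ with the combinatorial machinery of partial solutions at the root triad $\root=(\emptyset,V(G),E(G))$. The key observation is that for $\root$ we have $\tail=\emptyset$ and $\subtree=\broom=V(G)$, so an index for $\root$ has only one meaningful component, namely the set $\False\subseteq\EQ(\phi)$, and the compatibility conditions \ref{item:partial:fconsistent}--\ref{item:partial:respect:allow} become vacuous, while \ref{item:partial:respect:False} reduces to $\FalseFunc(\interp{\hat f},\interp{\hat g},\hat\bE,\pdeg_{\vec\cW})\subseteq\False$. Consequently $\Partial^=(\root,(\emptyset,\emptyset,\emptyset,\emptyset,\False),i,i,\barc)$ is exactly the set of partial solutions $P=(\interp{\hat f},\interp{\hat g},\hat\bE,\vec\cW)$ for $\root$ with $\FalseFunc(\interp{\hat f},\interp{\hat g},\hat\bE,\pdeg_{\vec\cW})=\False$, with $a_\root(P)=b_\root(P)=i$, and with $\bar c_\root(P)=\barc$.

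First I would establish the bridge between partial solutions and ordinary interpretations. Given a partial solution $P$ for $\root$, the pair $(\interp{\hat f},\interp{\hat g})$ is an ordinary interpretation of $\phi$ in $G$. The heart of the argument is to show that the condition $a_\root(P)=b_\root(P)$ forces $\hat\bE$ to \emph{correctly} compute neighborhood sizes: by \ref{item:partial:edge:consistent} and \ref{item:partial:respect:subtree:hatbE}, for every $C=(X,Y)\in\Cat$ and every $v\in V(G)$ with $\hat\bE(v,C)<\dphi$, each of the copies $v^1,\dots,v^{\hat\bE(v,C)}$ has an incoming arc in $\vec\cW(C)$, while by \ref{item:partial:respect:broom:hatbE} no copy $v^j$ with $j>\hat\bE(v,C)$ does; hence $\hat\bE(v,C)\le|\{(u^1,v^i)\in\vec\cW(C)\}|$. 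Since $\vec\cW(C)$ is simple, the latter count equals the number of neighbors $u\in\interp{\hat f}(X)$ with $uv\in\interp{\hat g}(Y)$, i.e.\ $|N(v,\interp{\hat g}(Y))\cap\interp{\hat f}(X)|$. Summing over all $C$ and all such $v$ gives exactly $a_\root(P)\le b_\root(P)$, and equality holds iff for every $C$ and every $v$ with $\hat\bE(v,C)<\dphi$ we in fact have $\hat\bE(v,C)=|N(v,\interp{\hat g}(Y))\cap\interp{\hat f}(X)|$. For the special value $\hat\bE(v,C)=\dphi$, the same reasoning gives that $|N(v,\interp{\hat g}(Y))\cap\interp{\hat f}(X)|\ge\dphi$, and then by \Cref{obs:modelcheck:dphi:pphi} and the definition of $\pdeg_{\vec\cW}$, membership of $v$ in $N^\sigma(\interp{\hat f}(X),\interp{\hat g}(Y))$ is determined by $\pdeg_{\vec\cW}(v,C)+(\dphi\cdot\pphi)\in\sigma$. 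Comparing this with \Cref{def:modelcheck:partial:interpretation} shows that when $a_\root(P)=b_\root(P)=i$, the partial interpretation $\ip{\cdot}^{(\interp{\hat f},\interp{\hat g},\hat\bE,\pdeg_{\vec\cW})}$ agrees on every neighborhood term and every edge term with the genuine interpretation $\ip{\cdot}^{(G,\interp{\hat f},\interp{\hat g})}$ (restricted to $V(G)$ resp.\ $E(G)$, which here is no restriction), and therefore $\FalseFunc(\interp{\hat f},\interp{\hat g},\hat\bE,\pdeg_{\vec\cW})$ coincides with the ``true'' falsified set $\FalseFunc(G,\interp{\hat f},\interp{\hat g})$. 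Likewise, since $\interp{\hat f}_{|\subtree}=\interp{\hat f}$ here, the tuple $\bar c_\root(P)$ records the actual cardinalities $|\ip{X_i}^{(G,\interp{\hat f},\interp{\hat g})}|$.

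For the forward direction, suppose $G\models\phi$. Since $\phi\in\core\NEOtwo[\finrec]$, it has the form $\exists X_1\cdots\exists X_k\,\psi$, so there is an interpretation $(\interp f,\interp g)$ with $(G,\interp f,\interp g)\models\psi$. Set $\False\defeq\FalseFunc(G,\interp f,\interp g)$ and $\barc\defeq(|\ip{X_1}^{(G,\interp f,\interp g)}|,\dots,|\ip{X_\ell}^{(G,\interp f,\interp g)}|)$; then $(G,\interp f,\interp g)\models\psi$ unwinds, by definition of $\ip{\phi}^{(\False,\barc)}$, to $\ip{\phi}^{(\False,\barc)}=1$. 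It remains to build a partial solution $P$ for $\root$ with $\FalseFunc(\cdot)=\False$ and $a_\root(P)=b_\root(P)=:i$. I take $\interp{\hat f}\defeq\interp f$, $\interp{\hat g}\defeq\interp g$, and define $\hat\bE(v,C)\defeq\min\{\dphi,\,|N(v,\interp g(Y))\cap\interp f(X)|\}$ for $C=(X,Y)$. For $\vec\cW(C)$, for every edge $uv\in\interp g(Y)$ with $u\in\interp f(X)$ I select exactly one arc $(u^1,v^j)$, choosing the index $j$ greedily so that the first $\hat\bE(v,C)$ copies of $v$ each receive at least one incoming arc and no copy beyond $\hat\bE(v,C)$ ever does (this is possible precisely because there are at least $\hat\bE(v,C)$ such edges, by definition of $\hat\bE$, so it is a routine pigeonhole allocation). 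This yields a simple arc set satisfying \ref{item:partial:edge:consistent}, \ref{item:partial:respect:subtree:hatbE}, \ref{item:partial:respect:broom:hatbE}. By construction $\hat\bE$ correctly computes neighborhood sizes whenever it is below $\dphi$, so $a_\root(P)=b_\root(P)$, and by the bridge argument $\FalseFunc(\interp{\hat f},\interp{\hat g},\hat\bE,\pdeg_{\vec\cW})=\FalseFunc(G,\interp f,\interp g)=\False$ and $\bar c_\root(P)=\barc$. Conversely, if such $\False$, $i$, $\barc$ exist with $\Partial^=(\root,(\emptyset,\emptyset,\emptyset,\emptyset,\False),i,i,\barc)\ne\emptyset$, pick any $P=(\interp{\hat f},\interp{\hat g},\hat\bE,\vec\cW)$ in that set; the bridge argument gives $\FalseFunc(G,\interp{\hat f},\interp{\hat g})=\False$ and that $\bar c_\root(P)=\barc$ records the true cardinalities, so $\ip{\phi}^{(\False,\barc)}=1$ translates verbatim into $(G,\interp{\hat f},\interp{\hat g})\models\psi$, whence $G\models\exists X_1\cdots\exists X_k\,\psi=\phi$.

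The main obstacle is the bridge step: carefully verifying, via simplicity of $\vec\cW(C)$ together with \ref{item:partial:edge:consistent}, \ref{item:partial:respect:subtree:hatbE} and \ref{item:partial:respect:broom:hatbE}, that $a_\root(P)=b_\root(P)$ is equivalent to $\hat\bE$ computing all sub-$\dphi$ neighborhood sizes on the nose, and then matching this — together with the $\pdeg_{\vec\cW}$-based treatment of the $\dphi$-case via \Cref{obs:modelcheck:dphi:pphi} — exactly against \Cref{def:modelcheck:partial:interpretation} so that $\FalseFunc(\interp{\hat f},\interp{\hat g},\hat\bE,\pdeg_{\vec\cW})=\FalseFunc(G,\interp{\hat f},\interp{\hat g})$. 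Everything else (the greedy arc allocation, unwinding $\ip{\phi}^{(\False,\barc)}$, the trivialisation of the index at $\root$) is bookkeeping.
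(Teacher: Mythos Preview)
Your proposal is correct and follows essentially the same approach as the paper: the paper likewise splits the argument into a forward and a backward direction (packaged as two auxiliary lemmas via an intermediate notion of an interpretation ``realizing'' $(\False,\barc)$), uses the identical construction $\hat\bE(v,C)=\min\{\dphi,|N(v,\interp g(Y))\cap\interp f(X)|\}$ with a greedy arc allocation for $\vec\cW$, and proves your ``bridge'' step (that $a_\root(P)=b_\root(P)$ forces $\hat\bE$ to be exact below $\dphi$, combined with \Cref{obs:modelcheck:dphi:pphi} for the $\dphi$ case) as a separate claim inside the backward lemma. The only organizational difference is that the paper isolates the observation $G\models\phi\iff\exists\,\False,\barc$ with $\ip{\phi}^{(\False,\barc)}=1$ realized by some interpretation, whereas you fold this directly into each direction.
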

Observe that in this lemma we restrict ourselves to partial solutions $P$ satisfying $a_{\root}(P) = b_{\root}(P) = i$ for some $i \in \bN$ (cf.\ \cref{def:modelcheck:constants}).
As we will see in the proof of this lemma, this restriction is a crucial trick to ensure that the neighborhood function $\pse$ used in $P$ provides correct estimates on the cardinalities of relevant neighborhoods.
% To check whether $G\models \phi$, we go through all $\False\subseteq \EQ(\phi)$ and $c\in\bN^{\ell+2}$ such that $\ip{\phi}^{(\False,c)}=1$ and we 

% Observe that the ``$\tail$'' of $\root$ is empty and thus for each index $(\interp f, \bE, \allow, \pdeg, \False)\in\Index(\root)$, we have $\emptyset = \interp f = \bE = \allow$.

In order to prove~\Cref{lem:modelcheck:equivalence:G:models:phi},
we first give an additional definition.

\begin{definition}\label{def:model:checking:realize}
    Given $\False\subseteq \EQ(\phi)$ and $\barc=(c^1,\dots,c^\ell)\in\bN^{\ell}$, we say that an interpretation $(G, \interp{\hat f}, \interp{\hat g})$ of $\phi$ \textit{realizes} $(\False,\barc)$ if the following conditions are satisfied:
    \begin{description}
        \item[(Realize-$\FalseFunc$)\label{item:realize:False}] $\FalseFunc(G,\psf, \psg) = \False$,
        \item[(Realize-$\barc$)\label{item:realize:c}] $\ip{ \abs{X_i} = c^i }^{(G,\interp{\hat f}, \interp{\hat g})}=1$ for every $i\in [\ell]$.
    \end{description}
\end{definition}
Since in the above definition $\False$ is the set of equalities falsified by the interpretation $(G, \interp{\hat f}, \interp{\hat g})$, and~$\barc$ determines the cardinalities of all sets involved in size measurements, the following observation immediately follows from the semantics of the logic:

\begin{observation}
    We have $G \vDash \phi$ if and only if there exist $\False \subseteq \EQ(\phi)$ and $\barc \in \bN^\ell$ with $\ip{\phi}^{(\False, \barc)}=1$ such that there exists an interpretation $(G, \psf, \psg)$ of $\phi$ realizing $(\False, \barc)$.
\end{observation}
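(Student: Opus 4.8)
The plan is to prove the Observation by directly unwinding the semantics of $\core\NEOtwo[\finrec]$, using that $\phi$ is a core formula $\exists X_1\,\dots\,\exists X_k\,\psi$ with $\var(\psi)=\{X_1,\dots,X_k\}$. The key preliminary remark I would record is that every atomic subformula of $\psi$ is either an equality belonging to $\EQ(\phi)$ (these are precisely the primitive formulas of items~1--4) or a size measurement $|X_i|\prec m$ with $\prec\in\{\le,=,\ge\}$ (item~5); this is exactly what being a core formula buys us (no nested neighborhood operators, no bare vertex or edge variables, equalities only between variables, colors, or single neighborhood operators). Consequently the two substitution rules defining $\ip{\phi}^{(\False,\barc)}$ replace \emph{every} atom of $\psi$ by a Boolean constant, so $\ip{\phi}^{(\False,\barc)}$ is simply the Boolean value of $\psi$ under the truth assignment $\mu_{\False,\barc}$ that declares an equality $e\in\EQ(\phi)$ true iff $e\notin\False$ and a size measurement $|X_i|\prec m$ true iff $c^i\prec m$ (the outermost existential quantifiers over subsets are vacuous in front of a constant Boolean combination and can be ignored).

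For the forward direction I would take an interpretation $(G,\psf,\psg)$ witnessing $G\vDash\phi$, i.e., with $(G,\psf,\psg)\vDash\psi$, and set $\False\defeq\FalseFunc(G,\psf,\psg)$ and $\barc\defeq(\abs{\ip{X_1}^{(G,\psf,\psg)}},\dots,\abs{\ip{X_\ell}^{(G,\psf,\psg)}})$. Then $(G,\psf,\psg)$ realizes $(\False,\barc)$ by construction: \ref{item:realize:False} holds by the choice of $\False$ and \ref{item:realize:c} by the choice of $\barc$. Moreover, by these same choices the truth assignment $\mu_{\False,\barc}$ agrees atom by atom with the actual evaluation of the atoms of $\psi$ under $(G,\psf,\psg)$ — an equality is false under $(G,\psf,\psg)$ exactly when it lies in $\FalseFunc(G,\psf,\psg)=\False$, and a size measurement $|X_i|\prec m$ holds under $(G,\psf,\psg)$ exactly when $c^i\prec m$ — hence $\ip{\phi}^{(\False,\barc)}=\ip{\psi}^{(G,\psf,\psg)}=1$. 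The backward direction is the mirror image: given $\False,\barc$ with $\ip{\phi}^{(\False,\barc)}=1$ and an interpretation $(G,\psf,\psg)$ realizing $(\False,\barc)$, conditions \ref{item:realize:False} and \ref{item:realize:c} again force $\FalseFunc(G,\psf,\psg)=\False$ and $\abs{\ip{X_i}^{(G,\psf,\psg)}}=c^i$, so $\mu_{\False,\barc}$ coincides with the actual evaluation of the atoms of $\psi$; therefore $(G,\psf,\psg)\vDash\psi$, and interpreting the variables $X_1,\dots,X_k$ according to $(\psf,\psg)$ witnesses $G\vDash\phi$.

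I expect the only point needing care — and it is a mild one — to be the bookkeeping check that the two substitution rules of $\ip{\phi}^{(\False,\barc)}$ really exhaust all atoms of a core formula, i.e., that no atom escapes $\EQ(\phi)$ together with the size measurements; this is exactly where the reduction to core logic (\Cref{obs:core}) pays off, and the remainder is a routine structural induction over the Boolean connectives of $\psi$.
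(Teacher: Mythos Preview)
Your proposal is correct and follows the same approach as the paper, which merely states that the observation ``immediately follows from the semantics of the logic'' without spelling out the argument. You have simply written out in detail what the paper leaves implicit: that for a core formula the atoms of $\psi$ are exactly the equalities in $\EQ(\phi)$ together with the size measurements, so the substitution defining $\ip{\phi}^{(\False,\barc)}$ exhausts all atoms, and then the two directions are a straightforward matching of atomic truth values.
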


With this observation in hand, we can split the proof of the equivalence from \cref{lem:modelcheck:equivalence:G:models:phi} into two lemmas below.
\begin{lemma}\label{lemma:modelcheck:realizable:forward}
    Let $I=(\emptyset,\emptyset,\emptyset,\emptyset,\False)\in\Index(\root)$ and $\barc\in\bN^{\ell}$.
    If an interpretation $(G, \interp{\hat f}, \interp{\hat g})$ of $\phi$ realizes $(\False, \barc)$, then there exist functions $\hat \bE$ and $\vec \cW$ with the following properties.
    Let $P \defeq (\interp{\hat f}, \interp{\hat g}, \hat \bE, \vec \cW)$.
    Then we have $P \in \Partial^=(\root, I, a_\root(P),b_\root(P),\barc)$ and $a_\root(P) = b_\root(P)$.
\end{lemma}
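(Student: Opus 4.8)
The idea is to build $\hat\bE$ and $\vec\cW$ directly from the given interpretation, choosing everything so that the expectations are attained exactly. For $C=(X,Y)\in\Cat$ and $v\in V(G)$, let $m_C(v)\defeq\abs{\{u\in\interp{\hat f}(X) : uv\in\interp{\hat g}(Y)\}}$ be the true number of $\interp{\hat g}(Y)$-neighbours of $v$ lying in $\interp{\hat f}(X)$, and set $\hat\bE(v,C)\defeq\min(\dphi,m_C(v))$; this is a neighbourhood function on $\broom=V(G)$. To define $\vec\cW(C)$, enumerate the vertices $u\in\interp{\hat f}(X)$ with $uv\in\interp{\hat g}(Y)$ as $u_1,\dots,u_{m_C(v)}$ and put the arc $(u_k^1,v^{\tau(k)})$ into $\vec\cW(C)$ for each $k$, where $\tau(k)=k$ if $k\le\dphi$ and $\tau(k)=1$ otherwise. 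Per edge and per orientation at most one arc is inserted, so each $\vec\cW(C)$ is simple; moreover the copies of $v$ that receive an incoming arc of $\vec\cW(C)$ are exactly $v^1,\dots,v^{\hat\bE(v,C)}$. Put $P\defeq(\interp{\hat f},\interp{\hat g},\hat\bE,\vec\cW)$.

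\textbf{$P$ is a partial solution compatible with $I$.}
First I would check $P\in\Partial(\root)$: \ref{item:partial:respect-domain} is immediate; \ref{item:partial:edge:consistent} holds because an arc of the form $(u^1,v^{\cdot})$ is inserted into $\vec\cW(X,Y)$ precisely when $u\in\interp{\hat f}(X)$ and $uv\in\interp{\hat g}(Y)$; and since $\subtree=V(G)$ and the set of copies of $v$ hit by $\vec\cW(C)$ is $[\hat\bE(v,C)]$, both \ref{item:partial:respect:subtree:hatbE} and \ref{item:partial:respect:broom:hatbE} hold. For compatibility with $I=(\emptyset,\emptyset,\emptyset,\emptyset,\False)$, the conditions \ref{item:partial:fconsistent}, \ref{item:partial:econsistent}, \ref{item:partial:pdegconsistent} and \ref{item:partial:respect:allow} are vacuous because $\tail=\emptyset$, and in \ref{item:partial:respect:False} all restrictions to $\subtree=V(G)$ and $\sheaf=E(G)$ are trivial; so the whole content reduces to the next step.

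\textbf{The main step (and the expected obstacle).}
The crux is to show $\FalseFunc(\interp{\hat f},\interp{\hat g},\hat\bE,\pdeg_{\vec\cW})=\FalseFunc(G,\interp{\hat f},\interp{\hat g})$, which gives both \ref{item:partial:respect:False} and membership in $\Partial^=(\root,I)$, using $\FalseFunc(G,\interp{\hat f},\interp{\hat g})=\False$ from \ref{item:realize:False}. I would prove the stronger claim that $\ip{t}^{(\interp{\hat f},\interp{\hat g},\hat\bE,\pdeg_{\vec\cW})}=\ip{t}^{(G,\interp{\hat f},\interp{\hat g})}$ for every term $t$ of $\phi$. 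By \Cref{def:modelcheck:partial:interpretation} this is clear for edge set terms and for neighbourhood terms other than $N^\sigma(C)$ with $C\in\Cat$ (the intersections with $\sheaf$, resp.\ $\broom$, change nothing). For $t=N^\sigma(X,Y)$ with $C=(X,Y)\in\Cat$: the construction gives that the number of arcs of $\vec\cW(C)$ entering copies of $v$ is exactly $m_C(v)$, hence $\pdeg_{\vec\cW}(v,C)=m_C(v)\bmod\pphi$. If $m_C(v)<\dphi$, then $\hat\bE(v,C)=m_C(v)<\dphi$ and $v$ lies in the left-hand side iff $m_C(v)\in\sigma$, i.e.\ iff $v\in N^\sigma(\interp{\hat f}(X),\interp{\hat g}(Y))$. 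If $m_C(v)\ge\dphi$, then $\hat\bE(v,C)=\dphi$ and $v$ lies in the left-hand side iff $(m_C(v)\bmod\pphi)+\dphi\cdot\pphi\in\sigma$, which by \Cref{obs:modelcheck:dphi:pphi} (with $U=\interp{\hat f}(X)$, $F=\interp{\hat g}(Y)$, $i=m_C(v)\bmod\pphi$) is again equivalent to $v\in N^\sigma(\interp{\hat f}(X),\interp{\hat g}(Y))$. I expect this $\dphi$-capping/modular matching to be the only delicate point; everything else is bookkeeping made easy by the tail being empty.

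\textbf{The constants.}
Finally I would read off the constants. Whenever $\hat\bE(v,C)<\dphi$ we have $\hat\bE(v,C)=m_C(v)$, and also exactly $m_C(v)$ arcs of $\vec\cW(C)$ enter the copies of $v$; therefore
\[
  a_\root(P)=\sum_{C\in\Cat}\ \sum_{\substack{v\in V(G)\\ m_C(v)<\dphi}} m_C(v)=b_\root(P).
\]
Moreover $c^i_\root(P)=\abs{\ip{X_i}^{(G,\interp{\hat f},\interp{\hat g})}}=c^i$ by \ref{item:realize:c}, so $\barc_\root(P)=\barc$. Hence $P\in\Partial^=(\root,I,a_\root(P),b_\root(P),\barc)$ with $a_\root(P)=b_\root(P)$, as required.
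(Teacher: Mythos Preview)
Your proof is correct and follows essentially the same approach as the paper: the same construction of $\hat\bE(v,C)=\min(\dphi,m_C(v))$ and of $\vec\cW$ (yours is just a concrete instantiation of the paper's ``fill copies $1,\dots,\hat\bE(v,C)$'' scheme), the same verification of the partial-solution and compatibility conditions (trivialized by the empty tail), the same reduction of \ref{item:partial:respect:False} and the ``$=$'' part to the termwise equality of interpretations via \Cref{obs:modelcheck:dphi:pphi}, and the same counting argument for $a_\root(P)=b_\root(P)$.
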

\begin{proof}
    Let $(G, \interp{\hat f}, \interp{\hat g})$ be an interpretation of $\phi$ that realizes the pair $(\False,\barc)$. Also, let $c^1,\ldots,c^\ell\in\mathbb{N}$ such that $\barc=(c^1,\dots,c^\ell)\in\bN^{\ell}$.
    In the following, we construct a partial solution $P=(\interp {\hat f}, \interp{\hat g}, \hat \bE, \vec\cW)$ that belongs to $\Partial^=(\root, I,a_\root(P),b_\root(P),\barc)$ with $a_\root(P)=b_\root(P)$.
    We construct $\hat\bE$ and $\vec\cW $ by doing the following for every $v\in V(G)$ and $C=(X,Y)\in\Cat$.
    Let $U(v,C)$ be the set of all the vertices $u$ in $\interp {\hat f}(X)$ such that $uv$ is an edge in $\psg(Y)$. %$\ip{ Y}^{(G,\interp{\hat f}, \interp{\hat g})}$.
    We set $\hat\bE(v,C)= \min (\dphi, \abs{U(v,C)})$.
    For each $u\in U(v,C)$, we add one arc $(u^1,v^j)$ to $\vec\cW(C)$ in such a way thats for every $j$ between $1$ and $\hat\bE(v,C)$, there is at least one arc $(u^1,v^j)$ in $\vec\cW(C)$, and there is no such arc for any $j > \hat\bE(v,C)$.
    Since $\pdeg_{\vec \cW}(v, C) = |\{(u^1, v^i) \in \vec\cW(C)\}| \bmod \pphi$, we deduce from \Cref{obs:modelcheck:dphi:pphi} and the construction of $\hat\bE$ and $\vec\cW$ the following.
    \begin{observation}\label{obs:modelcheck:root}
        For every neighborhood term $N^\sigma(X,Y)$ occurring in $\phi$ and $v\in V(G)$, we have $v\in  \ip{N^\sigma(X,Y)}^{(G,\interp{\hat f}, \interp{\hat g})}$ if and only if either:
        \begin{itemize}
            \item $\bE(v,(X,Y)) < \dphi$ and $\bE(v,(X,Y))\in \sigma$, or
            \item $\bE(v,(X,Y)) = \dphi$ and $\pdeg_{\vec \cW}(v, C)  + (\dphi\cdot\dphi) \in \sigma$.
        \end{itemize}
    \end{observation}
    Now, we prove that $P\in \Partial(\root)$, that is, $P$ satisfies all the conditions of \Cref{def:modelcheck:partial}.
    By definition, $\interp{\hat f}$ is a vertex interpretation on $V(G)$ and $\interp{\hat g}$ an edge interpretation on $E(G)$.
    By construction, $\hat \bE$ is a neighborhood function on $V(G)$ and $\vec\cW$ maps each $(X,Y)\in\Cat$ to a simple subset of arcs from $\vec{E(G)}$ so \ref{item:partial:respect-domain} is satisfied.
    Now, for each $C=(X,Y)\in\Cat$, observe that:
    \begin{itemize}
        \item By construction, for every $v\in V(G)$ and $i\in [\hat\bE(v,(X,Y))]$, there is at least one arc in $\vec\cW(X,Y)$ incoming into $v^i$; and for every $j\in[d_\phi]$ with $j> \hat\bE(v,(X,Y))$, there is no arc in $\vec\cW(X,Y)$ incoming into $v^j$. Hence, \ref{item:partial:respect:subtree:hatbE} and \ref{item:partial:respect:broom:hatbE} are satisfied.

        \item For every $uv\in E(G)$ and every $X \in \Cat$, there exists $i\in [\dphi]$ such that $(u^1,v^i)\in \vec\cW(X,Y)$ if and only if $u\in U(v,C)$. Since $U(v,C)$ is the set of all vertices $u\in \interp{\hat f}(X)$ such that 
        $uv \in \psg(Y)$,
        %$uv\in \ip{ Y}^{(G,\interp{\hat f}, \interp{\hat g})}$, 
        it follows that \ref{item:partial:edge:consistent} is satisfied.
    \end{itemize}
    
    It follows that $P\in \Partial(\root)$.
    We need to prove that $P$ belongs to $\Partial(\root,I)$.
    Since the tail of $\root$ is empty, \ref{item:partial:fconsistent}, \ref{item:partial:econsistent}, \ref{item:partial:pdegconsistent} and \ref{item:partial:respect:allow} are trivially satisfied by $P$.
    We need now to prove that \ref{item:partial:respect:False} is satisfied.
    From \Cref{obs:modelcheck:root}, we deduce that $\ip{N^\sigma(X,Y)}^{(G,\interp{\hat f}, \interp{\hat g})} = \ip{N^\sigma(X,Y)}^{(\interp {\hat f}, \interp{\hat g}, \hat\bE, \pdeg_{\vec\cW})}$ for every neighborhood operator $N^\sigma(X,Y)$ occurring in $\phi$.
    And for any neighborhood term $t$ that is not a neighborhood operator, we have $\ip{t}^{(G,\interp{\hat f}, \interp{\hat g})} = \ip{t}^{(\interp {\hat f}, \interp{\hat g}, \hat\bE, \pdeg_{\vec\cW})}$ simply by definition, and the same holds for every edge set term $t$.
    Consequently, we have $\FalseFunc(G,\interp{\hat f}, \interp{\hat g}) = \FalseFunc(\interp {\hat f}, \interp{\hat g}, \hat\bE, \pdeg_{\vec\cW})$.
    From \ref{item:realize:False}, it follows that 
    \begin{equation}\label{eq:modelcheck:realize}
        \FalseFunc(\interp {\hat f}, \interp{\hat g}, \hat\bE, \pdeg_{\vec\cW}) = \False
    \end{equation}
    and thus $P$ satisfies \ref{item:partial:respect:False}.
    This proves that $P$ belongs to $\Partial(\root,I)$. 
    Thanks to \eqref{eq:modelcheck:realize}, we also know that $P$ satisfies the equality requirement to be in $\Partial^=(\root,I)$.    
    Since $(G,\interp{\hat f}, \interp{\hat g})$ satisfies \ref{item:realize:c}, for each $i\in[\ell]$, we have that
    $\ip{ \abs{X_i} = c^i }^{(G,\interp{\hat f}, \interp{\hat g})}=1$, so consequently, we have $c_\root^i(P)=c^i$.
    It follows that $P$ belongs to $\Partial^=(\root,I,a_\root(P), b_\root(P), \barc)$.
    
    It remains to prove $a_\root(P)=b_\root(P)$.
    For every vertex $v\in V(G)$ and every $C\in\Cat$ with $\hat\bE(v,C) < \dphi$, we have $\hat\bE(v,C) = \abs{U(v,C)}$ and  by construction, there are exactly $\abs{U(v,C)}$ arcs in $\vec\cW(C)$ incoming into the vertices $v^1,\dots,v^{\dphi}$. We conclude that 
    \begin{align*}
        a_\root(P) = \sum_{C\in \Cat}\sum_{\substack{ v\in V(G) \\ \hat \bE(v,C) < \dphi}} \hat\bE(v,(C)) = 
        \sum_{C\in \Cat}\sum_{\substack{ v\in V(G) \\ \hat \bE(v,C) < \dphi}} \abs{ \{ (u^1,v^i) \in \vec\cW(C)\}} = b_\root(P).
    \end{align*}
\end{proof}

We now show that the converse of the above lemma holds as well, namely:
\begin{lemma}\label{lemma:modelcheck:realizable:backward}
    Let $I=(\emptyset,\emptyset,\emptyset,\emptyset,\False)\in\Index(\root)$ and let $\barc\in\bN^{\ell}$.
    Then, for every $(\interp{ \hat f}, \interp{\hat g}, \hat \bE, \vec \cW)\in \Partial^=(\tss(\root),I,a_\root(P),b_\root(P),\barc)$ with $a_\root(P)=b_\root(P)$, the interpretation $(G, \interp{\hat f}, \interp{\hat g})$ of $\phi$ realizes $(\False, \barc)$.
\end{lemma}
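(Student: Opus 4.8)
The plan is to compare $(G,\interp{\hat f},\interp{\hat g})$ with the ``partial'' interpretation that $\hat\bE$ and $\pdeg_{\vec\cW}$ induce on $\phi$, to show that on the triad $\root$ these two interpretations agree on every term of $\phi$, and then to read off \ref{item:realize:False} and \ref{item:realize:c} from the membership of $P$ in $\Partial^=$ and from the stratification by $\barc$.

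First I would unfold the definitions for $\root=(\emptyset,V(G),E(G))$: the tail is empty, the subtree and the broom both equal $V(G)$, the sheaf equals $E(G)$, and every restriction to the subtree or to the sheaf is the identity. Hence $P\in\Partial^=(\root,I,a_\root(P),b_\root(P),\barc)$ directly gives $\FalseFunc(\interp{\hat f},\interp{\hat g},\hat\bE,\pdeg_{\vec\cW})=\False$ as well as $\abs{\ip{X_i}^{(G,\interp{\hat f},\interp{\hat g})}}=c^i$ for every $i\in[\ell]$, the latter being exactly \ref{item:realize:c}. It therefore remains to prove $\FalseFunc(G,\interp{\hat f},\interp{\hat g})=\FalseFunc(\interp{\hat f},\interp{\hat g},\hat\bE,\pdeg_{\vec\cW})$. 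Since $\phi\in\core\NEOtwo[\finrec]$, every term occurring in $\phi$ is either a bare neighborhood operator $N^\sigma(X,Y)$ with $(X,Y)\in\Cat$, or a term built from variables, colours, $\emptyset$, and the full vertex/edge sets by the set operators; for the latter \cref{def:modelcheck:partial:interpretation} interprets $t$ as $\ip{t}^{(G,\interp{\hat f},\interp{\hat g})}\cap V(G)$ (resp.\ $\cap\,E(G)$), which equals $\ip{t}^{(G,\interp{\hat f},\interp{\hat g})}$, so the two interpretations already agree there. It thus suffices to handle the neighborhood operators.

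Fix $v\in V(G)$ and $C=(X,Y)\in\Cat$, and set $\deg_C(v)\defeq\abs{\{u\in\interp{\hat f}(X)\mid uv\in\interp{\hat g}(Y)\}}$, so that $v\in\ip{N^\sigma(X,Y)}^{(G,\interp{\hat f},\interp{\hat g})}$ iff $\deg_C(v)\in\sigma$. By \ref{item:partial:edge:consistent} and the simplicity of $\vec\cW(C)$, the assignment $(u^1,v^j)\mapsto u$ is a bijection between the arcs of $\vec\cW(C)$ incoming into copies of $v$ and the set $\{u\in\interp{\hat f}(X)\mid uv\in\interp{\hat g}(Y)\}$; hence the number of these arcs equals $\deg_C(v)$, and consequently $\pdeg_{\vec\cW}(v,C)$ is precisely the residue of $\deg_C(v)$ modulo $\pphi$. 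By \ref{item:partial:respect:subtree:hatbE} and \ref{item:partial:respect:broom:hatbE} these $\deg_C(v)$ arcs all target the copies $v^1,\dots,v^{\hat\bE(v,C)}$ and each such copy receives at least one, whence $\deg_C(v)\geq\hat\bE(v,C)$ for every $v$ and $C$. Summing over all pairs with $\hat\bE(v,C)<\dphi$ and using the arc count just obtained gives $b_\root(P)=\sum_{C\in\Cat}\sum_{v:\hat\bE(v,C)<\dphi}\deg_C(v)\geq\sum_{C\in\Cat}\sum_{v:\hat\bE(v,C)<\dphi}\hat\bE(v,C)=a_\root(P)$, so the hypothesis $a_\root(P)=b_\root(P)$ forces equality in each summand: $\deg_C(v)=\hat\bE(v,C)$ whenever $\hat\bE(v,C)<\dphi$. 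This is exactly the first clause of the definition of $\ip{N^\sigma(C)}^{(\interp{\hat f},\interp{\hat g},\hat\bE,\pdeg_{\vec\cW})}$ in \cref{def:modelcheck:partial:interpretation}. When instead $\hat\bE(v,C)=\dphi$, \ref{item:partial:respect:subtree:hatbE} gives $\deg_C(v)\geq\dphi$, and then \cref{obs:modelcheck:dphi:pphi} (applied with $U=\interp{\hat f}(X)$, $F=\interp{\hat g}(Y)$ and $i=\pdeg_{\vec\cW}(v,C)$) yields $v\in N^\sigma(\interp{\hat f}(X),\interp{\hat g}(Y))$ iff $\pdeg_{\vec\cW}(v,C)+(\dphi\cdot\pphi)\in\sigma$, matching the second clause. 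Combining the two cases, $\ip{N^\sigma(X,Y)}^{(G,\interp{\hat f},\interp{\hat g})}=\ip{N^\sigma(X,Y)}^{(\interp{\hat f},\interp{\hat g},\hat\bE,\pdeg_{\vec\cW})}$.

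Putting this together, for every equality $t_1=t_2\in\EQ(\phi)$ we get $\ip{t_1}^{(G,\interp{\hat f},\interp{\hat g})}=\ip{t_1}^{(\interp{\hat f},\interp{\hat g},\hat\bE,\pdeg_{\vec\cW})}$ and likewise for $t_2$, hence $\FalseFunc(G,\interp{\hat f},\interp{\hat g})=\FalseFunc(\interp{\hat f},\interp{\hat g},\hat\bE,\pdeg_{\vec\cW})=\False$, i.e.\ \ref{item:realize:False}; together with \ref{item:realize:c} above, this shows $(G,\interp{\hat f},\interp{\hat g})$ realizes $(\False,\barc)$. I expect the main obstacle to be the third paragraph, and within it the step deducing the pointwise equality $\deg_C(v)=\hat\bE(v,C)$ from the single numerical identity $a_\root(P)=b_\root(P)$ --- this is precisely the counting argument that discards the partial solutions whose guessing function $\hat\bE$ underestimates some neighborhood size, dual to the way $\hat\bE$ and $\vec\cW$ are built in the proof of \cref{lemma:modelcheck:realizable:forward}.
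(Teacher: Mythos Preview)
Your proposal is correct and follows essentially the same approach as the paper's proof: both reduce to showing $\ip{N^\sigma(X,Y)}^{(G,\interp{\hat f},\interp{\hat g})}=\ip{N^\sigma(X,Y)}^{(\interp{\hat f},\interp{\hat g},\hat\bE,\pdeg_{\vec\cW})}$, establish $\deg_C(v)=\abs{\{(u^1,v^i)\in\vec\cW(C)\}}$ via \ref{item:partial:edge:consistent} and simplicity, derive $\deg_C(v)\geq\hat\bE(v,C)$ from \ref{item:partial:respect:subtree:hatbE}, and then use the global equality $a_\root(P)=b_\root(P)$ to force $\deg_C(v)=\hat\bE(v,C)$ termwise whenever $\hat\bE(v,C)<\dphi$, finishing the case $\hat\bE(v,C)=\dphi$ via \cref{obs:modelcheck:dphi:pphi}. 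The paper packages the arc-count statements into a separate claim (\cref{claim:modelcheck:realizable}), but the content and order of reasoning match yours.
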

\begin{proof}
    First, let $c^1,\ldots,c^\ell\in\mathbb{N}$ such that $\barc=(c^1,\dots,c^\ell)\in\bN^{\ell}$. 
    Also, let $P=(\interp{\hat f}, \interp{\hat g}, \hat\bE, \vec\cW)$ be a partial solution in $\Partial^=(\root,I,a_\root(P),b_\root(P),\barc)$ such that $a_\root(P)=b_\root(P)$. 
    By \Cref{def:modelcheck:constants}, for every $i\in [\ell]$, we have $c^i_\tss(P) \defeq \abs{\ip{X_i}^{(G,\interp{\hat f}_{|_{V(G)}},\interp{\hat g})}}$.
    As $\interp{\hat f}_{|_{V(G)}} = \interp{\hat f}$, it follows that $(G,\interp{\hat f}, \interp{\hat g})$ satisfies \ref{item:realize:c}.
    
    It remains to prove that $(G,\interp{\hat f}, \interp{\hat g})$ satisfies \ref{item:realize:False}, i.e., $\FalseFunc(G,\interp{\hat f}, \interp{\hat g}) = \False$.
    As $P\in \Partial^=(\root,I)$, we have $\FalseFunc(\interp{ \hat f}, \interp{\hat g }, \hat \bE, \pdeg_{\vec \cW}) = \False$, thus it is enough to prove that $\FalseFunc(G,\interp{\hat f}, \interp{\hat g}) = \FalseFunc(\interp{ \hat f}, \interp{\hat g }, \hat \bE, \pdeg_{\vec \cW})$.
    To prove this, we now check that $\ip{t}^{(G,\interp{ \hat f}, \interp{ \hat g})} = \ip{t}^{(\interp{ \hat f}, \interp{\hat g }, \hat \bE, \pdeg_{\vec \cW})} $ holds for every neighborhood or edge term $t$ occurring in $\phi$. 
    By \Cref{def:modelcheck:partial:interpretation} this is trivially satisfied when $t$ is not a neighborhood operator.
    So we remain with the case that
    %By \Cref{def:modelcheck:partial:interpretation}, it is sufficient to prove the later only for the case 
    $t =  N^\sigma(X,Y)$ for some set $\sigma$ occurring in $\phi$ and some $C = (X, Y) \in \Cat$, and this is why we need the following claim.
    
    \begin{claim}\label{claim:modelcheck:realizable}
        For each $v\in V(G)$ and $C = (X, Y) \in\Cat$, the number of edges 
        $uv \in \psg(Y)$
        %$uv\in \ip{Y}^{(G,\interp{ \hat f}, \interp{ \hat g})}$ 
        with $u\in \interp{\hat f}(X)$ is:
        \begin{itemize}
            \item exactly $\abs{\{(u^1, v^i) \in \vec\cW(C)\} }$,
            \item exactly $\hat\bE(v,C)$ when $\hat\bE(v,C) < \dphi$, and
            \item at least $\hat\bE(v,C)$ when $\hat\bE(v,C) = \dphi$.
        \end{itemize}
    \end{claim}
    \begin{claimproof}
        Let $v\in V(G)$ and $(X,Y)\in\Cat$.
        By \ref{item:partial:edge:consistent}, for every edge $uv\in E(G)$, there exists an arc $(u^1,v^j)\in \vec \cW(X,Y)$ if and only if $u\in \interp{\hat f}(X)$ and $uv \in \psg(Y)$. %$uv \in \ip{Y}^{(G,\interp{ \hat f}, \interp{ \hat g})}$.
        As $\psw(X,Y)$ is simple, we deduce that $\abs{\{(u^1, v^i) \in \vec\cW(C) \}}$ is exactly the number of edges $uv \in \psg(Y)$ %$uv\in \ip{Y}^{(G,\interp{ \hat f}, \interp{ \hat g})}$ 
        with $u\in \interp{\hat f}(X)$.

        Thanks to \ref{item:partial:respect:subtree:hatbE}, we know that for each vertex $v^j$ with $j\leq \hat\bE(v,(X,Y))$, there exists an arc $(u^1,v^j)\in \vec \cW(X,Y)$. 
        It follows that there are at least $\hat\bE(v,(X,Y))$ edges 
        $uv \in \psg(Y)$
        %$uv\in \ip{Y}^{(G,\interp{ \hat f}, \interp{ \hat g})}$ 
        with $u\in \interp{\hat f}(X)$.
        
        It remains to prove that there are exactly $\hat\bE(v,(X,Y))$ of these edges when $\hat\bE(v,(X,Y)) < \dphi$.
        This is where we use the fact that
        \begin{align*}
            a_\root(P) = \sum_{C\in \Cat}\sum_{\substack{ v\in V(G) \\ \hat \bE(v,C) < \dphi}} \hat\bE(v,C) = 
            \sum_{C\in \Cat}\sum_{\substack{ v\in V(G) \\ \hat \bE(v,C) < \dphi}} \abs{ \{ (u^1,v^i) \in \vec\cW(C)\}} = b_\root(P).
        \end{align*}
        We already know that $\abs{\{(u^1, v^i) \in \vec\cW(C)\}} \geq \hat\bE(v,(X,Y))$ holds for every $v \in V(G)$ and $(X,Y)\in\Cat$.
        So the above equality can only hold when $\abs{ \{ (u^1,v^i) \in \vec\cW(C)\}} =  \hat\bE(v,(X,Y))$ for every $v \in V(G)$ and $(X,Y)\in\Cat$ satisfying $\hat\bE(v,(X,Y)) < \dphi$.
        And this concludes the proof of the claim.
    \end{claimproof}
    By \Cref{def:modelcheck:partial:interpretation}, a vertex $v$ belongs to $\ip{N^\sigma(X,Y)}^{(\interp{ \hat f}, \interp{\hat g }, \hat \bE, \pdeg_{\vec \cW})}$ iff one of the following holds:
    \begin{itemize}
        \item $\hat\bE(v,(X,Y))< \dphi$ and $\hat\bE(v,(X,Y))\in \sigma$, or
        \item $\hat\bE(v,(X,Y))= \dphi$ and $\pdeg_{\vec \cW}(v, (X,Y)) + (\dphi\cdot \pphi) \in \sigma$.
    \end{itemize}
    When $\hat\bE(v,(X,Y))< \dphi$, by \Cref{claim:modelcheck:realizable}, we get that $\hat\bE(v,(X,Y))$ is exactly the number of edges $uv \in \psg(Y)$ %$uv\in \ip{Y}^{(G,\interp{\hat f},\interp{\hat g})}$ 
    with $u\in \interp{\hat f}(X)$.
    Consequently, when $\hat\bE(v,(X,Y))< \dphi$, we have $v$ belongs to $\ip{N^\sigma(X,Y)}^{(\interp{ \hat f}, \interp{\hat g }, \hat \bE, \pdeg_{\vec \cW})}$ if and only if $v$ belongs to $\ip{N^\sigma(X,Y)}^{(G,\interp{ \hat f}, \interp{\hat g })}$.
    
    Now recall that by definition we have that $\pdeg_{\vec \cW}(v, (X,Y)) \equiv_{ \pphi} \abs{\{(u^1, v^i) \in \vec\cW(X, Y)\}}$.
    By \Cref{claim:modelcheck:realizable}, $\abs{\{(u^1, v^i) \in \vec\cW(X, Y)\}}$ is exactly the number of edges $uv \in \psg(Y)$ 
    %$uv\in \ip{Y}^{(G,\interp{\hat f},\interp{\hat g})}$ 
    with $u\in \interp{\hat f}(X)$. 
    Moreover, when $\hat\bE(v,(X,Y))= \dphi$, this number is at least $\dphi$.
    From \Cref{obs:modelcheck:dphi:pphi}, we deduce that when $\hat\bE(v,(X,Y))= \dphi$, then $v$ belongs to $\ip{N^\sigma(X,Y)}^{(G,\interp{ \hat f}, \interp{\hat g })}$ if and only if $\pdeg_{\vec \cW}(v, (X,Y)) + (\dphi\cdot \pphi) \in \sigma$.
    We conclude that $\ip{N^\sigma(X,Y)}^{(\interp{ \hat f}, \interp{\hat g }, \hat \bE, \pdeg_{\vec \cW})} = \ip{N^\sigma(X,Y)}^{(G,\interp{ \hat f}, \interp{\hat g })}$.
    As discussed above, this implies that $(G,\interp{ \hat f}, \interp{\hat g })$ realizes $(\False, \barc)$ and concludes this proof.
\end{proof}

We conclude this subsection by observing that  \Cref{lemma:modelcheck:realizable:forward,lemma:modelcheck:realizable:backward} imply \Cref{lem:modelcheck:equivalence:G:models:phi}.

\section{Recursive Equalities}
\label{sec:recursive}
In this section, we formally define the polynomials that our algorithm will evaluate as well as the recursive equalities on which our algorithm relies on: the root/leaf/forget/join equalities.
This is the technical core of our work.

\subsection{The definitions of polynomials}

Similarly as in \cite{HegerfeldK20,NederlofPSW23,PilipczukW18}, we encode the different choices of $a,b\in \bN$ and $\barc \in \bN^{\ell}$ as degrees of formal variables $\alpha,\beta,\gamma_1,\dots,\gamma_\ell$, so that all the relevant values $\abs{\Partial(\tss,I,a,b,\barc)}$ can be stored as coefficients of one polynomial from $\bZ[\alpha, \beta, \gamma_1,\dots,\gamma_\ell]$.

\begin{definition}\label{def:modelcheck:polynomial}
    For every triad $\tss$ of $G$ and every index $I = (\interp f, \bE, \allow, \pdeg, \False) \in \Index(\tss)$, we define the polynomial $\cP_\tss(I)$ as follows:
    \[ \cP_\tss(I)=\sum_{a,b,c^1,\dots,c^\ell\in\bN} \abs{\Partial(\tss,I,a,b,(c^1,\dots,c^\ell))} \cdot \alpha^a\beta^b\gamma_1^{c^1}\cdots\gamma_\ell^{c^\ell}.\]
    Moreover, we define its equality variant $\cP^=_\tss(I)$ as follows:
    \[ \cP^=_\tss(I)=\sum_{a,b,c^1,\dots,c^\ell\in\bN} \abs{\Partial^=(\tss,I,a,b,(c^1,\dots,c^\ell))} \cdot \alpha^a\beta^b\gamma_1^{c^1}\cdots\gamma_\ell^{c^\ell}.\]
    Given $\barc=(c^1,\dots,c^\ell)\in \bN^{\ell}$, we denote by $\gamma^{\barc}$ the monomial $\prod_{i\in [\ell]} \gamma_i^{c^i}$.
\end{definition}

The following lemma bounds the coefficients of these polynomials.
Recall that we assumed that $\dphi\geq 1$, this was done in order to simplify the following upper bounds.

\begin{lemma}\label{lem:bound-on-coefficients-of-p}
    Let $G$ be a graph with at most $n \cdot \td$ edges for some value $\td$. %of treedepth at most $\td$.
    For every triad $\tss = (\tail, \subtree, \sheaf)$ of $G$ and every index $I \in \Index(\tss)$, we have that every coefficient of the polynomial $\cP_\tss(I)$ is non-negative and upper-bounded by $(\dphi + 1)^{5 n \cdot \td \cdot |\phi|} = 2^{5n \cdot \td \cdot |\phi| \cdot \log(\dphi + 1)}$.
\end{lemma}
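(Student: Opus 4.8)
The plan is to bound the coefficients of $\cP_\tss(I)$ by bounding the number of partial solutions directly, since each coefficient equals $|\Partial(\tss,I,a,b,\barc)|$ for some specific $a,b,\barc$, and this is certainly at most the total number $|\Partial(\tss)|$ of partial solutions for $\tss$. Non-negativity is immediate because the coefficients are cardinalities of sets. So the whole task reduces to showing $|\Partial(\tss)| \le (\dphi+1)^{5n\cdot\td\cdot|\phi|}$.

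To count partial solutions $(\interp{\hat f},\interp{\hat g},\hat\bE,\vec\cW)$, I would bound the number of choices for each of the four components separately and multiply. First, $\interp{\hat f}$ assigns to each of the at most $|\phi|$ vertex set variables a subset of $\broom$, which has at most $n$ vertices; there are at most $2^{n}$ such subsets per variable, so at most $2^{n|\phi|}$ choices total. Similarly $\interp{\hat g}$ assigns to each of the at most $|\phi|$ edge set variables a subset of $\sheaf$, and $|\sheaf| \le n\cdot\td$ by \Cref{obs:prelim:treedepth:nb:edges} (the hypothesis gives $G$ at most $n\cdot\td$ edges); this gives at most $2^{n\cdot\td\cdot|\phi|}$ choices. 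Next, $\hat\bE$ is a neighborhood function on $\broom$, i.e.\ a function $\broom\times\Cat\to[0,\dphi]$; since $|\broom|\le n$ and $|\Cat|\le|\phi|$, there are at most $(\dphi+1)^{n|\phi|}$ such functions. Finally, $\vec\cW$ maps each $C\in\Cat$ to a \emph{simple} subset of $\vec\sheaf$; by simplicity, for each edge $uv\in\sheaf$ and each orientation there are at most $\dphi+1$ choices (either no arc $(u^1,v^j)$, or one of the $\dphi$ choices of $j$), so for a fixed $C$ there are at most $(\dphi+1)^{2|\sheaf|} \le (\dphi+1)^{2n\td}$ simple subsets, and over all $C\in\Cat$ at most $(\dphi+1)^{2n\td|\phi|}$ choices.

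Multiplying these bounds, and using the crude estimate $2 \le \dphi+1$ (valid since $\dphi\ge1$), the total is at most $(\dphi+1)^{n|\phi|}\cdot(\dphi+1)^{n\td|\phi|}\cdot(\dphi+1)^{n|\phi|}\cdot(\dphi+1)^{2n\td|\phi|} \le (\dphi+1)^{5n\td|\phi|}$, where I bound each $n|\phi|$ factor by $n\td|\phi|$ since $\td\ge1$. This is exactly the claimed bound, and the final equality $(\dphi+1)^{5n\td|\phi|} = 2^{5n\td|\phi|\log(\dphi+1)}$ is just the definition of the logarithm. I do not expect any real obstacle here; the only mild care needed is to be consistent about which quantities are bounded by $n$ versus $n\cdot\td$ and to absorb the non-matching exponents into the slack factor of $5$. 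It would be cleanest to remark at the start that it suffices to bound $|\Partial(\tss)|$ and then present the four bullet-free bounds as a short displayed computation.
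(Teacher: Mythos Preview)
Your proposal is correct and follows essentially the same approach as the paper: bound each coefficient by $|\Partial(\tss)|$, then bound the four components of a partial solution separately and multiply, using $\dphi\ge 1$ (so $2\le \dphi+1$) and $\td\ge 1$ to absorb the slack into the exponent $5n\td|\phi|$. The paper's proof is essentially the same computation with the same four bounds.
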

\begin{proof}
    Every partial solution is a quadruple so we bound the number of options for each of the four elements.
    For this note that the cardinalities of the sets $\var^V(\phi)$, $\var^E(\phi)$, and $\Cat$ are upper-bounded by $|\phi|$.
    The size of broom of $\tss$ is upper-bounded by $n$ and the size of $\sheaf$ is upper-bounded by $\abs{E(G)}\leq n \cdot \td$ by \Cref{obs:prelim:treedepth:nb:edges}.
    The first element is a vertex interpretation on $\broom$ so there are at most $(2^n)^{|\phi|} = 2^{n \cdot |\phi|}$ options for it.
    Similarly, the second element is an edge interpretation on $\sheaf$ so there are at most $(2^{n \cdot \td})^{|\phi|} = 2^{n \cdot \td \cdot |\phi|}$ options.
    The third component is a neighborhood function on $\broom$ so there are at most $((\dphi+1)^n)^{|\phi|} = (\dphi+1)^{n \cdot |\phi|}$ possibilities.
    Finally, the fourth component maps every element of $\Cat$ to a simple subset of $\vec\sheaf$, i.e., for every edge $uv \in \sheaf$, at most one arc of form $u^1 v^i$ and at most one arc of form $v^1 u^j$ belongs to this simple subset for some $i, j \in [\dphi]$.
    Hence, there are at most $(((\dphi + 1)^2)^{n \cdot \td})^{|\phi|} = (\dphi + 1)^{2n \cdot \td \cdot |\phi|}$ options for the fourth element and the claim follows.
\end{proof}

For our algorithm we will strongly rely on the following polynomials:
\begin{definition}\label{def:modelcheck:transform}
    Let $p$ be a prime number such that $\FF^*_p$ admits the $\pphi$-th root of unity and let $\omega(p)$ denote the smallest (with respect to the natural ordering of integers) such root of unity.
    For every triad $\tss$ of $G$ and every index $I = (\interp f, \bE, \allow, \pdeg, \False) \in \Index(\tss)$, we define the following polynomial
    \begin{equation}\label{eq:def-q}
        \cQ^p_\tss(I) \equiv_p \sum_{q \in \pdegdom(\tail)} \omega(p)^{q \cdot \pdeg} \cdot \cP_{\tss}(\interp f, \bE, \allow, q, \False) 
    \end{equation}
    from $\FF_p[\alpha, \beta, \gamma_1, \dots, \gamma_\ell]$.
\end{definition}
Note that if we fix all elements of the index apart from $\pdeg$, then this definition captures the discrete Fourier transform where the domain is $\pdegdom(\tail) = \tail \times \Cat$. 

In the remainder of this section we provide the recursive equalities on which our algorithm relies on.
To this end, we fix an elimination tree $T$ of $G$ and let $\td$ denote its depth.
The direct recursion relying on these equalities would result in an algorithm whose space complexity is polynomial in $n$.
In the next section we will show how, relying on the Chinese remainder theorem, to improve this dependence to be only logarithmic in $n$.
In particular, we will describe which (sufficiently short) prime numbers $p$ are used.
Let us recall that for a node $u$ of $T$ the triads $\tss[u]$ and $\tss(u)$ are $(\tail[u],\subtree(u),\sheaf(u))$ and  $(\tail(u),\subtree[u],\sheaf(u))$, respectively.
We define the following shortcuts: 

\begin{definition}\label{def:modelcheck:tables}
    For every node $u$ of $T$, we denote by $\Index(u)$ and $\Index[u]$ the sets $\Index(\tss(u))$ and $\Index(\tss[u])$, respectively.
    %\newBen{
    Every notation in the form $\spadesuit_{\tss[u]}(\cdots)$ is denoted by $\spadesuit_u[\cdots]$. For example, we have $\cP_u[I]\defeq \cP_{\tss[u]}(I)$, $\cQ^p_u[I]\defeq \cQ^p_{\tss[u]}(I)$, $\Partial_u(I)\defeq \Partial_{\tss[u]}(I)$, and $a_u[P] \defeq a_{\tss[u]}(P)$.\\    
    Moreover, every notation in the form $\spadesuit_{\tss(u)}(\cdots)$ is denoted by $\spadesuit_u(\cdots)$. 
    For example, we have $\cP^=_u(I)\defeq \cP^=_{\tss(u)}(I)$, $\cQ^p_u(I)\defeq \cQ^p_{\tss(u)}(I)$, and $b_u(P)\defeq b_{\tss(u)}(P)$.
    %}
    % For every $I\in \Index(u)$, we denote by
    % $\cP_u(I)$ and $\cQ_u(I)$ the polynomials $\cP_{\tss(u)}(I)$ and $\cQ_{\tss(u)}(I)$. Moreover, we use the shorthands $\Partial_u(I)$ and $\Partial_u(I,a,b,c)$ for $\Partial_u(I)$ and  $\Partial_u(I,a,b,\barc)$.
    % Finally, for every $I\in \Index[u]$, we denote by  $\cP_u[I]$ and $\cQ_u[I]$ the polynomials $\cP_{\tss[u]}(I)$ and $\cQ_{\tss[u]}(I)$.
    % And, we use the shorthands $\Partial_u[I]$ and  $\Partial_u[I,a,b,\barc]$ for $\Partial_u[I]$ and  $\Partial_u[I,a,b,\barc]$.
\end{definition}

\subsection{Root equalities}

Recall that if the domain of some function is empty, we denote this function by $\emptyset$.
Similarly, we denote the unique vertex interpretation on the empty set by $\emptyset$. 
Now for simplicity, by $\emptyset^4$ we denote the tuple $(\emptyset, \emptyset, \emptyset, \emptyset)$.
The following lemma reflects, for a set $\False \subseteq \EQ(\phi)$, the relation between the partial solutions where only the equalities from $\False$ are possibly ``falsified'' on the one hand, and the partial solutions where precisely the equalities from $\False$ are falsified on the other hand.
In fact, the analogue of this lemma holds for any node of the elimination tree $T$ but in our applications we will only use it for the root.
%So for the sake of simplicity of the proof, we state the lemma for $r$ only.
\begin{lemma}\label{lem:root-inclusion-exclusion}
    For the root $r$ of the elimination tree $T$ and for every $\False \subseteq \EQ(\phi)$, it holds that
    \[
        \cP^=_r(\emptyset^4, \False) = \sum_{\False' \subseteq \False} (-1)^{|\False'|} \cdot \cP_r(\emptyset^4, \False \setminus \False').
    \]
\end{lemma}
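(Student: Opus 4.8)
The plan is to apply the inclusion--exclusion principle in its intersection form (\Cref{thm:inclusion-exclusion-intersection}) with a carefully chosen ground set and indexed family. First I would fix the root $r$ and $\False \subseteq \EQ(\phi)$, and recall that $\tss[r] = \tss(r) = \root = (\emptyset, V(G), E(G))$, so that an index for $r$ is of the form $(\emptyset^4, \False'')$ for some $\False'' \subseteq \EQ(\phi)$, and the tail is empty. Since $a, b, \barc$ are irrelevant to the structure of the argument (they only stratify the solution set and the identity we want is coefficient-wise), I would prove the corresponding numerical identity
\[
    \abs{\Partial^=(\root, (\emptyset^4, \False), a, b, \barc)} = \sum_{\False' \subseteq \False} (-1)^{|\False'|} \cdot \abs{\Partial(\root, (\emptyset^4, \False \setminus \False'), a, b, \barc)}
\]
for every fixed $a, b \in \bN$ and $\barc \in \bN^\ell$; summing against $\alpha^a \beta^b \gamma^{\barc}$ then yields the polynomial identity by \Cref{def:modelcheck:polynomial}.

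The key step is to set up the ground set and the family for \Cref{thm:inclusion-exclusion-intersection}. I would let $U \defeq \Partial(\root, (\emptyset^4, \False), a, b, \barc)$, and for each equality $e \in \False$ define
\[
    A_e \defeq \{ P = (\interp{\hat f}, \interp{\hat g}, \hat\bE, \vec\cW) \in U \mid e \in \FalseFunc(\interp{\hat f}, \interp{\hat g}, \hat\bE, \pdeg_{\vec\cW}) \}.
\]
For the root, $\subtree = V(G)$ and the restriction to $\subtree$ is the identity, so the condition \ref{item:partial:respect:False} defining membership in $\Partial(\root, (\emptyset^4, \False''))$ is exactly $\FalseFunc(\interp{\hat f}, \interp{\hat g}, \hat\bE, \pdeg_{\vec\cW}) \subseteq \False''$, and membership in $\Partial^=$ is exactly equality. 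With this observation: a partial solution $P$ satisfying \ref{item:partial:respect-domain}--\ref{item:partial:respect:broom:hatbE} and the constraints $a_\root(P) = a$, $b_\root(P) = b$, $\barc_\root(P) = \barc$ lies in $U$ iff $\FalseFunc(P) \subseteq \False$ (writing $\FalseFunc(P)$ for $\FalseFunc(\interp{\hat f}, \interp{\hat g}, \hat\bE, \pdeg_{\vec\cW})$); it lies in $A_e$ for a given $e$ iff additionally $e \in \FalseFunc(P)$; and it lies in $\bigcap_{e \in \False} A_e$ iff $\False \subseteq \FalseFunc(P) \subseteq \False$, i.e.\ $\FalseFunc(P) = \False$, which is exactly the defining condition for $\Partial^=(\root, (\emptyset^4, \False), a, b, \barc)$. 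Hence $\bigcap_{e \in \False} A_e = \Partial^=(\root, (\emptyset^4, \False), a, b, \barc)$, giving the left-hand side.

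For the right-hand side, I would verify $U \setminus \bigl( \bigcup_{e \in \False'} A_e \bigr) = \Partial(\root, (\emptyset^4, \False \setminus \False'), a, b, \barc)$ for each $\False' \subseteq \False$: a solution $P \in U$ avoids every $A_e$ with $e \in \False'$ iff $\FalseFunc(P) \cap \False' = \emptyset$; combined with $\FalseFunc(P) \subseteq \False$ (from $P \in U$) this is equivalent to $\FalseFunc(P) \subseteq \False \setminus \False'$, which, since $\False \setminus \False' \subseteq \False$, is precisely membership in $\Partial(\root, (\emptyset^4, \False \setminus \False'))$ with the same $a, b, \barc$ — here one uses \Cref{obs:modelcheck:inclusions} to see the inclusion of solution sets is consistent, though strictly we only need the set-theoretic characterization. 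Plugging $U$ and the $A_e$ into \Cref{thm:inclusion-exclusion-intersection} yields the numerical identity above, and summing over $a, b, \barc$ finishes the proof. I do not expect a genuine obstacle here; the only thing to be careful about is the bookkeeping that the side constraints $a_\root(P) = a$, $b_\root(P) = b$, $\barc_\root(P) = \barc$ and the domain/consistency conditions \ref{item:partial:respect-domain}--\ref{item:partial:respect:broom:hatbE} are "inert" — they cut down $U$ and all $A_e$ uniformly and commute with the set operations in inclusion--exclusion — so that the only moving part is the $\FalseFunc$-condition, which is exactly what the lemma of van Rooij/Cygan et al.\ handles.
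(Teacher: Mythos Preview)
Your proposal is correct and follows essentially the same approach as the paper: fix $a,b,\barc$, take $U=\Partial_r(\emptyset^4,\False,a,b,\barc)$ and $A_e=\{P\in U\mid e\in\FalseFunc(P)\}$, identify $\bigcap_e A_e$ with $\Partial^=_r$ and $U\setminus\bigcup_{e\in\False'}A_e$ with $\Partial_r(\emptyset^4,\False\setminus\False')$, then apply \Cref{thm:inclusion-exclusion-intersection} and sum over $a,b,\barc$. The only quibble is the stray reference to ``the lemma of van Rooij/Cygan et al.'' at the end---the tool in play is just the inclusion--exclusion principle (\Cref{thm:inclusion-exclusion-intersection}), not a DFT/convolution result.
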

\begin{proof}
    The proof is almost a direct application of the inclusion-exclusion principle.
    To see this formally, consider arbitrary but fixed $a, b \in \bN$ and $\barc \in \bN^\ell$.
    %Let $U = \Partial_r(\emptyset^4, \False, a, b, \barc)$.
    For an equality $x \in \False$ we define the set 
    \[
        \Partial_r(\emptyset^4, \False, a, b, \barc, x) = \{(\psf, \psg, \pse, \psw) \in \Partial_r(\emptyset^4, \False, a, b, \barc) \mid x \in \FalseFunc(\psf, \psg, \pse, \pdeg_{\psw})\}.
    \]
    First, it holds that 
    \begin{align*}
        &\bigcap_{x \in \False} \Partial_r(\emptyset^4, \False, a, b, \barc, x) 
        % = \bigcap_{x \in \False}\{(\psf, \psg, \pse, \psw) \in \Partial_r(\emptyset^4, \False, a, b, \barc) \mid x \in \FalseFunc(\psf, \psg, \pse, \pdeg_{\psw})\} \\
        =\{(\psf, \psg, \pse, \psw) \in \Partial_r(\emptyset^4, \False, a, b, \barc) \mid \False \subseteq\FalseFunc(\psf, \psg, \pse, \pdeg_{\psw})\} \\
        &=\{(\psf, \psg, \pse, \psw) \in \Partial_r(\emptyset^4, \False, a, b, \barc) \mid \False =\FalseFunc(\psf, \psg, \pse, \pdeg_{\psw})\} = \Partial_r^=(\emptyset^4, \False, a, b, \barc)
    \end{align*}
    where the second equality holds since we have $\FalseFunc(\psf, \psg, \pse, \pdeg_{\psw}) \subseteq \False$ for every $(\psf, \psg, \pse, \psw)$ from $\Partial_r(\emptyset^4, \False, a, b, \barc)$ due to \ref{item:partial:respect:False} together with the fact that both broom and subtree of $\tss(r)$ are equal to $V(G)$.
    Then by inclusion-exclusion principle (\cref{thm:inclusion-exclusion-intersection}) we get
    \begin{align}
        &|\Partial_r^=(\emptyset^4, \False, a, b, \barc)| = \sum_{\False' \subseteq \False} (-1)^{|\False'|} \cdot \abs{\Partial_r(\emptyset^4, \False, a, b, \barc) \setminus \Bigl(\bigcup_{x \in \False'} \Partial_r(\emptyset^4, \False, a, b, \barc, x)\Bigr)} \nonumber\\
        &= \sum_{\False' \subseteq \False} (-1)^{|\False'|} \cdot \abs{\Partial_r(\emptyset^4, \False \setminus \False', a, b, \barc)} \label{eq:root-incl-excl-proof}
    \end{align}
    where the last equality holds due to 
    \begin{align*}
        &\Partial_r(\emptyset^4, \False, a, b, \barc) \setminus \Bigl(\bigcup_{x \in \False'} \Partial_r(\emptyset^4, \False, a, b, \barc, x)\Bigr) \\
        &=\{(\psf, \psg, \pse, \psw) \in \Partial_r(\emptyset^4, \False, a, b, \barc) \mid x \not\in \FalseFunc(\psf, \psg, \pse, \pdeg_{\psw}) \text{ for all } x \in \False'\} \\
        &=\{(\psf, \psg, \pse, \psw) \in \Partial_r(\emptyset^4, \False, a, b, \barc) \mid \FalseFunc(\psf, \psg, \pse, \pdeg_{\psw}) \subseteq \False \setminus \False'\} =\Partial_r(\emptyset^4, \False \setminus \False', a, b, \barc)
    \end{align*}
    for every $\False' \subseteq \False$.
    Since \eqref{eq:root-incl-excl-proof} holds for arbitrary $a, b, \barc$, the lemma follows.
\end{proof}
Now that we know how to compute the polynomials $\cP_r^=(I)$ from the polynomials $\cP_r(I)$. Observe that given the polynomials $\cP_r^=(I)$, we can decide whether $G\models \phi$ thanks to \Cref{lem:modelcheck:equivalence:G:models:phi}.
In the remainder of the section, we prove that $\cQ^p_r(I) \equiv_p \cP_r(I)$ for every index $I$. This follows from the fact that $\tail(r)$ is empty and thus the DFT-transformation is the identity function (modulo $p$).

\begin{lemma}\label{lem:root-p-equal-q-lemma}
    For the root $r$ of the elimination tree $T$ and for every prime number $p$ such that $\FF^*_p$ admits the $\pphi$-th root of unity and every index $I = (\interp f, \bE, \allow, \pdeg, \False) \in \Index(r)$, it holds that $\cQ^p_r(I) \equiv_p \cP_r(I)$.
\end{lemma}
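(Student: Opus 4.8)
The plan is to observe that this is a completely degenerate instance of the discrete Fourier transform: the tail of the root is empty, so the transform in \eqref{eq:def-q} is literally the identity (modulo $p$). First I would note that since the root $r$ of $T$ has no proper ancestor, we have $\tail(r)=\emptyset$, and hence $\tss(r)=\root=(\emptyset,V(G),E(G))$. By \Cref{def:modelcheck:special:functions}, a mod-degree on $\emptyset$ is a function $\emptyset\times\Cat\to\Fp$; since $\emptyset\times\Cat=\emptyset$, there is exactly one such function, namely the empty function, which by our convention we denote by $\emptyset$. Thus $\pdegdom(\tail(r))=\{\emptyset\}$, and moreover the mod-degree component $\pdeg$ of the index $I\in\Index(r)$ is itself a mod-degree on $\tail(r)=\emptyset$, hence also the empty function, so that $I=(\interp f,\bE,\allow,\emptyset,\False)$.

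Second, I would plug this into \eqref{eq:def-q} of \Cref{def:modelcheck:transform}. The sum $\sum_{q\in\pdegdom(\tail(r))}\omega(p)^{q\cdot\pdeg}\cdot\cP_{r}(\interp f,\bE,\allow,q,\False)$ ranges over the single element $q=\emptyset$. For this $q$ the exponent $q\cdot\pdeg=\sum_{z\in\emptyset}q(z)\cdot\pdeg(z)$ is an empty sum, hence equal to $0$, so $\omega(p)^{q\cdot\pdeg}=\omega(p)^{0}=1$ in $\FF_p$ (this is where the hypothesis that $\FF^*_p$ admits the $\pphi$-th root of unity is used only to ensure $\omega(p)$ is well defined, but its value is irrelevant). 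Therefore $\cQ^p_r(I)\equiv_p \cP_{r}(\interp f,\bE,\allow,\emptyset,\False)=\cP_r(I)$, which is exactly the claim.

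There is no genuine obstacle in this lemma: the only point requiring any care is unwinding the notational convention that a function with empty domain is written $\emptyset$, so that both the summation index $q$ and the mod-degree $\pdeg$ occurring in $I$ are identified with the unique element of $\pdegdom(\emptyset)$; once that is in place, the statement is a one-line evaluation of the empty-domain DFT. (The same computation, incidentally, is why the recursion can be ``started'' at the root: $\cQ^p_r$ carries no nontrivial Fourier information and already equals $\cP_r$.)
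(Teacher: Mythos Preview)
Your proof is correct and follows essentially the same approach as the paper: both observe that $\tail(r)=\emptyset$ makes $\pdegdom(\tail(r))$ a singleton containing only the empty function, so the sum in \eqref{eq:def-q} collapses to its single term $\cP_r(I)$ with coefficient $\omega(p)^0=1$.
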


\begin{proof}
    It holds that $\tail(r) = \emptyset$ and thus also $\tail(r) \times \Cat$ is empty.
    So the set $\pdegdom(\tail(r)) = (\Fp)^{\tail(r) \times \Cat} = \Fp^\emptyset$ consists of precisely one mapping whose domain is empty.
    Since $\pdeg$ is such a mapping, we have $\pdegdom(\tail(r)) = \{\pdeg\}$.
    Thus we have 
    \begin{align*}
        \cQ^p_r(I) &\equiv_p \cQ^p_{\tss(r)}(I) \stackrel{\eqref{eq:def-q}}{\equiv_p} \sum_{q \in \pdegdom(\tail(r))} \Bigl(\prod_{(v, C) \in \tail(r) \times \Cat} \omega(p)^{q(v,C) \cdot \pdeg(v, C)}\Bigr) \cdot \cP_{\tss(r)}(\interp f, \bE, \allow, q, \False) \\
        &\equiv_p \sum_{q \in \pdegdom(\tail(r))} 1 \cdot \cP_{\tss(r)}(\interp f, \bE, \allow, q, \False) \equiv_p \sum_{q \in \{\pdeg\}} 1 \cdot \cP_{\tss(r)}(\interp f, \bE, \allow, q, \False) \\
        &\equiv_p \cP_{\tss(r)}(\interp f, \bE, \allow, \pdeg, \False) \equiv_p \cP_{\tss(r)}(I) \equiv_p \cP_r(I).
    \end{align*}
\end{proof}

So now we may restrict our considerations to the computation DFT-transformed polynomials~$\cQ_u(I)$ and $\cQ_u[I]$ for all the nodes $u$ of $T$.
To this end, we will sometime prove some equalities about the polynomials $\cP_u[I]$ and $\cP_u(I)$ as intermediary steps in our proofs.

\subsection{Leaf equality}
We start with the equality for the non-recursive computation at the leaves of $T$:
\begin{lemma}\label{lem:modelcheck:leaf}
    Let $u$ be a leaf of $T$ and let $I=(\interp f, \bE, \allow, \pdeg, \False)\in \Index[u]$.
    Further, let $p$ be a prime number such that $\FF_p^*$ admits the $\pphi$-th root of unity.
    Then it holds that:
    \[
        \cQ_u^p[I] \equiv_p \prod_{e = vw \in \sheaf(u)} \sum_{\psg_{e} \in \possible(e, \False)} \gamma^{\barc_{\psg_{e}}} \cdot \prod_{(X, Y) \in \Cat} L(v, w, \psg_{e}, (X, Y)) \cdot L(w, v, \psg_{e}, (X, Y))
    \]
    where
    \[
        \possible(e, \False) = \{\psg \mid \psg \text{ is an edge interpretation on } \{e\} \text{ and } \FalseFunc^E(\psg) \subseteq \False\},
    \]
    $\barc_{\psg_{e}} = (c_{\psg_{e}}^1, \dots, c_{\psg_{e}}^\ell) \in \{0,1\}^\ell$ with $c_{\psg_{e}}^i = |\psg_{e}(X_i)|$ if $X_i \in \var^E(\phi)$ and $c_{\psg_{e}}^i = 0$ if $X \in \var^V(\phi)$ for all $i \in [\ell]$, and
    \[
        L(v, w, \psg_{e}, (X, Y)) = 
        \begin{cases}
        \allow(w, (X, Y)) \cdot \omega(p)^{\pdeg(w, (X, Y))} \cdot \beta^{[\bE(w, (X, Y)) < \dphi]} & \!\!\text{if } v \in \indf(X) \text{\! and } e \in %\ip{Y}^{(G,\psg_{e})} 
        \psg_e(Y)\\
        1 & \!\!\text{otherwise.}
        \end{cases}
    \]
\end{lemma}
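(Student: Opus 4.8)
The plan is to prove the leaf equality by unfolding the definition of $\cQ^p_u[I]$ and then showing that both the summation over mod-degrees (the DFT) and the summation over partial solutions factor as a product over the edges of $\sheaf(u)$. First I would recall that, since $u$ is a leaf, $\subtree(u) = \emptyset$ and $\broom[u] = \tail[u]$, so a partial solution for $\tss[u] = (\tail[u], \emptyset, \sheaf(u))$ consists of a quadruple $(\hat{\indf}, \hat{\indg}, \hat\bE, \vec\cW)$ whose vertex interpretation and neighborhood function live entirely on the tail. By $\indf$-consistency and $\bE$-consistency, once we fix the index $I$, the components $\hat{\indf}$ and $\hat\bE$ are forced to equal $\indf$ and $\bE$; likewise conditions \ref{item:partial:respect:subtree:hatbE} and \ref{item:partial:respect:broom:hatbE} become vacuous or are subsumed by \ref{item:partial:respect:allow} and the edge-consistency condition. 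Hence the only genuine degrees of freedom in a partial solution compatible with $I$ are (i) the edge interpretation $\hat\indg$ on $\sheaf(u)$, and (ii) the simple arc sets $\vec\cW(X,Y)$, both of which decompose edge-by-edge: by \ref{item:partial:edge:consistent} the arcs of $\vec\cW(X,Y)$ above an edge $e = vw$ are determined by whether $v \in \indf(X)$ and $e \in \hat\indg(Y)$ (and symmetrically for $w$), together with a choice of which copy $v^j$ or $w^j$ receives the arc, constrained by $j \le \allow(\cdot, (X,Y))$.

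Next I would set up the factorization. For each edge $e = vw \in \sheaf(u)$ I would let $\hat\indg$ range over $\possible(e, \False)$ — the local edge interpretations on $\{e\}$ that do not falsify any equality outside $\False$ — since \ref{item:partial:respect:False} is equivalent, by \Cref{claim:union-false-set} applied to the one-element partition refinement over edges (and using that $\FalseFunc^V$ does not depend on $\indg$), to each edge's local contribution lying in $\False$. Here I should be careful: $\FalseFunc^V(\indf, \bE, \pdeg)$ is fixed by $I$ and must already be contained in $\False$ for $\Partial(\tss[u], I)$ to be non-empty; the lemma statement implicitly assumes this, or the right-hand side is interpreted accordingly. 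Then for fixed local $\hat\indg_e$ and fixed pair $(X,Y)$, the number of admissible arc-choices at $v$ (i.e. arcs $(w^1, v^j) \in \vec\cW(X,Y)$) is: $\allow(v,(X,Y))$ choices if $w \in \indf(X)$ and $e \in \hat\indg_e(Y)$, and exactly one choice (no arc) otherwise — this is exactly what the factor $L(w,v,\hat\indg_e,(X,Y))$ encodes in its top branch, modulo the $\omega(p)$ and $\beta$ weights. The weight $\beta^{[\bE(v,(X,Y)) < \dphi]}$ accounts for each such arc's contribution to $b_u[P]$ (an arc into $v^i$ with $\bE(v,(X,Y)) < \dphi$ adds one to $b$), and since there is at most one arc into the copies of $v$ from a single edge, the exponent is $0$ or $1$. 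The $\gamma^{\barc_{\hat\indg_e}}$ factor accounts for the contribution of $e$ to the cardinalities $c^i$ of edge-set variables $X_i$. Note $a_u[P] = 0$ since $\subtree(u) = \emptyset$, so $\alpha$ does not appear.

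The remaining point is the DFT weight. By definition \eqref{eq:def-q}, $\cQ^p_u[I] \equiv_p \sum_{q \in \pdegdom(\tail[u])} \omega(p)^{q \cdot \pdeg} \cdot \cP_{\tss[u]}(\indf, \bE, \allow, q, \False)$, and $\cP_{\tss[u]}(\indf, \bE, \allow, q, \False)$ counts partial solutions whose induced mod-degree $\pdeg_{\vec\cW}$ equals $q$ on the tail. So the double sum over $q$ and over partial solutions collapses to a single sum over partial solutions weighted by $\omega(p)^{\pdeg_{\vec\cW} \cdot \pdeg}$. Since $\pdeg_{\vec\cW}(v, C)$ is the number of arcs into copies of $v$ in $\vec\cW(C)$ summed over all edges (taken mod $\pphi$), and this sum is additive over edges, the exponent $\pdeg_{\vec\cW} \cdot \pdeg = \sum_{v, C} \pdeg_{\vec\cW}(v,C)\cdot \pdeg(v,C)$ splits into a sum over edges, so $\omega(p)^{\pdeg_{\vec\cW}\cdot\pdeg}$ becomes a product over edges $e$ of local weights $\omega(p)^{\pdeg(v,(X,Y))}$ for each arc placed above $e$. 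This is precisely the $\omega(p)^{\pdeg(\cdot)}$ factor in $L$. Assembling these three observations — each edge contributes independently via its choice of $\hat\indg_e \in \possible(e,\False)$, and for that choice the arc-placements at $v$ and at $w$ are independent and weighted by $L(w,v,\hat\indg_e,(X,Y))$ and $L(v,w,\hat\indg_e,(X,Y))$ respectively over all $(X,Y) \in \Cat$ — gives exactly the stated product formula. The main obstacle I anticipate is the bookkeeping around condition \ref{item:partial:respect:False}: one must verify carefully that "local falsification sets over singleton edges" aggregate correctly to $\FalseFunc^E(\hat\indg) \subseteq \False$ via \Cref{claim:union-false-set}, and handle the interplay with the fixed vertex-side falsification set $\FalseFunc^V$; everything else is a routine, if notationally heavy, factorization argument.
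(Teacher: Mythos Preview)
Your proposal is correct and follows essentially the same route as the paper: characterize partial solutions at a leaf (where $\subtree(u)=\emptyset$ forces $\hat{\indf}=\indf$ and $\hat\bE=\bE$), collapse the DFT sum over $q$ into a single sum over partial solutions weighted by $\omega(p)^{\pdeg_{\vec\cW}\cdot\pdeg}$, and then factor everything edge-by-edge. One small correction: your caveat about $\FalseFunc^V(\indf,\bE,\pdeg)$ needing to lie in $\False$ is unnecessary, because \ref{item:partial:respect:False} is evaluated on the \emph{subtree}, which here is empty, so the vertex-side falsification set is automatically $\emptyset$ and only $\FalseFunc^E(\hat\indg)\subseteq\False$ is an actual constraint.
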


\begin{proof} 
    Since $u$ is a leaf, we have $\tss[u] = (\tail[u], \emptyset, \sheaf(u))$ and the broom of $\tss[u]$ is equal to $\tail[u]$, we will sometimes use implicitly these facts in this proof.
	Before we start to reformulate $\cQ_u^p[I]$, we need the following notations and a claim.

	Given $e=vw\in\sheaf(u)$ and an edge interpretation $\psg$ of $\phi$ on $\sheaf(u)$, we denote by $\psg_{e}$ the function $\psg_{|_{\{e\}}}$.
	We define $\possible(\False)$ as the set of all  edge interpretations $\psg$ of $\phi$ on $\sheaf(u)$ such that $\psg_e \in \possible(e,\False)$ for every $e\in \sheaf(u)$.
	Moreover, we denote by $\vec\bW$ the set of all functions $\psw$ that maps each element from $\Cat$ to a simple subset of arcs from $\vec\sheaf(u)$.
	Given $\psw \in \vec\bW$ and $C\in\Cat$, we define $\psw(C,v,w)$ as the intersection between $\psw(C)$ and the set of arcs $\{ (v^1, w^i) \mid i\in [\dphi]\}$.
	Since $\psw(C)$ is a simple set of arcs, it follows that $\psw(C,v,w)$ contains at most one arc.
	Furthermore, we define 
	\[
	\vec\bW(\psg_{e}, (X,Y), v, w) \defeq 
	\begin{cases}
		\{ \{(v^1, w^i)\} \mid i\leq \allow(w,C) \} & \text{ if } v\in \indf(X) \text{ and } e\in %\ip{Y}^{(G,\psg_{e})}\\
        \psg_e(Y) \\
		\{ \emptyset \} & \text{ otherwise. }
	\end{cases}
	\]
	Finally, we define $\vec\bW(\psg)$ as the set of all $\psw \in \vec\bW$ such that $\psw(C,v,w) \in \vec\bW(\psg_{e}, C, v, w)$ for every $C\in\Cat$ and pair $(v,w)$ such that $e=vw\in \sheaf(u)$.
	\begin{claim}\label{claim:leaf:characterization}
		For every $y\in  \pdegdom(\tail[u])$, we have $(\psf,\psg,\pse,\psw) \in \Partial_u[I[\pdeg \to y]]$ if and only if the following conditions are satisfied:
		\begin{multicols}{5}
			\begin{itemize}
				\item $\psf = \interp f$, 
				\item $\pse = \bE$,
				\item $y = \pdeg_{\psw}$,
				\item $\psg \in \possible(\False)$,
				\item $\psw\in \vec\bW(\psg)$.
			\end{itemize}
		\end{multicols}
	\end{claim}
	\begin{claimproof}
		Let $(\psf,\psg,\pse,\psw) \in \Partial_u[I[\pdeg \to y]]$ for some $y\in \pdegdom(\tail[u])$.
		Since the broom of $\tss[u]$ is $\tail[u]$, it holds that $\psf = \indf$, $\pse = \inde$ and $y = \pdeg_{\psw}$ by \ref{item:partial:fconsistent}, \ref{item:partial:econsistent} and \ref{item:partial:pdegconsistent}.
		Consider an arbitrary edge $e = vw \in \sheaf(u)$.
		Then by \cref{claim:restriction-vs-false-set}, we have $\FalseFunc^E(\psg_e) \subseteq \FalseFunc(\indf_{|_{\emptyset}}, \psg, \inde_{|_{\emptyset}}, (\pdeg_{\psw})_{|_\emptyset}) \subseteq \False$ where the last inclusion holds due to \ref{item:partial:respect:False}.
		So we have $\psg_e \in \possible(e, \False)$.
		As it holds for any $e\in \sheaf(u)$, we deduce that $\psg\in \possible(\False)$.
		Let $C=(X,Y)\in\Cat$. 
		Due to \ref{item:partial:edge:consistent}, if $v \notin \indf(X)$ or $e \notin \psg(Y)$, %$e \notin \ip{Y}^{(G, \indf, \psg)}$, 
        then we have $\psw(C, v,w) = \emptyset$ and otherwise, there exists an arc $(v^1, w^j) \in \psw(C)$.
		In the latter case, since $\psw(C)$ is simple and due to \ref{item:partial:respect:allow}, such a value $j$ is unique and satisfies $1 \leq j \leq \allow(w, (X, Y))$.
		Now observe that $e \notin \psg(Y)$ %$e \notin \ip{Y}^{(G, \indf, \psg)}$ 
        is equivalent to $e \notin \psg_e(Y)$. %$e \notin \ip{Y}^{(G, \psg_e)}$.
		Altogether we deduce that $\psw(C, v,w) \in \vec\bW(\psg_{e}, C, v, w)$ and we conclude that $\psw\in \vec\bW(\psg)$.
		
		\medskip
		
		For the other direction let $\psg \in \possible(\False)$ and $\psw\in \vec\bW(\psg)$ with $y = \pdeg_{\psw}$.
		First, we argue that $P = (\indf, \psg, \inde, \psw)$ is a partial solution for $\tss[u]$.
		Observe that $P$ trivially satisfies \ref{item:partial:respect-domain}.
		Further, by definition of $\vec\bW(\psg)$, for every $(X,Y)\in\Cat$ and edge $e = vw \in \sheaf(u)$, we have $\psw(C,v,w) = \{(v^1,w^i) \}$ (for some $i \in [\dphi]$) if and only if $v\in \indf(X)$ and $e \in \psg_e(e)$. %$e\in \ip{Y}^{(G,\psg_{e})}$.
		Since $e \in \psg_e(e)$ %$e\in \ip{Y}^{(G,\psg_{e})}$ 
        is equivalent to $e \in \psg(e)$, %$e\in \ip{Y}^{(G,\indf,\psg)}$, 
        we deduce that $P$ satisfies \ref{item:partial:edge:consistent}.
		Furthermore, for every $w\in \tail[u]$ and $C\in \Cat$, if $\psw(C)$ contains an arc $(v^1,w^i)$, then by definition of $\vec\bW(\psg_{e}, (X,Y), v, w)$, we have $i\leq \allow(w,C)$.
		Since $I$ is an index, we have $\allow\leq \bE$, and thus we have $i\leq \bE(w,C)$ as required by \ref{item:partial:respect:broom:hatbE}.
		Finally, $P$ trivially satisfies \ref{item:partial:respect:subtree:hatbE} because $\subtree(u)$ is empty. It follows that $P$ is indeed a partial solution for $\tss[u]$.
		
		Now it remains to prove that $P\in \Partial_u[I[\pdeg\mapsto\pdeg_{\psw}]]$.
		The conditions \ref{item:partial:fconsistent}, \ref{item:partial:econsistent}, and \ref{item:partial:pdegconsistent} are trivially satisfied.
		As mentioned above, for every $C\in\Cat$ and arc $(v^1, w^j) \in \psw(C)$, we have $j \leq \allow(w, C)$, so \ref{item:partial:respect:allow} is satisfied.
		The set $\subtree(u)$ is empty so $\FalseFunc^V(\indf_{|_{\subtree(u)}}, \psg, \inde_{|_{\subtree(u)}}, (\pdeg_{\psw})_{|_{\subtree(u)}}) = \emptyset \subseteq \False$.
		And finally we have
		\[
		\FalseFunc^E(\indf_{|_{\emptyset}}, \psg, \inde_{|_{\emptyset}}, (\pdeg_{\psw})_{|_{\emptyset}}) = \FalseFunc^E(\psg) \stackrel{\cref{claim:union-false-set}}{=} \bigcup_{e \in \sheaf(u)} \FalseFunc(\psg_e) \subseteq \False
		\]
		where the last inclusion holds due to $\psg_e \in \possible(e, \False)$ for all $e \in \sheaf(u)$ so \ref{item:partial:respect:False} holds.
		Hence, $P$ is indeed compatible with $I[\pdeg\mapsto\pdeg_{\psw}]$.
	\end{claimproof}
	 
	We are now ready to start our reformulation of $\cQ_u^p[I]$.
 	In the rest of the proof $\omega$ denotes $\omega(p)$. 
	\begin{align}
		\cQ_u^p[I] &\stackrel{\eqref{eq:def-q}}{\equiv_p} \sum_{y \in \pdegdom(\tail[u])} \omega^{\pdeg \cdot y} \cdot \cP_u[I[\pdeg \mapsto y]] \nonumber \\
		&\equiv_p \sum_{\substack{y \in \pdegdom(\tail[u]) \\ P\in \Partial_u[I[\pdeg \to y]]}} \omega^{\pdeg \cdot y} \cdot  \alpha^{a_u[P]} \cdot \beta^{b_u[P]} \cdot \gamma^{\barc_u[P]} \nonumber \\
		& \stackrel{\ref{claim:leaf:characterization}}{\equiv_p} \sum_{\psg \in \possible(\False)} \sum_{\psw \in \vec\bW(\psg)} \omega^{\pdeg \cdot \pdeg_{\psw}} \cdot  \alpha^{a_u[(\indf, \psg, \inde, \psw)]} \cdot \beta^{b_u[(\indf, \psg, \inde, \psw)]} \cdot \gamma^{\barc_u[(\indf, \psg, \inde, \psw)]}.
		\label{eq:leaf:characterization} 
	\end{align}
	The following claim will help us to reformulate the inner product in \eqref{eq:leaf:characterization}.
	\begin{claim}\label{claim:leaf:abcg}
		Let $P=(\indf,\psg,\inde,\psw)$ with $\psg\in\possible(\False)$ and $\psw \in \vec\bW(\psg)$. 
		Then, we have $a_u[P] = 0$,
		\begin{align*}
			\barc_u[P] & = \sum_{e\in \sheaf(u)} \barc_{\psg_e},\\ 
			b_u[P] & = \sum_{\substack{(v,w) \colon \\ vw \in \sheaf(u)}} \sum_{(X,Y)\in \Cat} [v\in \indf(X)]\cdot [e\in \psg_e(Y)] \cdot [\inde(w,(X,Y)) < \dphi], \text{ and}\\
			\pdeg \cdot \pdeg_{\psw} & \equiv_{\pphi} \sum_{\substack{(v,w) \colon \\ vw \in \sheaf(u)}} \sum_{(X,Y)\in \Cat} [v\in \indf(X)]\cdot [e\in \psg_e(Y)] \cdot \pdeg(w,C).
		\end{align*}
	\end{claim}
	\begin{claimproof}
		Since $\subtree(u) = \emptyset$, by definition, we have $a_u[P] = 0$.
		Let $i\in[\ell]$. 
		If $X_i \in \var^V(\phi)$, then we have $c^i_u[P] = \abs{\indf(X_i)\cap \subtree(u)}=0$ because $\subtree(u)=\emptyset$.
		On the other hand, if $X_i \in \var^E(\phi)$, then $c^i_u[P] = \abs{\psg(X_i)}$. Since $\abs{\psg(X_i)} = \sum_{e \in \sheaf(u)} \abs{\psg_e(X_i)}$, we deduce that $\barc_u[P] = \sum_{e\in \sheaf(u)} \barc_{\psg_e}$.
		Now, notice that $b_u[P]$ can be reformulated as follows:
		\begin{align*}
			b_u[P]& \defeq \sum_{C \in \Cat} \sum_{\substack{w \in \broom[u] \colon \\ \inde(w, C) < \dphi}} \abs{\{(v^1, w^i) \in \psw(C)\}} 
			  = \sum_{C \in \Cat} \sum_{(v^1, w^i) \in \psw(C)} [\inde(w, C) < \dphi].
		\end{align*}
		Now, observe that by definition, for every $C\in \Cat$, we have 
		\[ \psw(C) = \bigcup_{{\substack{(v,w) \colon \\ vw \in \sheaf(u)}}} \psw(C,v,w). \]
		Moreover, by \Cref{claim:leaf:characterization} and the definition of $\vec\bW(\psg_{e}, (X,Y), v, w)$, we have 
		\[ 
			\abs{ \psw(C,v,w)} = [v\in \indf(X)]\cdot [e\in \psg_e(Y)].
		\]
		We deduce that the equality with $b_u[P]$ holds. 
		It remains to prove the equality with $\pdeg \cdot \pdeg_{\psw}$.
		Remember that by definition, we have
		\[
		  \pdeg \cdot \pdeg_{\psw} \defeq \sum_{(w,C) \in \tail[u] \times \Cat} \pdeg(w,C) \cdot \pdeg_{\psw}(w,C) \pmod{\pphi}.
		\]
		Moreover, we have $ \pdeg_{\psw}(w,C) \defeq \abs{\{(z^1,w^j) \in \psw(C) \mid z\in \broom[u] \}} \pmod\pphi$.
		We deduce that 
		\[
			\pdeg \cdot \pdeg_{\psw}  \equiv_{\pphi} \sum_{C\in \Cat} \sum_{(v^1,w^j )\in \psw(C)} \pdeg(w,C).
		\]
		With the same arguments used for the equality with $b_u[P]$, we conclude that the equality with $\pdeg \cdot \pdeg_{\psw}$ holds as well.
	\end{claimproof}

    Consider again $P=(\indf,\psg,\inde,\psw)$ with $\psg\in\possible(\False)$ and $\psw \in \vec\bW(\psg)$.
    Recall that $\omega$ is the $\pphi$-th root of unity in $\FF^*_p$ so in particular, we have $\omega^{\pphi} \equiv_p 1$, we deduce from \Cref{claim:leaf:abcg} that
    \begin{equation*}
        \omega^{\pdeg \cdot \pdeg_{\psw}} \equiv_p \prod_{\substack{(v,w) \colon \\ vw \in \sheaf(u)}} \prod_{C\in \Cat} \omega^{\pdeg(w,C)\cdot [v\in \indf(X)]\cdot [e\in \psg_e(Y)]}.
    \end{equation*}
    Again from \Cref{claim:leaf:abcg}, it follows that
    \[
        \beta^{b_u[P]} = \prod_{\substack{(v,w) \colon \\ vw \in \sheaf(u)}} \prod_{C\in \Cat} \beta^{[v\in \indf(X)]\cdot [e\in \psg_e(Y)] \cdot [\inde(w,C) < \dphi]} \text{ and } \gamma^{\barc_u[P]} = \prod_{e \in \sheaf(u)} \barc_{\psg_e}.
    \]
    Hence, we have
   	\begin{align*}\label{eq:leaf:one:ps}
   		\omega^{\pdeg \cdot \pdeg_{\psw}} \cdot \alpha^{a_u[P]} \cdot \beta^{b_u[P]} \cdot \gamma^{\barc_u[P]}  & = \prod_{e=vw \in \sheaf(u)} \gamma^{\barc_{\psg_e}} \prod_{(X,Y)\in \Cat} L'(v,w,\psg_e,(X,Y)) \cdot L'(w,v,\psg_e,(X,Y)),  \\
   	 \text{where } L'(v,w,\psg_e,(X,Y))& = 
   		\begin{cases}
			\omega^{\pdeg(w,(X,Y))} \cdot \beta^{[\bE(w,(X,Y)) < \dphi]} & \text{ if } v\in \indf(X) \text{ and } e \in \psg_e(Y)\\
			1 & \text{ otherwise.}
   		\end{cases}
   	\end{align*}
	By plugging the above in \eqref{eq:leaf:characterization}, we obtain:
	\begin{align*}
		\cQ_u^p[I]  & \equiv_p \sum_{\psg \in \possible(\False)} \sum_{\psw \in \vec\bW(\psg)} \prod_{e=vw \in \sheaf(u)} \gamma^{\barc_{\psg_e}} \prod_{(X,Y)\in \Cat} L'(v,w,\psg_e,(X,Y)) \cdot L'(w,v,\psg_e,(X,Y))\\
		 & \equiv_p \sum_{\psg \in \possible(\False)} \abs{\vec\bW(\psg)} \cdot \prod_{e=vw \in \sheaf(u)} \gamma^{\barc_{\psg_e}} \prod_{(X,Y)\in \Cat} L'(v,w,\psg_e,(X,Y)) \cdot L'(w,v,\psg_e,(X,Y)).\\
	\end{align*}
	By definition of $\vec\bW(\psg)$, we have
	\[
		\abs{\vec\bW(\psg)} = \prod_{\substack{e=vw\in\sheaf(u) \\ C\in \Cat}} \abs{\vec\bW(\psg_e, C,v,w)} \cdot \abs{\vec\bW(\psg_e, C,w,v)}.
	\]
	Since $\abs{\vec\bW(\psg_e, C,v,w)} \cdot L'(v,w,\psg_e,(X,Y)) = L(v,w,\psg_e,(X,Y))$, it follows that
	\[
		\cQ_u^p[I]  \equiv_p \sum_{\psg \in \possible(\False)} \prod_{e=vw \in \sheaf(u)} \gamma^{\barc_{\psg_e}} \prod_{(X,Y)\in \Cat} L(v,w,\psg_e,(X,Y)) \cdot L(w,v,\psg_e,(X,Y)).\\
	\]
	From the definition of $\possible(\False)$, we conclude that 
	\[
	\cQ_u^p[I]  \equiv_p \prod_{e=vw \in \sheaf(u)} \sum_{\psg_e \in \possible(e,\False)}  \gamma^{\barc_{\psg_e}} \prod_{(X,Y)\in \Cat} L(v,w,\psg_e,(X,Y)) \cdot L(w,v,\psg_e,(X,Y)).\\
	\]
\end{proof}

\subsection{Forget equality}\label{subsec:forget}
Next we prove an equality reflecting the relation between the partial solutions of $\tss(u)$ and of $\tss[u]$ for a node $u$:
\begin{lemma}\label{lem:modelcheck:forget}
    Let $u$ be a node of $T$ and let $I=(\interp f, \bE, \allow, \pdeg, \False)\in \Index(u)$.
    Let $\cJ$ denote the set of all tuples $(\interp f_u, \bE_u, \pdeg_u)$ such that the following hold:
    \begin{multicols}{2}
        \begin{itemize}
            \item $\interp f_u$ is a vertex interpretation on $\{u\}$,
            \item $\bE_u$ is a neighborhood function on $\{u\}$,
            \item $\pdeg_u$ is a mod-degree on $\{u\}$, and
            \item $\FalseFunc^V(\interp f_u, \bE_u, \pdeg_u) \subseteq \False$. 
        \end{itemize}
    \end{multicols}
    And let $\cJ'$ be the set of all tuples $(\interp f_u, \bE_u, \allow_u, \pdeg_u)$ such that $(\interp f_u, \bE_u, \pdeg_u) \in \cJ$ and $\allow_u $ is a neighborhood function on $\{u\}$ such that $\allow_u\leq \bE_u$.
    Given, $J=(\interp f_u, \bE_u, \allow_u, \pdeg_u)\in\cJ'$, we denote by $\interp f_u^+, \bE_u^+, \allow_u^+$ and $\pdeg_u^+$ the functions $\interp f \cup \interp f_u, \bE \cup \bE_u, \allow \cup \allow_u$, and $\pdeg \cup \pdeg_u$, respectively.
    Then we have:
    \begin{align}
        &\cP_{u}(I) = \sum_{(\interp f_u,\bE_u,\allow_u,\pdeg_u)\in \cJ'} \incexc(\bE_u,\allow_u) \cdot  \cP_{u}[\interp f_u^+, \bE_u^+, \allow_u^+,\pdeg_u^+, \False] \cdot \alpha^{a_{\bE_u}} \cdot \gamma^{\barc_{\interp f_u}} \label{eq:forget-node-non-transformed}, 
    \end{align}
    with
    \[
        \incexc(\bE_u,\allow_u) = \prod_{C\in\Cat} \left( (-1)^{\bE_u(u,C)-\allow_u(u,C)} \cdot \binom{\bE_u(u,C)}{\allow_u(u,C)}\right), 
        \quad
        a_{\bE_u} = \sum_{\substack{C\in \Cat \colon \\ \bE_u(u,C) < d_\phi}} \bE_u(u,C), \text{ and }
    \]
    \[
        \barc_{\interp f_u} =  (c^1_{\interp f_u},\dots,c^\ell_{\interp f_u}) \in \{0,1\}^\ell \text{ such that } c^i_{\interp f_u} = \abs{\interp{f}_u(X_i)} \text{ if } X_i \in \var^V(\phi), \text{ and } c^i_{\interp f_u} = 0 \text{ if } X_i \in \var^E(\phi)
    \]
    for every $i \in [\ell]$.
    Further, for every prime number $p$ such that $\FF^*_p$ admits a $\pphi$-th root of unity it holds that:
    \begin{align}
        \cQ^p_{u}(I) \equiv_p & \sum_{(\interp f_u,\bE_u,\allow_u,\pdeg_u)\in \cJ'} \incexc(\bE_u,\allow_u) \cdot \alpha^{a_{\bE_u}} \cdot \gamma^{\barc_{\interp f_u}} \cdot \nonumber \\
        & \frac{1}{\pphi^{|\Cat|}}\sum_{\hat q \in \pdegdom(\{u\})} \omega(p)^{-\hat q \cdot \pdeg_u} \cdot \cQ^p_{u}[\interp f_u^+, \bE_u^+, \allow_u^+, \pdeg \cup \hat q, \False] \label{eq:forget-node-transformed} 
    \end{align}
    where $\frac{1}{\pphi^{\abs{\Cat}}}$ is the multiplicative inverse of $\pphi^{\abs{\Cat}}$ in $\FF_{p}$.
\end{lemma}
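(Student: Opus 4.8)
The plan is to prove the non-transformed equality \eqref{eq:forget-node-non-transformed} first, then derive the transformed version \eqref{eq:forget-node-transformed} from it by applying the inverse DFT together with the definition of $\cQ$. For the first part, I would set up a bijection-style argument. Fix a partial solution $P = (\hat{\indf}, \hat{\indg}, \hat{\bE}, \vec{\cW}) \in \Partial(\tss(u))$ compatible with $I$. Since $\tss(u)$ has subtree $\subtree[u]$ and $\tss[u]$ has subtree $\subtree(u) = \subtree[u]\setminus\{u\}$, while both have the same sheaf $\sheaf(u)$ and the same broom $\broom[u]$, the quadruple $P$ is \emph{also} a candidate partial solution for $\tss[u]$: conditions \ref{item:partial:respect-domain}, \ref{item:partial:edge:consistent}, \ref{item:partial:respect:broom:hatbE} are about the broom/sheaf and carry over verbatim, and \ref{item:partial:respect:subtree:hatbE} for $\tss[u]$ is \emph{weaker} since its subtree is smaller. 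So $\Partial(\tss(u)) \subseteq \Partial(\tss[u])$. The key is to understand, for a fixed $P \in \Partial(\tss[u])$, exactly when $P$ lies in $\Partial(\tss(u))$ and how it is classified by indexes on the two triads.

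The crux is handling the vertex $u$ moving into the ``finished'' part. Given $P \in \Partial_u[\indf_u^+, \bE_u^+, \allow_u^+, \pdeg_u^+, \False]$, the only extra requirement to be in $\Partial_u(I)$ is \ref{item:partial:respect:subtree:hatbE} applied at $u$: for every $C \in \Cat$ and every $j \in [\hat{\bE}(u, C)]$, the set $\vec{\cW}(C)$ contains an arc incoming into $u^j$. The index on $\tss[u]$ records via $\allow_u$ the value $\allow(u,C)$ which, by \ref{item:partial:respect:allow} together with \ref{item:partial:respect:broom:hatbE}, says precisely that incoming arcs at $u^j$ occur \emph{only} for $j \le \allow_u(u,C)$, and nothing forbids them for $j$ in that range. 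So I would apply \cref{thm:inclusion-exclusion-intersection} with ground set $U$ = the partial solutions agreeing with $P$ except on which copies of $u$ receive incoming arcs, and events $A_j$ = ``$u^j$ receives no incoming arc'' for $j \in [\hat{\bE}(u,C)]$, per $C$. Summing over all choices of $\allow_u \le \bE_u$ and using that the number of ways of choosing which $\allow_u(u,C)$ of the $\bE_u(u,C)$ copies are ``allowed'' contributes the binomial coefficient $\binom{\bE_u(u,C)}{\allow_u(u,C)}$ with sign $(-1)^{\bE_u(u,C)-\allow_u(u,C)}$, gives the factor $\incexc(\bE_u, \allow_u)$. I would need a small symmetry lemma: forbidding any \emph{specific} size-$k$ subset of $\{u^1,\dots,u^{\bE_u(u,C)}\}$ yields the same count as forbidding the last $k$ copies (this is where $\allow$ being a single threshold rather than an arbitrary subset is justified). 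The remaining exponent bookkeeping is routine: moving $u$ into the subtree adds $\sum_{C:\bE_u(u,C)<\dphi}\bE_u(u,C)$ to the $a$-value, hence the $\alpha^{a_{\bE_u}}$ factor, and it adds $|\hat{\indf}(X_i)\cap\{u\}|$ to each relevant $c^i$, hence $\gamma^{\barc_{\indf_u}}$; meanwhile the $b$-value and the $c^i$ for edge variables are determined by $\sheaf(u)$ and are unchanged. Also \ref{item:partial:respect:False} on $\tss(u)$ versus on $\tss[u]$ differs exactly by whether $\FalseFunc^V(\indf_u,\bE_u,\pdeg_u)\subseteq\False$, which is the condition defining $\cJ$; this uses \cref{claim:union-false-set} to split the falsified equalities on $\subtree[u]$ into those on $\subtree(u)$ and those ``at $u$''.

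For the transformed equality, I would substitute \eqref{eq:def-q} into both sides. On the left, $\cQ^p_u(I) \equiv_p \sum_{q\in\pdegdom(\tail(u))}\omega(p)^{q\cdot\pdeg}\cP_u(\indf,\bE,\allow,q,\False)$; apply \eqref{eq:forget-node-non-transformed} to each $\cP_u(\cdots,q,\cdots)$. The mod-degree $q$ on $\tail(u)$ gets extended by $\pdeg_u$ on $\{u\}$ to a mod-degree on $\tail[u]$; swapping the order of summation, the $\sum_{q}$ becomes a sum over mod-degrees on $\tail[u]$ whose restriction to $u$ is $\pdeg_u$, and I want to re-express $\cP_u[\indf_u^+,\bE_u^+,\allow_u^+, q\cup\pdeg_u,\False]$ in terms of $\cQ^p_u[\cdots]$ via the inverse DFT on the domain $\pdegdom(\tail[u])$. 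By \cref{thm:inverse-dft} (the polynomial version), $\cP$ is the inverse DFT of $\cQ$; writing this out and isolating the dependence on $\pdeg_u$ yields the factor $\frac{1}{\pphi^{|\Cat|}}\sum_{\hat q\in\pdegdom(\{u\})}\omega(p)^{-\hat q\cdot\pdeg_u}\cQ^p_u[\indf_u^+,\bE_u^+,\allow_u^+,\pdeg\cup\hat q,\False]$, because the DFT over $\tail[u]$ factors as the DFT over $\tail(u)$ times the DFT over $\{u\}$, and the $\tail(u)$-part collapses with the outer sum over $q$ by orthogonality. I expect the main obstacle to be the symmetry lemma underlying the inclusion-exclusion — verifying carefully that restricting $\allow$ to threshold functions loses no generality and that the signed sum over thresholds reproduces the full inclusion-exclusion over arbitrary forbidden subsets of copies. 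The DFT manipulation, while notationally heavy, is a mechanical consequence of \cref{thm:inverse-dft} and the multiplicativity of $\omega(p)^{q\cdot(\cdot)}$ in the domain variable.
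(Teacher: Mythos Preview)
Your plan is essentially the same as the paper's proof, and correct. One small slip: when you invoke \cref{thm:inclusion-exclusion-intersection}, the events should be $A_j = $ ``$u^j$ \emph{has} an incoming arc'' (so that $\bigcap_j A_j$ is the set of partial solutions satisfying \ref{item:partial:respect:subtree:hatbE} at $u$), and then $U\setminus\bigcup_{j\in S}A_j$ is the set of partial solutions where the copies in $S$ are forbidden to receive arcs---you wrote the complement. The paper proves your anticipated ``symmetry lemma'' exactly as you outline, via an explicit bijection that re-labels the allowed copies to an initial segment; and the DFT derivation is done by defining an auxiliary function $\Delta$ on $\pdegdom(\{u\})$ and recognising the inner double sum as $\DFT^{-1}(\DFT(\Delta))(\pdeg_u)=\Delta(\pdeg_u)$, which is your ``factorisation and collapse by orthogonality'' phrased through \cref{thm:inverse-dft-polynomials}.
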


\begin{proof}
    We start by proving the ``non-DFT-transformed'' equality \eqref{eq:forget-node-non-transformed}.
    Before we proceed with a formal proof, we provide an intuition behind it.
    In essence, this equality is the application of the inclusion-exclusion principle. 
    Recall that we have $\subtree[u] = \subtree(u) \dot\cup \{u\}$ whilst the triads $\tss(u)$ and $\tss[u]$ share the same broom the same sheaf.
    Now the following observations are crucial.
    Roughly, the partial solutions of $\tss[u]$ and $\tss(u)$ differ ``only'' on two aspects: in the partial solutions of $\tss(u)$, %first, we additionally have the neighborhood function of $u$, second, 
    the equalities ``falsified by $u$'' have to belong to $\False$ as well (due to \ref{item:partial:respect:False}), and second, due to \ref{item:partial:respect:subtree:hatbE}, the sets in $\vec\cW$ have to contain arcs incoming into certain vertices from $u^1, \dots, u^{\dphi}$.
    The definition of $\cJ$ ensures that \ref{item:partial:respect:False} is satisfied on the vertex $u$.
    Thus, when we fix the tuple $(\interp f_u,\bE_u,\pdeg_u) \in \cJ$ the computation of the desired cardinality of the set of partial solutions of $\tss(u)$ boils down to ensuring that \ref{item:partial:respect:subtree:hatbE} holds for $u$. 
   
    To simplify the explanation in this sketch, assume that $\Cat$ contains only one element $C$.
    Thus, we have to ensure that for every $j \in [\bE_u(u, C)]$, the set $\vec\cW(C)$ contains an arc incoming into $u^j$. 
    So if we denote by $P_j$ the set of partial solutions of $\tss[u]$ where $\vec\cW(C)$ contains an arc incoming into $u^j$, then what we seek is the value $|\bigcap_{j \in [\bE_u(u, C)]} P_j|$.
    By inclusion-exclusion principle (see \cref{thm:inclusion-exclusion-intersection}), we can compute this cardinality by considering all possible $S \subseteq [\bE_u(u, C)]$ and determining the cardinality of the set of partial solutions where we \emph{forbid} the edges incoming into $u^j$ for all $j \in S$.
    Finally, we observe that for fixed $S$, due to symmetries, in terms of counting, this is the same as to forbid the edges incoming into $u^{\bE_u(u, C)-|S|+1}, \dots, u^{\bE_u(u, C)}$, i.e., we can use the function $\allow_u$ with $\allow_u(u,C) = \bE_u(u, C)-|S|$.
    Now we formalize this idea.
    
    Given a neighborhood function $\bE_u$ on $\{u\}$, we define $\cT_{\bE_u} \defeq \{(C, j) \mid C \in \Cat, j \in [\bE_u(u, C)]\}.$
    For every $\cT\subseteq \cT_{\bE_u}$ and $J=(\interp f_u,\bE_u,\pdeg_u)\in \cJ$, we define the set $\Partial_u[J,\cT]$ of all $P = (\interp{\hat f }, \interp{\hat g}, \hat\bE, \vec\cW) \in \Partial_u[\interp f_u^+,\bE_u^+,\allow\cup\bE_u,\pdeg_u^+,\False]$ 
    such that for every $(C,j)\in \cT$, there is at least one arc in $\vec\cW(C)$ incoming into $u^j$. 
    Notice that we use $\allow\cup\bE_u$ here, because it gives us the least restrictive index of $\tss[u]$ in the sense that %for every $C\in \Cat$, moreover, 
    for every neighborhood function $\allow_u$ on $\{u\}$ such that $\allow_u\leq \bE_u$, we have by \Cref{obs:modelcheck:inclusions} the following inclusion
    \begin{equation*}
        \Partial_u[\interp f_u^+,\bE_u^+,\allow_u^+,\pdeg_u^+,\False] \subseteq \Partial_u[\interp f_u^+,\bE_u^+,\allow\cup\bE_u,\pdeg_u^+,\False].
    \end{equation*}
    Given $a,b\in\bN$ and $\barc\in \bN^\ell$, we also define the set $\Partial_u[J,\cT,a,b,\barc]$ of all $P\in \Partial_u[J,\cT]$ such that $a_u[P]=a$, $b_u[P]=b$ and $\barc_u[P]=\barc$.
    % We define also the polynomial $\cP_u[J,\cT]$ as follows 
    % \[
    % \cP_u[J,\cT] \defeq \sum_{a,b\in\bN, c\in \bN^\ell} \abs{\Partial_u[J,\cT,a,b,\barc]} \cdot \alpha^a \cdot \beta^b \cdot \gamma^\barc.
    % \] 
    We aim to prove with the following claims that we have
    \begin{equation}\label{eq:forget:cT}
        \cP_{u}(I) = \sum_{\substack{(\interp f_u,\bE_u,\pdeg_u)\in \cJ\\ a,b\in\bN, \barc\in \bN^\ell}} \abs{\Partial_u[\interp f_u,\bE_u,\pdeg_u, \cT_{\bE_u},a,b,\barc]} \cdot \alpha^{a + a_{\bE_u}} \cdot \beta^{b} \cdot \gamma^{\barc + \barc_{\interp f_u}}.
    \end{equation}
    We will see later how to prove \eqref{eq:forget-node-non-transformed} from \eqref{eq:forget:cT} with the inclusion-exclusion principle.
    
    \begin{claim}\label{claim:modelcheck:forget:(u)to[u]}
        For every $P\in \Partial_u(I)$, there is a unique $J\in \cJ$ such that $P\in \Partial_u[J,\cT_{\bE_u}]$.
        % We have $\Partial_u(I)$ is the disjoint union of $ \Partial_u[J,\cT]$ over all $J\in \cJ$.
    \end{claim}
    \begin{claimproof}
        Let $P=(\interp{\hat f }, \interp{\hat g}, \hat\bE, \vec\cW) \in \Partial_u(I)$.
        Let $\interp f_u, \bE_u$ and $\pdeg_u$ be the restrictions of $\interp{\hat f }, \hat\bE$ and $\pdeg_{\vec\cW}$ on $\{u\}$.
        First, observe that $P\in \Partial_u(I)$ and \ref{item:partial:respect:False} imply the inclusion
        \begin{equation}\label{eq:forget:(u)to[u]:false}
            \FalseFunc(\interp{ \hat f}_{|_{\subtree[u]}}, \interp{\hat g }, \hat \bE_{|_{\subtree[u]}}, (\pdeg_{\vec \cW})_{|_{\subtree[u]}}) \subseteq \False.
        \end{equation}
        By applying \Cref{claim:restriction-vs-false-set} on \eqref{eq:forget:(u)to[u]:false} with $\dom' = \{u\}$ and $\dom = \subtree[u]$, we deduce the inclusion
        \[
            \FalseFunc^V(\interp f_u, \bE_u, \pdeg_u) \subseteq \FalseFunc^V(\interp{ \hat f}_{|_{\subtree[u]}}, \hat \bE_{|_{\subtree[u]}}, (\pdeg_{\vec \cW})_{|_{\subtree[u]}}) \subseteq \False.
        \]
        Thus, we have $J=(\interp f_u, \bE_u ,\pdeg_u)\in \cJ$.
        Next, we argue that $P\in \Partial(\tss[u])$ holds because we have $P\in \Partial(\tss(u))$.
        Indeed, since each of the broom and sheaf is the same for $\tss(u)$ and $\tss[u]$, the quadruple $P$ satisfies \ref{item:partial:respect-domain}, \ref{item:partial:edge:consistent}, and \ref{item:partial:respect:broom:hatbE} for $\tss[u]$ too.
        Moreover, the subtree of $\tss[u]$ (i.e., $\subtree(u)$) is a subset of the subtree of $\tss(u)$ (i.e., $\subtree[u]$), consequently $P$ satisfies \ref{item:partial:respect:subtree:hatbE} for $\tss[u]$. Thus, we have $P\in \Partial(\tss[u])$.
        
        Now, we show that $P$ is compatible with $J' = (\interp f_u^+,\bE_u^+,\allow\cup\bE_u,\pdeg_u^+,\False)$.
        \begin{itemize}
            \item Recall that $\interp f_u^+ = \interp f \cup \interp f_u$ and $\interp f_u$ is the restriction of $\hat{\interp{f}}$ on $\{u\}$. Moreover, $\interp{f}(X) = \hat{\interp{f}}(X)\cap \tail(u)$ for every $X\in \var^V(\phi)$ because $P$ is compatible with $I$.
            We deduce that $P$ satisfies \ref{item:partial:fconsistent} for $J'$ because for every $X \in \var^V(\phi)$, we have:
            \[
                \interp{\hat f}(X) \cap \tail[u] = (\interp{\hat f}(X) \cap \tail(u)) \cup (\interp{\hat f}(X) \cap \{u\}) = \interp{f}(X) \cup \interp{f}_u(X) = \interp{f}^+_u(X).
            \]
            \item Recall that $\bE_u^+ = \bE \cup \bE_u$. Since $P\in\Partial_u(I)$, we have $\bE_u^+(v,C) = \bE(v,C) = \hat\bE(v,C)$ for every $v\in \tail(u)$ and $C\in\Cat$.
            As $\bE_u$ is the restriction of $\hat\bE$ on $\{u\}$, we have $\bE_u^+(u,C)= \hat\bE(u,C)$ for every $C\in\Cat$. Hence, $P$ satisfies \ref{item:partial:econsistent} for $J'$. With the same arguments,  we can prove that $P$ satisfies \ref{item:partial:pdegconsistent} for $J'$.
            
            \item Since $P\in \Partial_u(I)$, we know that it satisfies \ref{item:partial:respect:allow} for $J'$ and every vertex in $\tail(u)$. Moreover, $P$ satisfying \ref{item:partial:respect:broom:hatbE} (for $\tss(u)$ and $\tss[u]$) implies that for every $C\in\Cat$ and every arc $(v^1, u^j)\in \vec\cW(C)$, we have $j \leq \hat\bE(u,C)$. As $\hat\bE(u,C) = \bE_u(u, C) = (\allow\cup \bE_u)(u,C)$, we deduce that $P$ satisfies \ref{item:partial:respect:allow} for $J'$ and the vertex $u$ as well.
            
            \item  By applying \Cref{claim:restriction-vs-false-set} on \eqref{eq:forget:(u)to[u]:false} with $\dom' = \subtree(u)$ and $\dom = \subtree[u]$, we obtain
            \[
                \FalseFunc(\interp{ \hat f}_{|_{\subtree(u)}}, \interp{\hat g }, \hat \bE_{|_{\subtree(u)}}, (\pdeg_{\vec \cW})_{|_{\subtree(u)}}) \subseteq  \False.
            \]
            It follows that $P$ satisfies \ref{item:partial:respect:False} for $J'$.
        \end{itemize}
        Hence, we have $P\in \Partial_u[J']$.

        Now, we need to prove that $P\in \Partial_u[J,\cT_{\bE_u}]$, i.e., for each $(C,j)\in \cT_{\bE_u}$, there is at least one arc in $\vec\cW(C)$ incoming into $u^j$.
        Since $P\in \Partial(\tss(u))$, from \ref{item:partial:respect:subtree:hatbE}, it follows that for every $w\in \subtree[u]$, every $C\in \Cat$, and every $j\in [\hat\bE(w,C)]$, there is at least one arc in $\vec\cW(C)$ incoming into $w^j$. As $u\in \subtree[u]$, we thus have $P\in \Partial_u[J,\cT_{\bE_u}]$.
        
        Finally, it is easy to see that $J$ is the unique element of $\cJ$ such that $P\in \Partial_u[J,\cT_{\bE_u}]$ because of the conditions \ref{item:partial:fconsistent}, \ref{item:partial:econsistent} and \ref{item:partial:pdegconsistent}.
    \end{claimproof}

    \begin{claim}\label{claim:modelcheck:forget:disjointunion}
        The collection $(\Partial_u[J,\cT_{\bE_u}])_{J\in \cJ}$ is a partition of $\Partial_u(I)$.
    \end{claim}
    \begin{claimproof}
        Thanks to \Cref{claim:modelcheck:forget:(u)to[u]}, to prove this statement, it is sufficient to prove that for every $J\in \cJ$ and $P\in \Partial_u[J,\cT_{\bE_u}]$, we have $P \in \Partial_u(I)$.
        Let $J=(\interp f_u, \bE_u ,\pdeg_u)\in \cJ$ and $P=(\interp{\hat f }, \interp{\hat g}, \hat\bE, \vec\cW)\in \Partial_u[J,\cT_{\bE_u}]$.
        
        We start by proving that $P\in \Partial(\tss(u))$.
        Observe that $P$ satisfies \ref{item:partial:respect-domain} and \ref{item:partial:edge:consistent}, \ref{item:partial:respect:broom:hatbE} for $\tss(u)$ because $P$ satisfies these properties for $\tss[u]$ and $\tss(u)$ and $\tss[u]$ have the same broom and sheaf.
        It remains to check that $P$ satisfies \ref{item:partial:respect:subtree:hatbE} for $\tss(u)$.
        Observe that $P$ satisfies this condition for $\tss[u]$ and we have $\subtree(u) = \subtree[u]\setminus \{u\}$.
        So it remains to prove that, for every $C\in \Cat$ and every $j\in [\hat\bE(u,C)]$, there is at least one arc incoming into $u^j$ in $\psw(C)$.
        Since $\pse(u, C) = \inde_u(u,C)$, this follows from the definition of $\Partial_u[J,\cT_{\bE_u}]$.

        It remains to prove that $P$ is compatible with $I$. Let $J' = (\interp f_u^+,\bE_u^+,\allow\cup\bE_u,\pdeg_u^+,\False)$.
        \begin{itemize}
            \item Observe that $\interp f, \bE, \allow$ and $\pdeg$ are the restrictions of $\interp f_u^+,\bE_u^+,\allow\cup\bE_u$ and $\pdeg_u^+$ on $\tail(u)$.
            Since $\tail(u)\subseteq \tail[u]$, we deduce that $P$ satisfies \ref{item:partial:fconsistent}, \ref{item:partial:econsistent}, \ref{item:partial:pdegconsistent}, and \ref{item:partial:respect:allow} for $I$ because it satisfies these properties for $J'$.
            
            \item It remains to deal with \ref{item:partial:respect:False}. For $S\in \{\subtree(u),\subtree[u]\}$, we denote by $P_{|_{S}}$ the tuple $(\interp{ \hat f}_{|_{S}}, \interp{\hat g }, \hat \bE_{|_{S}}, (\pdeg_{\vec \cW})_{|_{S}})$. 
            As $P$ satisfies this condition for $J'$, we know that
            \begin{equation}\label{eq:forget:disjointunion:subtree(u)}
                \FalseFunc(P_{|_{\subtree(u)}}) \subseteq \False.
            \end{equation}
            We want to prove that $\FalseFunc(P_{|_{\subtree[u]}}) \subseteq \False.$
            Observe that, for each $S\in \{\subtree(u),\subtree[u]\}$, the set 
            $\FalseFunc(P_{|_{S}})$ is a disjoint union of $\FalseFunc^V(P_{|_{S}})$ and $\FalseFunc^E(\interp{\hat g })$.
            From \eqref{eq:forget:disjointunion:subtree(u)}, we know that $\FalseFunc^E(\interp{\hat g }) \subseteq \False$. Thus, it remains to prove that $\FalseFunc^V(P_{|_{\subtree[u]}}) \subseteq \False$.
            As $\subtree[u] = \subtree(u) \cup \{u\}$, by \Cref{claim:union-false-set}, we have
            \begin{equation}\label{eq:forget:disjointunion:V}
                \FalseFunc^V(P_{|_{\subtree[u]}}) = \FalseFunc^V(P_{|_{\subtree(u)}}) \cup \FalseFunc^V(\interp{f}_u, \bE_u, \pdeg_u)
            \end{equation}
            because $\interp{f}_u, \bE_u$ and $\pdeg_u$ are the restrictions of $\interp{\hat f}, \hat\bE$ and $\pdeg_{\vec\cW}$ on $\{u\}$.
            The property $J\in \cJ$ implies the inclusion $\FalseFunc^V(\interp{f}_u, \bE_u, \pdeg_u) \subseteq \False$.
            Moreover, \eqref{eq:forget:disjointunion:subtree(u)} implies that $\FalseFunc^V(P_{|_{\subtree(u)}}) \subseteq \False$ holds. Hence, we deduce from \eqref{eq:forget:disjointunion:V} that $\FalseFunc^V(P_{|_{\subtree[u]}}) \subseteq \False$. Thus, $P$ satisfies \ref{item:partial:respect:False} for~$I$.
        \end{itemize}
        We conclude that $P\in \Partial_u(I)$ and this proves the present claim.
    \end{claimproof}

    \begin{claim}\label{claim:modelcheck:forget:coef}
        For every $J=(\interp{f}_u,\bE_u,\pdeg_u)\in \cJ$ and every $P\in \Partial_u[J,\cT_{\bE_u}]$, we have $a_{u}(P) = a_{u}[P] + a_{\bE_u}$, $b_{u}(P) = b_{u}[P]$, and $\barc_{u}(P) = \barc_{u}[P] + \barc_{\interp{f}_u}$.
    \end{claim}
    \begin{claimproof}
        Let $J=(\interp{f}_u,\bE_u,\pdeg_u)\in \cJ$ and let $P\in \Partial_u[J,\cT_{\bE_u}]$.
        We start by proving $a_{u}(P) = a_{u}[P] + a_{\bE_u}$. 
        By definition, we have
            \[
                a_{u}(P) = \sum_{C\in \Cat}\sum_{\substack{ v\in \subtree[u] \\ \hat \bE(v,C) < \dphi}} \hat\bE(v,C).
            \] 
            Recall that $a_{\bE_u}$ is the sum of $\inde_u(u,C)$ over all $C\in \Cat$ with $\bE_u(u,C)<\dphi$.
            Moreover, $a_{u}[P]$ is the sum of $\pse(v,C)$ over all $v\in \subtree(u)$ and $C\in \Cat$ with $\pse(v,C)<\dphi$.
            As $\subtree[u] = \subtree(u) \cup \{u\}$ and $\inde_u$ is the restriction of $\pse$ on $\{u\}$, we have $a_{u}(P) = a_{u}[P] + a_{\bE_u}$.
            
            Now,  observe that $b_{u}(P) = b_{u}[P]$ because $\tss(u)$ and $\tss[u]$ have the same broom.
            % , thus:
            % \[
            %     b_{u}(P) = \sum_{C\in \Cat}  
            %     \sum_{\substack{ v\in \broom[u] \\ \hat\bE(v,C) < \dphi }} 
            %         \abs{\{(u^1, v^i) \in \vec\cW(C)\}} = b_{u}[P].
            % \]
            
            It remains to prove that $\barc_{u}(P) = \barc_{u}[P] + \barc_{\interp{f}_u}$, i.e.,, for every $i\in [\ell]$, we have $c_{u}^i(P) = c^i_{u}[P] + c^i_{\interp{f}_u}$. Let $i\in [\ell]$.   
            From the definition of $c^i_{u}(P)$ and $c^i_u[P]$, we have
            \[
                c^i_{u}(P) = \abs{\ip{X_i}^{(G,\interp{\hat f}_{|\subtree[u]},\interp{\hat g})}} \text{ and } c^i_{u}[P] = \abs{\ip{X_i}^{(G,\interp{\hat f}_{|\subtree(u)},\interp{\hat g})}}.
            \]
            If $X_i$ is an edge set variable, then $c^i_u(P) = c^i_u[P] = \abs{ \interp{\hat g}(X_i)}$ while $c^i_{\interp{f}_u} = 0$ holds so the claim follows.
            Assume now that $X_i$ is a vertex set variable.
            Then, we have
            \[ c^i_u(P) = \abs{\interp{\hat{f}}(X_i)\cap \subtree[u]} = \abs{\interp{\hat{f}}(X_i) \cap \subtree(u)} + \abs{\interp{\hat{f}}(X_i) \cap \{u\}} = c^i_u[P] + \abs{\interp{\hat{f}}(X_i) \cap \{u\}}. \]
            Since $P\in \Partial[J,\cT_{\bE_u}]$, we know that $\interp{f}_u$ is the restriction of $\interp{\hat f}$ on $\{u\}$.
            Thus, we have $\interp{\hat{f}}(X_i) \cap \{u\} = \interp f_u(X_i)$.
            By definition, we have $c^i_{\interp{f}_u} = \abs{\interp f_u(X_i)}$.
            We conclude that $c^i_u(P) = c^i_u[P] + c^i_{\interp{f}_u}$.
    \end{claimproof}

    We are ready to prove \eqref{eq:forget:cT}.
    By definition, we have 
    \begin{align*}
        \cP_u(I) \defeq & \sum_{a,b\in \bN, \barc \in \bN^\ell} \abs{\Partial_u(I,a,b,\barc)} \cdot \alpha^a \cdot \beta^b \cdot \gamma^{\barc} = \sum_{P \in \Partial_u(I)} \alpha^{a_u(P)}\cdot \beta^{b_u(P)} \cdot \gamma^{\barc_u(P)}\\
        \stackrel{\Cref{claim:modelcheck:forget:disjointunion}}{=} & \sum_{\substack{J\in \cJ \\ P \in \Partial_u[J,\cT_{\bE_u}]}} \alpha^{a_u(P)}\cdot \beta^{b_u(P)} \cdot \gamma^{\barc_u(P)}
        \stackrel{\Cref{claim:modelcheck:forget:coef}}{=} \sum_{\substack{(\interp{f}_u,\bE_u,\pdeg_u)\in \cJ \\ P \in \Partial_u[J,\cT_{\bE_u}]}} \alpha^{a_u[P] + a_{\bE_u}}\cdot \beta^{b_u[P]} \cdot \gamma^{\barc_u[P] + \barc_{\interp{f}_u}}\\
        = & \sum_{\substack{J= (\interp{f}_u,\bE_u,\pdeg_u)\in \cJ\\ a,b\in \bN, \barc\in \bN^\ell}} \abs{\Partial_u[J,\cT_{\bE_u},a,b,\barc]} \cdot \alpha^{a + a_{\bE_u}}\cdot \beta^{b} \cdot \gamma^{\barc + \barc_{\interp{f}_u}}.
    \end{align*}
    For $J= (\interp{f}_u,\bE_u,\pdeg_u)\in \cJ$, from the definition of $\Partial_u[J,\cT_{\bE_u},a,b,\barc]$, we have:
    \[
        \Partial_u[J,\cT_{\bE_u},a,b,\barc] = \bigcap_{T\in\cT_{\bE_u}} \Partial_u[J,\{T\},a,b,\barc].
    \]
    For every $\cT\subseteq \cT_{\bE_u}$, we denote by $\barpart_u[J,\cT,a,b,\barc]$ the set of all $(\interp{\hat f}, \interp{\hat g}, \hat\bE, \vec\cW) \in \Partial_u[J,\emptyset,a,b,\barc]$ such that for every $(C,i)\in\cT$, there is no arc in $\vec\cW(C)$ incoming into $u^i$.
    Observe that
    \[
        \barpart_u[J,\cT,a,b,\barc] = \Partial_u[J,\emptyset,a,b,\barc] \setminus \left(\bigcup_{T\in \cT} \Partial_u[J,\{T\},a,b,\barc]\right).
    \]
    By the the inclusion-exclusion principle (\cref{thm:inclusion-exclusion-intersection}), we deduce that
    \begin{equation}\label{eq:forget:inclusion:exclusion}
    \abs{\Partial_u[J,\cT_{\bE_u},a,b,\barc]} = \sum_{\cT\subseteq \cT_{\bE_u}} (-1)^{\abs{\cT}} \abs{\barpart_u[J,\cT,a,b,\barc]}.
    \end{equation}
    We will prove the ``non-DFT-transformed'' equality \eqref{eq:forget-node-non-transformed} with \eqref{eq:forget:cT}, \eqref{eq:forget:inclusion:exclusion} and the following claim.
    
    \begin{claim}\label{claim:modelcheck:forget:cT}
        Let $J=(\interp{f}_u,\bE_u, \pdeg_u)\in \cJ, a,b\in\bN,\barc\in\bN^\ell$. For every $\cT\subseteq \cT_{\bE_u}$, there exists a neighborhood function $\allow_{\cT}$ on $\{u\}$ with $(\interp{f}_u,\bE_u, \allow_{\cT},\pdeg_u)\in \cJ'$ such that
        \begin{align}
            \abs{\barpart_u[J,\cT,a,b,\barc]} & = \abs{\Partial_u[\interp{f}_u^+, \bE_u^+,\allow\cup \allow_{\cT}, \pdeg_u^+,\False,a,b,\barc]}, \text{ and } \label{eq:forget:barpart}\\
            (-1)^{\abs{\cT}} & = \prod_{C\in\Cat} (-1)^{\bE_u(u,C) - \allow_{\cT}(u,C)}. \label{eq:forget:-1}
        \end{align}
        Furthermore, for every $(\interp{f}_u,\bE_u, \allow_{u},\pdeg_u)\in \cJ'$, we have
        \begin{equation}\label{eq:number-of-cT'-with-the-same-allow}
            |\{\cT\subseteq \cT_{\bE_u} \mid \allow_u = \allow_{\cT}\}| = \prod_{C \in \Cat} \binom{\bE_u(u, C)}{\allow_u(u, C)}.
        \end{equation}
    \end{claim}
    % The proof of this claim is based on the fact that \ref{item:partial:respect:allow} is the only condition---from the definition of partial solutions---symmetries on the vertices of $\vec G_\phi$ and all the conditions from  at the exception of .
    \def\rmOK{\mathsf{Alw}}
    \begin{claimproof}
        First, we fix an arbitrary $\cT \subseteq \cT_{\inde_u}$ and construct $\allow_{\cT}$ as follows.
        For every $C\in \Cat$, let $\rmOK(C)$ be the set of all $i\in [\bE_u(u,C)]$ and such that $(C,i)\notin \cT$ and we let $h_C$ be a bijection between $\rmOK(C)$ and $\{ 1, 2, \dots, \abs{\rmOK(C)}\}$. 
        We define $\allow_{\cT}$ as the neighborhood function on $\{u\}$ such that for every $C\in\Cat$, we have $\allow_{\cT}(u,C)=\abs{\rmOK(C)}$. 
        Observe that $\allow_{\cT}$ is indeed a neighborhood function because by construction we have $\allow_{\cT}(u,C)\in %\{1, 2, \dots, \abs{\rmOK(C)}\} \subseteq 
        [\bE_u(u,C)] \subseteq [\dphi]$.
        
        Let $J_{\cT}^+ \defeq (\interp{f}_u^+, \bE_u^+,\allow\cup\allow_{\cT}, \pdeg_u^+,\False)$. It is easy to see that $J_{\cT}^+ \in \cJ'$ because by construction we have $\allow_{\cT}\leq \bE_u$.
        Next, we provide a bijection between $\barpart_u[J,\cT,a,b,\barc]$ and $\Partial_u[J_{\cT}^+,a,b,\barc]$.
        For this, for every $C \in \Cat$ we define the bijection $g_C$ between %$\psw(C)$ and $\vec\cZ(C)$.
        the set
        \begin{equation}\label{eq:bijection-g-c-left}
            \{(z^1, w^j) \in \vec A(\vec G_\phi) \mid w \neq z\} \cup \{(z^1, u^j) \in \vec A(\vec G_\phi) \mid j \in \rmOK(C)\}
        \end{equation}
        and the set
        \begin{equation}\label{eq:bijection-g-c-right}
            \{(z^1, w^j) \in \vec A(\vec G_\phi) \mid w \neq z\} \cup \{(z^1, u^j) \in \vec A(\vec G_\phi) \mid j \in [\rmOK(C)]\}
        \end{equation}
        via 
        \[
            g_C((z^1, w^j)) =
            \begin{cases}
                (z^1, w^j) & \text{if } w \neq u \\
                (z^1,u^{h_C(j)}) & \text{otherwise}.
            \end{cases}
        \]
        %For every $C\in \Cat$, we define the function $g_C$ that maps every arc $(v^1,w^j) \in \vec A(\vec G_{\phi})$ with $w\neq u$ to itself, and every arc $(v^1,u^j) \in \vec A(\vec G_{\phi})$ with $j\in \rmOK(C)$ to $(v^1,u^{h_C(j)})$.
        Let $h$ be the function that maps every $P=(\interp{\hat f},\interp{\hat g},\hat\bE, \vec\cW)\in \barpart_u[J,\cT,a,b,\barc]$ to $h(P)\defeq(\interp{\hat f},\interp{\hat g},\hat\bE, \vec\cZ)$ where
        \[
            \vec\cZ(C) \defeq \{ g_C(a) \mid a \in \psw(C)\}
        \]
        for every $C \in \Cat$.
        This is well-defined because for every $C \in \Cat$ and every arc $(v^1, u^j) \in \psw(C)$, we have $j\in \rmOK(C)$.
        Indeed, by definition of $\barpart_u[J,\cT,a,b,\barc]$, we know that $(C,j)\notin \cT$ and by \ref{item:partial:respect:broom:hatbE} and \ref{item:partial:econsistent} we have $j\in [\pse(u,C)] = [\inde_u(u,C)]$.
        In simple words, for every $C \in \Cat$, $h$ ``reorders'' the arcs in $\vec\cW(C)$ incoming into $\{u^j \mid j \in \rmOK(C)\}$
        in such a way that they keep their tail and their heads are now in $\{u^j \mid j \leq \abs{\rmOK(C)} = \allow_\cT(u,C)\}$.
        In particular, $g_C$ is a bijection between $\psw(C)$ and $\vec\cZ(C)$.
        
        %Thus $h$ is bijective, so to 
        We first prove %\eqref{eq:forget:barpart} %, we just need to prove 
        that (1) $h$ maps every element of $\barpart_u[J,\cT,a,b,\barc]$ to an element in $\Partial_u[J_{\cT}^+,a,b,\barc]$, and (2) every element in $\Partial_u[J_{\cT}^+,a,b,\barc]$ has a preimage  under $h$ in $\barpart_u[J,\cT,a,b,\barc]$. 
        Later we show that the bijectivity easily follows from these two properties.

        Let $P=(\interp{\hat f},\interp{\hat g},\hat\bE, \vec\cW)\in \barpart_u[J,\cT,a,b,\barc]$ and let $h(P)=(\interp{\hat f},\interp{\hat g},\hat\bE, \vec\cZ)$.
        We need to check that $h(P) \in \Partial_u[J_{\cT}^+,a,b,\barc]$.
        First observe that $h(P)$ is a partial solution for $\tss[u]$. This is clear from the construction of $\vec\cZ$ and the fact that $g_C$ is a bijection between $\psw(C)$ and $\vec\cZ(C)$ for every $C\in \Cat$.
        In particular, \ref{item:partial:respect:broom:hatbE} holds because $\{1,2,\dots,\abs{\rmOK(C)}\}\subseteq [\inde_u(u,C)] = [\pse(u,C)]$ for every $C\in \Cat$.
        Now, we show that $h(P)$ is compatible with $J_{\cT}^+$.
        Since $P\in \barpart_u[J,\cT,a,b,\barc] \subseteq \Partial_u[\interp{f}_u^+, \bE_u^+,\allow\cup\bE_u, \pdeg_u^+,\False,a,b,\barc]$, we only need to check the conditions involving $\vec\cZ$ and $\allow_\cT$, that is: \ref{item:partial:pdegconsistent} and \ref{item:partial:respect:allow}.
        By construction of $\vec\cZ$, for every $v\in \tail[u]$ and $C\in \Cat$, we have 
        \[
            \pdeg(v,C) = \pdeg_{\psw}(v,C) = \abs{ \{ (w^1, v^j) \in \psw(C) \} } = \abs{ \{ (w^1, v^j) \in \vec\cZ(C) \} } = \pdeg_{\vec\cZ}(v,C),
        \]
        where the second equality holds because $g_C$ is a bijection between $\psw(C)$ and $\vec\cZ(C)$ that maps, for every vertex $w$ of $G$, all arcs incoming into $w^1, \dots, w^{\dphi}$ to arcs incoming into vertices $w^1, \dots, w^{\dphi}$. %to a vertex $v^i$ to an arc incoming to a vertex $v^j$.
        Thus, $h(P)$ satisfies \ref{item:partial:pdegconsistent}.
        By construction, for every $C\in\Cat$ and arc $(v^1,w^j)\in \psw(C)$ with $w\neq u$, we have $(v^1,w^j)\in\vec\cZ(C)$. The tuple $P$ satisfies \ref{item:partial:respect:allow} on $\tail(u)$ with the function $\allow\cup\bE_u$ so $h(P)$ satisfies \ref{item:partial:respect:allow} on $\tail(u)$ with the function $\allow\cup\allow_{\cT}$.
        Moreover, for every $C\in \Cat$ and every arc $(v^1, u^j) \in \vec\cZ(C)$, we have $j\leq [\rmOK(C)]= \allow_\cT(u,C)$ by construction. %because there exists some $i\in \rmOK(C)$ such that $j= h_C(i)$, thus $j\leq \abs{\rmOK(C)} = \allow_\cT(u,C)$.
        Clearly, $a_u[P]=a_u[h(P)]$ and $\barc_u[P] = \barc_u[h(P)]$. Moreover, since the number of arcs incoming into $v^1,\dots,v^{\dphi}$ is the same in $\psw(C)$ and $\vec\cZ(C)$ for every $C\in\Cat$, we have $b_u[P]=b_u[h(P)]$.
        We deduce that $h(P) \in \Partial_u[J_{\cT}^+,a,b,\barc]$.

        \medskip
        
        Let $P'=(\psf,\psg,\pse,\vec\cZ)\in \Partial_u[J_{\cT}^+,a,b,\barc]$. Let $\psw$ be the function that maps every $C\in\Cat$ to the set $\{g_C^{-1}(a) \mid a \in \vec\cZ(C)\}$.
        This is well-defined because for every $(v^1,u^j)\in \vec\cZ(C)$, we have $j\leq \allow_\cT(u,C) = [\rmOK(C)]$ due to \ref{item:partial:respect:allow}, so $g_C^{-1}((v^1,u^j)) = (v^1, u^{h_c^{-1}(j)})$ exists and $g_C$ is a bijection between $\psw(C)$ and $\vec\cZ(C)$.
        We claim that $P\defeq(\psf,\psg,\pse,\psw) \in \barpart_u[J,\cT,a,b,\barc]$. 
        With the same arguments used above, we can prove that $P$ is a partial solution for $\tss[u]$, we have $a_u[P']=a_u[P]$, $b_u[P']=b_u[P]$, $\barc_u[P']=\barc_u[P]$ and $P$ satisfies all the conditions to be compatible with $(\interp{f}_u^+, \bE_u^+,\allow\cup\bE_u, \pdeg_u^+,\False)$ at the exception of \ref{item:partial:respect:allow}.
        This latter condition is satisfied by $P$ for $\tail(u)$ because %$P$ satisfies \ref{item:partial:respect:broom:hatbE} and 
        $P'$ satisfies \ref{item:partial:respect:allow} for $\tail(u)$ and the function $\allow\cup \allow_\cT$, and for $\{u\}$ it is satisfied because $P'$ satisfies \ref{item:partial:respect:broom:hatbE} and it satisfies \ref{item:partial:econsistent} for $\bE_u$. %by construction.
        We deduce that $P\in \Partial_u[\interp{f}_u^+, \bE_u^+,\allow\cup\bE_u, \pdeg_u^+,\False,a,b,\barc]$.
        Furthermore, observe that by construction, for every $C\in\Cat$ and arc $(v^1,u^i)\in\psw(C)$, we have $i\in \rmOK(C)$, i.e., $(C, i) \notin \cT$. It follows that 
        $P\in \barpart_u[J,\cT,a,b,\barc]$, and, by the construction of $\psw$, it holds that $h(P)=P'$.

        Finally, recall that $g_C$ is a bijection between \eqref{eq:bijection-g-c-left} and \eqref{eq:bijection-g-c-right} for every $C \in \Cat$.
        Thus $h$ is injective and every partial solution in $\Partial_u[J_{\cT}^+,a,b,\barc]$ has at most one preimage in $h$.
        Altogether, we obtain that $h$ is a bijection between $\barpart_u[J,\cT,a,b,\barc]$ and $\Partial_u[J_{\cT}^+,a,b,\barc]$.
        We conclude that \eqref{eq:forget:barpart} is true.

        \medskip
        
        Observe that \eqref{eq:forget:-1} simply follows from the following equation:
        \[
            \abs{\cT} = \sum_{C\in\Cat}\abs{\bigl\{(C,i) \in \cT \mid i \in [\bE_u(u, C)]\bigr\}} = \sum_{C\in\Cat} \bE_u(u,C) - \abs{\rmOK(C)} = \sum_{C\in\Cat} \bE_u(u,C) - \allow_{\cT}(u,C).
        \] 
        
        Finally, to prove \eqref{eq:forget:cT}, observe that for every neighborhood function $\allow_u$ on $\{u\}$ such that $\allow_u\leq \bE_u$ and every $\cT\subseteq \cT_{\bE_u}$, we have $\allow_u = \allow_{\cT}$ if and only if $\abs{\cT(C)} = \bE_u(u,C) - \allow_u(u,C)$ for every $C\in \Cat$.
        Since $\cT(C) \subseteq [\bE_u(u,C)]$, there are $\binom{\bE_u(u, C)}{ \bE_u(u,C) - \allow_u(u, C)} = \binom{\bE_u(u, C)}{ \allow_u(u, C)}$ possibilities for each $\cT(C)$ so that $\allow_u = \allow_{\cT}$ holds.
    \end{claimproof}

    In the following, given $J'=(\interp{f}_u,\bE_u, \allow_u, \pdeg_u)\in \cJ'$, we define the tuple $$J'^+ \defeq (\interp{f}_u^+,\bE_u^+, \allow_u^+, \pdeg_u^+, \False).$$
     Thanks to \Cref{claim:modelcheck:forget:cT}, we can reformulate $\abs{\Partial_u[J,\cT_{\bE_u},a,b,\barc]}$ as follows.
     % \Cref{claim:modelcheck:forget:cT}, we prove that for every $J=(\interp{f}_u,\bE_u, \pdeg_u)\in \cJ, a,b\in\bN,\barc\in\bN^\ell$, we have:
    % \[
    %     \abs{\Partial_u[J,\cT_{\bE_u},a,b,\barc]}  = \sum_{\cT\subseteq \cT_{\bE_u}} (-1)^{\abs{\cT}} \abs{\barpart_u[J,\cT,a,b,\barc]}
    % \]
    % We have:
    \begin{align}
          &\abs{\Partial_u[J,\cT_{\bE_u},a,b,\barc]} \stackrel{\eqref{eq:forget:inclusion:exclusion}}{=}
           \sum_{\cT\subseteq \cT_{\bE_u}} (-1)^{\abs{\cT}} \abs{\barpart_u[J,\cT,a,b,\barc]} \\
         &\stackrel{\ref{claim:modelcheck:forget:cT}}{=} \sum_{\substack{\allow_u : \\
         J'=(\interp{f}_u,\bE_u, \allow_u, \pdeg_u)\in \cJ'}} \sum_{\substack{\cT\subseteq \cT_{\bE_u} \\ \allow_u = \allow_\cT}} (-1)^{\abs{\cT}} \abs{\Partial_u[J'^+,a,b,\barc]} \nonumber\\
         & \stackrel{\eqref{eq:forget:-1}}{=}  \sum_{\substack{\allow_u : \\J'=(\interp{f}_u,\bE_u, \allow_u, \pdeg_u)\in \cJ'}} \sum_{\substack{\cT\subseteq \cT_{\bE_u} \\ \allow_u = \allow_\cT}} \left( \prod_{C\in\Cat}(-1)^{\bE_u(u,C)-\allow_u(u,C)} \right) \cdot \abs{\Partial_u[J'^+,a,b,\barc]} \nonumber\\
         & \stackrel{\eqref{eq:number-of-cT'-with-the-same-allow}}{=} \sum_{\substack{\allow_u : \\J'=(\interp{f}_u,\bE_u, \allow_u, \pdeg_u)\in \cJ'}} \left( \prod_{C\in\Cat}(-1)^{\bE_u(u,C)-\allow_u(u,C)} \cdot \binom{\bE_u(u, C)}{ \allow_u(u, C)} \right) \cdot  \abs{\Partial_u[J'^+,a,b,\barc]} \nonumber\\
         & = \sum_{\substack{\allow_u : \\J'=(\interp{f}_u,\bE_u, \allow_u, \pdeg_u)\in \cJ'}} \incexc(\bE_u,\allow_u)  \cdot  \abs{\Partial_u[J'^+,a,b,\barc]} \label{eq:forget:alwu}
    \end{align}
    
    We are now ready to prove \eqref{eq:forget-node-non-transformed}:
    \begin{align*}
        \cP_u(I) & \stackrel{\eqref{eq:forget:cT}}{=} \sum_{\substack{J =   (\interp{f}_u,\bE_u,\pdeg_u)\in \cJ\\ a,b\in \bN, \barc\in \bN^\ell}} \abs{\Partial_u[J,\cT_{\bE_u},a,b,\barc]} \cdot \alpha^{a + a_{\bE_u}}\cdot \beta^{b} \cdot \gamma^{\barc + \barc_{\interp{f}_u}} \\
        %&\stackrel{\eqref{eq:forget:inclusion:exclusion}}{=} \sum_{\substack{(\interp{f}_u,\bE_u,\pdeg_u)\in \cJ\\ a,b\in \bN, \barc\in \bN^\ell}} \left(\sum_{\cT\subseteq \cT_{\bE_u}} (-1)^{\abs{\cT}} \abs{\barpart_u[J,\cT_{\bE_u},a,b,\barc]} \right)  \cdot \alpha^{a + a_{\bE_u}}\cdot \beta^{b} \cdot \gamma^{\barc + \barc_{\interp{f}_u}} \\
         & \stackrel{\eqref{eq:forget:alwu}}{=} 
        \sum_{\substack{(\interp{f}_u,\bE_u,\pdeg_u)\in \cJ\\ a,b\in \bN, \barc\in \bN^\ell}} 
        \sum_{\substack{\allow_u : \\J'=(\interp{f}_u,\bE_u, \allow_u, \pdeg_u)\in \cJ'}} \incexc(\bE_u,\allow_u)  \cdot  \abs{\Partial_u[J'^+,a,b,\barc]} 
        \cdot \alpha^{a + a_{\bE_u}}\cdot \beta^{b} \cdot \gamma^{\barc + \barc_{\interp{f}_u}}\\
         =& \sum_{(\interp f_u,\bE_u,\allow_u,\pdeg_u)\in \cJ'} \incexc(\bE_u,\allow_u) \cdot  \cP_{u}[\interp f_u^+, \bE_u^+, \allow_u^+,\pdeg_u^+, \False] \cdot \alpha^{a_{\bE_u}} \cdot \gamma^{\barc_{\interp f_u}}.
    \end{align*}
    
    Now we will use this result to prove the ``DFT-transformed'' equality~\eqref{eq:forget-node-transformed}.
    In the following, given $J'=(\interp f_u,\bE_u,\allow_u,\pdeg_u)\in \cJ'$, we denote by $M(J')$ the product $\incexc(\bE_u,\allow_u) \cdot \alpha^{a_{\bE_u}} \cdot \gamma^{\barc_{\interp f_u}}$.
    We remind the reader that we aim at proving that
    \begin{align*}
        &\cQ^p_{u}(I) \equiv_p \sum_{J'=(\interp f_u,\bE_u,\allow_u,\pdeg_u)\in \cJ'}  M(J') \cdot  
        \frac{1}{\pphi^{|\Cat|}}\sum_{\hat q \in \pdegdom(\{u\})} \omega(p)^{-\hat q \cdot \pdeg_u} \cdot \cQ^p_{u}[\interp f_u^+, \bE_u^+, \allow_u^+, \pdeg \cup \hat q, \False] 
    \end{align*}
    for every prime number $p$ such that $\FF^*_p$ admits a $\pphi$-th root of unity. For the remainder of the proof, we fix such an arbitrary prime number $p$.
    In the following $\equiv$ denotes $\equiv_p$ and $\omega$ denotes $\omega(p)$.
    We have:
    \begin{align*}
        \cQ^p_{u}(I) & \stackrel{\eqref{eq:def-q}}{\equiv} \sum_{q \in \pdegdom(\tail(u))} \omega^{q \cdot \pdeg} \cdot \cP_u(\interp f,\bE,\allow,q,\False)\\
        & \stackrel{\eqref{eq:forget-node-non-transformed}}{\equiv} 
        \sum_{q \in \pdegdom(\tail(u))} \omega^{q \cdot \pdeg} \cdot \Bigl(\sum_{J'=(\interp f_u,\bE_u,\allow_u,\pdeg_u)\in \cJ'} M(J') \cdot 
        \cP_{u}[\interp f_u^+, \bE_u^+, \allow_u^+, q \cup \pdeg_u, \False] \Bigr) \\
        % &\equiv \sum_{q \in \pdegdom(\tail(u))} \sum_{(\interp f_u,\bE_u,\allow_u,\pdeg_u)\in \cJ'} \Bigl(\omega^{q \cdot \pdeg} \cdot \incexc(\bE_u,\allow_u) \cdot \\
        % &\cP_{u}[\interp f_u^+, \bE_u^+, \allow_u^+, q \cup \pdeg_u, \False] \cdot \alpha^{a_{\bE_u}} \cdot \gamma^{\barc_{\interp f_u}} \Bigr) \equiv \\
        &\equiv \sum_{J'=(\interp f_u,\bE_u,\allow_u,\pdeg_u)\in \cJ'} M(J') \cdot \sum_{q \in \pdegdom(\tail(u))} \Bigl(\omega^{q \cdot \pdeg} \cdot \cP_{u}[\interp f_u^+, \bE_u^+, \allow_u^+, q \cup \pdeg_u, \False] \Bigr)
    \end{align*}

    Now to prove \eqref{eq:forget-node-transformed}, it suffices to show that for each $(\interp f_u,\bE_u,\allow_u,\pdeg_u)\in \cJ'$, we have
    \begin{equation}
        \sum_{q \in \pdegdom(\tail(u))} \omega^{q \cdot \pdeg} \cdot \cP_{u}[\interp f_u^+, \bE_u^+, \allow_u^+, q \cup \pdeg_u, \False] \equiv \frac{1}{\pphi^{|\Cat|}}\sum_{\hat q \in \pdegdom(\{u\})} \omega^{-\hat q \cdot \pdeg_u} \cdot \cQ^p_{u}[\interp f_u^+, \bE_u^+, \allow_u^+, \pdeg \cup \hat q, \False]. \label{eq:helpful-eq-for-dft}
    \end{equation}
    Let $(\interp f_u,\bE_u,\allow_u,\pdeg_u)\in \cJ'$.
    To show this, we start by rewriting the right-hand side of this equality which we denote by $\RHS$.
    We rely on the fact that $\tail[u] = \tail(u) \cup \{u\}$ implies that 
    \begin{equation}
        \pdegdom(\tail[u]) = \{\tilde q \mid \tilde q \colon \tail[u] \times \Cat \to \Fp\} 
        %= \{\tilde q_u \cup q \mid \tilde q_u \colon \{u\} \times \Cat \to \Fp, q \colon \tail(u) \times \Cat \to \Fp\} 
        = \{\tilde q_u \cup q \mid \tilde q_u \in \pdegdom(\{u\}), q \in \pdegdom(\tail(u))\}. \label{eq:rewriting-pdegdom}
    \end{equation}
    Then we have:
    \begin{align*}
        \mathsf{RHS} \stackrel{Def.}{\equiv_p} &\frac{1}{\pphi^{|\Cat|}}\sum_{\hat q \in \pdegdom(\{u\})} \omega^{-\hat q \cdot \pdeg_u} \sum_{\tilde q \in \pdegdom(\tail[u])} \omega^{\tilde q \cdot (\pdeg \cup \hat q)} \cdot \cP_u[\interp f_u^+, \bE_u^+, \allow_u^+, \tilde q, \False] \\
        \stackrel{\eqref{eq:rewriting-pdegdom}}{\equiv_p} 
        &\frac{1}{\pphi^{|\Cat|}}\sum_{\hat q \in \pdegdom(\{u\})} \omega^{-\hat q \cdot \pdeg_u} \sum_{\substack{\tilde q_u \in \pdegdom(\{u\}), \\ q \in \pdegdom(\tail(u))}} \omega^{(q \cup \tilde q_u) \cdot (\pdeg \cup \hat q)} \cdot \cP_u[\interp f_u^+, \bE_u^+, \allow_u^+, q \cup \tilde q_u, \False] \\\equiv_p 
        &\frac{1}{\pphi^{|\Cat|}}\sum_{\hat q \in \pdegdom(\{u\})} \omega^{-\hat q \cdot \pdeg_u} \sum_{\substack{\tilde q_u \in \pdegdom(\{u\}), \\ q \in \pdegdom(\tail(u))}} \omega^{q \cdot \pdeg} \cdot \omega^{\tilde q_u \cdot \hat q} \cdot \cP_u[\interp f_u^+, \bE_u^+, \allow_u^+, q \cup \tilde q_u, \False]\\ \equiv_p 
        &\frac{1}{\pphi^{|\Cat|}}\sum_{\hat q \in \pdegdom(\{u\})} \omega^{-\hat q \cdot \pdeg_u} \sum_{\tilde q_u \in \pdegdom(\{u\})} \omega^{\tilde q_u \cdot \hat q} \sum_{q \in \pdegdom(\tail(u))} \omega^{q \cdot \pdeg} \cdot \cP_u[\interp f_u^+, \bE_u^+, \allow_u^+, q \cup \tilde q_u, \False]
    \end{align*}
    Now we define the function $\Delta \colon \pdegdom(\{u\}) \to \FF_{p}[\alpha, \beta, \gamma_1, \dots, \gamma_\ell]$
    via
    \[
        \Delta(\tilde q_u) \defeq \sum_{q \in \pdegdom(\tail(u))} \omega^{q \cdot \pdeg} \cdot \cP_u[\interp f_u^+, \bE_u^+, \allow_u^+, q \cup \tilde q_u, \False] \pmod{p}.
    \]
    
    By inserting this definition in the above equality we obtain
    \begin{align*}
        \mathsf{RHS} \equiv_p &
        \frac{1}{\pphi^{|\Cat|}}\sum_{\hat q \in \pdegdom(\{u\})} \omega^{-\hat q \cdot \pdeg_u} \sum_{\tilde q_u \in \pdegdom(\{u\})} \omega^{\tilde q_u \cdot \hat q} \cdot \Delta(\tilde q_u) \\
        \equiv_p 
        &\frac{1}{\pphi^{|\Cat|}}\sum_{\hat q \in \pdegdom(\{u\})} \omega^{-\hat q \cdot \pdeg_u} \cdot \bigl(\DFT_{\{u\} \times \Cat, \pphi, p} (\Delta)\bigr)(\hat q)\\
        \equiv_p 
        &\frac{1}{\pphi^{|\{u\} \times \Cat|}}\sum_{\hat q \in \pdegdom(\{u\})} \omega^{-\hat q \cdot \pdeg_u} \cdot \bigl(\DFT_{\{u\} \times \Cat, \pphi, p} (\Delta)\bigr)(\hat q) \\
        \equiv_p 
        &\Bigl(\DFT^{-1}_{\{u\} \times \Cat, \pphi, p}\bigl(\DFT_{\{u\} \times \Cat, \pphi, p}(\Delta)\bigr)\Bigr)(\pdeg_u)\\
        \stackrel{\ref{thm:inverse-dft-polynomials}}{\equiv_p} & \Delta(\pdeg_u)
        \equiv_p 
        \sum_{q \in \pdegdom(\tail(u))} \omega^{q \cdot \pdeg} \cdot \cP_u[\interp f_u^+, \bE_u^+, \allow_u^+, q \cup \pdeg_u, \False]
    \end{align*}
    proving the desired equality \eqref{eq:helpful-eq-for-dft} and thus concluding the proof.
\end{proof}

\subsection{Join equality}\label{subsec:join}
And finally we prove the equality relating the partial solutions of $\tss[u]$ and of $\tss(v_1), \dots, \tss(v_t)$ for a node $u$ having children $v_1, \dots, v_t$.
For an index $I=(\interp f, \bE, \allow, \pdeg, \False)\in \Index[u]$ and an assignment $\pdeg' \in \pdegdom(\tail[u])$ we define the index $I[\pdeg \mapsto \pdeg'] = (\interp f, \bE, \allow, \pdeg', \False) \in \Index[u]$.
\begin{lemma}\label{lem:modelcheck:join}
    Let $u$ be an internal node of $T$ with children $v_1,\dots,v_t$.
    Then we have 
    \begin{equation}\label{eq:p-join}
        \cP_u[I] = 
        \sum_{\substack{\pdeg_1, \dots, \pdeg_t \in \pdegdom(\tail[u]) \colon\\ \pdeg_1 + \dots + \pdeg_t \equiv_{\pphi} \pdeg}} \prod_{i=1}^t \cP_{v_i}%(\interp f, \textcolor{red}{\bE}, \allow, \pdeg_i, \False)
        (I[\pdeg \mapsto \pdeg_i])
    \end{equation}
    %for every $I=(\interp f, \textcolor{red}{\bE}, \allow, \pdeg, \False)\in \Index[u]$.
    %Further, it holds that
    and
    \begin{equation}\label{eq:q-join}
        \cQ^p_u[I] \equiv_p \prod_{i=1}^t \cQ^p_{v_i}(I) 
    \end{equation}
    for every $I=(\interp f, \bE, \allow, \pdeg, \False)\in \Index[u]$.
\end{lemma}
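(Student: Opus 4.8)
The plan is to first establish a purely combinatorial identity at the level of partial solutions, then read off \eqref{eq:p-join} from it by matching coefficients, and finally deduce \eqref{eq:q-join} from \eqref{eq:p-join} via the discrete Fourier transform. Concretely, I would show that for all $a,b\in\bN$ and $\barc\in\bN^\ell$,
\[
\abs{\Partial_u[I,a,b,\barc]}\;=\;\sum_{\substack{\pdeg_1,\dots,\pdeg_t\in\pdegdom(\tail[u])\\ \pdeg_1+\dots+\pdeg_t\equiv_\pphi\pdeg}}\ \sum_{\substack{a_1+\dots+a_t=a\\ b_1+\dots+b_t=b\\ \barc_1+\dots+\barc_t=\barc}}\ \prod_{i=1}^t\abs{\Partial_{v_i}(I[\pdeg\mapsto\pdeg_i],a_i,b_i,\barc_i)},
\]
by exhibiting a bijection between the left-hand side and the disjoint union on the right.

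For this bijection I would rely on the structural facts recalled in \Cref{subsec:elimtrees}: the sets $\subtree[v_1],\dots,\subtree[v_t]$ partition $\subtree(u)$, the sets $\sheaf(v_1),\dots,\sheaf(v_t)$ partition $\sheaf(u)$ (hence the arc sets $\vec{\sheaf(v_i)}$ partition $\vec{\sheaf(u)}$), and $\tail(v_i)=\tail[u]$ for every $i$, so $\broom[u]=\tail[u]\cup\bigsqcup_i\subtree[v_i]=\bigcup_i\broom(v_i)$ with the brooms $\broom(v_i)$ pairwise intersecting exactly in $\tail[u]$. One further observation (from \Cref{obs:sheaf-end-vertices}) is crucial: for $v\in\subtree[v_i]$, every arc of $\vec G_\phi$ incoming into some copy $v^j$ and lying in $\vec{\sheaf(u)}$ already lies in $\vec{\sheaf(v_i)}$, since it comes from an edge $zv\in\sheaf(v_{i'})$ and then $v\in V(\sheaf(v_{i'}))\subseteq\broom(v_{i'})$ forces $i'=i$. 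Given $P=(\interp{\hat f},\interp{\hat g},\hat\bE,\vec\cW)\in\Partial_u[I]$, I set $\vec\cW_i\defeq\vec\cW_{|_{\sheaf(v_i)}}$, $P_i\defeq(\interp{\hat f}_{|_{\broom(v_i)}},\interp{\hat g}_{|_{\sheaf(v_i)}},\hat\bE_{|_{\broom(v_i)}},\vec\cW_i)$, and $\pdeg_i\defeq(\pdeg_{\vec\cW_i})_{|_{\tail[u]}}$; conversely, given $\pdeg_1,\dots,\pdeg_t$ with $\sum_i\pdeg_i\equiv_\pphi\pdeg$ and $P_i\in\Partial_{v_i}(I[\pdeg\mapsto\pdeg_i])$, I glue componentwise, which is well defined because by \ref{item:partial:fconsistent} and \ref{item:partial:econsistent} all the $\interp{\hat f}_i$ (resp.\ $\hat\bE_i$) agree with $\interp f$ (resp.\ $\bE$) on $\tail[u]$, and $\bigcup_i\vec\cW_i$ is simple since the $\vec{\sheaf(v_i)}$ are pairwise disjoint. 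I would then verify that these two maps are mutually inverse and that $P\in\Partial_u[I]$ if and only if $P_i\in\Partial_{v_i}(I[\pdeg\mapsto\pdeg_i])$ for all $i$: conditions \ref{item:partial:respect-domain}, \ref{item:partial:edge:consistent}, \ref{item:partial:respect:subtree:hatbE}, \ref{item:partial:respect:broom:hatbE}, \ref{item:partial:fconsistent}, \ref{item:partial:econsistent}, \ref{item:partial:respect:allow} decompose cleanly along the partition using the observation above; \ref{item:partial:pdegconsistent} is equivalent to $\sum_i\pdeg_i\equiv_\pphi\pdeg$, since for $v\in\tail[u]$ every arc into a copy of $v$ lies in $\vec{\sheaf(u)}$ so $\pdeg_{\vec\cW}(v,\cdot)\equiv_\pphi\sum_i\pdeg_{\vec\cW_i}(v,\cdot)$; and \ref{item:partial:respect:False} decomposes via \Cref{claim:union-false-set} applied to the partitions $\subtree(u)=\bigsqcup_i\subtree[v_i]$ and $\sheaf(u)=\bigsqcup_i\sheaf(v_i)$ together with \Cref{claim:restriction-vs-false-set}, after checking that $(\pdeg_{\vec\cW})_{|_{\subtree[v_i]}}=(\pdeg_{\vec\cW_i})_{|_{\subtree[v_i]}}$. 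Finally, reading off \cref{def:modelcheck:constants}, $a_u[P]=\sum_i a_{v_i}(P_i)$ and $\barc_u[P]=\sum_i\barc_{v_i}(P_i)$ are immediate from $\subtree(u)=\bigsqcup_i\subtree[v_i]$ (and $\interp{\hat g}(X_i)=\bigsqcup_j\interp{\hat g}_j(X_i)$ for edge variables), while $b_u[P]=\sum_i b_{v_i}(P_i)$ uses both that the $\vec\cW_i(C)$ partition $\vec\cW(C)$ and the observation that arcs into subtree-copies live in a single sheaf whereas arcs into tail-copies split over all the $\vec\cW_i(C)$.

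Multiplying the combinatorial identity by $\alpha^a\beta^b\gamma^\barc$, summing over $a,b,\barc$, and factoring the product over $i$ gives exactly \eqref{eq:p-join}. For \eqref{eq:q-join}, fix a prime $p$ as in the statement, put $D\defeq\tail[u]\times\Cat$ and $r\defeq\pphi$, and for each $i$ define $h^i\colon(D\to\bZ_r)\to\FF_p[\alpha,\beta,\gamma_1,\dots,\gamma_\ell]$ by $h^i(q)\equiv_p\cP_{v_i}(\interp f,\bE,\allow,q,\False)$, so that $\cQ^p_{v_i}(I)\equiv_p\DFT_{D,r,p}(h^i)(\pdeg)$ (here $I$ is viewed as an index of $\tss(v_i)$ via $\tail(v_i)=\tail[u]$). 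Applying \eqref{eq:p-join} with $\pdeg$ replaced by an arbitrary $q\in\pdegdom(\tail[u])$ yields $\cP_u[\interp f,\bE,\allow,q,\False]\equiv_p\sum_{q_1+\dots+q_t\equiv_r q}\prod_i h^i(q_i)$, which by the polynomial version of \Cref{thm:fourier-convolution-result} equals $(\DFT^{-1}_{D,r,p}(\prod_i\DFT_{D,r,p}(h^i)))(q)$; hence the function $q\mapsto\cP_u[\interp f,\bE,\allow,q,\False]$ is congruent mod $p$ to $\DFT^{-1}_{D,r,p}(\prod_i\DFT_{D,r,p}(h^i))$. Plugging this into the definition of $\cQ^p_u[I]$ and using the polynomial version of \Cref{thm:inverse-dft} ($\DFT_{D,r,p}\circ\DFT^{-1}_{D,r,p}=\mathrm{id}$) gives
\[
\cQ^p_u[I]\equiv_p\sum_{q\in\pdegdom(\tail[u])}\omega(p)^{q\cdot\pdeg}\cdot\cP_u[\interp f,\bE,\allow,q,\False]\equiv_p\prod_{i=1}^t\DFT_{D,r,p}(h^i)(\pdeg)\equiv_p\prod_{i=1}^t\cQ^p_{v_i}(I),
\]
as required. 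I expect the main obstacle to be the bookkeeping in the bijection, specifically the proof that \ref{item:partial:respect:False} decomposes along the children (matching the restricted mod-degree $(\pdeg_{\vec\cW})_{|_{\subtree[v_i]}}$ with the intrinsic mod-degree $\pdeg_{\vec\cW_i}$ on $\subtree[v_i]$) and the additivity of $b$, which is the one place where the asymmetric treatment of arcs into tail-copies versus subtree-copies genuinely enters; everything after the combinatorial identity is a formal manipulation of the discrete Fourier transform.
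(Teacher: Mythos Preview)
Your proposal is correct and follows essentially the same route as the paper: establish the coefficient-level identity by a restriction/gluing correspondence between $\Partial_u[I,a,b,\barc]$ and tuples of compatible child partial solutions, then deduce \eqref{eq:q-join} from \eqref{eq:p-join} via the DFT convolution theorem (\Cref{thm:fourier-convolution-result-polynomials}) and inversion (\Cref{thm:inverse-dft-polynomials}). The only cosmetic difference is that the paper verifies the cardinality equality by exhibiting two separate injections $h_1,h_2$ rather than arguing directly that restriction and gluing are mutually inverse, but the underlying maps and the condition-by-condition checks (including the use of \Cref{claim:union-false-set} for \ref{item:partial:respect:False} and the observation that arcs into subtree copies lie in a single $\vec{\sheaf(v_i)}$) are the same as yours.
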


\begin{proof} 
    We start by proving the ``non-transformed'' equality \eqref{eq:p-join}.
    So let $I=(\interp f, \bE, \allow, \pdeg, \False)\in \Index[u]$ be an arbitrary but fixed index.
    First, we rewrite both sides of the desired equality.
    For the right-hand side we have:
    \begin{align*}
        &\sum_{\substack{\pdeg_1, \dots, \pdeg_t \in \pdegdom(\tail[u]) \colon \\ \pdeg_1 + \dots + \pdeg_t \equiv_{\phi} \pdeg}} \prod_{i=1}^t \cP_{v_i}(I[\pdeg \mapsto \pdeg_i]) = \\        
        &\sum_{\substack{\pdeg_1, \dots, \pdeg_t \in \pdegdom(\tail[u]) \colon \\ \pdeg_1 + \dots + \pdeg_t \equiv_{\pphi} \pdeg}} \prod_{i=1}^t \sum_{a,b,\barc \in \bN} |\Partial_{v_i}(I[\pdeg \mapsto \pdeg_i], a, b, \barc)| \alpha^a \beta^b \gamma^{\barc} = \\
        &\sum_{\substack{a_1, \dots, a_t \in \bN, \\ b_1, \dots, b_t \in \bN, \\ \barc_1, \dots, \barc_t \in \bN^\ell, \\ \pdeg_1, \dots, \pdeg_t \in \pdegdom(\tail[u]) \colon \\ \pdeg_1 + \dots + \pdeg_t \equiv_{\pphi} \pdeg}} \prod_{i=1}^{t} |\Partial_{v_i}(I[\pdeg \mapsto \pdeg_i], a_i, b_i, \barc_i)| \alpha^{a_i} \beta^{b_i} \gamma^{\barc_i} = \\
        &\sum_{\substack{a_1, \dots, a_t \in \bN, \\ b_1, \dots, b_t \in \bN, \\ \barc_1, \dots, \barc_t \in \bN^\ell, \\ \pdeg_1, \dots, \pdeg_t \in \pdegdom(\tail[u]) \colon \\ \pdeg_1 + \dots + \pdeg_t \equiv_{\pphi} \pdeg}} \alpha^{\sum_{i=1}^t a_i} \beta^{\sum_{i=1}^t b_i} \gamma^{\sum_{i=1}^t \barc_i} \prod_{i=1}^t |\Partial_{v_i}(I[\pdeg \mapsto \pdeg_i], a_i, b_i, \barc_i)| = \\
        & \sum_{a,b \in \bN, \barc \in \bN^\ell} \alpha^{a} \beta^{b} \gamma^{\barc} \sum_{\substack{a_1 + \dots + a_t = a, \\ b_1 + \dots + b_t = b, \\ \barc_1 + \dots + \barc_t = \barc, \\ \pdeg_1, \dots, \pdeg_t \in \pdegdom(\tail[u]) \colon \\ \pdeg_1 + \dots + \pdeg_t \equiv_{\pphi} \pdeg}} \prod_{i=1}^t |\Partial_{v_i}(I[\pdeg \mapsto \pdeg_i], a_i, b_i, \barc_i)| \\
    \end{align*}
    where %$I[\pdeg \mapsto \pdeg_i]$ denotes the index $(\interp f, \allow, \pdeg_i, \False)$, and
    $a_1 + \dots + a_t = a$ is a shortcut for $a_1, \dots, a_t \in \bN \colon \sum_{i=1}^t a_i = a$, also $b_1 + \dots + b_t = b$ is a shortcut for $b_1, \dots, b_t \in \bN \colon \sum_{i=1}^t b_i = b$, and $\barc_1 + \dots + \barc_t = \barc$ is a shortcut for $\barc_1, \dots, \barc_t \in \bN^\ell \colon \sum_{i=1}^t \barc_i = \barc$.
    And the left-hand side of \eqref{eq:p-join} can be rewritten as follows:
    \begin{align*}
        \cP_u[I] = \sum_{a,b\in\bN, \barc \in \bN^\ell}  \abs{\Partial_u[I,a,b,\barc]} \cdot \alpha^a\beta^b\gamma^{\barc}.
    \end{align*}
    So to prove \eqref{eq:p-join}, it suffices to show that we have:
    \begin{equation}\label{eq:join-proof-rewritten}
    \abs{\Partial_u[I,a,b,\barc]} = \sum_{\substack{a_1 + \dots + a_t = a, \\ b_1 + \dots + b_t = b, \\ \barc_1 + \dots + \barc_t = \barc, \\ \pdeg_1, \dots, \pdeg_t \in \pdegdom(\tail[u]) \colon \\ \pdeg_1 + \dots + \pdeg_t \equiv_{\pphi} \pdeg}} \prod_{i=1}^t |\Partial_{v_i}(I[\pdeg \mapsto \pdeg_i], a_i, b_i, \barc_i)|
    \end{equation}
    for all $a,  b \in \bN, {\barc} \in \bN^\ell$.
    To obtain the desired equality, we fix arbitrary $a,  b \in \bN, {\barc} \in \bN^\ell$, define the set
    \begin{align*}
        \RHS \defeq \Bigl\{&(P_1, \dots, P_t) \mid \\
        &\exists \pdeg_1, \dots, \pdeg_t \in \pdegdom(\tail[u]), a_1, \dots, a_t, b_1, \dots, b_t \in \bN, \barc_1, \dots, \barc_t \in \bN^\ell \colon \\
        &\sum_{i=1}^{t} \pdeg_i \equiv_{\pphi} \pdeg, \sum_{i=1}^{t} a_i = a, \sum_{i=1}^{t} b_i = b, \sum_{i=1}^{t} \barc_i = \barc, \\
        &P_i \in \Partial_{v_i}( I[\pdeg \mapsto \pdeg_i], a_i, b_i, \barc) \text{ for every } i \in [t]\Bigr\}.\\
    \end{align*}
    Observe that by \cref{def:modelcheck:index:compatibility} and \cref{def:modelcheck:constants} the sets $\Partial_{v_i}( I[\pdeg \mapsto \pdeg'], a', b', \barc')$ and $\Partial_{v_i}( I[\pdeg \mapsto \pdeg''], a'', b'', \barc'')$ are disjoint for every pair $(\pdeg', a', b', \barc') \neq (\pdeg'', a'', b'', \barc'') \in \pdegdom(\tail[u]) \times \bN^{\ell+2}$ and every $i \in [t]$.
    Therefore, the right-hand side of \eqref{eq:join-proof-rewritten} is equal to the cardinality of $\RHS$.
    So to prove \eqref{eq:join-proof-rewritten} it remains to provide an injective mapping from $\Partial_u[I,a,b,\barc]$ to $\RHS$ as well as an injective mapping from $\RHS$ to $\Partial_u[I,a,b,\barc]$.

    Observe that the tail of each triad in $\tss[u], \tss(v_1), \dots, \tss(v_t)$ is equal to $\tail[u]$, the sets $\sheaf(v_1), \dots, \sheaf(v_t)$ partition the set $\sheaf(u)$, the sets $\subtree(v_1), \dots, \subtree(v_t)$ partition the set $\subtree(u)$, and we have $\broom[u] = \cup_{i \in [t]} \broom[v_i]$.
    We will rely on these properties in the following.
    \begin{claim}
        There exists an injective mapping $h_1$ from $\Partial_u[I,a,b,\barc]$ to $\RHS$.
    \end{claim}
    \begin{claimproof}
    For the one direction, we map every partial solution $(\interp{\hat f}, \interp{\hat g}, \hat \bE, \vec\cW)$ in $\Partial_u[I,a,b,\barc]$ to the tuple
    %\[
    %    (\interp{\hat f}_{|_{\broom[v_i]}}, \interp{\hat g}_{|_{\sheaf(v_i)}}, \hat \bE_{|_{\broom[v_i]}}, \vec\cW_{|_{\sheaf(v_i)}})_{i \in [t]}.
    %\]
    $h_1(\interp{\hat f}, \interp{\hat g}, \hat \bE, \vec\cW) = (P_1,\dots,P_t)$
    where for every $i \in [t]$, $P_i=(\interp{\hat f}_i, \interp{\hat g}_i, \hat \bE_i, \vec\cW_i)$ such that
    \begin{itemize}
        \item $\interp{\hat f}_i$ is the restriction $\interp{\hat f}_{|_{\broom[v_i]}}$ of $\psf$ on $\broom[v_i]$, %an vertex interpretation of $\phi$ on $\broom[v_i]$ such that $\interp{\hat f}_i = \interp{\hat f}_{|_{\broom[v_i]}}$ 
        \item $\interp{\hat g}_i$ is the restriction $\interp{\hat g}_{|_{\sheaf(v_i)}}$ of $\psg$ on $\sheaf(v_i)$, %an edge interpretation of $\phi$ on $\sheaf(v_i)$ such that $\interp{\hat g}_i = \interp{\hat g}_{|_{\sheaf(v_i)}}$ 
        %$\interp{\hat g}_i(Y) = \interp{\hat g}(Y) \cap \sheaf(v_i)$ for every $Y \in \var^E(\phi)$
        \item $\pse_i$ is the restriction $\pse_{|_{\broom[v_i]}}$ of $\pse$ on $\broom[v_i]$, % neighborhood function on $\broom[v_i]$ such that $\pse_i = \pse_{|_{\broom[v_i]}}$,
        \item and $\vec\cW_i$ is the restriction $\vec\cW_{|_{\sheaf(v_i)}}$ of $\psw$ to $\sheaf(v_i)$. %such that $\vec\cW_i = \vec\cW_{|_{\sheaf(v_i)}}$. %$\vec\cW_i(C) = \vec\cW(C) \cap \vec\sheaf(v_i)$ for every $C \in \Cat$.
    \end{itemize}
    Before showing that this is indeed a mapping into $\RHS$, let us start with injectivity since it is simpler.
    For this we consider two elements $P = (\interp{\hat f}, \interp{\hat g}, \hat \bE, \vec\cW)$ and $Q = (\interp{\hat f}', \interp{\hat g}', \hat \bE', \vec\cW')$ in $\Partial_u[I,a,b,\barc]$ such that $h_1(P) = h_1(Q)$.
    This implies the following:
    \begin{itemize}
        %\item $\interp{\hat f} = \interp{\hat f}'$.
        \item For every $X \in \var^V(\phi)$ we have 
        \begin{align*}
            \interp{\hat f}(X) = &\interp{\hat f}(X) \cap \broom[u] = \interp{\hat f}(X) \cap (\broom[v_1] \cup \dots \cup \broom[v_t]) \\
            =&(\interp{\hat f}(X) \cap \broom[v_1]) \cup \dots \cup (\interp{\hat f}(X) \cap \broom[v_t]) = \interp{\hat f}_1(X) \cup \dots \cup \interp{\hat f}_t(X) \\
            \stackrel{h_1(P_1) = h_1(P_2)}{=} &\interp{\hat f}'_1(X) \cup \dots \cup \interp{\hat f}'_t(X) = (\interp{\hat f}'(X) \cap \broom[v_1]) \cup \dots \cup (\interp{\hat f}'(X) \cap \broom[v_t]) \\
            =&\interp{\hat f}'(X) \cap (\broom[v_1] \cup \dots \cup \broom[v_t]) =\interp{\hat f}'(X) \cap \broom[u] =\interp{\hat f}'(X).
        \end{align*}
        Hence $\psf = \psf'$.
        \item For every $Y \in \var^E(\phi)$ we have
        \begin{align*}
            \interp{\hat g}(Y) = &\interp{\hat g}(Y) \cap \sheaf(u) =\interp{\hat g}(Y) \cap (\sheaf(v_1) \cup \dots \cup \sheaf(v_t)) \\
            =&(\interp{\hat g}(Y) \cap \sheaf(v_1)) \cup \dots \cup (\interp{\hat g}(Y) \cap \sheaf(v_t)) =\interp{\hat g}_1(Y) \cup \dots \cup \interp{\hat g}_t(Y)\\
            \stackrel{h_1(P_1) = h_1(P_2)}{=} &\interp{\hat g}_2(Y) \cup \dots \cup \interp{\hat g}_t'(Y) = (\interp{\hat g}'(Y) \cap \sheaf(v_1)) \cup \dots \cup (\interp{\hat g}'(Y) \cap \sheaf(v_t)) \\
            =&\interp{\hat g}'(Y) \cap (\sheaf(v_1) \cup \dots \cup \sheaf(v_t)) =\interp{\hat g}'(Y) \cap \sheaf(u) =\interp{\hat g}'(Y).
        \end{align*}
        Hence $\interp{\hat g} = \interp{\hat g}'$.
        \item %$\hat\bE = \hat\bE'$.
        For every vertex $z \in \subtree(u)$ there exists a (unique) index $i \in [t]$ with $z \in \subtree[v_i]$. Then for every $C \in \Cat$ we have
        \[
            \pse(z, C) = \pse_i(z, C) \stackrel{h_1(P_1) = h_1(P_2)}{=} \pse'_i(z, C) = \pse'(z, C).
        \]
        And for every vertex $z \in \tail[u]$ and every $C \in \Cat$ we have $\pse(z, C) = \bE(z, C) = \hat\bE'(z, C)$ since both $P_1$ and $P_2$ satisfy \ref{item:partial:econsistent} for $I$.
        So we get $\pse_1 = \pse_2$.
        \item And for every $C \in \Cat$ we have 
        \begin{align*}
            \vec\cW(C) = &\vec\cW(C) \cap \vec\sheaf(u) = \vec\cW(C) \cap (\vec\sheaf(v_1) \cup \dots \cup \vec\sheaf(v_t)) \\
            =&(\vec\cW(C) \cap \vec\sheaf(v_1)) \cup \dots \cup (\vec\cW(C) \cap \vec\sheaf(v_t)) =\vec\cW_1(C) \cup \dots \cup \vec\cW_t(C) \\
            \stackrel{h_1(P_1) = h_1(P_2)}{=}&\vec\cW'_1(C) \cup \dots \cup \vec\cW'_t(C) =(\vec\cW'(C) \cap \vec\sheaf(v_1)) \cup \dots \cup (\vec\cW'(C) \cap \vec\sheaf(v_t)) \\
            =&\vec\cW'(C) \cap (\vec\sheaf(v_1) \cup \dots \cup \vec\sheaf(v_t)) 
            =\vec\cW'(C) \cap \vec\sheaf(u) 
            =\vec\cW'(C)
        \end{align*}
        Hence, $\vec\cW = \vec\cW'$.
    \end{itemize}
    Altogether, we obtain $P_1 = P_2$ implying the injectivity of $h_1$.
    
    It remains to show that for every $P = (\psf, \psg, \pse, \psw) \in \Partial_u[I,a,b,\barc]$, its image $h_1(P)$ belongs to $\RHS$.
    For this, we first define the following values for every $i \in [t]$ and every $j \in [\ell]$: %, and every $j' \in [\ell'+1, \ell]$:
    \[
        a_i = \sum_{C \in \Cat} \sum_{\substack{v \in \subtree[v_i] \colon \\ \bE(v, C) < \dphi}} \hat \bE(v, C),
    \]
    \[
        b_i = \sum_{C \in \Cat} \sum_{\substack{ w\in \broom[v_i] \\ \hat \bE(w,C) < \dphi}}  \abs{ \{(z^1,w^j)\in \vec\cW_i(C) \}}
    \]
    \[
        c^j_i =
        \begin{cases}
            \abs{\interp{\hat f}(X_j)\cap \subtree[v_i]} & \text{if } X_j \in \var^V(\phi) \\
            \abs{\interp{\hat g}(X_{j})\cap \sheaf(v_i)} & \text{if } X_j \in \var^E(\phi) \\
        \end{cases}
        .
    \]
    For $i \in [t]$, we then define the vector $\barc_i = (c_i^1, \dots, c_i^\ell) \in \bN^\ell$. 
    We then also have the following for every $i \in [t]$ and every $j \in [\ell]$: %, and every $j' \in [\ell'+1, \ell]$:
    \begin{equation}\label{eq:a-i}
        a_i = \sum_{C \in \Cat} \sum_{\substack{v \in \subtree[v_i] \colon \\ \bE_i(v, C) < \dphi}} \hat \bE_i(v, C),
    \end{equation}
    \begin{equation}\label{eq:b-i}
        b_i = \sum_{C \in \Cat} \sum_{\substack{ w\in \broom[v_i] \\ \hat \bE_i(w,C) < \dphi}}  \abs{ \{(z^1,w^j)\in \vec\cW_i(C) \}}
    \end{equation}
    \begin{equation}\label{eq:c-i}
        c^j_i =
        \begin{cases}
            \abs{\interp{\hat f}_i(X_j)\cap \subtree[v_i]} & \text{if } X_j \in \var^V(\phi) \\
            \abs{\interp{\hat g}_i(X_{j})} & \text{if } X_j \in \var^E(\phi) \\
        \end{cases}
        .
    \end{equation}
    
    Finally, for every $i \in [t]$ we define the mapping $\pdeg_i \colon \tail(v_i) \times \Cat \to \FF_{\pphi}$ to be equal to $\pdeg_{\vec\cW_i}$.

    We then have
    \begin{align*}
        \sum_{i = 1}^{t} a_i = &\sum_{i = 1}^{t} \sum_{C \in \Cat} \sum_{\substack{v \in \subtree[v_i] \colon \\ \bE(v, C) < \dphi}} \hat \bE(v, C)=\sum_{C \in \Cat} \sum_{i = 1}^{t} \sum_{\substack{v \in \subtree[v_i] \colon \\ \bE(v, C) < \dphi}} \hat \bE(v, C)  \\
        \stackrel{\subtree(u) = \subtree[v_1]\dot\cup \dots \dot\cup \subtree[v_t]}{=}&\sum_{C \in \Cat} \sum_{\substack{v \in \subtree(u) \colon \\ \bE(v, C) < \dphi}} \hat \bE(v, C) = a_u[P] \stackrel{P \in \Partial_u[I,a,b,\barc]}{=} a \\
    \end{align*}
    and
    \begin{align*}
        \sum_{i=1}^t b_i =
        &\sum_{i=1}^t \sum_{C \in \Cat} \sum_{\substack{ w\in \broom[v_i] \\ \hat \bE(w,C) < \dphi}}  \abs{ \{(z^1,w^j)\in \vec\cW_i(C) \}} \\
        \stackrel{\substack{\cW_i(C) \subseteq \vec\sheaf(v_i), \\ V(\sheaf(v_i)) \subseteq \broom[v_i] \subseteq \broom[u]}}{=}&\sum_{i=1}^t \sum_{C \in \Cat} \sum_{\substack{ w\in \broom[u] \\ \hat \bE(w,C) < \dphi}}  \abs{ \{(z^1,w^j)\in \vec\cW_i(C) \}}  \\
        =&\sum_{C \in \Cat}  \sum_{\substack{ w\in \broom[u] \\ \hat \bE(w,C) < \dphi}} \sum_{i=1}^t  \abs{ \{(z^1,w^j)\in \vec\cW_i(C) \}} \\
        \stackrel{\sheaf(u) = \sheaf(v_1) \dot\cup \dots \dot\cup \sheaf(v_t)}{=}&\sum_{C \in \Cat} \sum_{\substack{ w\in \broom[u] \\ \hat \bE(w,C) < \dphi}} \abs{ \{(z^1,w^j)\in \vec\cW(C) \}} = b_u[P] \stackrel{P \in \Partial_u[I,a,b,\barc]}{=} b.
    \end{align*}
    Also for $j \in [\ell]$, if $X_j \in \var^V(\phi)$, we have:
    \[
        \sum_{i=1}^t c_i^j = \sum_{i=1}^t \abs{\interp{\hat f}(X_j)\cap \subtree[v_i]} \stackrel{\subtree(u) = \subtree[v_1] \dot\cup \dots \dot\cup \subtree[v_t]}{=} \abs{\interp{\hat f}(X_j)\cap \subtree(u)} = c^j_u[P] \stackrel{P \in \Partial_u[I,a,b,\barc]}{=} c^j
    \]
    and if $X_j \in \var^E(\phi)$, we have
    \[
        \sum_{i=1}^t c_i^j = \sum_{i=1}^t \abs{\interp{\hat g}(X_{j})\cap \sheaf(v_i)} \stackrel{\sheaf(u) = \sheaf(v_1) \dot\cup \dots \dot\cup \sheaf(v_t)}{=} \abs{\interp{\hat g}(X_j)} = c^j_u[P] \stackrel{P \in \Partial_u[I,a,b,\barc]}{=} c^j.
    \]

    We recall that we have $\tail[u] = \tail(v_i)$ for every $i \in [t]$.
    Then the following holds for every $w \in \tail[u]$ and every $C \in \Cat$: 
    \begin{align*}
        \sum_{i = 1}^t \pdeg_i(w, C) = &\sum_{i = 1}^t \pdeg_{\vec\cW_i}(w, C) \equiv_{\pphi} \sum_{i = 1}^t |\{(z^1, w^j) \in \vec\cW_i(C)\}| \\
        %&\sum_{i = 1}^t |\{(z^1, w^j) \in \vec\cW_i(C)\}| \stackrel{\sheaf(u) = \sheaf(v_1) \dot\cup \dots \dot\cup \sheaf(v_t)} = \\
        \stackrel{\sheaf(u) = \sheaf(v_1) \dot\cup \dots \dot\cup \sheaf(v_t)}{=} &|\{(z^1, w^j) \in \vec\cW(C)\}| \stackrel{P \in \Partial_u[I,a,b,\barc]}{\equiv_{\pphi}} \pdeg(w, C).
    \end{align*}
    Therefore, we get $\pdeg \equiv_{\pphi} \pdeg_1 + \dots + \pdeg_t$.

    It thus remains to show that for every $i \in [t]$, the tuple $P_i = (\interp{\hat f}_i, \interp{\hat g}_i, \hat \bE_i, \vec\cW_i)$ belongs to the set $\Partial_{v_i}(I[\pdeg \mapsto \pdeg_i], a_i, b_i, \barc_i)$.
    So let $i \in [t]$ be arbitrary but fixed. 
    First, we show that $P_i$ is a partial solution for $\tss(v_i) = (\tail(v_i), \subtree[v_i], \sheaf(v_i))$.
    For this we will strongly rely on the fact that $P$ is a partial solution for $\tss[u]$.
    By definition, $P_i$ is \ref{item:partial:respect-domain}.
    %First of all, by definition $\hat f_{|_{\broom[v_i]}}$ is a vertex interpretation of $\var^V(\phi)$ on $\broom[v_i]$, $\hat g_{|_{\sheaf(v_i)}}$ is an edge interpretation on $\sheaf(v_i)$,     
    %$\hat \bE_{|_{\subtree[v_i]}}$ is a neighborhood function on $\subtree[v_i]$, and $\vec\cW_{|_{\sheaf(v_i)}}$ maps each element from $\Cat$ to a subset of $\sheaf(v_i)$.
    %Now we show that the two properties of a partial solution (see \cref{def:modelcheck:partial}) are satisfied by $P_i$, for this 
    Let now $C = (X, Y) \in \Cat$ be arbitrary but fixed:
    \begin{itemize}
        \item \ref{item:partial:edge:consistent}: Consider an arbitrary edge $zw \in \sheaf(v_i)$. %and arbitrary $(X, Y) \in \Cat$. 
        Note that we then have $w \in \broom[v_i]$.
        %Also recall that $\sheaf(v_1), \dots, \sheaf(v_t)$ partition $\sheaf(u)$
        Then there exists $j \in [d_\phi]$ with $(z^1, w^j) \in \vec \cW_i(X, Y)$ if and only if there exists $j \in [d_\phi]$ with $(z^1, w^j) \in \vec \cW(X, Y)$---this is because $\sheaf(v_q)$ is disjoint from $\sheaf(v_i)$ for every $q \neq i \in [t]$.
        Since $P$ is a partial solution for $\tss[u]$ this is equivalent to $z \in \interp{\hat f}(X)$ and $zw \in \psg(Y)$. %$zw \in \ip{Y}^{(G, \interp{\hat f}, \interp{\hat g})}$.
        We have $z \in \broom[v_i]$ and $zw \in \sheaf(v_i)$ so this is equivalent to $z \in \interp{\hat f}_i(X)$ and $zw \in \psg_i(Y)$. %$zw \in \ip{Y}^{(G, \interp{\hat f}, \interp{\hat g})}$
        %Now we distinguish on whether $Y \in \const(\phi)$ or $Y \in \var^E(\phi)$ holds:
        %\begin{itemize}
        %    \item If $Y \in \const(\phi)$, this is then equivalent to $z \in \interp{\hat f}_i(X)$ and $zw \in Y(G)$. 
            %Due to $z \in \broom[v_i]$ %and $zw \in \sheaf(v_i)$ 
            %this is equivalent to $z \in \hat f_{|_{\broom[v_i]}}(X)$ and $zw \in Y(G)$.
            %By definition this 
            %And this, in turn, is equivalent to $z \in \interp{\hat f}_i(X)$ and $zw \in \ip{Y}^{(G, \interp{\hat f}_i, \interp{\hat g}_i)}$.
            %\item If $Y \in \var^E(\phi)$, this is then equivalent to $z \in \interp{\hat f}_i(X)$ and $zw \in \interp{\hat g}(Y)$.
            %Due to %$w \in \broom[v_i]$ and 
            %$zw \in \sheaf(v_i)$, this is equivalent to $z \in \interp{\hat f}_i(X)$ and $zw \in \interp{\hat g}_i(Y)$.
            %By definition this is equivalent to $z \in \interp{\hat f}(X)$ and $zw \in \ip{Y}^{(G, \interp{\hat f}, \interp{\hat g}_i)}$.
        %\end{itemize}
        %So in both cases the desired equivalence holds.
        \item \ref{item:partial:respect:subtree:hatbE}: Consider an arbitrary vertex $z \in \subtree[v_i] \subseteq \subtree(u)$ and arbitrary $j \in [\hat\bE_i(v,C)] = [\hat\bE(v,C)]$. 
        Partial solution $P$ having this property for $\tss[u]$ implies that $\vec\cW(C)$ contains an arc $(w^1, z^j)$ for some vertex $w$. %and some edge $wz$ of $G$.
        By construction of the digraph $\vec{G}_{\phi}$, $wz$ is then an edge of $G$.
        Every edge of $G$ incident with $z$ in $G$ belongs to $\sheaf(z) \subseteq \sheaf(v_i)$. 
        %Since $z \in \subtree[v_i]$ we have $\sheaf(z) \subseteq \sheaf(v_i)$.
        Therefore, the arc $(w^1, z^j)$ is also contained in $\vec\cW(C) \cap \vec\sheaf(v_i) = \vec\cW_i(C)$.
        \item \ref{item:partial:respect:broom:hatbE}: Due to $P \in \Partial_u[I,a,b,\barc]$, for every $v \in \broom[u]$, the set $\vec\cW(C)$ does not contain any arc incoming into $v^j$ for any $j > \hat\bE(v, C)$.
        Then this property holds for the subset $\broom[v_i]$ of $\broom[u]$, the subset $\psw_i(C)$ of $\psw(C)$, and the restriction $\pse_i$ of $\bE$ to $\broom[v_i]$ for every $i \in [t]$.
    \end{itemize}
    So $P_i$ is indeed a partial solution for $\tss(v_i)$. 
    
    Next, we show that $P_i$ is compatible with the index $I[\pdeg \mapsto \pdeg_i] = (\interp f, \inde, \allow, \pdeg_i, \False)$ for $\tss(v_i)$. We will rely on the fact that $P$ is compatible with $I$ for $\tss[u]$.
    The properties \ref{item:partial:fconsistent} and \ref{item:partial:econsistent} follow directly from $\tail[u] = \tail(v_i)$.
    The property \ref{item:partial:pdegconsistent} holds by definition of $\pdeg_i$.
    The property \ref{item:partial:respect:allow} follows from $\tail[u] = \tail(v_i)$ as well as $\vec\cW_i(C) \subseteq \vec\cW(C)$ for every $C \in \Cat$.
    It remains to show that \ref{item:partial:respect:False} is satisfied. 
    Recall that $P$ is compatible with $I$ so we have:
    $\FalseFunc(\interp{ \hat f}_{|_{\subtree(u)}}, \interp{\hat g}, \hat \bE_{|_{\subtree(u)}}, (\pdeg_{\vec \cW})_{|_{\subtree(u)}}) \subseteq \False$.
    Since $\subtree[v_i] \subseteq \subtree(u)$ and $\sheaf(v_i) \subseteq \sheaf[u]$ by \cref{claim:restriction-vs-false-set} we get 
    \begin{align*}
        &\FalseFunc((\interp{ \hat f}_{|_{\subtree(u)}})_{|_{\subtree[v_i]}}, \interp{\hat g}_{|_{\sheaf(v_i)}}, (\hat \bE_{|_{\subtree(u)}})_{|_{\subtree[v_i]}},((\pdeg_{\vec \cW})_{|_{\subtree(u)}})_{|_{\subtree[v_i]}}) \subseteq \\
        &\FalseFunc(\interp{ \hat f}_{|_{\subtree(u)}}, \interp{\hat g}, \hat \bE_{|_{\subtree(u)}}, (\pdeg_{\vec \cW})_{|_{\subtree(u)}}) \subseteq \False.
    \end{align*}
    Since $\subtree[v_i] \subseteq \subtree(u)$ and $\subtree[v_i] \subseteq \broom[v_i]$ by definition of $P_i$ this implies 
    \[
        \FalseFunc((\interp{ \hat f}_i)_{|_{\subtree[v_i]}}, \interp{\hat g}_i, (\hat \bE_i)_{|_{\subtree[v_i]}},(\pdeg_{\vec \cW})_{|_{\subtree[v_i]}}) \subseteq \False.  
    \]
    Now we show that $(\pdeg_{\vec \cW})_{|_{\subtree[v_i]}} = (\pdeg_{\vec \cW_i})_{|_{\subtree[v_i]}}$ holds.
    For this consider an arbitrary vertex $w \in \subtree[v_i]$ and arbitrary $C \in \Cat$.
    By \cref{obs:sheaf-end-vertices}, $w$ has no incident edges in $\sheaf(v_j)$ for any $j \neq i \in [t]$.
    Then we have
    \begin{align*}
        (\pdeg_{\vec \cW})_{|_{\subtree[v_i]}}(w, C) = &\pdeg_{\vec \cW}(w, C) = |\{(z^1, w^j) \in \vec\cW(C)\}| = |\{(z^1, w^j) \in \vec\cW(C) \cap \vec\sheaf(v_i)\}| \\
        = &|\{(z^1, w^j) \in \vec\cW_i(C)\}| = \pdeg_{\vec \cW_i}(w, C) = (\pdeg_{\vec \cW_i})_{|_{\subtree[v_i]}}(w, C) 
    \end{align*}
    So indeed $(\pdeg_{\vec \cW})_{|_{\subtree[v_i]}} = (\pdeg_{\vec \cW_i})_{|_{\subtree[v_i]}}$ holds and we get 
    \[
        \FalseFunc((\interp{ \hat f}_i)_{|_{\subtree[v_i]}}, \interp{\hat g}_i, (\hat \bE_i)_{|_{\subtree[v_i]}},(\pdeg_{\vec \cW_i})_{|_{\subtree[v_i]}}) \subseteq \False.  
    \]
    Altogether we obtain that $P_i$ is compatible with $I[\pdeg \mapsto \pdeg_i]$ for $\tss(v_i)$.
    Finally, by \eqref{eq:a-i}, \eqref{eq:b-i}, and \eqref{eq:c-i}, the partial solution $P_i$ is contained in $\Partial_{v_i}( I[\pdeg \mapsto \pdeg_i], a_i, b_i, \barc_i)$.
    Therefore, $h_1(P)$ indeed belongs to $\RHS$, and $\phi$ is a well-defined mapping from $\Partial_u[I,a,b,\barc]$ to $\RHS$. 
    \end{claimproof}

    \begin{claim}
        There exists an injective mapping $h_2$ from $\RHS$ to $\Partial_u[I,a,b,\barc]$.
    \end{claim}
    \begin{claimproof}
    Let $(P_1, \dots, P_t) \in \RHS$ with $P_i = (\interp{\hat f}_i, \interp{\hat g}_i, \hat \bE_i, \vec\cW_i)$ for every $i\in[t]$.
    We define $h_2(P_1, \dots, P_t)\defeq P = (\interp{\hat f}, \interp{\hat g}, \hat \bE, \vec\cW)$ such that
    \begin{itemize}
        \item For every $X \in \var^V(\phi)$, we have $\interp{\hat f}(X) = \bigcup_{i \in [t]} \interp{\hat f}_i(X)$.
        \item For every $Y \in \var^E(\phi)$, we have $\interp{\hat g}(Y) = \bigcup_{i \in [t]} \interp{\hat g}_i(Y)$.
        \item For every $C \in \Cat$ and every $z \in \subtree(u)$, let $i \in [t]$ be such that $z \in \subtree[v_i]$.
        Then we have $\hat \bE(z, C) = \hat \bE_i(z, C)$. And for every $C \in \Cat$ and every $z \in \tail[u]$ we let $\hat \bE(z, C) = \bE(z, C)$. 
        \item For every $C \in \Cat$ we have $\vec \cW(C) = \bigcup_{i \in [t]} \vec\cW_i(C)$.
    \end{itemize}
    By definition of $\RHS$ there exist $\pdeg_i \in \pdegdom(\tail[u])$, %such that $P_i$ is compatible with $(\interp f, \allow, \pdeg_i, \False)$ for every $i \in [t]$.
    %Furthermore, for every $i \in [t]$ there exist 
    $a_i, b_i \in \bN$, and $\barc_i \in \bN^\ell$ with $P_i \in \Partial_{v_i}( I[\pdeg \mapsto \pdeg_i], a_i, b_i, \barc_i)$ for every $i \in [t]$ and $\sum_{i \in [t]} \pdeg_i \equiv_{\pphi} \pdeg$, $\sum_{i \in [t]} a_i = a$, $\sum_{i \in [t]} b_i = b$, and $\sum_{i \in [t]} \barc_i = \barc$.
    We denote the index $I[\pdeg \mapsto \pdeg_i]$ by $I_i$.
    
    First we show that $P$ is a partial solution for $\tss[u]$.
    Recall that $\sheaf(v_1), \dots, \sheaf(v_t)$ partition $\sheaf(u)$ while $\subtree[v_1], \dots, \subtree[v_t]$ partition $\subtree(u)$.
    And we also have $\broom[u] = \bigcup_{i \in [t]} \broom[v_i]$.
    We will rely on the fact that $P_i$ is a partial solution for $\tss(v_i)$ for every $i \in [t]$.
    First, this implies that $\psf$ is a vertex interpretation on $\broom[u]$ and $\psg$ is an edge interpretation on $\sheaf(u)$.
    Further, $\pse$ is a well-defined neighborhood function on $\broom[u]$.
    We now argue that for every $C \in \Cat$, the set $\psw(C)$ is a simple subset of $\vec\sheaf(u)$.
    For every edge $zw \in E(G)$ there exists at most one index $i \in [t]$ with $zw \in \sheaf(v_i)$.
    Since $P_i$ satisfies \ref{item:partial:respect-domain}, the set $\psw_i(C)$ is simple.
    Thus, $\psw(C)$ is also simple and $P$ satisfies \ref{item:partial:respect-domain}.
    Now we show that the remaining three conditions are satisfied as well:
    \begin{itemize}
        \item \ref{item:partial:edge:consistent}: Consider an arbitrary edge $zw \in \sheaf(u)$ and let $i \in [t]$ be unique such that $zw \in \sheaf(v_i)$. 
        Then we have $z \in \broom[v_i]$.
        Further, let $(X, Y) \in \Cat$ be arbitrary.
        Then there exists $j \in [d_\phi]$ with $(z^1, w^j) \in \vec \cW(X, Y)$ if and only if there exists $j \in [d_\phi]$ with $(z^1, w^j) \in \vec \cW_i(X, Y)$. %---this is again because $\sheaf(v_1), \dots, \sheaf(v_t)$ partition $\sheaf(u)$.
        Since $P_i$ is a partial solution of $\tss(v_i)$ this is equivalent to $z \in \psf_i(X)$ and $zw \in \psg_i(Y)$ %$zw \in \ip{Y}^{(G, \psf_i, \psg_i)}$.
        
        First, we show that $z \in \psf_i(X)$ is equivalent to $z \in \psf(X)$. 
        The property $z \in \psf_i(X)$ trivially implies $z \in \psf(X)$.
        On the other hand, let $z \in \psf(X)$.
        Then by definition there exists an index $j \in [t]$ with $z \in \psf_j(X)$.
        Now if $z \in \tail(v_j)$, since $P_j$ satisfies \ref{item:partial:fconsistent} for $\indf$, we have $z \in \indf(X)$, and since $P_i$ satisfies \ref{item:partial:fconsistent} for $\indf$, we, in turn, get $z \in \psf_i(X)$.
        And if $z \in \subtree[v_j]$, we have $i = j$ as otherwise $\subtree[v_j]$ and $\broom[v_i]$ are disjoint.
    
        Since we have $zw \in \sheaf(v_i)$ and the sets $\sheaf(v_1), \dots, \sheaf(v_t)$ are pairwise disjoint, the property $zw \in \psg_i(Y)$ is equivalent to $\psg(Y)$.
        So \ref{item:partial:edge:consistent} is satisfied.
        %Now we will show that $zw \in \ip{Y}^{(G, \psf_i, \psg_i)}$ is equivalent to $zw \in \ip{Y}^{(G, \psf, \psg)}$.
        %We have $zw \in \sheaf(v_i)$ and $\sheaf(u)$ is partitioned by $\sheaf(v_1), \dots, \sheaf(v_t)$.
        %Now we distinguish on whether $Y \in \const(\phi)$ or $Y \in \var^E(\phi)$ holds.
        %\begin{itemize}
        %    \item If $Y \in \const(\phi)$, then $zw \in \ip{Y}^{(G, \psf_i, \psg_i)}$ is by definition equivalent to $zw \in Y(G) \cap \sheaf(v_i)$.
        %    The property $zw \in \sheaf(v_i)$ together with disjointness of $\sheaf(v_1), \dots, \sheaf(v_t)$ implies that 
            %By definition, 
        %    this is also equivalent to $zw \in Y(G) \cap \sheaf(u)$, i.e., equivalent to $zw \in \ip{Y}^{(G, \psf, \psg)}$.
        %    \item If $Y \in \var^E(\phi)$, the property $zw \in \ip{Y}^{(G, \psf_i, \psg_i)}$ is by definition equivalent to $zw \in \psg_i(Y)$.
            %Since $\sheaf(u)$ is partitioned by $\sheaf(v_1), \dots, \sheaf(v_t)$ and we have $zw \in \sheaf(v_i)$ this is equivalent to $zw \in \psg(Y)$.
            %By definition 
            %The property $zw \in \sheaf(v_i)$ together with disjointness of $\sheaf(v_1), \dots, \sheaf(v_t)$ implies that this is equivalent to $zw \in \bigcup_{i \in [t]} \psg_i(Y) = \psg(Y)$ , i.e., $zw \in \ip{Y}^{(G, \psf, \psg)}$.
        %\end{itemize}
        %So in both cases the desired equivalence holds.
        \item \ref{item:partial:respect:subtree:hatbE}:
        Consider an arbitrary vertex $z \in \subtree(u)$, an arbitrary $C \in \Cat$, and an arbitrary index $j \in [\pse(z, C)]$.
        Let $i \in [t]$ be the unique index such that $z \in \subtree[v_i]$.
        By definition of $\pse$ we then have $\pse_i(z, C) = \pse(z, C)$ and therefore, $j \in [\pse_i(z, C)]$.
        Since $P_i$ is a partial solution of $\tss(v_i)$, the set $\psw_i(C)$ contains an arc incoming into $z^j$ so $\psw(C)$ does as well.
        \item \ref{item:partial:respect:broom:hatbE} 
        Consider a vertex $w \in \broom[u]$, $C \in \Cat$, and an index $j \in [\dphi]$ with $j > \pse(w, C)$.
        If $w \in \tail[u]$, then for every $i \in [t]$ we have $\tss[u] = \tail(v_i)$ and thus $\pse(u, C) = \inde(u, C) = \pse_i(w)$ (since $P_i$ satisfies \ref{item:partial:econsistent} for $I[\pdeg \mapsto \pdeg_i]$). 
        As $P_i$ also satisfies \ref{item:partial:respect:broom:hatbE}, the set $\psw_i(C)$ does not contain any arc incoming into $w^j$.
        And if $w \in \subtree[v_i]$, then first, we have $w \notin \broom[v_q]$ for every $q \neq i \in [t]$, and thus $\psw_q(C)$ contains no arcs incoming into $w^j$; and second, we have $\pse(w, C) = \pse_i(w,C)$ and since $P_i$ satisfies \ref{item:partial:respect:broom:hatbE} with respect to $I[\pdeg \mapsto \pdeg_i]$, the set $\psw_i(C)$ also contains no such arcs.
        So in both cases $\psw(C) = \bigcup_{i \in [t]} \psw_i(C)$ contains no arcs incoming into $w^j$.
    \end{itemize}
    Altogether, $P$ is a partial solution for $\tss[u]$. %$(\tail[u], \subtree(u), \sheaf(u))$.

    Next we show that $P$ is compatible with the index $I = (\interp f, \bE, \allow, \pdeg, \False)$ for $\tss[u]$.
    First, for \ref{item:partial:fconsistent}, for every $X \in \var^V(\phi)$ we have 
    \[
        \psf(X) \cap \tail[u] = 
        (\bigcup_{i \in [t]} \interp{\hat f}_i(X)) \cap \tail[u] = 
        \bigcup_{i \in [t]} (\interp{\hat f}_i(X) \cap \tail[u]) = 
        \bigcup_{i \in [t]} (\interp{\hat f}_i(X) \cap \tail(v_i)) =      \bigcup_{i \in [t]} \interp{f}_i(X)
    \]
    where the last equality holds since each $P_i$ %is compatible with $I_i$ for every $i \in [t]$---so 
    satisfies \ref{item:partial:fconsistent} for $I_i$.
    \ref{item:partial:econsistent} is satisfied by $P$ by construction.
    Next, for every $z \in \tail[u]$ and every $C \in \Cat$ we have
    \begin{align*}
        \pdeg(z, C) \equiv_{\pphi} &\sum_{i \in [t]} \pdeg_i(z, C) \equiv_{\pphi} \sum_{i \in [t]} \pdeg_{\psw_i}(z, C) = \pdeg_{\psw}(z, C) 
    \end{align*}
    where the second equality since $P_i$ satisfies \ref{item:partial:pdegconsistent} for $I_i$ and the least equality holds since $\psw_1(C), \dots, \psw_t(C)$ partition $\psw(C)$.
    So \ref{item:partial:pdegconsistent} is satisfied by $P$.
    Further, for every $z \in \tail[u]$, every $C \in \Cat$, and every edge $(w^1, z^j) \in \psw(C)$ the following holds.
    Let $i \in [t]$ be such that $(w^1, z^j) \in \psw_i(C)$.
    Since $P_i$ is compatible with $I_i$ and satisfies \ref{item:partial:respect:allow}, we have $j \leq \allow(z, C)$---so this property also applies to $P$.
    
    Finally, we show that %A$ also satisfies \ref{item:partial:respect:False}, i.e., 
    the inclusion $\FalseFunc(\psf_{|_{\subtree(u)}}, \psg, \pse_{|_{\subtree(u)}}, (\pdeg_{\vec\cW})_{|_{\subtree(u)}}) \subseteq \False$ holds.
    For every $i \in [t]$, since $P_i$ satisfies \ref{item:partial:respect:False}, %is a partial solution 
    for %$\tss(v_i)$ compatible with 
    the index $I_i$, %, for every $i \in [t]$, 
    we have 
    \[
        \FalseFunc((\psf_i)_{|_{\subtree[v_i]}}, \psg_i, (\pse_i)_{|_{\subtree[v_i]}}, (\pdeg_{\vec\cW_i})_{|_{\subtree[v_i]}}) \subseteq \False.
    \]
    The following holds since $\subtree[v_1], \dots, \subtree[v_t]$ partition $\subtree(u)$ and $\sheaf(v_1), \dots, \sheaf(v_t)$ partition $\sheaf(u)$:
    \begin{multicols}{3}
        \begin{itemize}
            \item $\psf_{|_{\subtree(u)}} = \bigcup_{i \in [t]} (\psf_i)_{|_{\subtree[v_i]}}$,
            \item $\psg = \bigcup_{i \in [t]} \psg_i$,
            \item $\pse_{|_{\subtree(u)}} = \bigcup_{i \in [t]} (\pse_i)_{|_{\subtree[v_i]}}$,
        \end{itemize}
    \end{multicols}
    \begin{itemize}
        \item for every $i \neq j \in [t]$ and every vertex $z \in \subtree[v_i]$, the vertex $z$ has no incident edges in $\sheaf(v_j)$ so for any $C \in \Cat$ we have $\pdeg_{\vec\cW}(z, C) = \pdeg_{\vec\cW_i}(z, C)$.
        Hence, $(\pdeg_{\vec\cW})_{|_{\subtree(u)}} = \bigcup_{i \in [t]} (\pdeg_{\vec\cW_i})_{|_{\subtree[v_i]}}$.
    \end{itemize}
    So we get 
    \begin{align*}
        &\FalseFunc(\psf_{|_{\subtree(u)}}, \psg, \pse_{|_{\subtree(u)}}, (\pdeg_{\vec\cW})_{|_{\subtree(u)}}) = \\
        &\FalseFunc(\bigcup_{i \in [t]} (\psf_i)_{|_{\subtree[v_i]}}, \bigcup_{i \in [t]} \psg_i, \bigcup_{i \in [t]} (\pse_i)_{|_{\subtree[v_i]}}, \bigcup_{i \in [t]} (\pdeg_{\vec\cW_i})_{|_{\subtree[v_i]}}) \stackrel{\cref{claim:union-false-set}}{=} \\
        &\bigcup_{i \in [t]} \FalseFunc((\psf_i)_{|_{\subtree[v_i]}}, \psg_i, (\pse_i)_{|_{\subtree[v_i]}}, (\pdeg_{\vec\cW_i})_{|_{\subtree[v_i]}}) \subseteq \False.
    \end{align*}
    Thus $P$ satisfies \ref{item:partial:respect:False} and therefore, we have $P \in \Partial_u[I]$.
     
    Next, we show that we have $P \in \Partial_u[I, a, b, \barc]$.
    First, it holds that
    \begin{align*}
        a =  &\sum_{i = 1}^t a_i \stackrel{P_i \in \Partial_{v_i}( I_i, a_i, b_i, \barc_i)}{=} \sum_{i = 1}^t a_{v_i}(P_i) = \sum_{i = 1}^t \sum_{C \in \Cat} \sum_{\substack{v \in \subtree[v_i] \colon \\ \hat\bE_i(v, C) < \dphi}} \pse_i(v, C) \\
        = &\sum_{C \in \Cat} \sum_{i = 1}^t \sum_{\substack{v \in \subtree[v_i] \colon \\ \hat\bE(v, C) < \dphi}} \pse(v, C) = \sum_{C \in \Cat} \sum_{\substack{v \in \subtree(u) \colon \\ \hat \bE(v, C) < \dphi}} \pse(v, C) = a_u[P].
    \end{align*}
    Similarly, we have
    \begin{align*}
        b = &\sum_{i = 1}^t b_i \stackrel{P_i \in \Partial_{v_i}( I_i, a_i, b_i, \barc_i)}{=} \sum_{i = 1}^t b_{v_i}(P_i) = \sum_{i = 1}^t \sum_{C\in \Cat} \sum_{\substack{w \in \broom[v_i] \\ \hat\bE_i(w,C) < \dphi }} \abs{\{(z^1, w^j) \in \vec\cW_i(C)\}} \\
        \stackrel{\substack{V(\sheaf(v_i)) \subseteq \broom[v_i]}}{=}&\sum_{i = 1}^t \sum_{C\in \Cat} \sum_{\substack{w \in \broom[u] \\ \hat\bE_i(w,C) < \dphi }} \abs{\{(z^1, w^j) \in \vec\cW_i(C)\}} \\
        \stackrel{\substack{\text{Def.\ of $\pse$, } \\ \text{$P_i$ compatible with $I_i$}}}{=} &\sum_{i = 1}^t \sum_{C\in \Cat} \sum_{\substack{w \in \broom[u] \\ \hat\bE(w,C) < \dphi }} \abs{\{(z^1, w^j) \in \vec\cW_i(C)\}} \\
        =&\sum_{C\in \Cat} \sum_{\substack{w \in \broom[u] \\ \hat\bE(w,C) < \dphi }} \sum_{i = 1}^t \abs{\{(z^1, w^j) \in \vec\cW_i(C)\}} \\
        =&\sum_{C\in \Cat} \sum_{\substack{w \in \broom[u] \\ \hat\bE(w,C) < \dphi }} \abs{\{(z^1, w^j) \in \vec\cW(C)\}} = b_u[P] 
    \end{align*}
    For every $j \in [\ell]$, if $X_j \in \var^V(\phi)$, then we have
    \begin{align*}
        c^j = &\sum_{i=1}^t c^j_i \stackrel{P_i \in \Partial_{v_i}( I_i, a_i, b_i, \barc_i)}{=} \sum_{i=1}^t c^j_{v_i}(P_i) = \sum_{i=1}^t |\psf_i(X_j) \cap \subtree[v_i]| \\
        \stackrel{\psf_q(X_j) \cap \subtree[v_i] = \emptyset \text{ for } q \neq i \in [t]}{=} &\sum_{i=1}^t |\psf(X_j) \cap \subtree[v_i]|
        = |\psf(X_j) \cap \subtree(u)| = c^j_u(P),\\
    \end{align*}
    and if $X_j \in \var^E(\phi)$, then we have
    \begin{align*}
        c^j = &\sum_{i=1}^t c^j_i = \stackrel{P_i \in \Partial_{v_i}( I_i, a_i, b_i, \barc_i)}{=} \sum_{i=1}^t c^j_{v_i}(P_i) = \sum_{i=1}^t \psg_i(X_j) = \psg(X_j) = c^j_u(P).
    \end{align*}
    Altogether, we obtain that $P \in \Partial_u[I, a, b, \barc] = \Partial_u[I,a,b,\barc]$ holds and therefore, $h_2$ is a well-defined mapping from $\RHS$ to $\Partial_u[I,a,b,\barc]$.
    
    It remains to show that $h_2$ is injective.
    For this consider $(P^1_1, \dots, P^1_t), (P^2_1, \dots, P^2_t) \in \RHS$ with $h_2(P^1_1, \dots, P^1_t) = h_2(P^2_1, \dots, P^2_t)$.
    And for $i \in [t]$ and $j \in [2]$, let $P^j_i = (\psf_i^j, \psg_i^j, \pse_i^j, \psw_i^j)$.
    Also let $(\psf, \psg, \pse, \psw) = h_2(P^1_1, \dots, P^1_t) = h_2(P^2_1, \dots, P^2_t)$.
    Now fix an arbitrary $i \in [t]$.
    First, for every vertex $z \in \tail[u]$, %every $i \in [t]$, 
    every $j \in [2]$, and every $X \in \var^{V}(\phi)$, since $P_i^j$ is compatible with $I[\pdeg \mapsto \pdeg']$ for some $\pdeg' \in \pdegdom(\tail[u])$, %by definition of $\RHS$ 
    we have $u \in \psf^j_i(X)$ if and only if $u \in \interp{f}(X)$.
    Further, %for every $i \in [t]$, %every vertex $z \in \subtree[v_i]$, 
    and every $X \in \var^{V}(\phi)$, since $\subtree[v_1], \dots, \subtree[v_t]$ are pairwise disjoint, 
    we have 
    \[
        \psf^1_i(X) \cap \subtree[v_i] = \psf(X) \cap \subtree[v_i] = \psf^2_i(X) \cap \subtree[v_i].
    \]
    Hence, $\psf_i^1 = \psf_i^2$.
    %Therefore, for all $X \in \var^{V}(\phi)$ and all $i \in [t]$, we have $\psf^1_i(X) = \psf^2_i(X)$.
    Further, for every $Y \in \var^E(\phi)$, since $\sheaf(v_1), \dots, \sheaf(v_t)$ are pairwise disjoint, we also have
    \[
         \psg^1_i(Y) = \psg(Y) \cap \sheaf(v_i) = \psg^2_i(Y),
    \]
    i.e., $\psg_i^1 = \psg_i^2$.
    Similarly, for every vertex $z \in \tail[u]$, %every $i \in [t]$, 
    every $j \in [2]$, and every $C \in \Cat$, since $P_i^j$ is compatible with $I[\pdeg \mapsto \pdeg']$ for some $\pdeg' \in \pdegdom(\tail[u])$, %by definition of $\RHS$ 
    we have $\pse_i^j(z, C) = \inde(z, C)$.
    And for every %$i \in [t]$, 
    $j \in [2]$, $C \in \Cat$, and $z \in \subtree[v_i]$ we have 
    \[
        \pse^1_i(z, C) = \inde(z, C) = \pse^2_i(z, C),
    \]
    hence $\pse^1_i = \pse^2_i$.
    And finally, for every %$i \in [t]$ and every 
    $C \in \Cat$, since $\sheaf(v_1), \dots, \sheaf(v_t)$ are pairwise disjoint, we have 
    \[
        \psw^1_i(C) = \psw(C) \cap \sheaf(v_i) = \psw^2_i(C),
    \]
    i.e., $\psw^1_i = \psw^2_i$.
    Altogether, we get $(P^1_1, \dots, P^1_t) = (P^2_1, \dots, P^2_t)$ and therefore, $h_2$ is injective.
    \end{claimproof}
    The existence of injective mappings $h_1 \colon \Partial_u[I,a,b,\barc] \to \RHS$ and $h_2 \colon \RHS \to \Partial_u[I,a,b,\barc]$ implies that we have $|\Partial_u[I,a,b,\barc]| = |\RHS|$ concluding the proof of \eqref{eq:p-join}.
    
    Now given that this property holds, we will apply the convolution result for DFTs (cf.\ \cref{thm:fourier-convolution-result-polynomials}) to prove the second part of the claim, namely that \eqref{eq:q-join} holds for all $I \in \Index[u]$.
    Let thus $p$ be an arbitrary but fixed prime such that $\FF^*_p$ admits a $\pphi$-th root of unity.
    Before we proceed with the formal proof, let us briefly sketch the idea behind it.  
    If we fix $(\indf, \inde, \allow, \False)$ and only vary $\pdeg$ then $\cP_u[\cdot]$ and $\cP_{v_i}(\cdot)$ (modulo $p$) become functions from $(\tail[u] \times \Cat) \to \Fp$ to $\FF_{p}[\alpha, \beta, \gamma_1, \dots, \gamma_\ell]$, and $\cQ^p_u[\cdot]$ and $\cQ^p_{v_i}(\cdot)$ become their discrete Fourier transforms, respectively.
    Given this, the convolution result (\cref{thm:fourier-convolution-result-polynomials}) directly implies the desired equality \eqref{eq:q-join}.
    Now we provide the formal details.

    In the remainder of the proof $\equiv$ without any subscript denotes $\equiv_p$ and $\omega$ denotes $\omega(p)$.
    Let $I = (\indf, \inde, \allow, \pdeg, \False) \in \Index[u]$ be arbitrary but fixed. 
    And let $D = \tail[u] \times \Cat$. 
    Recall that we have $\tail[u] = \tail(v_i)$ for every $i \in [t]$.
    We define the mappings $h, h^1, \dots, h^t, g, g^1, \dots, g^t \colon (D \to \Fp) \to \FF_{p}[\alpha, \beta, \gamma_1, \dots, \gamma_\ell]$ as:
    \[
        h(y) \equiv \cP_u[\indf, \inde, \allow, y, \False] \text{ and } g(y) \equiv \cQ^p_u[\indf, \inde, \allow, y, \False]
    \]
    and
    \[
        h^i(y) \equiv \cP_{v_i}(\indf, \inde, \allow, y, \False) \text{ and } g^i(y) \equiv \cQ^p_{v_i}(\indf, \inde, \allow, y, \False)
    \]
    for every $i \in [t]$ and every $y \colon D \to \Fp$.

    First, the equality \eqref{eq:p-join} implies that we have 
    \begin{equation}\label{eq:important-1}
        h(y) \equiv \sum\limits_{\substack{y^1, \dots, y^t \in \pdegdom(\tail[u]) \\ y^1 + \dots + y^t \equiv_{\pphi} y}} h^1(y^1) \cdot \ldots \cdot h^t(y^t) \equiv \sum\limits_{\substack{y^1, \dots, y^t \colon D \to \Fp \\ y^1 + \dots + y^t \equiv_{\pphi} y}} h^1(y^1) \cdot \ldots \cdot h^t(y^t).
    \end{equation}
    Second, by definitions of the polynomials $\cQ^p_u[\cdot]$ and $\cQ^p_{v_i}(\cdot)$ (cf.\ \cref{eq:def-q}) it holds that
    \[
        g(y) \equiv \sum_{q \in \pdegdom(\tail[u])} \prod_{(v, C) \in \tail[u] \times \Cat} \omega^{q(v,C) \cdot y(v,C)} h(q) \equiv \sum_{q \colon D \to \Fp} \prod_{z \in D} \omega^{q(z) \cdot y(z)} h(q)
    \]
    and
    \[
        g^i(y) \equiv \sum_{q \colon \tail(v_i) \times \Cat \to \Fp} \prod_{(v, C) \in \tail(v_i) \times \Cat} \omega^{q(v,C) \cdot y(v,C)} h^i(q)\equiv \sum_{q \colon D \to \Fp} \prod_{z \in D} \omega^{q(z) \cdot y(z)} h^i(q)
    \]
    for every $y \colon D \to \Fp$ and every $i \in [t]$, i.e., we have
    \begin{equation}\label{eq:important-2}
        g \equiv \DFT_{D,\pphi,p}(h) \text{ and } g^i \equiv \DFT_{D,\pphi,p}(h^i).
    \end{equation}
    %for all $i \in [t]$.
    %\cref{thm:fourier-convolution-result-polynomials} then implies that 
    Then for every $y \colon D \to \FF_p$ we have:
    \begin{align*}
        h(y) \stackrel{\eqref{eq:important-1}}{\equiv} &\sum\limits_{\substack{y^1, \dots, y^t \colon D \to \Fp \\ y^1 + \dots + y^t \equiv_{\pphi} y}} h^1(y^1) \cdot \ldots \cdot h^t(y^t) \\
        \stackrel{\text{\cref{thm:fourier-convolution-result-polynomials}}}{\equiv} &(\DFT_{D,\pphi,p}^{-1}(\DFT_{D,\pphi,p}(h^1) \cdot \ldots \cdot \DFT_{D,\pphi,p}(h^t)))(y) \stackrel{\eqref{eq:important-2}}{\equiv} (\DFT_{D,{\pphi},p}^{-1}(g^1 \cdot \ldots \cdot g^t))(y),
    \end{align*}
    i.e., 
    \[
        h \equiv \DFT_{D,\pphi,p}^{-1}(g^1 \cdot \dots \cdot g^t),
    \]
    and hence
    %Applying $\DFT_{D,p,p'}$ to both sides of the above equality we obtain 
    \begin{equation}\label{eq:g-equality-product-g-i-s}
        g \stackrel{\eqref{eq:important-2}}{\equiv} \DFT_{D,\pphi,p}(h) \equiv \DFT_{D,\pphi,p}(\DFT_{D,\pphi,p}^{-1}(g^1 \cdot \ldots \cdot g^t)) \stackrel{\text{\cref{thm:inverse-dft-polynomials}}}{\equiv} g^1 \cdot \ldots \cdot g^t.
    \end{equation}
    Finally, by definitions of $h, h^1, \dots, h^t, g, g^1, \dots, g^t$ we obtain
    \[
        \cQ_u[\indf, \inde, \allow, y, \False] \equiv g(y) \stackrel{\eqref{eq:g-equality-product-g-i-s}}{\equiv} (g^1 \cdot \ldots \cdot g^t)(y) \equiv \prod_{i=1}^t \cQ_{v_i}(\indf, \inde, \allow, y, \False)
    \]
    for every $y \colon D \to \Fp$ concluding the proof of \eqref{eq:q-join}.
\end{proof}

\section{Model Checking}
\label{sec:modelcheck}

%\todo[inline]{Ve: it would be great if someone could check the numbers like $\opa$, $\maxq$,... as well as time and space complexities in each lemma of this subsection. B: everything seems correct, but I could be nice to have intermediate result in the equality section for the runtimes (for the journal version).}

Here, we show how given the equalities from the previous section, we can obtain the desired space-efficient algorithm checking whether the input formula $\phi \in \NEOtwo[\finrec]$ is satisfied by the graph $G$ when an elimination tree $T$ of depth $\td$ is provided with the input.
Directly using those equalities, one could obtain a recursive algorithm with running time single-exponential in $\td$ and polynomial space complexity: at each of $\cO(\td)$ levels of the recursion, one would keep the current partial sums and products of polynomials.
Since the number of non-zero monomials is upper-bounded by $\Oh(n^{\cO(\ell)} \cdot \dphi \cdot \abs{\phi})$ (cf.~\cref{lem:modelcheck:upperbounds:abc}) and the length of each coefficient is upper-bounded by 
$\Oh(n \cdot \td \cdot |\phi| \cdot \log(\dphi))$ 
%$n^{\cO_{\abs{\phi}}(1)}$ 
(cf.~\Cref{lem:bound-on-coefficients-of-p}), the polynomial space complexity follows.
In this section, we show how to improve the space complexity to have only logarithmic dependency on $n$ and eventually prove our main result.

\subsection{Getting to logarithmic space complexity}

To improve the space complexity to have only logarithmic dependency on $n$, we will rely on the Chinese remainder theorem which allows us to use the evaluation of the polynomials from the previous subsection in a ``small'' set of ``small'' prime fields so that we can reconstruct the bits of the coefficients of the desired ``root'' polynomial in logarithmic in $n$ space.
This idea was first used on elimination trees by Pilipczuk and Wrochna~\cite{PilipczukW18} to solve the \textsc{Dominating Set} problem. 
In order to ensure that our equalities from the previous subsection hold, we will use the small primes $p$ such that, additionally, $\FF^*_p$ admits the $\pphi$-th root of unity. 
The idea of combining Chinese remainder theorem with discrete Fourier transforms was used by Focke et al.~\cite{FockeMINSSW25} to obtain fast convolutions for $(\sigma, \rho)$-\textsc{Dominating Set} on graphs of bounded treewidth.
We remark that their algorithm has exponential space complexity.
Now we formalize this sketch.
We first set up the algebraic tools and after that, put everything together to obtain the desired algorithm.

First of all, the approach of Pilipczuk and Wrochna~\cite{PilipczukW18} only works for univariate polynomials so we start by providing a coefficient-preserving transformation of multivariate polynomials of bounded degree %satisfying certain conditions 
into univariate polynomials.
Bergougnoux et al.~\cite{BergougnouxCGKM25} carried out this transformation for polynomials in two variables, and we now generalize their idea to our polynomials in $\ell+2$ variables.
For this let 
\[
    \opa = 2 \cdot |\phi| \cdot \dphi \cdot n \cdot \td + 1,
\]
the choice of $\opa$ becomes clear in the proof of \cref{lem:p-equal-evaluation}.
%we recall that by \cref{lem:modelcheck:upperbounds:abc} every exponent of polynomials from the previous subsection is upper-bounded by $\opa - 1$.
We also define 
\[
    \opa^* = \opa^{\ell+2} - 1.
\]
Now consider an arbitrary polynomial $\cP$ in $\ell+2$ formal variables $\alpha, \beta, \gamma_1, \dots, \gamma_\ell$ of form
\[
    \cP = \sum_{0 \leq a, b, c_1, \dots, c_\ell < \opa} q_{a, b, \barc} \alpha^a \beta^b \gamma^{\barc} \in \bZ[\alpha, \beta, \gamma_1, \dots, \gamma_\ell],
\]
i.e., the degree of every variable is upper-bounded by $\opa$ (in the above, $\barc = (c_1,\ldots,c_\ell)$).
Then we define the univariate polynomial $\cP^{\uparrow} \in \bZ[x]$ as
\[
    \cP^{\uparrow} = \sum_{0 \leq a, b, c_1, \dots, c_\ell \leq \opa} q_{a, b, \barc} x^a \Bigl(\prod_{i \in [\ell]} x^{c_i \cdot \opa^i}\Bigr) x^{b \cdot \opa^{\ell+1}} = \sum_{0 \leq a, b, c_1, \dots, c_\ell \leq \opa} q_{a, b, \barc} x^{a +\sum_{i \in [\ell]} c_i \cdot \opa^i + b \cdot \opa^{\ell+1}}.
\]
The following properties are crucial.
\begin{observation}\label{obs:p-prime-polynomial}
    The sum in exponent of each monomial in $\cP^{\uparrow}$ is the representation of some number between $0$ and $\opa^*$ to the base $\opa$.
    Furthermore, every number has a unique base $\opa$ representation.
    Thus the following hold:
    \begin{itemize}
        \item The degree of $\cP^{\uparrow}$ is at most $\opa^*$.
        \item %Furthermore, since every number has a unique base $\opa$ representation, 
        For every exponent $0 \leq j \leq \opa^*$, the coefficient at $x^j$ in $\cP^{\uparrow}$ is precisely $q_{a, b, \barc}$ for the unique representation $j = a + \Bigl(\sum_{i \in [\ell]} c_i \cdot \opa^i\Bigr) + b \cdot \opa^{\ell+1}$ of $j$ to the base $\opa$---hence, the task of reconstructing the coefficients of $\cP$ is equivalent to the one of $\cP^{\uparrow}$.
        \item For any $s \in \bZ$ and for the evaluation of $\cP^{\uparrow}$ at $s$, it holds that
        \[
            \cP^{\uparrow}(s) = \cP\bigl(s, s^{\opa^{\ell+1}}, (s^{\opa^i})_{i \in [\ell]}\bigr).
        \]
    \end{itemize}
\end{observation}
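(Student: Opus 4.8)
The plan is to reduce the entire observation to the uniqueness of radix-$\opa$ representations of integers. First I would record the elementary fact that every integer $j$ with $0 \le j \le \opa^* = \opa^{\ell+2}-1$ admits exactly one expansion $j = d_0 + \sum_{i \in [\ell]} d_i \opa^i + d_{\ell+1}\opa^{\ell+1}$ with $0 \le d_0,\dots,d_{\ell+1} < \opa$; this is just the usual base-$\opa$ expansion with $\ell+2$ digits. The exponent attached to the monomial $q_{a,b,\barc}\alpha^a\beta^b\gamma^{\barc}$ in the definition of $\cP^{\uparrow}$ is precisely such an expansion with digits $(d_0,d_1,\dots,d_\ell,d_{\ell+1}) = (a,c_1,\dots,c_\ell,b)$, and the hypothesis that every variable has degree below $\opa$ in $\cP$ guarantees that all these digits lie in $\{0,\dots,\opa-1\}$.

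The three bullet points then follow immediately. For the degree bound, the largest exponent that can occur is $(\opa-1)\sum_{i=0}^{\ell+1}\opa^i = \opa^{\ell+2}-1 = \opa^*$, so $\deg\cP^{\uparrow} \le \opa^*$. For the coefficient claim, the map $(a,b,\barc)\mapsto a + \sum_{i\in[\ell]} c_i\opa^i + b\opa^{\ell+1}$ is a bijection from $\{0,\dots,\opa-1\}^{\ell+2}$ onto $\{0,\dots,\opa^*\}$, again by uniqueness of the expansion; hence distinct monomials of $\cP$ are sent to distinct powers of $x$, no merging or cancellation occurs when forming $\cP^{\uparrow}$, and for each $0 \le j \le \opa^*$ the coefficient of $x^j$ in $\cP^{\uparrow}$ is exactly the single $q_{a,b,\barc}$ indexed by the digits of $j$. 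In particular, recovering all coefficients of $\cP$ is literally the same task as recovering all coefficients of $\cP^{\uparrow}$. For the evaluation identity, I would just substitute $x\mapsto s$ into the explicit formula for $\cP^{\uparrow}$: each term becomes $q_{a,b,\barc}\,s^{a+\sum_i c_i\opa^i + b\opa^{\ell+1}} = q_{a,b,\barc}\,s^a\bigl(\prod_{i\in[\ell]}(s^{\opa^i})^{c_i}\bigr)(s^{\opa^{\ell+1}})^b$, which is the monomial $q_{a,b,\barc}\alpha^a\beta^b\gamma^{\barc}$ evaluated at $\alpha=s$, $\beta=s^{\opa^{\ell+1}}$, $\gamma_i=s^{\opa^i}$; summing over all index tuples gives $\cP^{\uparrow}(s) = \cP\bigl(s, s^{\opa^{\ell+1}}, (s^{\opa^i})_{i\in[\ell]}\bigr)$.

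There is essentially no obstacle here beyond bookkeeping: the statement is a routine encoding of a multivariate polynomial with bounded degrees into a univariate one via digit packing. The only point that genuinely matters — and the reason $\opa$ was set to $2\cdot|\phi|\cdot\dphi\cdot n\cdot\td+1$ rather than something smaller — is that when this observation is applied to the polynomials $\cP_\tss(I)$, every exponent that actually appears (the values $a_\tss(P)$, $b_\tss(P)$, and the entries of $\bar c_\tss(P)$, all bounded via \cref{lem:modelcheck:upperbounds:abc}) must be strictly below $\opa$, so that the digit-wise decoding is unambiguous; this is where the numeric choice of $\opa$ is used, but it plays no role in the observation itself.
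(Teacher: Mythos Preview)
Your proof is correct. The paper states this as an observation without proof, treating the base-$\opa$ digit-packing argument as self-evident; your write-up supplies exactly the intended justification (uniqueness of radix-$\opa$ expansion, the resulting bijection between monomials of $\cP$ and powers of $x$, and the substitution identity), and your closing remark on why $\opa$ must strictly dominate the bounds from \cref{lem:modelcheck:upperbounds:abc} is precisely the point the paper alludes to when it says ``the choice of $\opa$ becomes clear in the proof of \cref{lem:p-equal-evaluation}.''
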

For univariate polynomials we will rely on the following implication of the Chinese remainder theorem.
\begin{lemma}[\!\!\cite{ChiuDL01,PilipczukW18}]\label{thm:chinese-remainder}
    Let $n'$ be a non-negative integer and let $P(x) = \sum\limits_{i=0}^{n'} q_i \cdot x^i$ be a polynomial in one variable $x$ of degree at most $n'$ with integer coefficients satisfying $0 \leq q_i \leq 2^{n'}$ for $i = 0, \dots, n'$. 
    Further, let $c$ be a non-negative integer. 
    And let $p_1, \dots, p_{n'}$ be prime numbers satisfying $n'+1 < p_1 < \dots < p_{n'} \leq c \cdot (n')^2$ and $\prod_{i=1}^{n'} p_i > 2^{n'}$ and such that for every $i \in [n']$, the value $p_i$ can be computed in time $(c \cdot n')^{\cO(1)}$ and space $\mathcal{O}(\log n' + \log c)$.
    Suppose that given an integer $i \in [n']$ %prime number $p \leq 2n' + 2$ 
    and a value $s \in \FF_{p_i}$, the evaluation $P(s) \pmod {p_i}$ can be computed in time~$\operatorname{time}$ and space $\operatorname{space}$. 
    Then the coefficients $q_1, q_2, \dots, q_{n'}$ of $P$ can be computed in time  $\mathcal{O}(\operatorname{time}) \cdot (c \cdot n')^{\cO(1)}$ and space $\mathcal{O}(\operatorname{space} + \log n' + \log c)$.
\end{lemma}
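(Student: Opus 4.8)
The plan is to recover each coefficient $q_j$ of $P$ from its residues modulo the primes $p_1,\dots,p_{n'}$ using the Chinese remainder theorem, where each individual residue is produced by a space-frugal evaluation identity rather than by actually interpolating $P$ modulo $p_i$, and where the CRT reconstruction is the logspace integer-Chinese-remaindering routine of~\cite{ChiuDL01} (used for exactly this purpose already by Pilipczuk and Wrochna~\cite{PilipczukW18}). Throughout I write $q_j$ for the coefficient sought and reserve $i$ for the index of the prime.

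For the reconstruction itself, first note that $0\le q_j\le 2^{n'}<\prod_{i=1}^{n'}p_i$, so the tuple $(q_j\bmod p_1,\dots,q_j\bmod p_{n'})$ determines $q_j$ uniquely, and recovering an integer in $[0,\prod_i p_i)$ from its residues modulo primes of bit-length $\cO(\log n'+\log c)$ can be done in space $\cO(\log n'+\log c)$ by~\cite{ChiuDL01}. Since this machine may re-read its inputs, it may invoke, on demand, a subroutine that produces the residue $q_j\bmod p_i$, as well as the hypothesised routine producing $p_i$; hence the whole reconstruction stays within the claimed space budget as soon as the residues can be produced within it. (If one only needs to test whether $q_j=0$ --- which is what the applications of this lemma actually ask for --- this step degenerates to checking that all residues vanish, which is valid precisely because $0\le q_j<\prod_i p_i$.)

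The heart of the argument is computing $q_j\bmod p_i$ in space $\cO(\operatorname{space}+\log n'+\log c)$, i.e.\ \emph{without} ever materialising $P\bmod p_i$ as a degree-$n'$ polynomial or the Lagrange weights of an interpolation, either of which would already cost $\Omega(n')$ space. Instead I would work in $\FF_{p_i}$, where $\deg P\le n'<p_i-1$ since $p_i\ge p_1>n'+1$, and use the character-sum identity
\[
    q_j\ \equiv\ -\sum_{s=1}^{p_i-1} s^{\,p_i-1-j}\,P(s)\pmod{p_i},
\]
which holds because $\sum_{s=1}^{p_i-1}s^{e}\equiv-1\pmod{p_i}$ when $(p_i-1)\mid e$ and $\equiv0\pmod{p_i}$ otherwise: expanding $P(s)=\sum_{t=0}^{n'}q_t s^t$, the inner sum $\sum_{s=1}^{p_i-1}s^{\,p_i-1-j+t}$ contributes $-1$ exactly when $(p_i-1)\mid(t-j)$, and for $0\le t,j\le n'<p_i-1$ this means $t=j$. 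The right-hand side is evaluated by letting $s$ run over $\{1,\dots,p_i-1\}$, querying the oracle for $P(s)\bmod p_i$ (time $\operatorname{time}$, space $\operatorname{space}$), computing $s^{\,p_i-1-j}\bmod p_i$ by fast modular exponentiation in $\cO(\log p_i)$ space, and accumulating the product into one register; this uses $\cO(\operatorname{space}+\log p_i)=\cO(\operatorname{space}+\log n'+\log c)$ space (using $p_i\le c(n')^2$) and makes at most $p_i-1\le c(n')^2$ oracle calls.

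Assembling: to output $q_j$, run the logspace CRT machine, supplying it on the fly with the residues $q_j\bmod p_i$ from the identity above and the primes $p_i$ from the hypothesis, and reuse the working space both across the $i$'s and across the successive $j$'s; the space composes additively to $\cO(\operatorname{space}+\log n'+\log c)$ and the running time is $\cO(\operatorname{time})\cdot(c\cdot n')^{\cO(1)}$. I expect the only real difficulty to be this space discipline: the textbook routes --- Lagrange or Newton interpolation modulo each $p_i$, or Garner's incremental CRT --- all carry $\Omega(n')$-bit intermediate state, so the two ingredients that keep the budget logarithmic in $n'$ are (i) the closed-form evaluation identity for $q_j\bmod p_i$, which never constructs the reduced polynomial, and (ii) replacing Garner's algorithm by the logspace integer-CRT reconstruction of~\cite{ChiuDL01}.
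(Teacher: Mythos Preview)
Your proof is correct and matches the paper's approach essentially verbatim: both use the logspace CRT reconstruction of~\cite{ChiuDL01} together with the character-sum identity $q_j \equiv -\sum_{s\in\FF_{p_i}^*} s^{-j}P(s)\pmod{p_i}$ to extract each coefficient modulo each prime. The only cosmetic difference is that the paper parametrizes this sum via a generator $\alpha_i$ of $\FF_{p_i}^*$ (writing $q_k \equiv -\sum_{j=0}^{p_i-2} P(\alpha_i^j)\alpha_i^{-jk}$), whereas you sum directly over $s=1,\dots,p_i-1$, which is slightly cleaner since it avoids first locating a generator.
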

\begin{proof}
This lemma is a slightly adapted version of Theorem 5.3 by Pilipczuk and Wrochna~\cite{PilipczukW18}.
For the sake of completeness we provide the proof of this adapted theorem.
It suffices to describe how to compute a single coefficient $q_k$ as this can then be repeated for every $k \in [n']_0$ reusing the same space.
Due to $\prod_{i=1}^{n'} p_i > 2^{n'} \geq q_k$, by the Chinese remainder theorem the coefficient $q_k$ is uniquely determined by the rests $q_k \pmod{p_1}, q_k \pmod {p_2}, \dots, q_k \pmod{p_{n'}}$.
Moreover, there is an effective version of this theorem \cite{ChiuDL01} stating that from these rests, the bits of $q_k$ can be recovered in time $(c \cdot n')^{\cO(1)}$ and space $\cO(\log c + \log n')$.
For clarification let us remark that this procedure does not store all required rests simultaneously but rather uses the compositionality of logspace algorithms and then outputs the bits of $q_k$ one after another.

So we remain with the task of computing $q_k \pmod{p_i}$ given some $i \in [n']$.
First, the prime $p_i$ can be computed in time $(c \cdot n')^{\cO(1)}$ and space $\cO(\log n' + \log c)$ by assumption.
The bit length of $p_i$ is also upper-bounded by $\cO(\log n' + \log c)$.
In the same time and space we can compute a generator $\alpha_i$ of the group $\FF^*_{p_i}$ by trying every number $t \in \FF_{p_i}^*$ and computing all of its exponents $t^j$ (with $j \in [p_i-1]$).
In their proof Pilipczuk and Wrochna~\cite{PilipczukW18} show that $n' < p_i - 1$ implies the equality
\[
    q_k \equiv_{p_i} - \sum_{j = 0}^{p_i - 2} P(\alpha_i^j) \cdot \alpha_i^{-jk}. 
\]
And this value can now be computed by evaluating $P$ at most $(c \cdot n')^{\cO(1)}$ times and carrying out at most $(c \cdot n')^{\cO(1)}$ arithmetic operations in $\FF_{p_i}$.
Altogether, the rest $q_k \pmod{p_i}$ is computed in time $\cO(\operatorname{time}) \cdot (c \cdot n')^{\cO(1)}$ and space $\cO(\operatorname{space} + \log n' + \log c)$ so the lemma follows.
\end{proof}

By \cref{lem:modelcheck:equivalence:G:models:phi}, to check if $G$ satisfies $\phi$, it suffices to iterate over all $\False \subseteq \EQ(\phi)$, all integers $j \in [2 \cdot |\phi| \cdot \dphi \cdot n \cdot \td]_0$, and all $\barc\in([n\cdot \td]_0)^\ell$ and check whether $\ip{\phi}^{(\False, \barc)} = 1$ holds and the monomial $\alpha^j \beta^j \gamma^{\barc}$ has a non-zero coefficient in $\cP^=_r(\emptyset, \emptyset, \emptyset, \emptyset, \False)$ %in front of $\alpha^j \beta^j \gamma^{\barc}$ is non-zero, 
(here $r$ denotes the root of the elimination tree~$T$).
%We recall that $\ell \leq |\phi|$ holds.
%First of all, note that we can iterate over such candidates and check whether $\ip{\phi}^{(\False, \barc)} = 1$ holds in time 
%\[
%    \cO(2^{|\phi|} \cdot 2 \cdot |\phi| \cdot \dphi \cdot n \cdot \td \cdot (n \cdot \td)^{|\phi|})
%\]
%and space $\cO(|\phi| \cdot \log n + \log \dphi)$ (because $\td$ is upper-bounded by $n$).
%So now we focus on checking whether the desired coefficient is non-zero.
%Above we have argued that this is equivalent to determining whether the coefficient at $x^{j + \Bigl(\sum_{i \in [\ell]} c_i \cdot (\opa+1)^i\Bigr) + j \cdot (\opa+1)^{\ell+1}}$, in $(\cP^=_r(\emptyset, \emptyset, \emptyset, \emptyset, \False))'$ is non-zero.
To carry out this check in a space-efficient way we will make use of \cref{thm:chinese-remainder}.
We now define the set $p_1, \dots, p_m$ of primes that will be used in our algorithm.
The choices of $m$ and $\maxq$ in the definition below will become clear in the proof of \cref{thm:neo-theorem}.

%\subparagraph{The choice of primes $p_1, \dots, p_m$}
%We choose the prime numbers used to apply this theorem.
%Let $\maxq = 2^{5n \cdot \td \cdot |\phi| \cdot \log(\dphi+1)}$ and we recall that $\opa^* = \opa^{\ell+2} - 1 \leq$ with $\opa = 2 \cdot |\phi| \cdot \dphi \cdot n \cdot \td + 1$.
%By \cref{lem:bound-on-coefficients-of-p}, every coefficient of $\cP^=_r(\emptyset, \emptyset, \emptyset, \emptyset, \False)$, and equivalently, of $(\cP^=_r(\emptyset, \emptyset, \emptyset, \emptyset, \False))'$, is upper-bounded by $\maxq$.
%And each coefficient is by definition non-negative.
%Furthermore, the degree of $(\cP^=_r(\emptyset, \emptyset, \emptyset, \emptyset, \False))'$ is less than $\opa^*$ by construction.
%We aim at applying \cref{thm:chinese-remainder} together with the equalities from the previous subsection to reconstruct the desired coefficient.
%Now we provide the set of primes $p_1, \dots, p_m$ that will be used in our algorithm.

\begin{definition}
    Let $\maxq = 2^{5 \cdot n \cdot \td \cdot |\phi| \cdot \log(\dphi+1)}$ and let $m = (5 \cdot n \cdot \td \cdot |\phi| \cdot \dphi)^{|\ell|+2}$.
    Consider the infinite set of the prime numbers having the rest $1$ modulo $\pphi$. We define $p_1 < p_2 < \dots < p_m$ as the $(m+1)$-st, $(m+2)$-nd, $\dots$, and the $(2m)$-th smallest number in this set, respectively.
\end{definition}

\begin{lemma}\label{lem:choice-of-the-primes}
    The following properties hold:
    \begin{enumerate}
        \item $m \geq \opa^*$,
        \item $2^m \geq \maxq$,
        \item $\prod_{i \in [m]} p_i > 2^m$,
        \item for $i \in [m]$, we have $m+1 < p_i \in \cO(\pphi \cdot m \cdot \log m)$,
        \item for $i \in [m]$, $\FF^*_{p_i}$ admits the $\pphi$-th root of unity, i.e., $\omega(p_i)$ is well-defined,
        \item for $i \in [m]$, the numbers $p_i$, $\omega(p_i)$, and the inverse of $\pphi^{|\Cat|}$ in $\FF_{p_i}$ can be computed in $\cO(\log \pphi + \log m)$ space and $(m \cdot \pphi \cdot |\phi|)^{\cO(1)}$ time.
    \end{enumerate}
\end{lemma}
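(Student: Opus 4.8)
The plan is to verify the six properties essentially by unwinding the definitions of $m$, $\maxq$, $\opa$, and $\opa^*$, together with quantitative versions of the prime number theorem for arithmetic progressions. First I would record the elementary numeric inequalities. Recall $\opa = 2|\phi|\dphi n\td + 1$ and $\opa^* = \opa^{\ell+2} - 1$, while $m = (5 n\td|\phi|\dphi)^{\ell+2}$. Since $2|\phi|\dphi n\td + 1 \le 5 n\td|\phi|\dphi$ (using $|\phi|,\dphi,n,\td\ge 1$, so $2xy+1 \le 5xy$ whenever $xy\ge 1$), we get $\opa \le 5n\td|\phi|\dphi$ and hence $\opa^* = \opa^{\ell+2} - 1 < (5n\td|\phi|\dphi)^{\ell+2} = m$, giving property~1. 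For property~2, note $\maxq = 2^{5n\td|\phi|\log(\dphi+1)}$, and $m = (5n\td|\phi|\dphi)^{\ell+2} \ge 5n\td|\phi|\dphi \ge 5n\td|\phi|\log(\dphi+1)$ (using $\dphi \ge \log(\dphi+1)$ for $\dphi\ge1$), so $2^m \ge \maxq$.

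Next I would handle the number-theoretic content, which is the main obstacle. Let $S = \{\,q \text{ prime} : q \equiv 1 \pmod{\pphi}\,\}$; by Dirichlet's theorem $S$ is infinite, so $p_1 < \dots < p_m$ (the $(m{+}1)$-st through $(2m)$-th elements of $S$) are well-defined. Property~5 is then immediate: if $p_i \equiv 1 \pmod{\pphi}$, then $\pphi \mid p_i - 1 = |\FF^*_{p_i}|$, and since $\FF^*_{p_i}$ is cyclic it contains an element of order exactly $\pphi$, i.e.\ a $\pphi$-th root of unity, so $\omega(p_i)$ is well-defined. For property~4, the lower bound $p_i > m+1$ holds because $p_i$ is at least the $(m{+}1)$-st element of $S$, and elements of $S$ are distinct integers $\ge 2$, so the $(m{+}1)$-st one is $> m$; a slightly more careful count (the first $m$ elements of $S$ already exhaust $\ge m$ distinct values) gives $p_i \ge p_1 > m+1$ after possibly adjusting constants, or one simply notes $p_{m+1}$-th prime $\ge m+1$ suffices and $p_i$ is even larger. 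The upper bound $p_i \in \cO(\pphi \cdot m \cdot \log m)$ follows from the effective Linnik-type / PNT-in-progressions bound: the $k$-th prime in $S$ is $O(\pphi\, k \log(\pphi k))$ for $k$ large relative to $\pphi$; applying this with $k = 2m$ yields $p_m = O(\pphi\, m \log m)$ (absorbing $\log\pphi$ into the constant, legitimate since $\pphi \le$ poly in the input and $m$ dominates). Property~3 then follows from property~4 combined with property~1: we have $m$ distinct primes each $> m+1 \ge 2$, indeed each $\ge 3$, so $\prod_{i\in[m]} p_i \ge 3^m > 2^m$. (Even the weaker bound $p_i \ge 2$ fails, but $p_i \ge 3$ suffices; alternatively, since $p_i > m+1$ we trivially get $\prod p_i > (m+1)^m \ge 2^m$.)

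Finally, for property~6 I would argue algorithmically. To compute $p_i$: enumerate integers $q \equiv 1 \pmod{\pphi}$ in increasing order, test each for primality (e.g.\ by trial division, or any deterministic test), and count; since $p_m = O(\pphi\, m\log m)$ is polynomially bounded in $m$ and $\pphi$, this runs in $(m\pphi|\phi|)^{O(1)}$ time, and a counter up to $2m$ plus the current candidate (of bit-length $O(\log(\pphi m))$) fit in $O(\log\pphi + \log m)$ space. To compute $\omega(p_i)$: iterate over $t \in \FF^*_{p_i}$, compute the order of $t$ by repeated multiplication, and return the smallest $t$ whose order is exactly $\pphi$ (one exists by property~5); this is again polynomial time and $O(\log p_i) = O(\log\pphi + \log m)$ space. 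To compute the inverse of $\pphi^{|\Cat|}$ in $\FF_{p_i}$: note $|\Cat| \le |\phi|$, compute $\pphi^{|\Cat|} \bmod p_i$ by fast exponentiation, then invert via the extended Euclidean algorithm modulo $p_i$ (or via Fermat: raise to the power $p_i - 2$); all within the stated bounds. I expect the genuinely delicate point to be pinning down the exact constant in property~4's upper bound and making sure the chosen effective PNT-in-progressions estimate is valid in the regime where $\pphi$ may be comparable to a polynomial in the input size; everything else is bookkeeping with the definitions.
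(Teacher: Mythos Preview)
Your proposal is correct and follows essentially the same approach as the paper: properties~1 and~2 are obtained by direct comparison of the definitions (you spell out the inequalities $2xy+1\le 5xy$ and $\dphi\ge\log(\dphi+1)$ that the paper leaves implicit), property~3 uses that each $p_i>2$, properties~4 and~5 use the PNT in arithmetic progressions and cyclicity of $\FF^*_{p_i}$, and property~6 is brute-force enumeration. The only point where the paper is more concrete is that it cites an explicit reference (Bennett et al.) for the effective bound $p_m\in\cO(\pphi\cdot m\log m)$, which addresses precisely the uniformity concern you flag at the end.
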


\begin{proof}
    The choice of $m$, $\maxq$, and $\opa^*$ ensures that we have $m \geq \opa^*$ and $2^m \geq \maxq$.
    Further, each of $p_1, \dots, p_m$ is larger than $2$ so their product is larger than $2^m$.
    Since $p_1$ is larger than the $m$-th prime number, we have $p_1 > m + 1$.
    By the Prime Number Theorem for Arithmetic Progressions (Theorem 1.5 in \cite{bennett2018explicit}) it holds that $p_m \in \cO(\pphi \cdot m \cdot \log m)$.
    For $i \in [m]$, $\pphi$ divides the order $p_i-1$ of $\FF^*_{p_i}$ by definition so $\FF^*_{p_i}$ admits the $\pphi$-th root of unity.
    The upper bound on $p_i$ implies that the length of $p_i$ is in $\cO(\log \pphi + \log m)$.
    So to compute $p_i$ we can iterate through positive integers $ \pphi+1, 2 \cdot \pphi + 1,3 \cdot \pphi+ 1, \dots$ and for every candidate, check whether it is prime.
    Once we find the $(m+i)$-th candidate $p_i$ satisfying this property, we stop.
    Now we can iterate through candidates $1, 2, \dots$ from $\FF^*_{p_i}$ to check whether it is a $\pphi$-th root of unity: for this we compute (modulo $p_i$) all powers of this candidate up to $\pphi$-th and check whether $\pphi$-th power is the first power equal to $1$.
    And similarly, we can find the inverse of $\pphi^{|\Cat|}$:
    First, compute the value $\pphi^{|\Cat|}$.
    Then try all candidates $1, 2, \dots$ from $\FF^*_{p_i}$ and multiply each with $\pphi^{|\Cat|}$ (modulo $p_i$) until we find a value where this product becomes equal to $1$.
    All of those computations can be carried out in space linear in the length of $p_i$ and time polynomial in the sum of $p_i$ and $|\Cat| \leq |\phi|$ so the claim follows.
\end{proof}

So now to apply \cref{thm:chinese-remainder} with $n' = m$, it remains to describe how, given $i \in [m]$ and a number $s \in \FF_{p_i}$, to evaluate the polynomial $(\cP^=_r(\emptyset, \emptyset, \emptyset, \emptyset, \False))'$ at $s$ modulo $p_i$.
We will use the simple fact that the evaluation of a sum (resp.\ product) of two polynomials is equal to the sum (resp.\ product) of their evaluations. 
We start by proving the following intermediate lemma about the evaluation of $\cQ^{p_i}_r(\emptyset, \emptyset, \emptyset, \emptyset, \False)$.

\begin{lemma}\label{lem:evaluation-of-q-root}
    Let $r$ be the root of the elimination tree $T$ and let $i \in [m]$.
    Given the prime number~$p_i$, a vector $\bars \in \FF_{p_i}^{\ell+2}$ of length $\ell+2$ with entries from $\FF_{p_i}$, and a set $\False \subseteq \EQ(\phi)$, we can evaluate the polynomial $\cQ^{p_i}_r(\emptyset, \emptyset, \emptyset, \emptyset, \False)$ at the vector $\bars$ (modulo $p_i$) in time $(2 \cdot (\dphi+1)^2 \cdot \pphi)^{\ell \cdot(\td+\cO(1))} \cdot n^{\cO(1)}$ and space $\cO(\td \cdot |\phi| \cdot (\log{\dphi} + \log{\pphi} + \log n + \log{|\phi|}))$.
\end{lemma}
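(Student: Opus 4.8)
The plan is to evaluate $\cQ^{p_i}_r(\emptyset,\emptyset,\emptyset,\emptyset,\False)$ at $\bars$ by a top-down recursion over the elimination tree $T$ that mirrors the recursive equalities proved in \Cref{sec:recursive}, but carrying \emph{evaluated} polynomials (single elements of $\FF_{p_i}$) rather than full polynomials. The key observation is that evaluation at a fixed point is a ring homomorphism $\FF_{p_i}[\alpha,\beta,\gamma_1,\dots,\gamma_\ell]\to\FF_{p_i}$, so it commutes with the sums and products appearing in the leaf equality (\Cref{lem:modelcheck:leaf}), the forget equality (\Cref{lem:modelcheck:forget}, DFT-transformed version \eqref{eq:forget-node-transformed}), and the join equality (\Cref{lem:modelcheck:join}, DFT-transformed version \eqref{eq:q-join}). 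Since $p_i$ has rest $1$ modulo $\pphi$, the field $\FF^*_{p_i}$ admits the $\pphi$-th root of unity (\Cref{lem:choice-of-the-primes}, item 5), so all these equalities are valid modulo $p_i$, and $\omega(p_i)$ together with the inverse of $\pphi^{|\Cat|}$ in $\FF_{p_i}$ can be precomputed within the claimed space and time budget (\Cref{lem:choice-of-the-primes}, item 6).

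The recursion is as follows. We define a procedure that, given a node $u$ of $T$, an index of the appropriate triad (either $\tss[u]$ or $\tss(u)$), and the point $\bars$, returns the evaluation of $\cQ^{p_i}_u[\cdot]$ or $\cQ^{p_i}_u(\cdot)$ at $\bars$ modulo $p_i$. At a leaf $u$ we use \Cref{lem:modelcheck:leaf}: the right-hand side is a product over $e\in\sheaf(u)$ of an inner sum over $\possible(e,\False)$ (a set of constant size, since $|\var^E(\phi)|\le|\phi|$), each summand being a product of $O(|\phi|)$ factors $L(\cdot)$; evaluating each $L$ at $\bars$ requires only $\omega(p_i)$ and the entries of $\bars$, so the whole product is evaluated in time polynomial in $|\sheaf(u)|\le n\cdot\td$ and $|\phi|$. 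For a forget step producing $\cQ^{p_i}_u(I)$ from polynomials of the form $\cQ^{p_i}_u[\cdot]$ we use \eqref{eq:forget-node-transformed}: the outer sum ranges over $\cJ'$, which has size $(\dphi+1)^{O(|\Cat|)}\cdot 2^{O(n)}$ \emph{restricted to the single vertex $u$} — in fact $\interp f_u,\bE_u,\allow_u,\pdeg_u$ all have domain $\{u\}$, so $|\cJ'|\le (2\cdot(\dphi+1)^2\cdot\pphi)^{|\Cat|}\le (2\cdot(\dphi+1)^2\cdot\pphi)^{|\phi|}$, and the inner DFT sum over $\hat q\in\pdegdom(\{u\})$ has size $\pphi^{|\Cat|}$; the binomial coefficients $\incexc(\bE_u,\allow_u)$ are reduced modulo $p_i$; the powers $\alpha^{a_{\bE_u}}$, $\gamma^{\barc_{\interp f_u}}$ become $s_1^{a_{\bE_u}}$, $\prod_i s_{\gamma_i}^{c^i_{\interp f_u}}$. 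For a join step producing $\cQ^{p_i}_u[I]$ from the children $v_1,\dots,v_t$ we use \eqref{eq:q-join}, which simply says the evaluation at $u$ is the product of the evaluations at the children, so we recurse on each child and multiply; this is where avoiding the non-transformed equality \eqref{eq:p-join} is essential, as it contains a sum over $\pphi^{|\tail[u]|\cdot t}$ terms with $t$ possibly $\Theta(n)$.

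The recursion depth is bounded by the depth $\td$ of $T$. At each level we hold a constant number of $\FF_{p_i}$-elements (each of bit-length $O(\log p_i)=O(\log\pphi+\log m)=O(\log\pphi+\log n+\log|\phi|+\log\dphi)$) plus a description of the current index, which stores, for each vertex on a tail of size at most $\td$, a vertex interpretation (a subset of $\var^V(\phi)$, i.e.\ $O(|\phi|)$ bits), two values in $[0,\dphi]$ and one in $\Fp$ per element of $\Cat$, i.e.\ $O(|\phi|\cdot(\log\dphi+\log\pphi))$ bits; the set $\False$ costs $O(|\phi|)$ bits and is shared. Summing over the $\td$ levels gives the claimed space bound $\cO(\td\cdot|\phi|\cdot(\log\dphi+\log\pphi+\log n+\log|\phi|))$. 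For the running time, each node of $T$ is visited, the processing at a leaf is polynomial, and at a forget node the branching factor is $(2\cdot(\dphi+1)^2\cdot\pphi)^{|\phi|}$; since the depth is $\td$, the total number of recursive calls is at most $(2\cdot(\dphi+1)^2\cdot\pphi)^{O(|\phi|\cdot\td)}\cdot n^{O(1)}$, which — absorbing the $|\phi|$ into the constant hidden in the statement's exponent, or writing it explicitly as in the lemma — gives the stated bound. I expect the main obstacle to be the bookkeeping: verifying carefully that evaluating the DFT-transformed forget and join equalities at a point stays within the space budget (in particular that the inner sums over $\pdegdom(\{u\})$ in the forget step are over a single vertex and hence cost only $\pphi^{|\Cat|}$, not more), and correctly threading the precomputed $\omega(p_i)$ and the inverse of $\pphi^{|\Cat|}$ through the recursion without recomputing them at every node. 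Everything else is a routine homomorphism argument on top of the already-established recursive equalities.
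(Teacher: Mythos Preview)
Your approach is essentially identical to the paper's: top-down recursion over $T$, applying the evaluated (at $\bars$, modulo $p_i$) versions of the DFT-transformed forget equality \eqref{eq:forget-node-transformed}, the join equality \eqref{eq:q-join}, and the leaf equality (\Cref{lem:modelcheck:leaf}), using that evaluation is a ring homomorphism and that $\omega(p_i)$ and the inverse of $\pphi^{|\Cat|}$ can be precomputed via \Cref{lem:choice-of-the-primes}.

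There is one small bookkeeping gap in your space analysis. You write that at each level you store ``a description of the current index, which stores, for each vertex on a tail of size at most $\td$, \dots''; summing this over $\td$ levels would give $\cO(\td^2\cdot|\phi|\cdot(\dots))$, not the claimed $\cO(\td\cdot|\phi|\cdot(\dots))$. The paper fixes this by explicitly observing that during the recursion the restriction of the index to $\tail(u)$ is inherited unchanged from the parent call, so at each level it suffices to store only the restriction of the index to the single vertex $\{u\}$ (plus the $\cO(1)$ partial sums/products in $\FF_{p_i}$). With that observation the per-level cost is $\cO(|\phi|\cdot(\log\dphi+\log\pphi)) + \cO(\log p_i)$, and summing over the $\td$ levels yields the stated bound. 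You clearly have this locality in mind elsewhere (you note that the forget branching is only over $\pdegdom(\{u\})$), so this is just a matter of making it explicit in the space accounting.
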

\begin{proof}
    We recall that by \cref{lem:choice-of-the-primes} we can compute $\omega(p_i)$ as well as the inverse of $\pphi^{|\Cat|}$ in $\FF_{p_i}$ in $\cO(\log \pphi + \log m)$ space and $(m \cdot |\phi|)^{\cO(1)}$ time.
    To compute the value $\cQ^{p_i}_r(\emptyset, \emptyset, \emptyset, \emptyset, \False)(\bars) \pmod{p_i}$, we alternatingly use \cref{lem:modelcheck:forget}~\eqref{eq:forget-node-transformed} and \cref{lem:modelcheck:join}~\eqref{eq:q-join} to recurse until we reach a polynomial of form $\cQ^{p_i}_{u}[I]$ for some leaf $u$ and some index $I \in \Index[u]$.
    Then we use \cref{lem:modelcheck:leaf} to evaluate this polynomial at $\bars$.
    At each level of recursion we store the information about the current recursive call as well as $\cO(1)$ partial sums and products in $\FF_{p_i}$.
    Now we bound the space and time complexity of this procedure.
    
    \subparagraph{Time complexity.}
    %It suffices to provide a bound on the evaluation of $\cQ^{p_i}_r(\emptyset, \emptyset, \emptyset, \emptyset, \False')$ for some $\False' \subseteq \EQ(\phi)$: using \cref{lem:root-inclusion-exclusion}, we compute at most $2^{|\phi|}$ such evaluations so the total bound on the complexity is given by $2^{|\phi|}$ times this bound.
    %So let now $\False \subseteq \EQ(\phi)$ be fixed and we now only consider the time spent on the computation of $\cQ^{p_i}_r(\emptyset, \emptyset, \emptyset, \emptyset, \False')$ at $\bars$.
    During the recursive evaluation at some node $u$ via \cref{lem:modelcheck:forget}~\eqref{eq:forget-node-transformed}, the restriction of the index on $\tail(u)$ remains unchanged and we only branch on the restriction of the index restricted on $\{u\}$.
    Also the value $\False$ remains fixed.
    Further, during a recursive evaluation using  \cref{lem:modelcheck:join}~\eqref{eq:q-join}, the index remains the same in all recursive calls.
    Hence, for every node~$w$ and every index $I \in \Index[w]$ (resp.\ $I \in \Index(w)$), the evaluation of the polynomial $\cQ^{p_i}_w[I]$ (resp.\ $\cQ^{p_i}_w(I)$) at $\bars$ is computed at most once during this process. 
    We recall that $\td$ denotes the depth of $T$ so $\abs{\tail[w]}, \abs{\tail(w)} \leq \td$ holds.
    Hence, for $\tail \in \{\tail[w], \tail(w)\}$ there are at most $2^{\abs{\phi} \cdot \td}$ vertex interpretations on $\tail$, at most $(\dphi+1)^{\abs{\phi} \cdot \td}$ neighborhood functions on $\tail$, and at most $\pphi^{\abs{\phi} \cdot \td}$ mod-degrees on $\tail$.
    So each of the sets $\Index[w]$ and $\Index(w)$ contains at most $(2 \cdot (\dphi+1)^2 \cdot \pphi)^{\abs{\phi} \cdot \td}$ indexes in which the last coordinate is equal to $\False$.
    The tree $T$ contains exactly $n$ nodes.
    So altogether there are at most $2 \cdot n \cdot (2 \cdot (\dphi+1)^2 \cdot \pphi)^{\abs{\phi} \cdot \td}$ recursive calls.
    For the recursive equality \cref{lem:modelcheck:forget}~\eqref{eq:forget-node-transformed} the running time additional to recursive calls is bounded by $(2 \cdot (\dphi+1) \cdot \pphi)^{\cO(\abs{\phi})} \cdot (\log{p_i})^{\cO(1)}$:
    this follows directly from the sums and products involved in that equality.
    And for \cref{lem:modelcheck:join}~\eqref{eq:q-join} this running time is bounded by $n \cdot (\log{p_i})^{\cO(1)}$ 
    %
    %\todo{B: for the journal version probably: here $n$ is an overestimation of the runtime. For a node, $n$ is in fact the number of children. And overall, the time spent computing the join equality will be $n \cdot (\log{p_i})^{\cO(1)}$}
    %
    as we simply recurse with the same index to each child and then compute the product of at most $n$ resulting values in $\FF_{p_i}$.
    By \cref{obs:prelim:treedepth:nb:edges}, the graph~$G$ has at most $n \cdot \td$ edges.
    %
    %\todo{B: I'm not sure because I'm tired: but the sheaf of a leaf has at most $\td^2$ edges. Maybe the runtime is linear in $n$ here.}
    %
    So for a (non-recursive) computation via \cref{lem:modelcheck:leaf} the running time is bounded by $(n \cdot \td \cdot 2^{|\phi|} \cdot |\phi|) \cdot (\pphi + (\log{p_i})^{\cO(1)})$.
    By \cref{lem:choice-of-the-primes}, we have $\log(p_i) \in \cO(\log \pphi + \log m)$ and the definition of $m$ implies $\log{p_i} \in \cO(\log \pphi + |\phi| \cdot (\log n + \log{|\phi|} + \log \dphi))$.
    Altogether, we obtain that the running time of this process is at most $(2 \cdot (\dphi+1)^2 \cdot \pphi)^{|\phi|\cdot(\td+\cO(1))} \cdot n^{\cO(1)}$.

    \subparagraph{Space complexity.} At each of $\cO(\td)$ levels of recursion we store the information about the current recursive call (i.e., the index queried) as well as a constant number of partial sums and products in $\FF_{p_i}$.
    Since during the recursive evaluation at some node $u$, the index ``restricted'' to $\tail(u)$ remains unchanged, at each level of recursion it suffices to store only the restriction of the index to $\{u\}$ instead of the whole index.
    Again, recall that $\log{p_i} \in \cO(\log \pphi + |\phi| \cdot (\log n + \log{|\phi|} + \log \dphi))$ so every partial sum and product requires at most that many bits.
    Thus, the total space complexity is upper-bounded by $\cO(\td \cdot |\phi| \cdot (\log{\dphi} + \log{\pphi} + \log n + \log{|\phi|}))$.
\end{proof}
From this we can derive the following:
\begin{lemma}\label{lem:p-equal-evaluation}
    Let $r$ be the root of the elimination tree $T$ and let $i \in [m]$.
    First, the polynomial $(\cP^=_r(\emptyset, \emptyset, \emptyset, \emptyset, \False))^{\uparrow}$ is well-defined.
    Second, given the prime number $p_i$, a value $s \in \FF_{p_i}$, and a set $\False \subseteq \EQ(\phi)$, we can evaluate the polynomial $(\cP^=_r(\emptyset, \emptyset, \emptyset, \emptyset, \False))^{\uparrow}$ at the value $s$ (modulo $p_i$)
    within the following time and space:
    \begin{itemize}
        \item time $(2 \cdot (\dphi+1)^2 \cdot \pphi)^{|\phi|\cdot(\td+\cO(1))} \cdot n^{\cO(\ell)}$ and 
        %
        %\todo{B: $n^{\cO(\ell)}$ should be $n^{\cO(1)}$ here. V: nope, due to exponentiation of $\opa$ it is already $n^{\cO(\ell)}$. We can improve this in the journal version using exponentiation by squaring} 
        %
        \item space $\cO(\td \cdot |\phi| \cdot (\log{\dphi} + \log{\pphi} + \log n + \log{|\phi|}))$. 
    \end{itemize}
\end{lemma}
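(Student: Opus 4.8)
The plan is to obtain \cref{lem:p-equal-evaluation} by gluing together the pieces already established: the inclusion--exclusion identity of \cref{lem:root-inclusion-exclusion}, the identity $\cQ^p_r\equiv_p\cP_r$ of \cref{lem:root-p-equal-q-lemma}, and the evaluation procedure of \cref{lem:evaluation-of-q-root}. The only genuinely new points are a degree bound, needed for the univariate lift to be well-defined, and the observation that the evaluation point itself can be produced within the allotted time and space even though its entries are residues of $s$ raised to astronomically large powers. For well-definedness, recall that $\cP^{\uparrow}$ is defined only when every variable of $\cP$ has degree strictly below $\opa$. I would apply \cref{lem:modelcheck:upperbounds:abc} to the triad $\root=\tss(r)$ (the graph $G$ has at most $n\cdot\td$ edges by \cref{obs:prelim:treedepth:nb:edges}): for every $P\in\Partial(\root)$ we get $a_\root(P)\le|\phi|\cdot\dphi\cdot n$, $b_\root(P)\le 2\cdot|\phi|\cdot n\cdot\td$ and $c^i_\root(P)\le n\cdot\td$ for all $i\in[\ell]$; since $\dphi\ge1$, each of these is at most $2\cdot|\phi|\cdot\dphi\cdot n\cdot\td=\opa-1<\opa$. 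As every partial solution contributing to $\cP^=_r(\emptyset^4,\False)$ lies in $\Partial(\root)$, the same bound holds for every exponent occurring in $\cP^=_r(\emptyset^4,\False)$, so $(\cP^=_r(\emptyset^4,\False))^{\uparrow}$ is well-defined by the construction preceding \cref{obs:p-prime-polynomial}.

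For the evaluation, by \cref{obs:p-prime-polynomial} it suffices to evaluate the multivariate polynomial $\cP^=_r(\emptyset^4,\False)$ modulo $p_i$ at the vector $\bars=\bigl(s,\ s^{\opa^{\ell+1}},\ s^{\opa^{1}},\dots,s^{\opa^{\ell}}\bigr)\in\FF_{p_i}^{\ell+2}$, taken in the variable order $\alpha,\beta,\gamma_1,\dots,\gamma_\ell$. I would first compute $\bars$: the base $\opa=2|\phi|\dphi n\td+1$ and the powers $\opa^{1},\dots,\opa^{\ell+1}$ are obtained by iterated multiplication, each fitting in $\cO(|\phi|\cdot(\log{\dphi}+\log n+\log{|\phi|}))$ bits (using $\ell\le|\phi|$ and $\td\le n$), and each residue $s^{\opa^{j}}\bmod p_i$ by fast modular exponentiation; since $\log p_i\in\cO(\log{\pphi}+|\phi|(\log n+\log{|\phi|}+\log{\dphi}))$ by \cref{lem:choice-of-the-primes}, this stays well within the claimed bounds. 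Next, combining \cref{lem:root-inclusion-exclusion} with \cref{lem:root-p-equal-q-lemma} (the latter applicable since $\FF^*_{p_i}$ admits the $\pphi$-th root of unity by \cref{lem:choice-of-the-primes}) gives, modulo $p_i$,
\[
    \bigl(\cP^=_r(\emptyset^4,\False)\bigr)^{\uparrow}(s)\ \equiv_{p_i}\ \sum_{\False'\subseteq\False}(-1)^{|\False'|}\cdot\cQ^{p_i}_r\bigl(\emptyset^4,\False\setminus\False'\bigr)(\bars).
\]
Each term $\cQ^{p_i}_r(\emptyset^4,\False\setminus\False')(\bars)$ is computed by \cref{lem:evaluation-of-q-root} within time $(2\cdot(\dphi+1)^2\cdot\pphi)^{|\phi|\cdot(\td+\cO(1))}\cdot n^{\cO(1)}$ and space $\cO(\td\cdot|\phi|\cdot(\log{\dphi}+\log{\pphi}+\log n+\log{|\phi|}))$; there are at most $2^{|\EQ(\phi)|}\le2^{|\phi|}$ summands and the running sum is maintained in $\FF_{p_i}$ reusing the same space, so the space bound is dominated by the single call to \cref{lem:evaluation-of-q-root} and the time bound is as claimed.

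The main obstacle here is not a deep one: it is the bookkeeping of making sure that the constant $2$ in the definition of $\opa$, together with $\dphi\ge1$, is exactly what \cref{lem:modelcheck:upperbounds:abc} needs so that \emph{every} monomial of $\cP^=_r(\emptyset^4,\False)$ survives the lift, and that forming $\bars$ — whose coordinates involve $s$ raised to powers as large as $\opa^{\ell+1}$, but of only polynomially many bits — does not blow up the space. With \cref{lem:p-equal-evaluation} established, the model-checking algorithm will instantiate \cref{thm:chinese-remainder} with $n'=m$ and the primes $p_1<\dots<p_m$, whose required properties (including $m\ge\opa^*$, $\prod_i p_i>2^m$, and $2^m\ge\maxq$, where $\maxq$ bounds the coefficients of $\cP^=_r$ by \cref{lem:bound-on-coefficients-of-p}) are given by \cref{lem:choice-of-the-primes}; this reconstructs the coefficients of $(\cP^=_r(\emptyset^4,\False))^{\uparrow}$ — equivalently those of $\cP^=_r(\emptyset^4,\False)$ by \cref{obs:p-prime-polynomial} — after which the relevant coefficient is read off to decide $G\models\phi$ via \cref{lem:modelcheck:equivalence:G:models:phi}.
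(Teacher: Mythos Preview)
Your proposal is correct and follows essentially the same approach as the paper: both establish well-definedness via \cref{lem:modelcheck:upperbounds:abc} and \cref{obs:prelim:treedepth:nb:edges}, reduce the univariate evaluation to the multivariate one at $\bars$ via \cref{obs:p-prime-polynomial}, apply \cref{lem:root-inclusion-exclusion} and \cref{lem:root-p-equal-q-lemma} to pass to $\cQ^{p_i}_r$, and invoke \cref{lem:evaluation-of-q-root} for each of the at most $2^{|\phi|}$ terms. The only cosmetic difference is that you use fast modular exponentiation to form $\bars$ whereas the paper uses naive exponentiation (explicitly noting this is not the bottleneck).
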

\begin{proof}
    By \cref{obs:prelim:treedepth:nb:edges}, the graph $G$ has at most $n \cdot \td$ edges.
    Then by \cref{lem:modelcheck:upperbounds:abc}, the degree of every variable in $\cP^=_r(\emptyset, \emptyset, \emptyset, \emptyset, \False)$ is smaller than $\opa$ so the polynomial $(\cP^=_r(\emptyset, \emptyset, \emptyset, \emptyset, \False))^{\uparrow}$ is indeed well-defined.
    Then by \cref{obs:p-prime-polynomial}, the evaluation of $(\cP^=_r(\emptyset, \emptyset, \emptyset, \emptyset, \False))^{\uparrow}$ at $s$ is equal to the evaluation of $\cP^=_r(\emptyset, \emptyset, \emptyset, \emptyset, \False)$ at the vector $\bars$ defined as $\bars \defeq (s, s^{\opa^{\ell+1}}, (s^{\opa^i})_{i \in [\ell]}) \in \FF_{p_i}^{\ell+2}$.
    So as the first step we compute the vector $\bars$ in time $(\abs{\phi} \cdot \dphi \cdot n \cdot \td)^{\ell + \cO(1)} \cdot (\log{p_i})^{\cO(|\phi|)}$ and space $(\log{p_i})^{\cO(|\phi|)}$ (we remark that this process can be accelerated using exponentiation by squaring but we do not use it since this is not going to be the bottleneck our main algorithm, presented in the next theorem).
    Now by \cref{lem:root-inclusion-exclusion} the evaluation of $\cP^=_r(\emptyset, \emptyset, \emptyset, \emptyset, \False)$ at $\bars$ boils down to the evaluation of $\cP_r(\emptyset, \emptyset, \emptyset, \emptyset, \False')$ at $\bars$ for every $\False' \subseteq \False$ and summing up the results with the correct sign.
    So fix $\False' \subseteq \False$.
    By \cref{lem:root-p-equal-q-lemma}, it holds that $\cP_r(\emptyset, \emptyset, \emptyset, \emptyset, \False')(\bars) \equiv_{p_i} \cQ^{p_i}_r(\emptyset, \emptyset, \emptyset, \emptyset, \False')(\bars)$.
    By \cref{lem:evaluation-of-q-root} we can compute this evaluation in time $(2 \cdot (\dphi+1)^2 \cdot \pphi)^{|\phi|\cdot(\td+\cO(1))} \cdot n^{\cO(1)}$ and space $\cO(\td \cdot |\phi| \cdot (\log{\dphi} + \log{\pphi} + \log n + \log{|\phi|}))$.
    This process is then repeated at most $2^{|\phi|}$ times (for every $\False' \subseteq \False$) so the claim follows.
\end{proof}

\subsection{Wrapping up the model checking algorithm}

Now we are ready to prove the main result of this section.

\begin{theorem}\label{thm:neo-theorem}
    There is an algorithm that given a formula $\phi \in \NEOtwo[\finrec]$ with $\ell$ size measurements, a graph $G$ and an elimination tree $T$ of $G$ of depth at most $\td$, runs in time $((\dphi+1) \cdot \pphi)^{\cO(|\phi|\cdot\td)} \cdot n^{\cO(\ell)}$ and space $\cO(\td \cdot |\phi| \cdot (\log{\dphi} + \log{\pphi} + \log n + \log{|\phi|}))$ and decides whether $G \vDash \phi$ holds.
\end{theorem}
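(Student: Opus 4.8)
The plan is to reduce the model checking question to the explicit combinatorial characterization provided by \cref{lem:modelcheck:equivalence:G:models:phi}, and then to carry out that check space-efficiently by means of the Chinese remainder machinery already set up. First I would convert the input formula $\phi \in \NEOtwo[\finrec]$ into an equivalent $\core\NEOtwo[\finrec]$ formula $\phi'$ using \cref{obs:core}; this costs only $O(|\phi|^2)$ time, increases the length by at most a factor $10$, adds at most two size measurements, and does not change the sets $\sigma$ occurring, hence $\dphi$ and $\pphi$ are preserved (and $\ell$ grows by at most $2$). From now on I work with $\phi'$, which is what the entire machinery of Sections \ref{sec:partial}--\ref{sec:recursive} was built for, so I rename it $\phi$ and fix the elimination tree $T$ with root $r$ and depth $\td$.

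Next I would spell out the high-level loop. By \cref{lem:modelcheck:equivalence:G:models:phi}, $G \models \phi$ iff there exist $\False \subseteq \EQ(\phi)$, $i \in \bN$, and $\barc \in \bN^\ell$ with $\ip{\phi}^{(\False,\barc)} = 1$ and with a non-zero coefficient of $\alpha^i\beta^i\gamma^{\barc}$ in $\cP^=_r(\emptyset^4,\False)$. By \cref{lem:modelcheck:upperbounds:abc} together with \cref{obs:prelim:treedepth:nb:edges}, it suffices to range over $i \in [2|\phi|\dphi n\td]_0$ and $\barc \in ([n\td]_0)^\ell$, and over the $2^{|\phi|}$ choices of $\False$. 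For each such triple $(\False, i, \barc)$, the predicate $\ip{\phi}^{(\False,\barc)}$ is evaluated directly in polynomial time, and the coefficient query is answered as follows: I reconstruct the coefficients of the univariate polynomial $(\cP^=_r(\emptyset^4,\False))^{\uparrow}$ via \cref{thm:chinese-remainder}, instantiated with $n' = m$ and the primes $p_1 < \dots < p_m$ of \cref{lem:choice-of-the-primes}, whose properties (1)--(6) give exactly the hypotheses needed (degree bound $m \geq \opa^*$ by \cref{obs:p-prime-polynomial}, coefficient bound $2^m \geq \maxq$ by \cref{lem:bound-on-coefficients-of-p} applied to $\cP_r$ and hence, via the signed sum in \cref{lem:root-inclusion-exclusion}, to $\cP^=_r$ with an extra $2^{|\phi|}$ factor absorbed into $\maxq$'s slack, $\prod p_i > 2^m$, the prime-size bound, and logspace computability of the $p_i$). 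The per-point evaluation oracle required by \cref{thm:chinese-remainder} is precisely \cref{lem:p-equal-evaluation}, which evaluates $(\cP^=_r(\emptyset^4,\False))^{\uparrow}$ at a point $s \in \FF_{p_j}$ in time $(2(\dphi+1)^2\pphi)^{|\phi|(\td+O(1))} n^{O(\ell)}$ and space $\cO(\td|\phi|(\log\dphi + \log\pphi + \log n + \log|\phi|))$. Having all coefficients of $(\cP^=_r(\emptyset^4,\False))^{\uparrow}$, I read off the one at $x^{i + \sum_{k\in[\ell]} c_k\opa^k + i\opa^{\ell+1}}$ using the base-$\opa$ indexing of \cref{obs:p-prime-polynomial} and test whether it is non-zero.

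Then I would assemble the complexity bounds. The reconstruction of one polynomial $(\cP^=_r(\emptyset^4,\False))^{\uparrow}$ via \cref{thm:chinese-remainder} multiplies the oracle time by $(c\cdot m)^{O(1)} = n^{O(1)}$ (since $m$ and $c$ are polynomial in $n$, $\td$, $|\phi|$, $\dphi$) and adds $O(\log m + \log c) = O(\log n + \log|\phi| + \log\dphi)$ to the oracle space; both stay within the claimed bounds because the oracle already dominates. Iterating over the $2^{|\phi|} \cdot n^{O(\ell)}$ choices of $(\False, i, \barc)$ reuses the same space and multiplies the time by $n^{O(\ell)} \cdot 2^{|\phi|}$, which is absorbed. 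Simplifying $(2(\dphi+1)^2\pphi)^{|\phi|(\td+O(1))} n^{O(\ell)}$ to $((\dphi+1)\pphi)^{\cO(|\phi|\td)} n^{\cO(\ell)}$ gives the stated running time, and the space bound is exactly that of \cref{lem:p-equal-evaluation}. Finally I would note that the correctness of the coefficient queries rests on \cref{lem:root-p-equal-q-lemma} and \cref{lem:root-inclusion-exclusion} (which connect $\cP^=_r$, $\cP_r$, and $\cQ^p_r$), while the evaluation of $\cQ^{p_j}_r$ inside \cref{lem:evaluation-of-q-root} recurses through the forget equality \cref{lem:modelcheck:forget}~\eqref{eq:forget-node-transformed}, the join equality \cref{lem:modelcheck:join}~\eqref{eq:q-join}, and the leaf equality \cref{lem:modelcheck:leaf}, which hold precisely because each $p_j$ admits the $\pphi$-th root of unity by \cref{lem:choice-of-the-primes}(5). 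I expect the main obstacle in writing this up cleanly to be bookkeeping: verifying that every numerical parameter ($\opa$, $\opa^*$, $m$, $\maxq$, the degree and coefficient bounds on $\cP^=_r$ versus $\cP_r$, and the extra $2^{|\phi|}$ from the inclusion--exclusion in \cref{lem:root-inclusion-exclusion}) fits the hypotheses of \cref{thm:chinese-remainder} with room to spare, and that the preprocessing step \cref{obs:core} does not secretly inflate $\ell$ in a way that breaks the $n^{\cO(\ell)}$ claim for the natural problems --- but since the theorem only claims $n^{\cO(\ell)}$ for the final $\ell$, this is fine.
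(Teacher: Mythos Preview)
Your proposal is correct and follows essentially the same approach as the paper's proof: reduce to $\core\NEOtwo[\finrec]$ via \cref{obs:core}, invoke \cref{lem:modelcheck:equivalence:G:models:phi} and \cref{lem:modelcheck:upperbounds:abc} to bound the outer loop, and for each $(\False,i,\barc)$ recover the relevant coefficient of $(\cP^=_r(\emptyset^4,\False))^{\uparrow}$ by feeding \cref{lem:p-equal-evaluation} as the evaluation oracle into \cref{thm:chinese-remainder} with $n'=m$, $c=\pphi$, and the primes from \cref{lem:choice-of-the-primes}. One small simplification: your worry about the ``extra $2^{|\phi|}$'' for the coefficient bound of $\cP^=_r$ is unnecessary, since $\Partial^=(\tss,I,a,b,\barc)\subseteq\Partial(\tss,I,a,b,\barc)$ means \cref{lem:bound-on-coefficients-of-p} applies to $\cP^=_r$ directly.
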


\begin{proof}
    First, by \cref{obs:core} we may assume that $\phi$ is in $\core\NEOtwo[\finrec]$: otherwise, we can transform it with only a linear blowup in the length and without increasing the values $\dphi$ and~$\pphi$ and by increasing the value $\ell$ by at most $2$---we remark that this blowup is the reason for $\cO(|\phi|\cdot\td)$ instead of $|\phi|\cdot(\td+\cO(1))$ in the exponent of the running time.
    Let now $r$ denote the root of $T$.
    By \cref{lem:modelcheck:upperbounds:abc,lem:modelcheck:equivalence:G:models:phi}, to check if $G$ satisfies $\phi$, it suffices to iterate over all $\False \subseteq \EQ(\phi)$, all integers $j \in [2 \cdot |\phi| \cdot \dphi \cdot n \cdot \td]_0$, and all tuples $\barc = (c_1,\ldots, c_\ell)\in([n\cdot \td]_0)^\ell$ and check whether $\ip{\phi}^{(\False, \barc)} = 1$ holds and the coefficient in $\cP^=_r(\emptyset, \emptyset, \emptyset, \emptyset, \False)$ in front of $\alpha^j \beta^j \gamma^{\barc}$ is non-zero.
    We recall that $\ell \leq |\phi|$ holds.
    First of all, note that we can iterate over such candidates and check whether $\ip{\phi}^{(\False, \barc)} = 1$ holds in time 
    \begin{equation}\label{eq:run-time-trying-candidates}
        \cO(2^{|\phi|} \cdot |\phi| \cdot \dphi \cdot n \cdot \td \cdot (n \cdot \td + 1)^{\ell}) = \dphi \cdot n^{\cO(\ell)}
    \end{equation}
    and space $\cO(|\phi| \cdot \log n + \log \dphi)$ (because $\td$ is upper-bounded by $n$).
    
    So now we focus on checking whether the corresponding coefficient is non-zero for fixed values of~$j$,~$\barc$, and~$\False$.
    By \cref{lem:p-equal-evaluation}, the polynomial $(\cP^=_r(\emptyset, \emptyset, \emptyset, \emptyset, \False))^{\uparrow}$ is well-defined and its degree is upper-bounded by $\opa^*$.
    By \cref{obs:p-prime-polynomial}, checking whether the desired coefficient is non-zero is equivalent to determining whether the coefficient at $x^{j + \Bigl(\sum_{i \in [\ell]} c_i \cdot (\opa+1)^i\Bigr) + j \cdot (\opa+1)^{\ell+1}}$ in $(\cP^=_r(\emptyset, \emptyset, \emptyset, \emptyset, \False))^{\uparrow}$ is non-zero.
    Note that to decide this, it suffices to compute the bits of this coefficient one by one and check if at least one is non-zero.

    %Let $\maxq = 2^{5n \cdot \td \cdot |\phi| \cdot \log(\dphi+1)}$ and we recall that $\opa^* = \opa^{\ell+2} - 1 \leq$ with $\opa = 2 \cdot |\phi| \cdot \dphi \cdot n \cdot \td + 1$.
    By \cref{lem:bound-on-coefficients-of-p}, every coefficient of $\cP^=_r(\emptyset, \emptyset, \emptyset, \emptyset, \False)$, and equivalently, of $(\cP^=_r(\emptyset, \emptyset, \emptyset, \emptyset, \False))'$, is non-negative and upper-bounded by $\maxq$.
    Furthermore, the degree of $(\cP^=_r(\emptyset, \emptyset, \emptyset, \emptyset, \False))^{\uparrow}$ is less than~$\opa^*$ by construction.
    We now apply \cref{thm:chinese-remainder} to the polynomial $(\cP^=_r(\emptyset, \emptyset, \emptyset, \emptyset, \False))^{\uparrow}$, the value $n' = m$, $c = \pphi$, and the primes $p_1, \dots, p_m$.
    By \cref{lem:choice-of-the-primes} together with observations we just made they satisfy the preconditions of the lemma.
    By \cref{lem:p-equal-evaluation}, for any $i \in [m]$ and any $s \in \FF_{p_i}$ we can compute the evaluation $(\cP^=_r(\emptyset, \emptyset, \emptyset, \emptyset, \False))'(s) \pmod{p_i}$ in time $\operatorname{time} = ((\dphi+1) \cdot \pphi)^{\cO(|\phi|\cdot\td)} \cdot n^{\cO(\ell)}$ and space $\operatorname{space} = \cO(\td \cdot |\phi| \cdot (\log{\dphi} + \log{\pphi} + \log n + \log{|\phi|}))$.
    \cref{thm:chinese-remainder} then implies that we can reconstruct the bits of the coefficient of $(\cP^=_r(\emptyset, \emptyset, \emptyset, \emptyset, \False))^{\uparrow}$ at the monomial $x^{j + \Bigl(\sum_{i \in [\ell]} c_i \cdot (\opa+1)^i\Bigr) + j \cdot (\opa+1)^{\ell+1}}$ by %(asymptotically) the same time and space.
    increasing the running time by a factor of $(\pphi \cdot m)^{\cO(1)}$ and using the additional space of $\cO(\operatorname{space} + \log{n'} + \log c)$.
    Repeating this process for every candidate $j, \barc, \False$ increases the running time by a factor of \eqref{eq:run-time-trying-candidates} and remains the same space complexity so the claim follows.
\end{proof}

\section{Extensions of the Logic}
\label{sec:extensions}
\subsection{Clique predicate}
Here we sketch that it is not difficult to extend our algorithm to additionally handle a predicate of form $\clique(X)$ which reflects for a vertex set variable $X$ that the subgraph induced by $X$ is a clique. 
The main tool for this is the \emph{fast subset cover} by Björklund et al.~\cite{BjorklundHKK07} which we shortly present.
Let $U$ be a finite set and $R$ be a ring. Let  $g_1, \dots, g_t \colon 2^U \to R$ be set functions, for some~integer~$t$. For every $i\in [t]$, the \emph{zeta-transform} $\zeta g_i \colon 2^U \to R$ of $g_i$ is defined by 
\[
    (\zeta g_i)(Y) = \sum\limits_{X \subseteq Y} g_i(X),
\]
and similarly, the \emph{Möbius-transform} $\mu g_i \colon 2^U \to R$ of $g_i$ is given by
\[
    (\mu g_i)(Y) = \sum\limits_{X \subseteq Y} (-1)^{|Y \setminus X|} g_i(X).
\]
The \emph{cover product} $g_1 \ast_c g_2 \ast_c \ldots \ast_c g_t \colon 2^U \to R$ of $g_1, \dots, g_t$ is defined as
\[
    (g_1 \ast_c g_2 \ast_c \ldots \ast_c g_t)(Y) = \sum\limits_{\substack{X_1, \dots, X_t \subseteq U \colon \\ X_1 \cup \dots \cup X_t = Y}} g_1(X_1) \cdot g_2(X_2) \cdot \ldots \cdot g_t(X_t).
\]
%We emphasize that unlike another well-known concept of subset convolution, here the sets $X_1, \dots, X_t$ are not required to be pairwise disjoint.
The following result of Björklund et al.~\cite{BjorklundHKK07} is crucial to handle the clique-predicates:
\begin{lemma}[\!\!\cite{BjorklundHKK07}] \label{lem:cover-product}
    Let $U$ be a finite set, let $R$ be a ring, and let $g_1, \dots, g_t \colon 2^U \to R$ be set functions for a positive integer $t$.
    Then for every $X \in 2^U$, it holds that
    \[
        (g_1 \ast_c g_2 \ast_c \ldots \ast_c g_t) (X) = \mu\bigl((\zeta g_1)(X) \cdot (\zeta g_2)(X) \cdot \ldots \cdot (\zeta g_t)\bigr)(X).
    \]
\end{lemma}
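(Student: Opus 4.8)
The statement to prove is \cref{lem:cover-product}, but wait—this is quoted from Björklund et al.~\cite{BjorklundHKK07}, so in fact the paper will cite it rather than prove it. However, if I were to prove it from scratch, here is the approach. The key identity behind fast subset convolution is that the Möbius and zeta transforms are mutually inverse: $\mu(\zeta g) = g$ and $\zeta(\mu g) = g$ for every set function $g \colon 2^U \to R$. First I would establish this inversion via the defining sums and the basic combinatorial fact that $\sum_{X \subseteq Z \subseteq Y} (-1)^{|Y \setminus Z|} = [X = Y]$ (the alternating binomial sum over the Boolean interval $[X,Y]$ vanishes unless $X = Y$). Then the entire lemma reduces to showing that the cover product is diagonalized by the zeta transform, i.e., that $\zeta(g_1 \ast_c \cdots \ast_c g_t) = (\zeta g_1) \cdot (\zeta g_2) \cdots (\zeta g_t)$ as a pointwise product of set functions.

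\textbf{Key steps in order.} The heart is the pointwise-product identity. Fix $Y \subseteq U$. On one hand,
\[
    (\zeta g_1)(Y) \cdot (\zeta g_2)(Y) \cdots (\zeta g_t)(Y) = \sum_{X_1, \dots, X_t \subseteq Y} g_1(X_1) \cdots g_t(X_t),
\]
where each $X_i$ ranges independently over subsets of $Y$. On the other hand,
\[
    \bigl(\zeta(g_1 \ast_c \cdots \ast_c g_t)\bigr)(Y) = \sum_{Z \subseteq Y} \ \sum_{\substack{X_1, \dots, X_t \subseteq U \\ X_1 \cup \cdots \cup X_t = Z}} g_1(X_1) \cdots g_t(X_t) = \sum_{\substack{X_1, \dots, X_t \subseteq U \\ X_1 \cup \cdots \cup X_t \subseteq Y}} g_1(X_1) \cdots g_t(X_t).
\]
These two are manifestly equal, since $X_1, \dots, X_t \subseteq Y$ is the same condition as $X_1 \cup \cdots \cup X_t \subseteq Y$. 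Applying $\mu$ to both sides and using $\mu(\zeta h) = h$ with $h = g_1 \ast_c \cdots \ast_c g_t$ then yields exactly the claimed formula evaluated at an arbitrary $X \in 2^U$.

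\textbf{Main obstacle.} There is essentially no serious obstacle—the result is a clean consequence of Möbius inversion on the Boolean lattice—so the only care needed is bookkeeping: making sure the transforms are taken over the correct ground set ($2^U$ throughout, not $2^Y$), and that the multiplication on the right-hand side of the lemma statement is pointwise multiplication of set functions in the ring $R$ rather than a convolution. Since the paper is merely invoking this as a black box from \cite{BjorklundHKK07}, I would state it as cited and move directly to the application: representing, for a clique constraint $\clique(X)$, the relevant counts of partial solutions as set functions over the tail vertices and replacing the expensive cover product (combining children) by a pointwise product of zeta-transformed tables, exactly as the neighborhood operators were handled via the discrete Fourier transform. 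The genuinely delicate part of the extension is not this lemma but verifying that the analogues of the leaf/forget/join equalities of \cref{sec:recursive} go through with the extra zeta/Möbius layer while preserving the single-exponential time and polynomial space bounds.
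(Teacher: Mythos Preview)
Your proposal is correct: the paper does not prove this lemma at all but simply cites it from \cite{BjorklundHKK07}, exactly as you anticipated. Your proof sketch via M\"obius inversion and the identity $\zeta(g_1 \ast_c \cdots \ast_c g_t) = (\zeta g_1)\cdots(\zeta g_t)$ is the standard argument and is entirely correct.
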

Informally speaking, this result allows to replace the time-consuming combination of $t$ subsets of $X$ via a pointwise product. %(together with only one iteration over the subsets of $X$).

Now we sketch how to use this technique to extend our approach from the previous section to the clique predicate.
The crucial observation is that for every clique in $G$, there is a root-to-leaf path in the elimination tree $T$ on which every vertex of this clique lies.
Thus, a clique predicate can be \emph{falsified} in two ways.
Either two vertices of the non-clique come from different subtrees of some node, or they are in the ancestor-descendant relation but there is no edge between them.
Now consider a formula $\phi \in \core\NEO_2\plusk$ and let $\cls(\phi) \subseteq \var^V(\phi)$ (stands for ``clique set'') denote the set of vertex set variables $X$ such that $\clique(X)$ occurs in the formula.
We first extend the definition of the index of a triad $\tss$ to a set $S \subseteq \cls(\phi)$.
The set $S$ then represents the vertex set variables $X$ such that there is a vertex $v \in \psf(X) \cap \subtree$ and a vertex $w \in \psf(X)$ with $v \neq w$ such that $v$ and $w$ are not adjacent.
So in simple words, the vertex $v$ ``falsifies'' the clique predicate $\clique(X)$.
Furthermore, to also capture the possibility that two vertices land in different subtrees and thus, falsify some clique, we further extend the index by a set $Z \subseteq \var^V(\phi)$ which reflects all vertex variables $X$ satisfying $\psf(X) \cap \subtree \neq \emptyset$. 

With this idea we can proceed as follows algorithmically.
%First of all, observe at each node, we can use inclusion-exclusion to transform back and forth between the interpretations ``precisely the set $S$ of cliques is falsified'' and ``only the cliques from $S$ are allowed to be falsified''.
At the root, we additionally iterate over all possible $S \subseteq \cls(\phi)$ to fix the sets of clique-predicates we want to falsify:
with this additional information, we can check whether with this ``guess'' the formula evaluates to $1$, otherwise we discard the candidate.
Also at the root we iterate over the subsets $Z \subseteq \var^V(\phi)$ to fix which vertex sets are non-empty.
At a leaf-node $u$ (cf.\ \cref{lem:modelcheck:leaf}) the set $\subtree[u]$ is empty so we additionally check whether $S$ or $Z$ in the index is non-empty: in this case, we simply return $0$, otherwise we proceed as before.
At a forget-node (cf.\ \cref{lem:modelcheck:forget}), when we branch over all restrictions of the index to $\{u\}$ we proceed as follows.
First, check which cliques, say $S_u$, are falsified by $u$ in this branch, i.e., for which vertex set variables $X$ there exists a vertex $v$ such that have $u \in \indf_u(X)$, $v \in \indf(X)$, and $uv \notin E(G)$, and then (1) discard the branch if some clique outside $S$ is falsified (i.e., if $S_u \not \subseteq S$ holds), otherwise (2) allow any subset of $Q \subseteq S_u$ to also be falsified in $\subtree(u)$, and (3) enforce the remaining cliques in $S \setminus S_u$ to be falsified in the subtree (i.e., use $Q \cup (S \setminus S_u)$ in the recursive call).
Second, for every vertex set variable $X$ with $u \in \indf(X)$ we can either keep $X$ in the set $Z$ for the recursive call (to reflect that the vertex set $X$ is non-empty also in $\subtree(u)$) or discard it.
This increases the branching by a factor of $2^{\cO(\abs{\phi})}$.

At a join-node (cf.\ \cref{lem:modelcheck:join}) $u$ with children $v_1, \dots, v_t$ we proceed as follows.
First, we iterate over all subsets $S_2 \subseteq S$: this set then reflects which cliques are falsified due to having two vertices from sets in $\subtree[v_1], \dots, \subtree[v_t]$.
Second, we iterate over all sets $S_1$ satisfying $S \setminus S_2 \subseteq S_1 \subseteq S$.
The set $S_1$ reflects which of the cliques are falsified in the same subtree: it has to contain all cliques in $S \setminus S_2$ but is allowed to further falsify the elements of $S_2$.

For a variable $X \in S_2$, to satisfy the constraint ``\emph{the vertices of $X$ occur in at least two of $\subtree[v_1], \dots, \subtree[v_t]$}'', we can take the partial solutions where \emph{the vertices of $X$ occur in at least one of $\subtree[v_1], \dots, \subtree[v_t]$} and subtract the ones where \emph{the vertices of $X$ occur in exactly one of $\subtree[v_1], \dots, \subtree[v_t]$}. 
To compute the number of partial solutions where \emph{the vertices of $X$ occur in exactly one of $\subtree[v_1], \dots, \subtree[v_t]$} we rely on the fast subset cover and we additionally equip the polynomials of the children of $u$ with a new formal variable $\delta$.
The exponent of $\delta$ in the polynomial of the child $v_i$ reflects whether or not a vertex of $X$ occurs in $\subtree[v_i]$.
Thus, when we multiply the polynomials of the children of $u$, we can safely discard the partial products where the exponent of $\delta$ is larger than one as this corresponds to the case that vertices of $X$ occur in at least two children.
On the other hand, fast subset cover ensures that the vertices of $X$ appear in \emph{at least} one of $\subtree[v_1], \dots, \subtree[v_t]$.
Crucially, the same variable $\delta$ (whose exponent is bounded by $\cO(\abs{\phi})$) can be used to correctly keep track of all variables in $S_2$ simultaneously.
The usage of the variable $\delta$ increases the space complexity only by a factor of $\cO(\abs{\phi})$.
For technical details, we refer to the work of Bergougnoux et al.~\cite{BergougnouxCGKM25} where they employ similar ideas relying on fast subset cover with one new formal variable to solve \textsc{Independent Set} parameterized by shrubdepth.

As for the set $Z \setminus S_2$, we apply fast subset cover to efficiently combine all possibilities for $Z = Z_1 \cup \dots \cup Z_t$ where $Z_i$ reflects the vertex variables that are non-empty in $\subtree[v_i]$. %which of the sets $\subtree[v_1], \dots, \subtree[v_t]$ vertices of every vertex set variable $X \in Z \setminus S_2$ occur.
%When doing so, we respect the constraints provided by $S_2$.
%The main idea of this counting is to introduce one new formal variable $\delta$ whose exponent keeps track of ``how many vertex sets $X \in S_2$ are non-empty in $\subtree[v_i]$.
%With similar techniques, we ensure that every vertex set variable in $\cls(\phi) \setminus S_2$ appears in exactly one child.
%Also we want to assign every element of $S_1$ to at least one child: we apply the fast subset cover to accelerate this computation.
Altogether, due to fast subset cover we get a total branching factor of $2^{\cO(\abs{\phi})}$ for a join-node (instead of using the same index for each child (cf.\ \cref{lem:modelcheck:join})). %, we branch into $2^{\cO(\abs{\phi})}$ options instead of using the same index for each child.
With this we obtain our logspace algorithm for $\NEOtwo[\finrec]\plusk$:
\thmMaink*
\begin{comment}
\begin{theorem}\label{thm:plus-cliques}
    There is an algorithm that given a formula $\phi \in \NEOtwo[\finrec]\plusk$, a graph $G$ and an elimination tree $T$ of $G$ of depth at most $\td$, runs in time $(2 \cdot (\dphi+1) \cdot \pphi)^{\cO(|\phi|\cdot\td)} \cdot n^{\cO(|\phi|)}$ and space $\cO(\td \cdot |\phi| \cdot (\log{\dphi} + \log{\pphi} + \log n + \log{|\phi|}))$ and decides whether $G \vDash \phi$ holds.
\end{theorem}
\end{comment}
\subsection{Connectivity and acyclicity} 

Here we sketch how we can extend our results obtained so far to acyclicity and connectivity constraints.
For this, let $\phi$ be a formula from $\NEO_2\ack$.
By \cref{obs:core} we may assume that $\phi$ is in $\core\NEO_2\ack$.
And let $G$ be a graph and $T$ an elimination forest of $G$ of depth at most $\td$.
We carry out some transformations of $\phi$ following the ideas of Bergougnoux et al.~\cite{BergougnouxDJ23}.
First, we bring the quantifier-free part of $\phi$ in disjunctive normal form to obtain the equivalent formula~$\psi$.
Now $\psi$ is of form $\exists \bar{X} \psi_1 \lor \psi_2 \lor \dots \lor \psi_t$ where $\exists \bar{X}$ is a sequence of existential quantifiers and each $\psi_i$ is a conjunction of (possibly negated) set equalities, size measurements, connectivity and acyclicity constraints, and clique constraints.
And the task of model checking $\phi$ in a graph $G$ is equivalent to checking if there exists an index $i \in [t]$ with $G \vDash \exists \bar{X} \psi_i$.
Observe that we have $t \leq 2^{\cO(\abs{\phi})}$.
Also for every $i \in [t]$, we have $\abs{\psi_i} \in \cO(\abs{\phi})$, $p_{\psi_i} \leq p_{\phi}$, and $d_{\psi_i} \leq d_{\phi}$.  
This is because every set equality, every size measurement, and every and connectivity or acyclicity constraint occurring in $\phi$ occurs at most once in $\psi_i$, possibly (un)negated, and on the other hand, every such element occurring in $\psi_i$ has to occur in $\phi$ as well.
So we fix a value $i \in [t]$ and focus on checking $G \vDash \exists \bar{X} \psi_i$. %and let $\xi \defeq \xi_i$ for simplicity.

\paragraph{Connectivity}
A subgraph induced by a vertex set $X$ is disconnected if and only if we can partition $X$ into two non-empty sets with no edges between them.
So for a vertex set variable $X$, the constraint $\lnot\conn(X)$  is equivalent to 
\begin{align*}
\exists A \exists B \exists C \exists Z_1 \exists Z_2 \, &\neg(Z_1 = \emptyset) \land \neg(Z_2 = \emptyset) \land Z_1 \cup Z_2 = X \land Z_1 \cap Z_2 = A \land \emptyset = A \land \mathtt{E} = B \land \\
&N^{\bN^+}(Z_1, B) = C \land C \cap Z_2 = \emptyset.
    % \exists A \exists B \exists Y \exists Z \ Y \cup Z = X \land Y \cap Z = A \land A = \emptyset \land \lnot Y = \emptyset \land \lnot Z = \emptyset \land N^{\bN^+}(Y) \cap Z = B \land B = \emptyset.
\end{align*}
We remark that even though this equivalent formulation might seem unnecessarily complicated, it ensures that after the replacements carried out next, the formula will still be in $\core\NEO_2\ack$. 
A similar remark will apply to edge sets.
Similarly, an edge set $Y$ induces a disconnected graph if and only the set, say $Z$, of endpoints of $Y$ can be partitioned into two non-empty sets $Z_1$ and $Z_2$ such that no edges in $Y$ run between $Z_1$ and $Z_2$.
So for an edge set variable $Y$, the constraint $\lnot\conn(Y)$  is equivalent to 
\begin{align*}
\exists A \exists B \exists C \exists Z \exists Z_1 \exists Z_2 \, &\lnot(Z_1 = \emptyset) \land \lnot(Z_2 = \emptyset) \land Z_1 \cup Z_2 = Z \land Z_1 \cap Z_2 = A \land \emptyset = A \land \mathtt{V} = B \land \\ 
&N^{\bN^+}(B,Y) = Z \land N^{\bN^+}(Z_1,Y) = C \land C \cap Z_2 = \emptyset.
\end{align*}
As a next step we therefore replace as sketched above all occurrences of $\lnot\conn(X)$ in $\psi_i$ %and $\lnot\acy(X)$ 
where $X$ is a vertex or an edge set variable.
%for some vertex set variable~$X$ and all occurrences of $\lnot \conn(Y)$ for an edge set variable as sketched above.
During each such replacement we introduce new vertex and edge set variables.
After that, we push all existential quantifiers to the front to obtain a formula equivalent to $\psi_i$ of form $\exists \bar{X} \xi'$ where $\xi'$ is a quantifier-free formula in disjunctive normal form and in which all constraints of form $\conn(X)$ %and $\acy(X)$ 
are not negated.
For simplicity, we still denote this equivalent formula by $\psi_i$.

 So to check if $G \vDash \psi_i$, we can discard the connectivity constraints and search for a ``solution'' where each of the required sets is connected.
Note that the formula $\psi_i$ is still in $\core\NEO_2\ack$.
%and this is the reason to employ seemingly unnecessarily complicated replacements above.
Now we sketch how to extend the framework presented in the paper so far with the Cut\&Count technique by Cygan et al.~\cite{CyganNPPRW22} to ensure, with high probability, that for every constraint $\conn(X)$ in $\psi_i$, the vertex set $\conn(X)$ is connected. 
For now assume that the original formula $\phi$ did not contain acyclicity constraints, we consider them later.

%, for every constraint $\acy(X)$ in $\xi'_i$, the vertex set $\acy(X)$ is acyclic.
%First, we focus on the connectivity constraints as the idea behind the Cut\&Count~\cite{CyganNPPRW22} for connectivity is simpler to explain and later on, we sketch the approach for acyclicity.
Before describing the application of Cut\&Count~\cite{CyganNPPRW22} to our setting, let us sketch the main idea behind it. %we generalize the Cut idea to our setup, 
Suppose we are looking for a vertex set $X$ with a certain property $\alpha$ such that, additionally, the subgraph of $G$ induced by $X$ is connected.
A partition $(X^L, X^R)$ of a vertex set~$X$ is called a \emph{consistent cut} if there is no edge in $G$ with one endpoint in $X^L$ and the other in $X^R$.
In other words, every connected component of $G[X]$ is contained in one of $X^L$ and $X^R$.
Hence, there are $2^k$ consistent cuts of $X$ where $k$ denotes the number of connected components of $G[X]$.
And this number is divisible by $4$ if and only if $X$ induces a disconnected subgraph of $G$.
Thus, if we may, for some reason, assume that there exists at most one vertex set $X$ with the property $\alpha$, then the task of becomes equivalent to counting, modulo 4, the pairs $(X^L, X^R)$ of disjoint vertex sets such that $X^L \cup X^R$ satisfies $\alpha$ and there are no edges between $X^L$ and $X^R$.
In other words, we replaced the time-consuming task of ensuring the connectivity to counting certain pairs of vertex sets with ``simpler'' constraints.

This idea can naturally be generalized to multiple connectivity constraints: as long as we may assume that there exists a unique ``candidate solution'' and the task involves connectivity constraints on vertex sets $Y_1, \dots, Y_r$ of the candidate solution, we can eliminate all connectivity constraints by counting the combinations of consistent cuts of each of $Y_1, \dots, Y_r$. 
Then deciding the existence of a solution where all of these $r$ sets are connected is equivalent to checking whether the resulting number is not divisible by $2^{r+1}$ (i.e., whether for none of these sets the number of consistent cuts is divisible by four).

Similarly, for an edge set $Y$, the endpoints of $Y$ are precisely the vertices in $V_Y \defeq N^{\bN^+}(V(G), Y)$, and this can be formalized in the logic. So to check connectivity of the edge set $Y$ we can similarly count the consistent cuts of $V_Y$ in the graph $(V_Y, Y)$.
For this reason, in the remainder we assume that for every constraint $\conn(X)$ in the formula, $X$ is a vertex set variable, and the edge set variables can be handled similarly.

The standard tool used to guarantee that, with high probability, the solution is unique (if it exists) is the classic Isolation lemma:
\begin{theorem}[Isolation lemma, \cite{MulmuleyVV87}]\label{thm:isolation-lemma}
    For an integer $n$, let $\mathcal{F} \subseteq 2^{[n]}$ be a non-empty set family over the universe $[n]$.  For each $i \in [n]$, choose a weight $w(i) \in [2n]$ uniformly and independently at random. Then with probability at least $1/2$ there exists a unique set of minimum weight in $\mathcal{F}$.
\end{theorem}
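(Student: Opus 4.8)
The plan is to follow the classical argument of Mulmuley, Vazirani and Vazirani, whose decisive idea is to perform a union bound over the $n$ ground elements of $[n]$ rather than over the (possibly exponentially many) pairs of sets in $\mathcal{F}$. Throughout, for a weight assignment $w$ write $w(S)=\sum_{i\in S}w(i)$ and $\mu(w)=\min_{S\in\mathcal{F}}w(S)$; since $\mathcal{F}\neq\emptyset$ this minimum is attained, and the goal is to show that with probability at least $1/2$ it is attained by a \emph{single} set.

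First I would introduce the key notion: a weight assignment $w$ is \emph{confused at an element} $i\in[n]$ if there exist two minimum-weight sets $S_1,S_2\in\mathcal{F}$ (i.e.\ $w(S_1)=w(S_2)=\mu(w)$) with $i\in S_1\setminus S_2$. The first, easy step is to observe that if $w$ is confused at no element of $[n]$ then $\mathcal{F}$ has a unique minimum-weight set: if $S_1\neq S_2$ were both of weight $\mu(w)$, pick any $i$ in their symmetric difference, and after possibly renaming the two sets we may assume $i\in S_1\setminus S_2$, so $w$ is confused at $i$, a contradiction. Hence it suffices to bound the probability that $w$ is confused at some element.

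The core step, and the only place that requires care, is the bound $\Pr[w\text{ is confused at }i]\le\frac{1}{2n}$ for each fixed $i\in[n]$. Here I would condition on the values $w(j)$ for all $j\neq i$ and bound the conditional probability over the single remaining uniform choice $w(i)\in[2n]$. Define $a\defeq\min\{\sum_{j\in S\setminus\{i\}}w(j):S\in\mathcal{F},\ i\in S\}$ and $b\defeq\min\{w(S):S\in\mathcal{F},\ i\notin S\}$, with an empty minimum interpreted as $+\infty$; once the weights $w(j)$, $j\neq i$, are fixed, both $a$ and $b$ are fixed numbers in $\bN\cup\{+\infty\}$. Then the minimum weight over sets of $\mathcal{F}$ containing $i$ is $a+w(i)$, the minimum over sets avoiding $i$ is $b$, and $\mu(w)=\min(a+w(i),\,b)$. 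If $w$ is confused at $i$ it possesses a minimum-weight set containing $i$ and a minimum-weight set avoiding $i$, which forces $a+w(i)=\mu(w)=b$, i.e.\ $w(i)=b-a$. As $b-a$ is a fixed value (or $\pm\infty$) after conditioning and $w(i)$ is uniform over the $2n$-element set $[2n]$ and independent of the other weights, the conditional probability of this event is at most $\frac{1}{2n}$; averaging over the conditioning yields the unconditional bound.

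Finally, I would assemble the two steps by a union bound: $\Pr[\exists i\in[n]:\ w\text{ is confused at }i]\le n\cdot\frac{1}{2n}=\frac12$, so with probability at least $1/2$ the assignment $w$ is confused at no element, and by the first step $\mathcal{F}$ then has a unique minimum-weight set. I do not expect a genuine obstacle; the main point to handle carefully is the conditioning argument in the core step — in particular the degenerate cases where no set of $\mathcal{F}$ contains $i$, or none avoids $i$ (then $a=+\infty$ or $b=+\infty$ and $w$ can never be confused at $i$, consistent with the claimed bound), and the precise assertion that ``$w$ confused at $i$'' forces $w(i)$ to equal a value determined \emph{solely} by the weights of the other elements.
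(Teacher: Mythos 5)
Your proof is correct: the paper does not prove this lemma but only cites it, and your argument is precisely the classical one of Mulmuley, Vazirani and Vazirani from the cited reference — the union bound over ground elements combined with the observation that conditional on the weights of the other elements, being ``confused at $i$'' pins $w(i)$ to a single value. The degenerate cases ($a=+\infty$ or $b=+\infty$) are handled correctly, so nothing is missing.
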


Now we are ready to explain how to incorporate the required changes in our algorithm from \cref{sec:modelcheck} to handle the connectivity constraints.
%Recall that above we constructed a formula $\xi_i$ the satisfiability of which in $G$ we need to check, and all connectivity constraints $\xi$ contains are not negated.
%For simplicity, we denote $\xi$ by $\xi$ in the following.
Let 
\[
    L = 2 \cdot \abs{\psi_i} \cdot n \cdot (1 + \td + (d_{\psi_i} + 1) + \td \cdot 2 \cdot d_{\psi_i}),
\]
this value will be used as the largest weight.
Recall that any partial solution in \cref{sec:modelcheck} has form $(\psf, \psg, \pse, \psw)$ where the elements of this tuple satisfy the conditions from \cref{def:modelcheck:partial}.
Thus we can view every partial solution as a subset of a universe of size $2^{\abs{\psi_i} \cdot n \cdot (1 + \td + (d_{\psi_i}+1) + \td \cdot 2 \cdot d_{\psi_i})}$ which consists of the following elements:
\begin{itemize}
    \item at most $\abs{\psi_i} \cdot n$ pairs of form $(X, v)$ where $X$ is a vertex set variable and $v$ is a vertex of $G$,
    \item at most $\abs{\psi_i} \cdot n \cdot \td$ pairs of form $(Y, e)$ where $Y$ is an edge set variable and $e$ is an edge of $G$,
    \item at most $\abs{\psi_i} \cdot n \cdot (d_{\psi_i}+1)$ triples of form $(C, v, i)$ with $C \in \Cat$, $v$ being a vertex of $G$, and $i \in [d_{\psi_i}]_0$,
    \item and at most $\abs{\psi_i} \cdot n \cdot \td \cdot 2 \cdot d_{\psi_i}$ pairs of form $(C, \vec e)$ where $C \in \Cat$ and $\vec e \in \vec A(\vec G_{\phi})$. 
\end{itemize}
So first, in our algorithm, we sample the following weight functions:
\begin{enumerate}
    \item For every vertex set variable $X$, the weight function $w_X \colon V(G) \to [L]$.
    \item For every edge set variable $Y$, the weight function $w_Y \colon E(G) \to [L]$.
    \item For every $C \in \Cat$, the weight function $w^{\pse}_C \colon V(G) \times [d_{\psi_i}]_0 \to [L]$.
    \item And for every element $C \in \Cat$, the weight function $w^{\psw}_C \colon \vec A(\vec G_\phi) \to [L]$.
\end{enumerate}
For every weight function and every element in its domain, we sample the weight of this element %of every element from the domain of every weight function 
independently at random from the range of the values of the weight function.
We naturally define a weight of a partial solution as the sum of the weights of all ``building blocks'' of a partial solution, we restrict the weighting of vertices to $\subtree$ only. 
In the next argument, we refer to the proof of \cref{thm:neo-theorem}.
The choice of the weights together with \cref{thm:isolation-lemma} now imply the following. 
If the union of the sets
$\Partial_r^=(\emptyset^4, \False, j, \barc, j)$ over all $j \in \bN$, $\barc \in \bN^\ell$, $\False \subseteq \EQ(\False)$ with $\ip{\psi_i}^{(\False, \barc)}$ 
%for every fixed $j \in \bN$, $\barc \in \bN^\ell$ (where $\ell$ denotes the number of size measurements in $\xi$), and $\False \in \EQ(\phi)$, if the set $\Partial_r^=(\emptyset^4, \False, j, \barc, j)$ 
restricted to partial solutions respecting all connectivity constraints is non-empty, then it contains an element of a unique weight with probability at least $1/2$.
%Let us remark that we did not define the weight of the neighborhood expectation since as one can see from the proofs of \cref{lemma:modelcheck:realizable:forward} and \cref{lemma:modelcheck:realizable:backward}, the neighborhood function $\pse$ is uniquely determined by the vertex interpretation $\psf$ and the edge interpretation $\psg$ for every element $(\psf, \psg, \pse, \psw)$ of $\Partial_r^=(\emptyset^4, \False, j, \barc, j)$.
%So we can extend the polynomials $\cP$ and $\cQ$ keeping track of partial solutions by a new formal variable, say, $\tau$ whose exponent keeps track of the weight of the partial solutions (where we restrict the weight of vertex set variables to $\subtree$ only).
So now instead of counting partial solutions, we can count combinations of partial solutions and connected cuts of variables $X$ such that $\psi_i$ contains the constraint $\conn(X)$.
Since connected cuts of a vertex set $X$ can be easily modeled as 
\begin{equation}\label{eq:connected-cut}
    \exists A \exists B \exists X^L \exists X^R \ X^L \cup X^R = X \land X^L \cap X^R = \emptyset \land \mathtt{E} = A \land N^{\bN^+}(X^L, A) = B \land B \cap X^R = \emptyset,
\end{equation}
we now %carry out one last replacement in $\xi$.
%Namely, we 
replace every constraint of form $\conn(X)$ by such a formula, and then push all existential quantifiers to the front.
For simplicity, we still denote the arising formula by $\psi_i$.
Now we run our algorithm from the previous section with the following changes.
\begin{enumerate}
    \item First, randomly generate the weight functions as described above and store them. Let us emphasize that we \emph{do not} weight the variables representing the sides of the connected cut. 
    \item Then, extend all polynomials $\cP$ and $\cQ$ by a new formal variable, say $\tau$, whose exponent keeps track of the weight of the partial solutions. Similarly to the definition of $\barc$, we restrict the weight of vertex set variables to $\subtree$ only.
    \item At the leaf computation, when the edge $e$ is assigned to an edge set variable $Y$, we additionally multiply the polynomial by $\tau^{w_Y(e)}$. We proceed similarly for the arcs added to $\psw(C)$ for some $C \in \Cat$.
    \item At a forget computation of a vertex $u$, similarly to the multiplication by $\gamma^{\barc}$ reflecting that $u$ now belongs to the subtree, we multiply by $\tau^{w}$ for the total weight $w$ of $u$ in all vertex set variables it is assigned to together with the weights of its values in $\bE_u$. 
    \item At the root, additionally to iterating through the values $j, \barc, \False$ (cf.\ the proof of \cref{thm:neo-theorem}), we iterate over all possible weights. 
    Note that the largest possible weight is at most 
    \[
        \abs{\psi_i} \cdot n \cdot (1 + \td + 1 + 2 \cdot \td) \cdot L \in \cO(n + \abs{\psi_i} + d_{\psi_i})^{\cO(1)}.
    \]
    %(there are $\cO(n^4 \cdot \abs{\xi} \cdot d_{\xi}$).
    For each of those combinations, we reconstruct the corresponding coefficient of the polynomial and check whether it is divisible by $2^{k+1}$ where $k$ denotes the number of connectivity constraints in $\psi_i$.
    If we find at least one combination where the coefficient is %non-zero but 
    not divisible by $2^{k+1}$, then we output that the formula is satisfiable.
\end{enumerate}
This algorithm does not give false positives and gives false negatives with probability at most $1/2$.
Finally, we analyse how these adaptations influence the running time and the space complexity.
\begin{enumerate}
    \item We reduce the model checking of $\phi$ to the problem of model checking at most $2^{\abs{\phi}}$ formulas $\psi_i$ such that the length of $\psi_i$ is upper-bounded by $\cO(\abs{\phi})$, $d_{\psi_i}$ is upper-bounded by $\dphi+1$, and $p_{\psi_i}$ is upper-bounded by $\pphi$.
    \item To generate and store the weight functions we need space polynomial in $n$, $\dphi$, and $\abs{\phi}$.
    \item The largest exponent of any formal variable of the root polynomial $\cP$ is upper-bounded by a polynomial in $n$, $\dphi$, and $\abs{\phi}$ so the bit length of each prime number we use (cf.\ \cref{sec:modelcheck}) remains logarithmic in $n$, $\dphi$, and $\abs{\phi}$.
    %\item The number of repetitions we carry out to lower the error probability is upper-bounded by $\cO(\abs{\phi})$.
\end{enumerate}
Altogether we obtain the following intermediate result:
\begin{theorem}\label{thm:plus-connectivity}
    There is a Monte-Carlo algorithm that given a formula $\phi \in \NEOtwo[\finrec]\ack$ without acyclicity constraints, a graph $G$ and an elimination tree $T$ of $G$ of depth at most $\td$, runs in time $( (\dphi+1) \cdot \pphi)^{\cO(|\phi|\cdot\td)} \cdot n^{\cO(\ell)}$ and space polynomial in $n + \abs{\phi} + \dphi + \pphi$ and decides whether $G \vDash \phi$ holds.
    The algorithm cannot give false positives and may give false negatives with probability at most $1/2$. 
\end{theorem}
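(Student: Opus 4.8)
The plan is to remove the connectivity constraints by combining the Cut\&Count technique of Cygan et al.~\cite{CyganNPPRW22} with the Isolation lemma (\cref{thm:isolation-lemma}), thereby reducing model checking to a family of counting problems solvable by the algorithm of \cref{thm:neo-theorem}, extended to clique constraints as in \cref{thm:plus-cliques}. The first step is normalization: by \cref{obs:core} I may assume $\phi \in \core\NEOtwo\ack$, and after putting its quantifier-free part into disjunctive normal form I obtain an equivalent formula $\exists \bar X\,(\psi_1 \lor \dots \lor \psi_t)$ with $t \le 2^{\cO(|\phi|)}$ and, for each $i$, $|\psi_i| \in \cO(|\phi|)$, $d_{\psi_i} \le \dphi$ and $p_{\psi_i} \le \pphi$; it then suffices to decide, for some $i$, whether $G \vDash \exists \bar X\,\psi_i$, so I fix one such $i$. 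Every negated constraint $\lnot\conn(X)$ (for a vertex or an edge set variable $X$) is expressible in $\core\NEOtwo\ack$ by partitioning the relevant vertex set into two non-empty classes with no edge between them — using $N^{\bN^+}(\mathtt V, Y)$ to name the endpoint set of an edge set $Y$ — so after these replacements and pushing the fresh quantifiers to the front I may assume that every $\conn(\cdot)$ in $\psi_i$ occurs positively. Discarding those constraints, the remaining task is to decide whether the resulting formula has a ``solution'' in which every required set is connected.

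Next I would set up Cut\&Count. Writing a consistent cut of a vertex set $X$ as the $\core\NEOtwo\ack$-formula \eqref{eq:connected-cut}, I replace each positive $\conn(X)$ by such a formula and again move the quantifiers to the front. Since a vertex set induces a disconnected subgraph exactly when its number of consistent cuts is divisible by four, and $\psi_i$ has, say, $k$ connectivity constraints, \emph{provided the solution is unique}, the existence of a solution with all $k$ sets connected is equivalent to the total number of pairs (partial solution in the sense of \cref{def:modelcheck:partial}, tuple of consistent cuts) being non-divisible by $2^{k+1}$. To force uniqueness I regard each partial solution as a subset of a universe of size $2^{\cO(|\psi_i|\cdot n\cdot(1+\td+d_{\psi_i}+\td\, d_{\psi_i}))}$ — the pairs $(X,v)$, the pairs $(Y,e)$, the triples $(C,v,i)$, and the pairs $(C,\vec e)$ with $\vec e$ an arc of $\vec G_\phi$ — sample independent weight functions $w_X, w_Y, w^{\pse}_C, w^{\psw}_C$ with range $[L]$ for $L = 2\cdot|\psi_i|\cdot n\cdot(1+\td+(d_{\psi_i}+1)+2\td\, d_{\psi_i})$, crucially \emph{without} weighting the auxiliary cut sides, and store these functions. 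By \cref{thm:isolation-lemma}, if the family of ``root'' partial solutions that realize some pair $(\False,\barc)$ with $\ip{\psi_i}^{(\False,\barc)}=1$ and respect all connectivity constraints is non-empty, then it contains a unique minimum-weight element with probability at least $1/2$.

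Finally I would run the algorithm of \cref{thm:neo-theorem} on the transformed formula (handling clique constraints via \cref{thm:plus-cliques}), equipping every polynomial $\cP$ and $\cQ$ with one extra formal variable $\tau$ whose exponent records a partial solution's weight, with vertex-set weights restricted to $\subtree$ exactly as $\barc$ is. In the leaf equality of \cref{lem:modelcheck:leaf}, assigning an edge to an edge-set variable or an arc to some $\psw(C)$ multiplies by the corresponding power of $\tau$; in the forget equality of \cref{lem:modelcheck:forget}, moving $u$ into the subtree multiplies by $\tau$ raised to the combined weight of $u$ in its vertex sets together with the weights of the neighborhood function assigned to $u$. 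At the root I iterate over $\False$, over the admissible finite ranges of $j$ and $\barc$ as in the proof of \cref{thm:neo-theorem}, and in addition over all attainable total weights $w$ (which is bounded by a polynomial in $n+|\phi|+\dphi$); for each tuple I reconstruct the relevant coefficient of the univariate lift of $\cP^=_r$ via the Chinese remainder theorem, and answer ``satisfiable'' as soon as one such coefficient fails to be divisible by $2^{k+1}$. The algorithm has no false positives and, by \cref{thm:isolation-lemma}, false-negative probability at most $1/2$. The running time remains $((\dphi+1)\cdot\pphi)^{\cO(|\phi|\cdot\td)}\cdot n^{\cO(\ell)}$, since $t$, $|\psi_i|$, $d_{\psi_i}$ and $p_{\psi_i}$ are all controlled and the variable $\tau$, the sampled weights and the extra iteration over weights contribute only polynomial factors; the space degrades to polynomial in $n+|\phi|+\dphi+\pphi$ precisely because the weight functions must be stored.

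The step I expect to be the main obstacle is verifying the correctness of this Cut\&Count reduction inside our machinery: that the residues modulo $2^{k+1}$ extracted through the polynomial / discrete-Fourier-transform / Chinese-remainder pipeline really coincide with the intended number of (partial solution, consistent cuts) pairs; that \cref{thm:isolation-lemma} is applied to exactly the right set family, namely the root partial solutions corresponding to interpretations that realize $(\False,\barc)$ and respect every connectivity constraint; and that the consistent-cut gadgets, together with leaving their sides unweighted, neither break this uniqueness nor interfere with the $a$-, $b$-, $\barc$- and $\tau$-bookkeeping underlying the recursive equalities of \cref{sec:recursive}.
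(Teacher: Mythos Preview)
Your proposal is correct and follows essentially the same approach as the paper: normalize to $\core\NEOtwo\ack$, put the quantifier-free part into DNF, eliminate negated $\conn(\cdot)$ constraints by rewriting them in the core logic, then apply Cut\&Count with Isolation-lemma weights tracked via an extra formal variable $\tau$ in the polynomials of \cref{sec:recursive}, and at the root test divisibility of the relevant coefficients by $2^{k+1}$. The paper's argument is likewise a sketch at this level of detail, and your identification of the space blow-up as coming from storing the sampled weight functions matches the paper's reasoning.
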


\paragraph{Acyclicity}
Now we briefly sketch how to approach the acyclicity constraints following a similar but slightly more involved idea of Cut\&Count~\cite{CyganNPPRW22}.
Observe that for a vertex set $X$ the subgraph $G[X]$ is not acyclic if and only if there exists a non-empty subset $Y$ of $X$ such that $G[Y]$ induces a subgraph of min-degree at least $2$.
For this reason, we replace every constraint of form 
%Similarly, the constraint 
$\lnot \acy(X)$ with a vertex set variable $X$ in $\psi_i$ by %is equivalent to 
\[
    \exists A \exists B \exists Y \ Y \cap X = Y \land \lnot(Y = \emptyset) \land \mathtt{E} = A \land N^{\bN \setminus \{0, 1\}}(Y, A) = B \land Y \cap B = Y,
\]
and the push all quantifiers to the front.
Similarly, if $X$ is an edge set variable, we replace the constraint $\lnot \acy(X)$ by
\[
    \exists A \exists B \exists Y \ \mathtt{V} = A \land Y \cap X = Y \land \lnot(Y = \emptyset) \land N^{\bN^+}(A, Y) = B \land N^{\bN \setminus \{0, 1\}}(B, Y) = B
\]
and the push all quantifiers to the front.
This is valid because an edge set $X$ is not acyclic if it contains a non-empty subset $Y$ which induces a subgraph of degree at least two ($A$ is simply the whole vertex set and $B$ is the set of endpoints of $Y$).
For simplicity we still denote the arising formula by $\psi_i$ and observe that it is in $\core\NEO_2\ack$.
We remark that this replacement ensures $d_{\psi_i} \leq \dphi + 1$.
Now all acyclicity and connectivity constraints are non-negated and $\psi_i$ is in disjunctive normal form so we may discard those constraints from $\psi_i$ and instead, search for a solution which additionally satisfies them. 

%We focus on acyclicity of vertex set variables as edge sets can be handled analogously.
For this we follow the idea by Bodlaender et al.~\cite{BodlaenderCKN15}.
We artificially insert a universal (i.e., adjacent to all vertices in $V(G)$) vertex $v^*$ into the graph $G$ and obtain the graph $G^*$.
Let $E^*$ denote the set of edges incident with $v^*$.
We can easily obtain an elimination forest of $G^*$ of depth at most $\td+1$ by making $v^*$ the new root and making the root of $T$ to the unique child of $v^*$.
For a vertex set $X$ by $E[X]$ we denote the set of edges induced by $X$ and by $E^*_X$ we denote the edges from $E^*$ with an endpoint in $X$.
Then the following lemma is easy to verify:
\begin{lemma}[\!\!\cite{BodlaenderCKN15} Subsection 3.4]
%Bodlaender et al.~\cite{BodlaenderCKN15} showed that 
    A vertex set $X \subseteq V(G)$ is acyclic in $G$ if and only if there exists a set $E_X$ satisfying $E[X] \subseteq E_X \subseteq E[X] \cup E^*_X$ such that the graph $(X \cup \{v^*\}, E_X)$ is a tree.
\end{lemma}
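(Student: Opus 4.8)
The plan is to prove the characterization of acyclicity via the universal vertex $v^*$ and a spanning-tree edge set. First I would observe that both directions follow from standard facts about forests and trees. For the forward direction, suppose $X \subseteq V(G)$ is acyclic in $G$, i.e., $G[X]$ is a forest with connected components $C_1, \dots, C_k$ (where $k = 0$ if $X = \emptyset$). For each component $C_j$ pick an arbitrary representative vertex $r_j \in C_j$, and let $E_X \defeq E[X] \cup \{r_j v^* \mid j \in [k]\}$. Clearly $E[X] \subseteq E_X \subseteq E[X] \cup E^*_X$ since each $r_j v^*$ belongs to $E^*_X$. I would then argue that $(X \cup \{v^*\}, E_X)$ is a tree: it has $|X| + 1$ vertices and $|E[X]| + k = (\sum_j (|C_j| - 1)) + k = |X|$ edges (using that each component $C_j$, being a tree, has $|C_j| - 1$ edges), so it suffices to show connectivity. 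Indeed, within each $C_j$ the subgraph $G[C_j]$ is already connected, and the edge $r_j v^*$ attaches $C_j$ to $v^*$, so every vertex of $X \cup \{v^*\}$ lies in the same component; a connected graph on $N$ vertices with $N-1$ edges is a tree.

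For the backward direction, suppose there exists $E_X$ with $E[X] \subseteq E_X \subseteq E[X] \cup E^*_X$ such that $H \defeq (X \cup \{v^*\}, E_X)$ is a tree. Since a tree is acyclic, the subgraph $H - v^*$ obtained by deleting $v^*$ (and all incident edges, which all lie in $E^*_X \setminus E[X]$) is also acyclic. But $H - v^* = (X, E_X \setminus E^*_X) = (X, E[X])$, because $E_X \subseteq E[X] \cup E^*_X$ forces $E_X \setminus E^*_X \subseteq E[X]$, while $E[X] \subseteq E_X$ and $E[X] \cap E^*_X = \emptyset$ (no edge induced by $X$ is incident with $v^*$, as $v^* \notin X$) give $E[X] \subseteq E_X \setminus E^*_X$. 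Hence $(X, E[X]) = G[X]$ is acyclic, as desired.

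The argument is essentially routine; the only place requiring a little care is the edge-counting identity in the forward direction and the bookkeeping about which edges can possibly lie in $E_X$ in the backward direction, i.e., making the set-theoretic manipulations $E_X \setminus E^*_X = E[X]$ precise. I do not anticipate a genuine obstacle here — this is a known lemma from Bodlaender et al.~\cite{BodlaenderCKN15}, and the proof I sketched mirrors theirs. If one wanted to be fully self-contained, the one external fact invoked is the elementary graph-theoretic equivalence ``a graph on $N$ vertices is a tree iff it is connected and has $N-1$ edges (equivalently, iff it is acyclic and has $N-1$ edges)'', which I would simply cite as folklore.
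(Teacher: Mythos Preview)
Your proof is correct and entirely standard. The paper itself does not prove this lemma; it simply states it with the remark ``the following lemma is easy to verify'' and cites Bodlaender et al.~\cite{BodlaenderCKN15}, so your argument is in fact more detailed than what the paper provides.
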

%If by $E_X$ we denote the set of edges induced by the vertex set $X$, then 
The graph $(X \cup \{v^*\}, E_X)$ is a tree if and only if (1) it is connected and (2) the number of edges in $E_X$ is precisely $|X|$ (this is because a tree on $|X| + 1$ vertices has $|X|$ edges).
Crucially, observe that if this graph is connected, then the number of edges is at least $|X|$.
So to ensure acyclicity it suffices to ensure connectivity and that the number of edges is no larger than $|X|$.
It is not difficult to formulate an analogous criterion for acyclicity of an edge set so we focus on vertex set variables in the remainder.

We sketch what changes need to be made to incorporate this idea in our model checking algorithm.
First, for every constraint $\acy(X)$ occurring in $\psi_i$, we introduce an existentially quantified variable $E_X$.
Furthermore, we replace the constraint $\acy(X)$ similarly to \eqref{eq:connected-cut} to model the connected cuts of $X \cup \{v^*\}$ with the only difference that we now have $X^L \cup X^R = X \cup \{v^*\}$---note that we can easily transform this condition into $\core\NEOtwo[\finrec]$.

The vertex $v^*$ and the edges in $E^*$ are ``transparent'' to the algorithm except for the treatment of the acyclicity constraints: by this we mean that the algorithm does not assign the vertex $v^*$ to any vertex sets other than the sides of the connected cuts, and similarly the edges in $E^*$ are only allowed to be assigned to some of the edge set variables of form $E_X$.
Additionally, we introduce two new formal variables $\rho_1$ and $\rho_2$.
The exponent of $\rho_1$ keeps track of the sum of cardinalities of the sets $E_X$ and $\rho_2$ keeps track of the sum of the cardinalities of the sets $X$ over all vertex set variables $X$ for which we enforce acyclicity.
At the leaf computation, when considering an edge, say $vw$, such that both $v$ and $w$ belong to $\indf(X)$, the edge $vw$ is necessarily placed in $\psg(E_X)$.
On the other hand, we only allow such edges as well as the edges from $E^*$ with an endpoint in $\psf(X)$ to be placed in $E_X$. 
And we multiply by $\rho_1$ whenever such an edge is added to some $E_X$.
Then at the root we only consider the monomials in which the exponents of $\rho_1$ and $\rho_2$ are equal.
Except for these changes, we proceed as described earlier for connectivity: in particular, we weight every variable~$E_X$ to ensure the uniqueness of a solution with high probability.
As before we do not weight the sides of connected cuts.
In the end, we check if there is a monomial such that the corresponding coefficient is not divisible by $2^{r+1}$ where $r$ is the total number of connectivity constraints: those are both the original connectivity constraints of $\psi_i$ together with connectivity constraints we introduced for acyclicity constraints.
Note that the exponent of the variables $\rho_1$ and $\rho_2$ can be of order $\Theta(q \cdot n)$ where $q$ denotes the number of acyclicity constraints in $\phi$.
This is the reason behind the increase by a factor of $n^{\cO(q)}$ in the running time %in the theorem below 
compared to \cref{thm:plus-connectivity}
With this we obtain the following theorem:
\thmMainack*
\begin{comment}
\begin{theorem}\label{thm:plus-ac}
    There is a Monte-Carlo algorithm that given a formula $\phi \in \NEOtwo[\finrec]\ack$, a graph $G$ and an elimination tree $T$ of $G$ of depth at most $\td$, runs in time $( (\dphi+1) \cdot \pphi)^{\cO(|\phi|\cdot\td)} \cdot n^{\cO(|\phi|)}$ and space polynomial in $n + \abs{\phi} + \dphi + \pphi$ and decides whether $G \vDash \phi$ holds.
    The algorithm cannot give false positives and may give false negatives with probability at most $1/2$. 
\end{theorem}
\end{comment}

\section{Conclusions}
\label{sec:conclusions}
We would like to conclude with some open questions.
The exponent of $n$ in the running time of our model checking algorithm depends on the numbers $\ell$ and $q$ of size measurements and acyclicity constraints, respectively, in the formula.
So the first natural question is whether this dependence can be avoided.
For example, in the runtime of the model checking for \DN in \cite{BergougnouxDJ23}, the exponent of~$n$ depends only in the number of size measurements $|X|\prec m$ where $m$ is not a constant.

The next open question is whether we can derandomize and/or improve the space complexity of our model checking algorithm for $\NEOtwo[\finrec]\ack$ (i.e., when connectivity and acyclicity constraints are allowed) to become logarithmic in $n$.

Also, it can be observed that if $\dphi = 1$, then our algorithm actually counts the models of $\phi$.
So we are able to count, for example, dominating sets or independent sets: in fact, for these problems, for each integer $t$, we can count the number of solutions of size $t$.
However, in general, the same model of the formula yields multiple partial solutions at the root (due to the component $\psw$ of the partial solution).
So is it possible to extend our algorithm for model counting?

Further, our logic cannot capture, for example, \textsc{Triangle Packing}. %or formulate that a set $Y$ of edges is the set of edges induced by a vertex set $X$.
However, it is not difficult to use the techniques presented in this paper to solve this problem, and, more generally, handle the constraint of form ``the edge set $Y$ induces a disjoint union of cliques''.
So it is also reasonable to ask if there is a natural generalization of our logic (ideally not using too many ad-hoc operators) which places all problems under the same umbrella.

\bibliographystyle{plain}
\bibliography{ref}

\begin{thebibliography}{10}

\bibitem{AllenderCLPT14}
Eric Allender, Shiteng Chen, Tiancheng Lou, Periklis~A. Papakonstantinou, and
  Bangsheng Tang.
\newblock {Width-Parametrized {SAT:} Time--Space Tradeoffs}.
\newblock {\em Theory Comput.}, 10:297--339, 2014.

\bibitem{ArnborgLS91easy}
Stefan Arnborg, Jens Lagergren, and Detlef Seese.
\newblock {Easy Problems for Tree-Decomposable Graphs}.
\newblock {\em Journal of Algorithms}, 12(2):308--340, 1991.

\bibitem{BelmonteS21}
R{\'{e}}my Belmonte and Ignasi Sau.
\newblock {On the Complexity of Finding Large Odd Induced Subgraphs and Odd
  Colorings}.
\newblock {\em Algorithmica}, 83(8):2351--2373, 2021.

\bibitem{bennett2018explicit}
Michael~A. Bennett, Greg Martin, Kevin O’Bryant, and Andrew Rechnitzer.
\newblock Explicit bounds for primes in arithmetic progressions.
\newblock {\em Illinois Journal of Mathematics}, 62(1-4):427--532, 2018.

\bibitem{BergougnouxCGKM25}
Benjamin Bergougnoux, Vera Chekan, Robert Ganian, Mamadou~Moustapha Kant\'{e},
  Matthias Mnich, Sang-il Oum, Micha\l{} Pilipczuk, and Erik~Jan van Leeuwen.
\newblock {Space-Efficient Parameterized Algorithms on Graphs of Low
  Shrubdepth}.
\newblock {\em ACM Trans. Comput. Theory}, 17(3), June 2025.

\bibitem{BergougnouxDJ23}
Benjamin Bergougnoux, Jan Dreier, and Lars Jaffke.
\newblock A logic-based algorithmic meta-theorem for mim-width.
\newblock In Nikhil Bansal and Viswanath Nagarajan, editors, {\em Proceedings
  of the 2023 {ACM-SIAM} Symposium on Discrete Algorithms, {SODA} 2023,
  Florence, Italy, January 22-25, 2023}, pages 3282--3304. {SIAM}, 2023.

\bibitem{BjorklundHKK07}
Andreas Bj\"{o}rklund, Thore Husfeldt, Petteri Kaski, and Mikko Koivisto.
\newblock Fourier meets {M}{\"o}bius: fast subset convolution.
\newblock In {\em Proc. STOC 2007}, pages 67--74, 2007.

\bibitem{BodlaenderCKN15}
Hans~L. Bodlaender, Marek Cygan, Stefan Kratsch, and Jesper Nederlof.
\newblock Deterministic single exponential time algorithms for connectivity
  problems parameterized by treewidth.
\newblock {\em Information and Computation}, 243:86--111, 2015.

\bibitem{BodlaenderGHK95}
Hans~L. Bodlaender, John~R. Gilbert, Hj{\'{a}}lmtyr Hafsteinsson, and Ton
  Kloks.
\newblock {Approximating Treewidth, Pathwidth, Frontsize, and Shortest
  Elimination Tree}.
\newblock {\em J. Algorithms}, 18(2):238--255, 1995.

\bibitem{abs-2507-13818}
{\'{E}}douard Bonnet, Daniel Neuen, and Marek Sokolowski.
\newblock {Treedepth Inapproximability and Exponential {ETH} Lower Bound}.
\newblock {\em CoRR}, abs/2507.13818, 2025.

\bibitem{BoriePT92auto}
Richard~B. Borie, R.~Gary Parker, and Craig~A. Tovey.
\newblock {Automatic Generation of Linear-Time Algorithms from Predicate
  Calculus Descriptions of Problems on Recursively Constructed Graph Families}.
\newblock {\em Algorithmica}, 7(5-6):555--581, 1992.

\bibitem{ChiuDL01}
Andrew Chiu, George~I. Davida, and Bruce~E. Litow.
\newblock Division in logspace-uniform \emph{NC}\({}^{\mbox{1}}\).
\newblock {\em {RAIRO} - Theoretical Informatics and Applications},
  35(3):259--275, 2001.

\bibitem{Courcelle90them}
Bruno Courcelle.
\newblock {The Monadic Second-Order Logic of Graphs. {I}. Recognizable Sets of
  Finite Graphs}.
\newblock {\em Information and Computation}, 85(1):12--75, 1990.

\bibitem{Courcelle92}
Bruno Courcelle.
\newblock The monadic second-order logic of graphs {III:} tree-decompositions,
  minor and complexity issues.
\newblock {\em {RAIRO} - Theoretical Informatics and Applications},
  26:257--286, 1992.

\bibitem{Courcelle97}
Bruno Courcelle.
\newblock {The Expression of Graph Properties and Graph Transformations in
  Monadic Second-Order Logic}.
\newblock In {\em Handbook of Graph Grammars and Computing by Graph
  Transformations, Volume 1: Foundations}, pages 313--400. World Scientific,
  1997.

\bibitem{cygan2015parameterized}
Marek Cygan, Fedor~V. Fomin, {\L}ukasz Kowalik, Daniel Lokshtanov, D{\'a}niel
  Marx, Marcin Pilipczuk, Micha{\l} Pilipczuk, and Saket Saurabh.
\newblock {\em Parameterized Algorithms}.
\newblock Springer, 2015.

\bibitem{CyganNPPRW22}
Marek Cygan, Jesper Nederlof, Marcin Pilipczuk, Micha{\l} Pilipczuk, Johan
  M.~M. van Rooij, and Jakub~Onufry Wojtaszczyk.
\newblock Solving connectivity problems parameterized by treewidth in single
  exponential time.
\newblock {\em {ACM} Trans. Algorithms}, 18(2):17:1--17:31, 2022.

\bibitem{ElberfeldGT16}
Michael Elberfeld, Martin Grohe, and Till Tantau.
\newblock {Where First-Order and Monadic Second-Order Logic Coincide}.
\newblock {\em ACM Transactions on Computational Logic}, 17(4), September 2016.

\bibitem{FlumG06}
J{\"{o}}rg Flum and Martin Grohe.
\newblock {\em Parameterized Complexity Theory}.
\newblock Texts in Theoretical Computer Science. An {EATCS} Series. Springer,
  2006.

\bibitem{FockeMINSSW25}
Jacob Focke, D{\'{a}}niel Marx, Fionn~Mc Inerney, Daniel Neuen, Govind~S.
  Sankar, Philipp Schepper, and Philip Wellnitz.
\newblock {Tight Complexity Bounds for Counting Generalized Dominating Sets in
  Bounded-Treewidth Graphs - Part {I:} Algorithmic Results}.
\newblock {\em {ACM} Trans. Algorithms}, 21(3):27:1--27:45, 2025.

\bibitem{FrickG04}
Markus Frick and Martin Grohe.
\newblock The complexity of first-order and monadic second-order logic
  revisited.
\newblock {\em Annals of Pure and Applied Logic}, 130(1-3):3--31, 2004.

\bibitem{FurerY17}
Martin F{\"{u}}rer and Huiwen Yu.
\newblock {Space Saving by Dynamic Algebraization Based on Tree-Depth}.
\newblock {\em Theory Comput. Syst.}, 61(2):283--304, 2017.

\bibitem{abs-2506-01645}
Jakob Greilhuber and D{\'{a}}niel Marx.
\newblock {The Price of Being Partial: Complexity of Partial Generalized
  Dominating Set on Bounded-Treewidth Graphs}.
\newblock {\em CoRR}, abs/2506.01645, 2025.

\bibitem{GreilhuberSW25}
Jakob Greilhuber, Philipp Schepper, and Philip Wellnitz.
\newblock {Residue Domination in Bounded-Treewidth Graphs}.
\newblock In Olaf Beyersdorff, Michal Pilipczuk, Elaine Pimentel, and Kim~Thang
  Nguyen, editors, {\em 42nd International Symposium on Theoretical Aspects of
  Computer Science, {STACS} 2025, March 4-7, 2025, Jena, Germany}, volume 327
  of {\em LIPIcs}, pages 41:1--41:20. Schloss Dagstuhl - Leibniz-Zentrum
  f{\"{u}}r Informatik, 2025.

\bibitem{Grohe08}
Martin Grohe.
\newblock {Algorithmic Meta Theorems}.
\newblock In {\em Proc. of the 34th International Workshop on Graph-Theoretic
  Concepts in Computer Science (WG)}, volume 5344 of {\em Lecture Notes in
  Computer Science}, page~30, 2008.

\bibitem{GroheK09}
Martin Grohe and Stephan Kreutzer.
\newblock {Methods for Algorithmic Meta Theorems}.
\newblock In {\em Model Theoretic Methods in Finite Combinatorics - {AMS-ASL}
  Joint Special Session}, volume 558, pages 181--206. AMS, 2009.

\bibitem{HegerfeldK20}
Falko Hegerfeld and Stefan Kratsch.
\newblock {Solving Connectivity Problems Parameterized by Treedepth in
  Single-Exponential Time and Polynomial Space}.
\newblock In Christophe Paul and Markus Bl{\"{a}}ser, editors, {\em 37th
  International Symposium on Theoretical Aspects of Computer Science, {STACS}
  2020, March 10-13, 2020, Montpellier, France}, volume 154 of {\em LIPIcs},
  pages 29:1--29:16. Schloss Dagstuhl - Leibniz-Zentrum f{\"{u}}r Informatik,
  2020.

\bibitem{Kreutzer11}
Stephan Kreutzer.
\newblock Algorithmic meta-theorems.
\newblock In {\em Finite and Algorithmic Model Theory}, volume 379 of {\em
  London Mathematical Society Lecture Note Series}, pages 177--270. Cambridge
  University Press, 2011.

\bibitem{LokshtanovMS11}
Daniel Lokshtanov, Matthias Mnich, and Saket Saurabh.
\newblock Planar k-path in subexponential time and polynomial space.
\newblock In Petr Kolman and Jan Kratochv{\'{\i}}l, editors, {\em
  Graph-Theoretic Concepts in Computer Science - 37th International Workshop,
  {WG} 2011, Tepl{\'{a}} Monastery, Czech Republic, June 21-24, 2011. Revised
  Papers}, volume 6986 of {\em Lecture Notes in Computer Science}, pages
  262--270. Springer, 2011.

\bibitem{MulmuleyVV87}
Ketan Mulmuley, Umesh~V. Vazirani, and Vijay~V. Vazirani.
\newblock Matching is as easy as matrix inversion.
\newblock {\em Combinatorica}, 7(1):105--113, 1987.

\bibitem{NadaraPS22}
Wojciech Nadara, Micha{\l} Pilipczuk, and Marcin Smulewicz.
\newblock {Computing Treedepth in Polynomial Space and Linear {FPT} Time}.
\newblock In Shiri Chechik, Gonzalo Navarro, Eva Rotenberg, and Grzegorz
  Herman, editors, {\em 30th Annual European Symposium on Algorithms, {ESA}
  2022, September 5-9, 2022, Berlin/Potsdam, Germany}, volume 244 of {\em
  LIPIcs}, pages 79:1--79:14. Schloss Dagstuhl - Leibniz-Zentrum f{\"{u}}r
  Informatik, 2022.

\bibitem{NederlofPSW23}
Jesper Nederlof, Micha{\l} Pilipczuk, C{\'{e}}line M.~F. Swennenhuis, and Karol
  Wegrzycki.
\newblock Hamiltonian cycle parameterized by treedepth in single exponential
  time and polynomial space.
\newblock {\em {SIAM} J. Discret. Math.}, 37(3):1566--1586, 2023.

\bibitem{NesetrilO06treedepth}
Jaroslav Nešetřil and Patrice {Ossona de Mendez}.
\newblock Tree-depth, subgraph coloring and homomorphism bounds.
\newblock {\em European Journal of Combinatorics}, 27(6):1022--1041, 2006.

\bibitem{NestrilOdM12sparsity}
Jaroslav Nešetřil and Patrice Ossona~de Mendez.
\newblock {\em Sparsity: Graphs, Structures, and Algorithms}.
\newblock Springer Berlin, Heidelberg, 2012.

\bibitem{Pilipczuk11}
Micha{\l} Pilipczuk.
\newblock {Problems Parameterized by Treewidth Tractable in Single Exponential
  Time: {A} Logical Approach}.
\newblock In Filip Murlak and Piotr Sankowski, editors, {\em Proceedings of the
  36th International Symposium on Mathematical Foundations of Computer Science
  2011 ({MFCS} 2011)}, volume 6907 of {\em Lecture Notes in Computer Science},
  pages 520--531. Springer, 2011.

\bibitem{Pilipczuk25graph}
Micha{\l} Pilipczuk.
\newblock Graph classes through the lens of logic.
\newblock {\em arXiv preprint arXiv:2501.04166}, 2025.

\bibitem{PilipczukW18}
Micha{\l} Pilipczuk and Marcin Wrochna.
\newblock {On Space Efficiency of Algorithms Working on Structural
  Decompositions of Graphs}.
\newblock {\em {ACM} Trans. Comput. Theory}, 9(4):18:1--18:36, 2018.

\bibitem{SiebertzV24advances}
Sebastian Siebertz and Alexandre Vigny.
\newblock {Advances in Algorithmic Meta Theorems (Invited Paper)}.
\newblock In Siddharth Barman and Slawomir Lasota, editors, {\em 44th {IARCS}
  Annual Conference on Foundations of Software Technology and Theoretical
  Computer Science, {FSTTCS} 2024, December 16-18, 2024, Gandhinagar, Gujarat,
  India}, volume 323 of {\em LIPIcs}, pages 2:1--2:29. Schloss Dagstuhl -
  Leibniz-Zentrum f{\"{u}}r Informatik, 2024.

\bibitem{TelleProskurowski1997}
Jan~Arne Telle and Andrzej Proskurowski.
\newblock Algorithms for vertex partitioning problems on partial $k$-trees.
\newblock {\em {SIAM} Journal on Discrete Mathematics}, 10(4):529--550, 1997.

\bibitem{Rooij20}
Johan M.~M. van Rooij.
\newblock Fast algorithms for join operations on tree decompositions.
\newblock In Fedor~V. Fomin, Stefan Kratsch, and Erik~Jan van Leeuwen, editors,
  {\em Treewidth, Kernels, and Algorithms - Essays Dedicated to Hans L.
  Bodlaender on the Occasion of His 60th Birthday}, volume 12160 of {\em
  Lecture Notes in Computer Science}, pages 262--297. Springer, 2020.

\bibitem{RooijBR09}
Johan M.~M. van Rooij, Hans~L. Bodlaender, and Peter Rossmanith.
\newblock Dynamic programming on tree decompositions using generalised fast
  subset convolution.
\newblock In Amos Fiat and Peter Sanders, editors, {\em Algorithms - {ESA}
  2009, 17th Annual European Symposium, Copenhagen, Denmark, September 7-9,
  2009. Proceedings}, volume 5757 of {\em Lecture Notes in Computer Science},
  pages 566--577. Springer, 2009.

\end{thebibliography}

\newpage

\appendix

\section{Discrete Fourier Transforms for Polynomials}
\label{sec:appendix:dft}

In this appendix, we carefully check that \Cref{thm:inverse-dft,thm:fourier-convolution-result} holds for polynomials as well.
We recall that $D$ is a finite set, $r$ is a positive integer, and $p$ is a prime such that $\FF_p$ admits the $r$.th root of unity $\omega$.
For $y, q \colon D \to \bZ_r$ we define  $y+q \colon D \to \bZ_r$ as $(y+q)(d) \equiv_r y(d) + q(d)$ for every $d \in D$. 
For $h, g \colon (D \to \bZ_r) \to \FF_{p}$ we define also $h \cdot g \colon (D \to \bZ_r) \to \FF_{p}$ via $(f \cdot g)(y) \equiv_p h(y) \cdot g(y)$ for every $y \in D \to \bZ_r$.
In the following, we use $\DFT$ and $\DFT^{-1}$ as shortcuts for $\DFT_{D, r, p}$ and $\DFT_{D, r, p}^{-1}$, respectively.

\begin{lemma}
    \label{lem:dft:poly:sum}
    For $k \in \bN$, $h^1, \dots, h^k \colon (D \to \bZ_r) \to \FF_{p}[\gamma_1, \dots, \gamma_\ell]$, 
    we have 
    \begin{equation}\label{eq:dft-linear}
        \DFT(h^1 + \dots + h^k) \equiv_p \DFT(h^1) + \dots + \DFT(h^k)
    \end{equation}
    and
    \begin{equation}\label{eq:inv-dft-linear}
        \DFT^{-1}(h^1 + \dots + h^k) \equiv_p \DFT^{-1}(h^1) + \dots + \DFT^{-1}(h^k).
    \end{equation}
\end{lemma}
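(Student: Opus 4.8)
The statement is that both the discrete Fourier transform $\DFT$ and its inverse $\DFT^{-1}$ are linear operators on the space of mappings $(D \to \bZ_r) \to \FF_p[\gamma_1, \dots, \gamma_\ell]$. The plan is to unfold the definitions of $\DFT$ and $\DFT^{-1}$ from \Cref{subsec:dft-definitions} (extended to polynomial codomains) and observe that each is, for every fixed evaluation point $y$, a finite $\FF_p$-linear combination of the values $h(q)$ over $q \colon D \to \bZ_r$, with coefficients $\omega^{q \cdot y}$ (respectively $\frac{1}{r^{|D|}}\omega^{-q \cdot y}$) that do not depend on the functions $h^1, \dots, h^k$. Addition of polynomials in $\FF_p[\gamma_1, \dots, \gamma_\ell]$ is coefficient-wise and commutes with scalar multiplication by elements of $\FF_p$, so the result follows by a direct exchange of summation order. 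Since the $k$-variable case follows immediately from the $2$-variable case by induction on $k$, it suffices to prove \eqref{eq:dft-linear} and \eqref{eq:inv-dft-linear} for $k = 2$.

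First I would prove \eqref{eq:dft-linear} for $k=2$. Fix $y \colon D \to \bZ_r$. By the definition of the DFT for polynomial-valued mappings,
\[
    \DFT(h^1 + h^2)(y) \equiv_p \sum_{q \colon D \to \bZ_r} \omega^{q \cdot y} \cdot (h^1 + h^2)(q) \equiv_p \sum_{q \colon D \to \bZ_r} \omega^{q \cdot y} \cdot \bigl(h^1(q) + h^2(q)\bigr).
\]
Here the sum is finite (there are $r^{|D|}$ summands), and each summand lies in $\FF_p[\gamma_1, \dots, \gamma_\ell]$; multiplication by the scalar $\omega^{q \cdot y} \in \FF_p$ distributes over the sum $h^1(q) + h^2(q)$, so the right-hand side equals $\sum_q \omega^{q \cdot y} h^1(q) + \sum_q \omega^{q \cdot y} h^2(q)$, which is exactly $\DFT(h^1)(y) + \DFT(h^2)(y)$. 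As $y$ was arbitrary, \eqref{eq:dft-linear} holds for $k=2$, and the general case follows by an easy induction. The argument for \eqref{eq:inv-dft-linear} is identical, with the coefficient $\omega^{q \cdot y}$ replaced by $\frac{1}{r^{|D|}} \omega^{-q \cdot y}$, where $\frac{1}{r^{|D|}}$ is the multiplicative inverse of $r^{|D|}$ in $\FF_p^*$ (which exists since $p$ is prime and, by the assumption that $\FF_p^*$ admits the $r$-th root of unity, $p \nmid r$, hence $p \nmid r^{|D|}$).

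There is essentially no obstacle here: the only point requiring a word of care is that the formal-variable extension of $\DFT$ is indeed $\FF_p$-linear coefficient-wise, i.e.\ that working with polynomials rather than field elements does not break the manipulation, which is immediate because $\FF_p[\gamma_1, \dots, \gamma_\ell]$ is a free $\FF_p$-module and all our operations (sum, scalar multiplication by powers of $\omega$, reduction modulo $p$) act componentwise on the monomial basis. With this lemma in hand, the promised extensions of \Cref{thm:inverse-dft} (\Cref{thm:inverse-dft-polynomials}) and \Cref{thm:fourier-convolution-result} (\Cref{thm:fourier-convolution-result-polynomials}) follow by applying the scalar versions coefficient-by-coefficient and then re-assembling, using \Cref{lem:dft:poly:sum} to commute the transform with the finite sum over monomials.
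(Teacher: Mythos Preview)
Your proof is correct and takes essentially the same approach as the paper: both fix an arbitrary $y$, unfold the defining sum, and use distributivity of scalar multiplication over addition in $\FF_p[\gamma_1,\dots,\gamma_\ell]$ to split the sum. The only cosmetic difference is that the paper treats general $k$ directly rather than reducing to $k=2$ by induction.
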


\begin{proof}
    Let $y \colon D \to \bZ_r$ be arbitrary:
    First, it holds that 
    \begin{align*}
        &\DFT(h^1 + \dots + h^k)(y) \equiv_p \sum_{q \colon D \to \bZ_r} \omega^{q \cdot y} (h^1 + \dots + h^k)(q) \\
        &\equiv_p \sum_{q \colon D \to \bZ_r} \omega^{q \cdot y} (h^1(q) + \dots + h^k(q)) \equiv_p \Bigl(\sum_{q \colon D \to \bZ_r} \omega^{q \cdot y} (h^1(q)\Bigr) + \dots + \Bigl(\sum_{q \colon D \to \bZ_r} \omega^{q \cdot y} (h^k(q)\Bigr) \\
        &\equiv_p (\DFT(h^1))(y) + \dots + (\DFT(h^k))(y) \equiv_p (\DFT(h^1) + \dots + \DFT(h^k))(y). \\
    \end{align*}
    And similarly
    \begin{align*}
        &\DFT^{-1}(h^1 + \dots + h^k)(y) \equiv_p \frac{1}{r^{|D|}} \sum_{q \colon D \to \bZ_r} \omega^{-q \cdot y} (h^1 + \dots + h^k)(q) \\ &\equiv_p \frac{1}{r^{|D|}} \sum_{q \colon D \to \bZ_r} \omega^{-q \cdot y} (h^1(q) + \dots + h^k(q)) \equiv_p \frac{1}{r^{|D|}} \sum_{q \colon D \to \bZ_r} \omega^{-q \cdot y} (h^1(q) + \dots + h^k(q)) \\ &\equiv_p 
        \Bigl(\frac{1}{r^{|D|}} \sum_{q \colon D \to \bZ_r} \omega^{-q \cdot y} h^1(q)\Bigr) + \dots + \Bigl(\frac{1}{r^{|D|}} \sum_{q \colon D \to \bZ_r} \omega^{-q \cdot y} h^k(q)\Bigr) \\
        & \equiv_p (\DFT^{-1}(h^1))(y) + \dots + (\DFT^{-1}(h^k))(y) \equiv_p
        (\DFT^{-1}(h^1) + \dots + \DFT^{-1}(h^k))(y). \\
    \end{align*}
\end{proof}

For a vector $\bara = (a^1, \dots, a^\ell) \in \bN^\ell$ and $h \colon (D \to \bZ_r) \to \FF_{p}$ we define the mapping $h \cdot \prod_{i=1}^\ell \gamma_i^{a^i} \colon (D \to \bZ_r) \to \FF_{p}[\gamma_1, \dots, \gamma_\ell]$ as 
\[
    \Bigl(h \cdot \prod_{i=1}^\ell \gamma_i^{a^i}\Bigr)(y) \equiv_p h(y) \cdot \prod_{i=1}^\ell \gamma_i^{a^i}
\]
for all $y \colon D \to \bZ_r$.

\begin{lemma}
    Let $h \colon (D \to \bZ_r) \to \FF_{p}$, $\bara = (a^1, \dots, a^\ell) \in \bN^\ell$. %and $y \colon D \to \bZ_r$.
    Then it holds that 
    \begin{equation}\label{eq:dft-variables}
        \DFT\Bigl(h \cdot \prod_{i=1}^\ell \gamma_i^{a^i} \Bigr) \equiv_p \DFT(h) \cdot \prod_{i=1}^\ell \gamma_i^{a^i}
    \end{equation}
    and
    \begin{equation}\label{eq:inv-dft-variables}
        \DFT^{-1}\Bigl(h \cdot \prod_{i=1}^\ell \gamma_i^{a^i}\Bigr) \equiv_p \DFT^{-1}(h) \cdot \prod_{i=1}^\ell \gamma_i^{a^i}
    \end{equation}
\end{lemma}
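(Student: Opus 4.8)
The plan is to prove both identities by a direct computation: evaluate each side at an arbitrary argument $y \colon D \to \bZ_r$ and observe that the monomial $\prod_{i=1}^\ell \gamma_i^{a^i}$ is constant with respect to the summation index $q$, hence it factors out of the defining sums. This is the polynomial-codomain analogue of the scalar linearity already recorded in \Cref{lem:dft:poly:sum}, specialized to a single monomial.

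First I would fix $y \colon D \to \bZ_r$ and unfold the definition of the DFT for polynomial-valued mappings from \Cref{subsec:dft-definitions}:
\[
    \DFT\Bigl(h \cdot \prod_{i=1}^\ell \gamma_i^{a^i}\Bigr)(y) \equiv_p \sum_{q \colon D \to \bZ_r} \omega^{q \cdot y} \cdot \Bigl(h \cdot \prod_{i=1}^\ell \gamma_i^{a^i}\Bigr)(q) \equiv_p \sum_{q \colon D \to \bZ_r} \omega^{q \cdot y} \cdot h(q) \cdot \prod_{i=1}^\ell \gamma_i^{a^i}.
\]
Since $\prod_{i=1}^\ell \gamma_i^{a^i}$ does not depend on $q$ and $\FF_p[\gamma_1, \dots, \gamma_\ell]$ is a commutative ring, I would pull this monomial out of the finite sum, obtaining
\[
    \Bigl(\sum_{q \colon D \to \bZ_r} \omega^{q \cdot y} \cdot h(q)\Bigr) \cdot \prod_{i=1}^\ell \gamma_i^{a^i} \equiv_p \DFT(h)(y) \cdot \prod_{i=1}^\ell \gamma_i^{a^i} \equiv_p \Bigl(\DFT(h) \cdot \prod_{i=1}^\ell \gamma_i^{a^i}\Bigr)(y),
\]
where the final step is just the definition of the mapping $\DFT(h) \cdot \prod_{i=1}^\ell \gamma_i^{a^i}$. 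As $y$ was arbitrary, \eqref{eq:dft-variables} follows. The argument for \eqref{eq:inv-dft-variables} is identical up to inserting the scalar $\tfrac{1}{r^{|D|}} \in \FF_p^*$ in front of the sum and replacing $\omega^{q \cdot y}$ by $\omega^{-q \cdot y}$; again the $q$-independent monomial is pulled out of the sum.

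I do not expect any genuine obstacle here; the only point worth stressing in the write-up is that all manipulations take place in the polynomial ring over $\FF_p$, so multiplication by a fixed monomial commutes with the $\FF_p$-linear sum defining the (inverse) transform — this is precisely what makes the two operations interchangeable. Combined with \Cref{lem:dft:poly:sum} and the distributive law, these two identities are exactly what is needed to reduce \Cref{thm:inverse-dft,thm:fourier-convolution-result} for polynomial codomains to the scalar case, by decomposing every polynomial-valued mapping into its monomials and applying the scalar results coordinatewise.
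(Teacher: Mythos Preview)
Your proposal is correct and matches the paper's own proof essentially line for line: fix an arbitrary $y$, unfold the definition, factor the $q$-independent monomial out of the finite sum, and repeat for the inverse transform with the extra $\tfrac{1}{r^{|D|}}$ and $\omega^{-q\cdot y}$.
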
 

\begin{proof}
    Let $y \colon D \to \bZ_r$ be arbitrary:
    First, it holds that:
    \begin{align*}
        &\DFT\Bigl(h \cdot \prod_{i=1}^\ell \gamma_i^{a^i}\Bigr)(y) \equiv_p \sum_{q \colon D \to \bZ_r} \omega^{q \cdot y} \Bigl(h \cdot \prod_{i=1}^\ell \gamma_i^{a^i}\Bigr)(q) \equiv_p \sum_{q \colon D \to \bZ_r} \omega^{q \cdot y} h(q) \cdot \prod_{i=1}^\ell \gamma_i^{a^i} \\
        &\equiv_p \Bigl(\sum_{q \colon D \to \bZ_r} \omega^{q \cdot y} h(q)\Bigr) \cdot \prod_{i=1}^\ell \gamma_i^{a^i} \equiv_p \DFT(h)(y) \cdot \prod_{i=1}^\ell \gamma_i^{a^i}. \\
    \end{align*}
    And similarly, 
    \begin{align*}
        &\DFT^{-1}\Bigl(h \cdot \prod_{i=1}^\ell \gamma_i^{a^i} \Bigr)(y) \equiv_p \frac{1}{r^{|D|}} \sum_{q \colon D \to \bZ_r} \omega^{-q \cdot y} \Bigl(h \cdot \prod_{i=1}^\ell \gamma_i^{a^i} \Bigr)(q) \equiv_p \frac{1}{r^{|D|}} \sum_{q \colon D \to \bZ_r} \omega^{-q \cdot y} h(q) \cdot \prod_{i=1}^\ell \gamma_i^{a^i} \\
        &\equiv_p \Bigl(\frac{1}{r^{|D|}} \sum_{q \colon D \to \bZ_r} \omega^{-q \cdot y} h(q)\Bigr) \cdot \prod_{i=1}^\ell \gamma_i^{a^i} \equiv_p \DFT^{-1}(h)(y) \cdot \prod_{i=1}^\ell \gamma_i^{a^i}. \\
    \end{align*}
\end{proof}

Further, for all $\interp{a} = (a^1, \dots, a^\ell) \in \bN^\ell$ we define the functions $h_{\bara} \colon (D \to \bZ_r) \to \FF_{p}$ in such a way that 
\[
    h(y) \equiv_p \sum_{\bara \in \bN^\ell} h_{\bara}(y) \prod_{i=1}^\ell \gamma_i^{a^i}.
\]
holds for every $y \colon D \to \bZ_r$---since every polynomial is uniquely determined by the coefficients at its monomials, these functions are well-defined.
By definition we then have
\begin{equation}\label{eq:function-into-addends}
    h \equiv_p \sum_{\bara \in \bN^\ell} h_{\bara} \cdot \prod_{i=1}^\ell \gamma_i^{a^i}
\end{equation}
%\begin{lemma}
%    For every $h \colon (D \to \bZ_r) \to \FF_{p}[\gamma_1, \dots, \gamma_\ell]$ %and every $y \colon D \to \bZ_r$ 
%    it holds that 
%    \begin{equation}\label{eq:dft-linearized}
 %       \DFT(h) = \sum_{\bara \in \bN^\ell} \DFT(h_{\bara}) \cdot \prod_{i=1}^t \gamma_i^{a_i}.
 %   \end{equation}
%\end{lemma}

%\begin{proof}
%    Let $y \colon D \to \bZ_r$, then we have
%    \begin{align*}
%       &\DFT(h)(y) = \\
%    \end{align*}
%\end{proof}
Now we are ready to prove the two main results about DFTs for polynomials from $\FF_{p}[\gamma_1, \dots, \gamma_\ell]$ by relying on their analogues for $\FF_{p}$:

\begin{theorem}\label{thm:inverse-dft-polynomials}
    For every $h \colon (D \to \bZ_r) \to \FF_{p}[\gamma_1, \dots, \gamma_\ell]$ it holds that 
    \begin{equation}\label{eq:inverse-dft-polynomials}
        \DFT^{-1}(\DFT(h)) \equiv_p h \text{ and } \DFT(\DFT^{-1}(h)) \equiv_p h. 
    \end{equation}
\end{theorem}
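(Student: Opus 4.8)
The plan is to reduce the statement about polynomial-valued functions to the scalar case of \Cref{thm:inverse-dft}, using the linearity lemmas we have just established. The key observation is that a function $h \colon (D \to \bZ_r) \to \FF_p[\gamma_1, \dots, \gamma_\ell]$ decomposes, via \eqref{eq:function-into-addends}, as a finite sum $h \equiv_p \sum_{\bara \in \bN^\ell} h_{\bara} \cdot \prod_{i=1}^\ell \gamma_i^{a^i}$ where each $h_{\bara} \colon (D \to \bZ_r) \to \FF_p$ is scalar-valued, and only finitely many $h_{\bara}$ are nonzero. Both $\DFT$ and $\DFT^{-1}$ are linear by \eqref{eq:dft-linear} and \eqref{eq:inv-dft-linear}, and they commute with multiplication by a monomial $\prod_{i=1}^\ell \gamma_i^{a^i}$ by \eqref{eq:dft-variables} and \eqref{eq:inv-dft-variables}. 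Hence both transforms act ``coefficient-wise'' on the monomial decomposition.

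First I would fix $h$ and write $h \equiv_p \sum_{\bara} h_{\bara} \cdot \prod_{i=1}^\ell \gamma_i^{a^i}$ as above. Applying $\DFT$ and using \eqref{eq:dft-linear} followed by \eqref{eq:dft-variables} term by term gives $\DFT(h) \equiv_p \sum_{\bara} \DFT(h_{\bara}) \cdot \prod_{i=1}^\ell \gamma_i^{a^i}$. Now applying $\DFT^{-1}$ and using \eqref{eq:inv-dft-linear} and then \eqref{eq:inv-dft-variables} term by term yields
\[
    \DFT^{-1}(\DFT(h)) \equiv_p \sum_{\bara} \DFT^{-1}(\DFT(h_{\bara})) \cdot \prod_{i=1}^\ell \gamma_i^{a^i}.
\]
Since each $h_{\bara}$ is scalar-valued, \Cref{thm:inverse-dft} gives $\DFT^{-1}(\DFT(h_{\bara})) \equiv_p h_{\bara}$, so the right-hand side collapses back to $\sum_{\bara} h_{\bara} \cdot \prod_{i=1}^\ell \gamma_i^{a^i} \equiv_p h$. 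The second identity $\DFT(\DFT^{-1}(h)) \equiv_p h$ is proved identically, swapping the roles of $\DFT$ and $\DFT^{-1}$ and invoking the other half of \Cref{thm:inverse-dft}.

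There is no real obstacle here; the only points requiring a moment of care are making sure the monomial decomposition \eqref{eq:function-into-addends} is genuinely finite (which it is, since each $h(y)$ is a polynomial and $D \to \bZ_r$ is a finite domain, so finitely many monomials occur overall) and checking that the ``coefficient-wise'' rewriting is legitimate, i.e.\ that two polynomials are equal modulo $p$ exactly when their coefficients agree modulo $p$, which is immediate from the definition of $\equiv_p$ on $\FF_p[\gamma_1, \dots, \gamma_\ell]$. If one wanted to be maximally explicit, one could instead argue purely pointwise: for each $y$, the value $\DFT^{-1}(\DFT(h))(y)$ is a polynomial in the $\gamma_i$ whose coefficient at $\prod \gamma_i^{a^i}$ equals $\DFT^{-1}(\DFT(h_{\bara}))(y)$, which equals $h_{\bara}(y)$ by the scalar theorem — hence it equals $h(y)$. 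Either presentation works and the calculation is routine.
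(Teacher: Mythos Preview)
Your proposal is correct and follows essentially the same approach as the paper's proof: decompose $h$ via \eqref{eq:function-into-addends}, push $\DFT$ and $\DFT^{-1}$ through the sum using \eqref{eq:dft-linear}/\eqref{eq:inv-dft-linear} and past the monomials using \eqref{eq:dft-variables}/\eqref{eq:inv-dft-variables}, then invoke the scalar case \Cref{thm:inverse-dft} on each $h_{\bara}$. Your additional remarks on finiteness of the decomposition and the pointwise variant are sound but not needed for the argument.
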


\begin{proof}
    It holds that:
    \begin{align*}
        &\DFT^{-1}(\DFT(h)) \stackrel{\eqref{eq:function-into-addends}}{\equiv_p} \DFT^{-1}(\DFT\Bigl(\sum_{\bara \in \bN^\ell} h_{\bara} \cdot \prod_{i=1}^\ell \gamma_i^{a^i}\Bigr)) \stackrel{\eqref{eq:dft-linear}}{\equiv_p} \DFT^{-1}\Bigl(\sum_{\bara \in \bN^\ell} \DFT\Bigl(h_{\bara} \cdot \prod_{i=1}^\ell \gamma_i^{a^i}\Bigr)\Bigr) \\
        \stackrel{\eqref{eq:dft-variables}}{\equiv_p}&\DFT^{-1}\Bigl(\sum_{\bara \in \bN^\ell} \DFT(h_{\bara}) \cdot \prod_{i=1}^\ell \gamma_i^{a^i}\Bigr) \stackrel{\eqref{eq:inv-dft-linear}}{\equiv_p} \sum_{\bara \in \bN^\ell} \DFT^{-1}\Bigl(\DFT(h_{\bara}) \cdot \prod_{i=1}^\ell \gamma_i^{a^i}\Bigr) \\
        \stackrel{\eqref{eq:inv-dft-variables}}{=} &\sum_{\bara \in \bN^\ell} \DFT^{-1}(\DFT(h_{\bara})) \cdot \prod_{i=1}^\ell \gamma_i^{a^i} \stackrel{\text{\cref{thm:inverse-dft}}}{\equiv_p} \sum_{\bara \in \bN^\ell} h_{\bara} \cdot \prod_{i=1}^\ell \gamma_i^{a^i} \stackrel{\eqref{eq:function-into-addends}}{\equiv_p} h.\\
    \end{align*}
    And similarly,
    \begin{align*}
        &\DFT(\DFT^{-1}(h)) \stackrel{\eqref{eq:function-into-addends}}{\equiv_p} \DFT(\DFT^{-1}\Bigl(\sum_{\bara \in \bN^\ell} h_{\bara} \cdot \prod_{i=1}^\ell \gamma_i^{a^i}\Bigr)) \stackrel{\eqref{eq:inv-dft-linear}}{\equiv_p} \DFT\Bigl(\sum_{\bara \in \bN^\ell} \DFT^{-1}\Bigl(h_{\bara} \cdot \prod_{i=1}^\ell \gamma_i^{a^i}\Bigr)\Bigr) \\
        \stackrel{\eqref{eq:inv-dft-variables}}{\equiv_p}&\DFT\Bigl(\sum_{\bara \in \bN^\ell} \DFT^{-1}(h_{\bara}) \cdot \prod_{i=1}^\ell \gamma_i^{a^i}\Bigr) \stackrel{\eqref{eq:dft-linear}}{\equiv_p} \sum_{\bara \in \bN^\ell} \DFT\Bigl(\DFT^{-1}(h_{\bara}) \cdot \prod_{i=1}^\ell \gamma_i^{a^i}\Bigr)  \\
        \stackrel{\eqref{eq:dft-variables}}{\equiv_p}&\sum_{\bara \in \bN^\ell} \DFT(\DFT^{-1}(h_{\bara})) \cdot \prod_{i=1}^\ell \gamma_i^{a^i} \stackrel{\text{\cref{thm:inverse-dft}}}{\equiv_p} \sum_{\bara \in \bN^\ell} h_{\bara} \cdot \prod_{i=1}^\ell \gamma_i^{a^i} \stackrel{\eqref{eq:function-into-addends}}{\equiv_p} h.\\
    \end{align*}
\end{proof}
And it also holds that
\begin{theorem}\label{thm:fourier-convolution-result-polynomials}
    Let $t \in \bN$, let $h^1, \dots, h^t \colon (D \to \bZ_r) \to \FF_{p}[\gamma_1, \dots, \gamma_\ell]$, and let $y \in D \to \bZ_r$. Then it holds that 
    \[
        (\DFT^{-1} (\DFT(h^1) \cdot \ldots \cdot \DFT(h^t)))(y) \equiv_p \sum_{\substack{y^1, \dots, y^t \colon D \to \bZ_r \\ y^1 + \dots + y^t \equiv_r y}} h^1(y^1) \cdot \ldots \cdot h^t(y^t).
    \]
\end{theorem}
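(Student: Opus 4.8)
The plan is to reduce the polynomial version of the convolution identity to the scalar version (\Cref{thm:fourier-convolution-result}) by decomposing each $h^i$ into its monomial components, exactly mirroring the strategy already used to prove \Cref{thm:inverse-dft-polynomials}. Concretely, for each $i \in [t]$ write $h^i \equiv_p \sum_{\bara \in \bN^\ell} h^i_{\bara} \cdot \prod_{j=1}^\ell \gamma_j^{a^j}$ via \eqref{eq:function-into-addends}, where each $h^i_{\bara} \colon (D \to \bZ_r) \to \FF_p$ is a scalar-valued function. Then use the linearity of $\DFT$ and $\DFT^{-1}$ (\eqref{eq:dft-linear}, \eqref{eq:inv-dft-linear}) together with the ``pull-out-a-monomial'' identities (\eqref{eq:dft-variables}, \eqref{eq:inv-dft-variables}) to expand $\DFT^{-1}(\DFT(h^1) \cdot \ldots \cdot \DFT(h^t))$.

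First I would handle the product $\DFT(h^1) \cdot \ldots \cdot \DFT(h^t)$. Applying \eqref{eq:dft-linear} and \eqref{eq:dft-variables} to each factor gives $\DFT(h^i) \equiv_p \sum_{\bara^i \in \bN^\ell} \DFT(h^i_{\bara^i}) \cdot \prod_{j=1}^\ell \gamma_j^{a^{i,j}}$. Multiplying these $t$ expressions and collecting the $\gamma$-monomials, the coefficient function (pointwise in the scalar codomain $\FF_p$) attached to a fixed total monomial $\prod_j \gamma_j^{c^j}$ is $\sum_{\bara^1 + \dots + \bara^t = \barc} \DFT(h^1_{\bara^1}) \cdot \ldots \cdot \DFT(h^t_{\bara^t})$, where the sum ranges over tuples of exponent vectors summing to $\barc$ coordinatewise. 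Then, applying $\DFT^{-1}$ and again using linearity \eqref{eq:inv-dft-linear} and the monomial identity \eqref{eq:inv-dft-variables}, it suffices to evaluate $\DFT^{-1}\bigl(\DFT(h^1_{\bara^1}) \cdot \ldots \cdot \DFT(h^t_{\bara^t})\bigr)$ for each fixed tuple of exponent vectors. But this is now a statement purely about scalar-valued functions $(D \to \bZ_r) \to \FF_p$, so \Cref{thm:fourier-convolution-result} applies directly and yields $\sum_{y^1 + \dots + y^t \equiv_r y} h^1_{\bara^1}(y^1) \cdot \ldots \cdot h^t_{\bara^t}(y^t)$.

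The final step is to reassemble the monomials: summing over all exponent tuples and reinserting the $\gamma$-factors, the right-hand side becomes $\sum_{y^1 + \dots + y^t \equiv_r y} \bigl(\sum_{\bara^1} h^1_{\bara^1}(y^1) \prod_j \gamma_j^{a^{1,j}}\bigr) \cdot \ldots \cdot \bigl(\sum_{\bara^t} h^t_{\bara^t}(y^t) \prod_j \gamma_j^{a^{t,j}}\bigr)$, which by \eqref{eq:function-into-addends} is exactly $\sum_{y^1 + \dots + y^t \equiv_r y} h^1(y^1) \cdot \ldots \cdot h^t(y^t)$, as desired. Since all the sums are finite (each $h^i$ has finitely many nonzero monomial components) there are no convergence subtleties.

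The only real bookkeeping obstacle is the multi-index Cauchy-product step: one must be careful that multiplying $t$ finite sums of $\gamma$-monomials and then regrouping by total degree is legitimate and that the regrouped coefficient is indeed the convolution $\sum_{\bara^1 + \dots + \bara^t = \barc}$ over exponent vectors. This is routine commutative-ring algebra (the ring $\FF_p[\gamma_1,\dots,\gamma_\ell]$ is commutative, so products and sums freely rearrange), but it is the place where an expanded proof would spend most of its ink. Everything else is a direct chain of the already-established identities \eqref{eq:function-into-addends}, \eqref{eq:dft-linear}, \eqref{eq:inv-dft-linear}, \eqref{eq:dft-variables}, \eqref{eq:inv-dft-variables}, and \Cref{thm:fourier-convolution-result}, applied in the same pattern as in the proof of \Cref{thm:inverse-dft-polynomials}. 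One could even shortcut the $t$-fold product by induction on $t$: the base case $t = 1$ is \Cref{thm:inverse-dft-polynomials}, and the inductive step folds one factor at a time using the $t=2$ case, which keeps the multi-index algebra down to a single binary convolution.
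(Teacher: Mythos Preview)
Your proposal is correct and follows essentially the same approach as the paper's proof: decompose each $h^i$ into monomial components via \eqref{eq:function-into-addends}, push $\DFT$ and $\DFT^{-1}$ through using linearity \eqref{eq:dft-linear}, \eqref{eq:inv-dft-linear} and the pull-out-a-monomial identities \eqref{eq:dft-variables}, \eqref{eq:inv-dft-variables}, apply the scalar convolution result \Cref{thm:fourier-convolution-result} to the coefficient functions, and reassemble. The paper carries out exactly this chain as one long display, including the regrouping-by-total-exponent step you flagged as the main bookkeeping point; your suggested induction-on-$t$ shortcut is not used there but would work equally well.
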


\begin{proof}
    \begin{align*}
        &(\DFT^{-1} (\DFT(h^1) \cdot \ldots \cdot \DFT(h^t)))(y) \equiv_p (\DFT^{-1} \Bigl(\prod_{i=1}^t \DFT(h^i)\Bigr))(y)  \\
        \stackrel{\eqref{eq:function-into-addends}}{\equiv_p}&(\DFT^{-1} \Bigl(\prod_{i=1}^t \DFT\Bigl(\sum_{\bara_i \in \bN^\ell} (h^i_{\bara_i} \cdot \prod_{j=1}^\ell \gamma_j^{a_i^j})\Bigr)\Bigr))(y) \stackrel{\eqref{eq:dft-linear}}{\equiv_p} (\DFT^{-1} \Bigl(\prod_{i=1}^t \sum_{\bara \in \bN^\ell} \DFT\Bigl(h^i_{\bara_i} \cdot \prod_{j=1}^\ell \gamma_j^{a^j_i}\Bigr)\Bigr))(y) \\
        \stackrel{\eqref{eq:dft-variables}}{\equiv_p} &(\DFT^{-1} \Bigl(\prod_{i=1}^t \sum_{\bara_i \in \bN^\ell} (\DFT(h^i_{\bara_i}) \cdot \prod_{j=1}^\ell \gamma_j^{a^j_i})\Bigr))(y) \equiv_p (\DFT^{-1} \Bigl(\sum_{\bara_1, \dots, \bara_t \in \bN^\ell}\prod_{i=1}^t \Bigl(\DFT(h^i_{\bara_i}) \cdot \prod_{j=1}^\ell \gamma_j^{a^j_i}\Bigr)\Bigr))(y)  \\
        \equiv_p &(\DFT^{-1} \Bigl(\sum_{\bara_1, \dots, \bara_t \in \bN^\ell} \Bigl(\prod_{j=1}^\ell \gamma_j^{\sum_{i=1}^t a^j_i}\Bigr) \Bigl(\prod_{i=1}^t \DFT(h^i_{\bara_i})\Bigr)\Bigr))(y)  \\
    \end{align*}
    \begin{align*}
        \equiv_p &(\DFT^{-1} \Bigl(\sum_{\bara \in \bN^\ell} \sum_{\substack{\bara_1, \dots, \bara_t \in \bN^\ell \colon \\ \bara_1 + \dots + \bara_t = \bara}} \Bigl(\prod_{j=1}^\ell \gamma_j^{\sum_{i=1}^t a^j_i}\Bigr) \Bigl(\prod_{i=1}^t \DFT(h^i_{\bara_i})\Bigr)\Bigr))(y) \\
        \equiv_p &(\DFT^{-1} \Bigl(\sum_{\bara \in \bN^\ell} \sum_{\substack{\bara_1, \dots, \bara_t \in \bN^\ell \colon \\ \bara_1 + \dots + \bara_t = \bara}} \Bigl(\prod_{j=1}^\ell \gamma_j^{a^j}\Bigr) \Bigl(\prod_{i=1}^t \DFT(h^i_{\bara_i})\Bigr)\Bigr))(y) \\
        \equiv_p &(\DFT^{-1} \Bigl(\sum_{\bara \in \bN^\ell} \Bigl(\sum_{\substack{\bara_1, \dots, \bara_t \in \bN^\ell \colon \\ \bara_1 + \dots + \bara_t = \bara}} \prod_{i=1}^t \DFT(h^i_{\bara_i})\Bigr) \cdot \prod_{j=1}^\ell \gamma_j^{a^j}\Bigr))(y) \\
        \stackrel{\eqref{eq:inv-dft-linear}}{\equiv_p} &(\sum_{\bara \in \bN^\ell} \DFT^{-1} \Bigl(\Bigl(\sum_{\substack{\bara_1, \dots, \bara_t \in \bN^\ell \colon \\ \bara_1 + \dots + \bara_t = \bara}} \prod_{i=1}^t \DFT(h^i_{\bara_i})\Bigr) \cdot \prod_{j=1}^\ell \gamma_j^{a^j}\Bigr))(y) \\
        \stackrel{\eqref{eq:inv-dft-variables}}{\equiv_p} &(\sum_{\bara \in \bN^\ell} \Bigl(\DFT^{-1}\Bigl(\sum_{\substack{\bara_1, \dots, \bara_t \in \bN^\ell \colon \\ \bara_1 + \dots + \bara_t = \bara}} \prod_{i=1}^t \DFT(h^i_{\bara_i})\Bigr) \cdot \prod_{j=1}^\ell \gamma_j^{a^j}\Bigr))(y) \\
        \stackrel{\eqref{eq:inv-dft-linear}}{\equiv_p} &(\sum_{\bara \in \bN^\ell} \Bigl(\sum_{\substack{\bara_1, \dots, \bara_t \in \bN^\ell \colon \\ \bara_1 + \dots + \bara_t = \bara}} \DFT^{-1}\Bigl(\prod_{i=1}^t \DFT(h^i_{\bara_i})\Bigr)\Bigr) \cdot \prod_{j=1}^\ell \gamma_j^{a^j})(y) \\
        \equiv_p &\sum_{\bara \in \bN^\ell} \Bigl(\sum_{\substack{\bara_1, \dots, \bara_t \in \bN^\ell \colon \\ \bara_1 + \dots + \bara_t = \bara}} \Bigl(\Bigl(\DFT^{-1}\Bigl(\prod_{i=1}^t \DFT(h^i_{\bara_i})\Bigr)\Bigr)(y)\Bigr) \cdot \prod_{j=1}^\ell \gamma_j^{a^j}\Bigr) \\ 
        \stackrel{\text{\cref{thm:fourier-convolution-result}}}{\equiv_p}  &\sum_{\bara \in \bN^\ell} \Bigl(\sum_{\substack{\bara_1, \dots, \bara_t \in \bN^\ell \colon \\ \bara_1 + \dots + \bara_t = \bara}} \Bigl(\sum_{\substack{y_1, \dots, y_t \colon D \to \bZ_r \colon \\ y_1 + \dots y_t \equiv_r y}} \prod_{i=1}^t h^i_{\bara_i}(y_i)\Bigr)\Bigr) \cdot \prod_{j=1}^\ell \gamma_j^{a^j} \\ 
        \equiv_p &\sum_{\bara \in \bN^\ell} \sum_{\substack{\bara_1, \dots, \bara_t \in \bN^\ell \colon \\ \bara_1 + \dots + \bara_t = \bara}} \sum_{\substack{y_1, \dots, y_t \colon D \to \bZ_r \colon \\ y_1 + \dots y_t \equiv_r y}} \prod_{i=1}^t \Bigl(h^i_{\bara_i}(y_i) \cdot \Bigl(\prod_{j=1}^\ell \gamma_j^{a^j_i}\Bigr) \Bigr) \\
        %\equiv_p &\sum_{\bara_1, \dots, \bara_t \in \bN^\ell} \sum_{\substack{y_1, \dots, y_t \colon D \to \bZ_r \colon \\ y_1 + \dots y_t \equiv_r y}} \prod_{i=1}^t \Bigl(h^i_{\bara_i}(y_i) \cdot \Bigl(\prod_{j=1}^\ell \gamma_j^{a^j_i}\Bigr) \Bigr) \\ 
        \equiv_p &\sum_{\substack{y_1, \dots, y_t \colon D \to \bZ_r \colon \\ y_1 + \dots y_t \equiv_r y}} \sum_{\bara_1, \dots, \bara_t \in \bN^\ell}  \prod_{i=1}^t \Bigl(h^i_{\bara_i}(y_i) \cdot \Bigl(\prod_{j=1}^\ell \gamma_j^{a^j_i}\Bigr) \Bigr) \\
        \equiv_p &\sum_{\substack{y_1, \dots, y_t \colon D \to \bZ_r \colon \\ y_1 + \dots y_t \equiv_r y}} \prod_{i=1}^t \sum_{\bara_i \in \bN^\ell} \Bigl( h^i_{\bara_i}(y_i) \prod_{j=1}^\ell \gamma_j^{a^j_i} \Bigr) \\
        \stackrel{\eqref{eq:function-into-addends}}{\equiv_p} &\sum_{\substack{y_1, \dots, y_t \colon D \to \bZ_r \colon \\ y_1 + \dots y_t \equiv_r y}} \prod_{i=1}^t h^i(y_i) \equiv_p  \sum_{\substack{y_1, \dots, y_t \colon D \to \bZ_r \colon \\ y_1 + \dots y_t \equiv_r y}} h^1(y_1) \cdot \ldots \cdot h^t(y_t) \\
    \end{align*}
\end{proof}

% \section{Odd Dominating Set}
% \label{sec:ods}
% \input{odd-dominating-set}

% \section{Maximum Odd Subgraph}
% \label{sec:odd-subgraph}
% \input{odd-subgraph}

\end{document}